\begin{document}

 \doparttoc 
\faketableofcontents 

	\algrenewcommand\algorithmicrequire{\textbf{Input:}}
	\algrenewcommand\algorithmicensure{\textbf{Output:}}
	\title{
	Robust Auction Design with Support Information 
	 }
	\ifx\blind\undefined
	\author{ 
	Jerry Anunrojwong\thanks{Columbia University, Graduate School of Business. Email: {\tt janunrojwong25@gsb.columbia.edu}} \and Santiago R. Balseiro\thanks{Columbia University, Graduate School of Business. Email: {\tt srb2155@columbia.edu}} \and Omar Besbes\thanks{ Columbia University, Graduate School of Business. Email: {\tt ob2105@columbia.edu}}
	}
	\fi




\date{\today}

\maketitle
\begin{abstract}

A seller wants to sell an item to $n$ buyers. Buyer valuations are drawn i.i.d.~from a distribution unknown to the seller; the seller only knows that the support is included in  $[a,b]$. To be robust, the seller chooses a DSIC mechanism that optimizes the \textit{worst-case} performance relative to the \jaedit{ideal expected revenue the seller could have collected with knowledge of buyers’ valuations}.  Our analysis unifies the \textit{regret} and the \textit{ratio} objectives.

For these objectives, we derive an optimal mechanism and the corresponding performance in quasi-closed form, as a function of the support information \jaedit{$[a,b]$} and the number of buyers $n$. Our analysis reveals three regimes of support information and a new class of robust mechanisms. i.) \jaedit{When $a/b$ is below a threshold}, the optimal mechanism is a second-price auction (SPA) with random reserve, a focal class in earlier literature. ii.) \jaedit{When $a/b$ is above another threshold}, SPAs are strictly suboptimal, and an optimal mechanism belongs to a  class of mechanisms we introduce, which we call  \textit{pooling auctions} (POOL);  whenever the highest value is above a threshold, the mechanism still allocates to the highest bidder, but otherwise the mechanism allocates to a \textit{uniformly random buyer}, i.e., pools low types. iii.) When \jaedit{$a/b$ is between two thresholds}, a randomization between SPA and POOL is optimal. 

We also characterize optimal mechanisms within nested central subclasses of mechanisms: standard mechanisms that only allocate to the highest bidder, SPA with random reserve,  and SPA with no reserve. We show strict separations in terms of performance across classes, implying that deviating from standard mechanisms is necessary for robustness. 

Lastly, we show that the same results hold under other distribution classes that capture ``positive dependence,'' namely: i.i.d., mixture of i.i.d., and exchangeable and affiliated distributions, as well as i.i.d. regular distributions.

\medskip
\noindent
\textbf{Keywords}: robust mechanism design, minimax regret, maximin ratio, support information, prior-independent, standard mechanisms, second-price auctions, pooling.

\end{abstract}

\newpage

\setstretch{1.5}

\section{Introduction}\label{sec:intro}

The question of how to optimally sell an item underlies much of modern marketplaces, from online advertising and e-commerce to art auctions. Selling mechanisms are widely used in practice, and in turn they are studied in economics, computer science, and operations research under \textit{optimal mechanism design}, starting from the pioneering work of \citep{Myerson81}.  The literature often assumes that the seller knows the environment perfectly, but (i) this knowledge is often either not available or reliable, and (ii) the optimal mechanism prescribed by the theory is often too complicated or fine-tuned to the details of the environment, to be used in practice. There is therefore a need to develop mechanisms that depend less on market details, and this need is often referred to as the ``Wilson doctrine'' \citep{Wilson87}. 

The emerging literature on \textit{robust mechanism design}, in turn, aims to design mechanisms that perform ``well'' in the worst case against ``any'' environment. This line of work often leads to interesting insights but, taken literally, they can lead to mechanisms that are too conservative. In practice, while we do not have complete knowledge about the environment, we often do have \textit{partial} knowledge, and how to incorporate additional side information into the robust framework is essential to bring the robust theory closer to practice. In this paper, we make progress in this direction by analyzing the role of \textit{support information} of bidder valuations, as captured by \textit{lower bounds} and \textit{upper bounds} on bidder valuations. 

The motivation of the knowledge of such bounds  is that we operate in a world with minimal or no data. The bounds need not be learned from data but rather are derived from asking experts, using domain knowledge, or common sense. Examples include launching a new product, or auctioning rarely traded goods such as fine art, collectibles, and jewelry. In these contexts, the support information is a natural form of partial knowledge because it is easier and more intuitive to come up with a reasonable range of values than to guess something like the shape of the valuation distribution (either parametric or nonparametric like regularity or monotone hazard rate) or distributional parameters like the mean or the optimal monopoly price. 

More formally, consider a seller who wants to sell an item to $n$ bidders. The bidders' valuations are unknown to the seller and are assumed to be drawn from a joint distribution $\mathbf{F}$. The seller does not know $\mathbf{F}$, and knows only a lower bound $a$ and an upper bound $b$ on the support of $\mathbf{F}$, and that the valuations belong to the class $\mathcal{F}$ of  i.i.d. distributions. Similarly, the bidders also do not know $\mathbf{F}$. Therefore, we focus on mechanisms that are \textit{dominant strategy incentive compatible} (DSIC). Under such a mechanism, every bidder optimally reports her true value regardless of other bidders' valuations and strategies.

\obdelete{We will quantify the performance of mechanisms by the \textit{gap} between the benchmark oracle revenue and the mechanism revenue. The benchmark revenue is the ideal expected revenue the seller could have collected with knowledge of the buyers' valuations, while the mechanism revenue is the expected revenue garnered by the actual mechanism.}
\jaedit{We will quantify the performance of mechanisms by the \textit{gap} between the benchmark oracle revenue, the ideal expected revenue the seller could have collected with knowledge of the buyers' valuations, and  the expected revenue garnered by the actual mechanism.} Our framework will be general and apply to two classical notions of gaps considered in the literature: \obcomment{Our framework also applies to maxmin performance. it is just that the performance is degenerate/ the analysis is trivial in this case. Should we say so in passing?}  (i) the \textit{regret} (absolute gap) is the difference between these two revenues, and (ii) the \textit{approximation ratio} (relative gap) is the ratio of these two revenues. The seller selects a mechanism that performs well (minimizes regret or maximizes approximation ratio) in the \textit{worst case} against all admissible distributions.\jaedit{\footnote{Our framework also applies to the traditional maxmin (worst-case) revenue, but the worst-case revenue is trivial in the present case of support information.}}

The interval $[a,b]$ associated with the admissible distribution class captures the amount of uncertainty of the decision maker. We will parameterize this uncertainty through $a/b$, which we call the \textit{relative support information}, and which is a unitless quantity ranging from 0 to 1. When $a/b \sim 0$ (either because $a \sim 0$ or $b \gg a$), we have minimal relative support information while when $a/b \sim 1$ we have maximal support information as the endpoints are close. 

\jadelete{For the special case of pricing, i.e., when there is one bidder, the problem is well understood. It was analyzed concurrently in \cite{BergemannSchlag08} for the regret objective and  in \cite{ErenMaglaras10} for the approximation ratio objective. The one-bidder case is simpler because it is sufficient for the seller to optimize over randomized pricing. In the multiple-bidders case, however, the space of feasible mechanisms is much richer and more unstructured, making the seller's problem challenging: only the case $a=0$, which corresponds to minimal relative support information, has been studied. \cite{our-paper-ec} shows that a second-price auction with appropriately randomized reserve price is an optimal  minimax regret mechanism. It is also possible to show that with minimal relative support information, no mechanism can guarantee positive worst-case approximation ratio.  The understanding of the interplay of support information and robust auctions is very limited outside of these special cases, leading to the following question: how does support information affect the structure of optimal robust auctions and achievable performance? \obcomment{This seems to again motivate the setting, which we did already before... although not from the literature angle...} 
}

\jaedit{The understanding of the interplay of support information and robust auctions is very limited outside of very particular special cases ($n = 1$ \citep{BergemannSchlag08,ErenMaglaras10} and $a=0$ \citep{our-paper-ec}), leading to the following question: how does support information affect the structure of optimal robust auctions and achievable performance?} We study optimal  performance and associated mechanisms across the relative support information spectrum and establish richness in the structure of the resulting robust mechanisms with three distinct information regimes corresponding to three mechanism types. In particular, our work subsumes and unifies the three studies mentioned above, characterizing an optimal mechanism and the associated performance for an arbitrary number of bidders $n$ and any support information $[a,b]$, for both the regret and ratio objectives. See Table~\ref{table:previous-work} for a high level summary of known results and the results we develop in this paper.

\begin{table}[ht]
\centering
\begin{center}
\begin{tabular}{ ccc|c } 
 Problem & Information & \multicolumn{2}{c}{Objective} \\ 
 \cline{3-4}
Type & level & Regret & Ratio \\ 
\hline
\hline
pricing ($n=1$) & all $a/b$& \cite{BergemannSchlag08} & \cite{ErenMaglaras10} \\ 
\hline
auctions ($n\ge1$) & $a/b=0$ & \cite{our-paper-ec} & 0 \\ 
 \hline
 \hline
auctions ($n\ge1$) & all $a/b$ & \multicolumn{2}{c}{---\textbf{This work}---}  \\
 \hline
\end{tabular}
\end{center}
\caption{Comparison with the closest previous studies along the dimension of the number of buyers (pricing ($n=1$) vs. auctions (arbitrary $n\ge 1$) and the level of relative support information $a/b$.   }
  \label{table:previous-work}
\end{table}

\subsection{Summary of Main Contributions}

We develop a unified framework for regret and approximation ratio through a single quantity, the minimax $\lambda$-regret, where the $\lambda$-regret is the difference between $\lambda$ times the benchmark revenue and the mechanism revenue, and $\lambda \in (0,1]$ is a constant. \obedit{It is clear that when $\lambda=1$, the $\lambda$-regret reduces to the regret. The fact that the $\lambda$-regret can be used to characterize the maximin ratio relies on an epigraph reformulation of the latter problem, which is fairly standard in the context of optimization with fractional objectives}. Our main contribution, however, is the full characterization of a minimax optimal mechanism and its associated performance for $\lambda$-regret for \textit{any} value of $\lambda \in (0,1]$, \textit{any} number of buyers $n$ and \textit{any} support information $[a,b]$. Since we are primarily interested in the effect of the support $[a,b]$, we initially assume  that the valuations are $n$ i.i.d.~\jaedit{random variables} given the canonical nature of this setting. Our family of optimality results across this spectrum brings to the foreground a very rich structure of optimal mechanisms, and establishes how relative support information critically impacts the structure of optimal mechanisms.  

\paragraph{Novel mechanism class.} A natural candidate for an optimal mechanism is a second-price auction with appropriate random reserve. \jaedit{Previous work \citep{our-paper-ec} shows that this is optimal with zero relative support information, i.e., for $a=0$.}  Suppose for a moment that relative support information is high (i.e., $a \sim b$) and we are restricted to the class of \jaedit{second-price auctions (SPAs)}. Setting any nontrivial reserve is \textit{risky} because when the highest buyer's value is below the reserve the seller does not allocate and gets zero revenue. At the same time, the benefits of a reserve price are limited since the highest and lowest values are close. \obedit{Indeed}, the seller can guarantee a revenue of $a$ with no reserve, which is close to the maximal revenue achievable of $b$. Hence, it should be  intuitive that when relative support information is high, a SPA with no reserve is optimal among the class of SPAs. (A formal result is presented in Section~\ref{subsec:main-spa-rand-mech}.) A natural question is then whether there are mechanisms that can outperform a SPA with no reserve from a robust perspective, and what \obedit{structure they take}\obdelete{their structures are}. 

We define a new mechanism class, with the aim of softening the trade-offs associated with reserve pricing in second-price auctions. These mechanisms, that we dub ``pooling auctions'' \jaedit{($\pool$)}, have an associated threshold.  When the highest bid is above the threshold,  the mechanism allocates to the highest bidder, as in a SPA when the highest bid is above the reserve price. However, when the highest bid is below the threshold, rather than  not allocating as a SPA would do, the seller allocates uniformly at random to any of the bidders. In other words, this auction pools the low types and the lowest bidder may get the item. \jaedit{By increasing the allocation at low values, the mechanism increases the revenue derived from lower-valued bidders, but in doing so, the mechanism can extract less revenue from higher-valued bidders due to incentive compability. In this sense, the pooling auction makes allocation and payment more ``uniform'' across values and softens the tradeoffs from reserve pricing.}

\obcomment{Jerry: this ends abruptly and feels incomplete after the edit. Maybe you want to add a sentence or two about how it softens the trade-offs: allocates more often to low types, so can extract more. at the same time cannot extract as much from high types...  RESPONSE: we already have the intuitive explanation in Section 3 and we don't want to repeat ourselves. I  }

\jadelete{
\paragraph{Characterization of an optimal mechanism.}
Our main theorem fully characterizes a minimax optimal mechanism for any relative support information level, for the $\lambda$-regret (and hence for the minimax regret and maximin ratio). We present an abridged version of our main result here. The full version is available in Theorem~\ref{thm:char-all-mech-full-main}.

\begin{theorem}[Main Theorem, Succinct]\label{thm:char-all-mech-full-main}
Fix $n$ and $\lambda \in (0,1]$. Then, there exists constants $k_l < k_h$, depending only on $n$ and $\lambda$, such that the problem admits an optimal minimax $\lambda$-regret mechanism $m^*$, depending on $a/b$, as follows.
\begin{itemize}
    \item (Low Relative Support Information) For $a/b \leq k_l$, there is a probability distribution of reserves $\Phi$ such that $m^* = \textnormal{SPA}(r)$ with $r \sim \Phi$.
    \item (High Relative Support Information) For $a/b \geq k_h$, there is a probability distribution of thresholds $\Psi$ such that $m^* = \textnormal{POOL}(\tau)$ with $\tau \sim \Psi$.
    \item (Moderate Relative Support Information) For $k_l \leq a/b \leq k_h$, then there is $v^* \in [a,b]$, such that $m^*$ is a randomization over $\textnormal{SPA}$s and $\textnormal{POOL}$s in the following sense. There is a probability distribution $\mathcal{D}$ on $[a,b]$ such that when we draw a sample $r \sim \mathcal{D}$, if $r \leq v^*$, the mechanism is $\textnormal{SPA}(r)$, otherwise the mechanism is $\textnormal{POOL}(r)$. 
\end{itemize}

\end{theorem}
}

\paragraph{Characterization of an optimal mechanism.} Our main result, \obedit{Theorem \ref{thm:char-all-mech-full-main}}, establishes that there always exists an optimal mechanism that is \jaedit{a randomization over second-price auctions (SPA) with different reserves and pooling auctions (POOL) with different thresholds.}  Therefore, an optimal mechanism can be implemented in terms of a random instance of one of these ``base'' mechanisms. Furthermore, three fundamental relative support information regimes emerge. \jaedit{There are thresholds $k_l$ and $k_h$ such that: if $a/b \leq k_l$ (low information regime), $\spa$ with random reserves is optimal; if $a/b \geq k_h$ (high information regime), $\pool$ with random thresholds is optimal; if $k_l \leq a/b \leq k_h$ (moderate information regime), a randomization over $\spa$ and $\pool$ (i.e. interpolation between the two extremes) is optimal.}

\obdelete{The resulting optimal mechanism inherits qualitative features from the base mechanisms $\textnormal{SPA}$ and $\textnormal{POOL}$, and so their properties depend critically on the relative support information.}
We note that $\textnormal{SPA}$ is a \jaedit{``standard''} mechanism, meaning that it never allocates to non-highest bidders, but $\textnormal{POOL}$ is not. Therefore, the optimal mechanism we have identified is standard if and only if $a/b \le k_l$.\obcomment{Jerry, at this stage it is unclear if this is true as we have only identified AN optimal mechanism and did not show uniqueness...} Secondly, $\textnormal{POOL}$ always allocates, meaning it allocates with probability one, whereas $\textnormal{SPA}$ does not (because it does not allocate below the reserve). Therefore, the optimal mechanism always allocates if and only if $a/b \ge k_h$. \obcomment{Jerry, at this stage it is unclear if this is true as we have only identified AN optimal mechanism and did not show uniqueness...} \jacomment{I think it is okay, because we are specifically talking about the optimal mechanism that we have identified.}

While the result above applies for any $\lambda$, we note that for the maximin ratio problem \obedit{(the problem of maximizing the worst-case ratio of revenue to the benchmark)}, the value of $\lambda$ is endogenous, and it is not clear a priori in which information regime one falls. Quite interestingly, we can prove that the optimal maximin ratio mechanism is never in the $\spa$ regime and thus some amount of pooling is always necessary in this case (see Section~\ref{subsec:structure-mech-all} and Proposition~\ref{prop:maximin-ratio-regime}).

\paragraph{Methodology and closed-form characterization.} We characterize the optimal mechanism and worst-case distribution in closed form via a saddle-point argument. In particular, if we assume that a saddle point exists and the optimal mechanism has the form outlined in the previous paragraph, we derive necessary conditions for Nature's worst-case distribution (cf. Section~\ref{subsec:proof-main-thm}) as well as the distributions of random reserve $r$ and threshold $\tau$, under a few fairly mild technical conditions. We then prove that the resulting mechanism is optimal without any additional assumptions. Our methodology provides a unified treatment across all support information levels, and objectives (regret and approximation ratio) in one framework. We also characterize Nature's worst-case distribution as part of our analysis, which takes the following form:  for $a/b \le k_h$, the worst-case distribution is an isorevenue distribution (i.e., zero virtual value), whereas for $a/b > k_h$, the worst-case distribution has a constant positive virtual value in the interior of the support.\footnote{For a distribution with CDF $F$ and density $f$, the virtual value at $v$ is defined by $v - (1-F(v))/f(v)$.}

\paragraph{Quantifying the value of scale information and competition.} Using the machinery we develop, we can exactly compute the minimax regret and maximin ratio for any support information $[a,b]$ and number of buyers $n$ \jaedit{(cf. Figure~\ref{fig:maximin-ratio-intro}).} We show that even a small amount of knowledge can lead to nontrivial guarantees on revenue. For example, even when we only know that values can vary over a full order of magnitude ($a/b = 0.10$), we can guarantee $40.38\%$ of the \jaedit{ideal benchmark} with only $2$ buyers. When the knowledge of the scale is more precise, say, if we know the value up to a factor of two ($a/b = 0.50$), we get a guarantee of 74.63\% with 2 buyers.  With more agents, the guarantees improve (around 5\% and 3\% more, respectively, for an additional buyer). 

\obcomment{Jerry, we may want to reference the table/figure in the main text later... Did you remove these from later sections?}

\jadelete{
Approximation ratio values are shown in Table~\ref{table:maximin-ratio-intro}.

\begin{table}[h!]
\centering

\begin{tabular}{c || c | c | c | c | c | c | c | c | c | c } 
 \hline
 $a/b$ & $10^{-4}$ & 0.01 & 0.05 & 0.10 & 0.20 & 0.25 & 0.30 & 0.50 & 0.75 & 0.99 \\ [0.5ex] 
 \hline
 $n=1$ & 0.0979 & 0.1784 & 0.2503 & 0.3028 & 0.3832 & 0.4191 & 0.4537 & 0.5906 & 0.7766 & 0.9900 \\
 $n=2$ & 0.1086 & 0.2158 & 0.3228 & 0.4038 & 0.5197 & 0.5660 & 0.6077 & 0.7463 & 0.8841 & 0.9957 \\ 
 $n=3$ & 0.1148 & 0.2406 & 0.3673 & 0.4529 & 0.5668 & 0.6110 & 0.6504 & 0.7779 & 0.9001 & 0.9963 \\
 $n=4$ & 0.1194 & 0.2582 & 0.3884 & 0.4743 & 0.5869 & 0.6302 & 0.6684 & 0.7909 & 0.9066 & 0.9966 \\
 $n=8$ & 0.1310 & 0.2836 & 0.4175 & 0.5035 & 0.6139 & 0.6556 & 0.6922 & 0.8080 & 0.9150 & 0.9969 \\
 \hline
\end{tabular}

\caption{Maximin ratio as a function of relative support information $a/b$ for various numbers of buyers $n$.}
\label{table:maximin-ratio-intro}
\end{table}

This table provides quantitative evidence that even a small amount of knowledge can lead to nontrivial guarantees on revenue. For example, Table~\ref{table:maximin-ratio-intro} shows that, even when we only know that values can vary over a full order of magnitude ($a/b = 0.10$), we can guarantee $40.38\%$ of the first best with only $2$ buyers. When the knowledge of the scale is more precise, say, if we know the value up to a factor of two ($a/b = 0.50$), we get 74.63\% with 2 buyers. With more agents, the guarantees improve (around 5\% and 3\% more, respectively, for an additional buyer). 

}

\paragraph{Quantifying the power of mechanism features.} We have identified an optimal mechanism that is a randomization over base mechanisms in the $\textnormal{SPA}$ and $\textnormal{POOL}$ classes. A distinguishing feature of the latter  mechanism is that it is \textit{non-standard}, i.e., it allocates to non-highest bidders. We show that this feature is \textit{necessary for optimality} by characterizing the minimax optimal mechanism and performance within the class of all standard mechanisms and showing that the optimal mechanism strictly improves over optimal standard mechanisms.  More broadly, \jaedit{in Section~\ref{sec:other-mech-classes},} we quantify the value of different features in the mechanism class by computing the worst-case $\lambda$-regret (and thus, regret and ratio) for different nested mechanism subclasses of all DSIC mechanisms: all DSIC mechanisms ($\mathcal{M}_{\textnormal{all}}$), all standard mechanisms ($\mathcal{M}_{\textnormal{std}}$), SPA with random reserve ($\mathcal{M}_{\textnormal{SPA-rand}}$), SPA with deterministic reserve ($\mathcal{M}_{\textnormal{SPA-det}}$), and SPA with no reserve ($\mathcal{M}_{\textnormal{SPA-a}}$). These results are also of independent interest, as they characterize the worst-case performance of commonly used mechanisms. In terms of maximin ratio, we find strict separation for all subclasses except $\mathcal{M}_{\textnormal{SPA-det}}$ versus $\mathcal{M}_{\textnormal{SPA-a}}$ (cf. Figure~\ref{fig:ratio-mech-classes-vary-n}). \jaedit{These results show that} introducing some features (such as non-standardness) can lead to significant performance improvements.

\obcomment{Jerry, I think in the above, it is good to refer the results in the main text Cf .ZZZ} \jacomment{I already cite Figure~\ref{fig:ratio-mech-classes-vary-n} which is the figure version of the original table as well as Section~\ref{sec:other-mech-classes} which is }

\jadelete{

We present in Table \ref{table:maximin-ratio-mech-class-intro-n-4} the maximin ratio across mechanism classes and levels of relative support information, and observe that introducing some features (such as non-standardness) can lead to significant performance improvements. \obcomment{In the spirt of shortening the intro and being to the point, Maybe sufficient to highlight separation across classes in intro, and leave actual numbers to the section in the main text?}

\begin{table}[h!]
\centering

\begin{tabular}{c || c | c | c | c | c | c | c  } 
 \hline
 $a/b$ & 0.10 & 0.20 & 0.25 & 0.30 & 0.50 & 0.75 & 0.99 \\ [0.5ex] 
 \hline
all mechanisms & 0.4743 & 0.5869 & 0.6302 & 0.6684 & 0.7909 & 0.9066 & 0.9966 \\
standard mechanisms & 0.4137 & 0.5236 & 0.5684 & 0.6092 & 0.7471 & 0.8853 & 0.9958 \\
SPA with random reserve & 0.3918 & 0.5045 & 0.5517 & 0.5951 & 0.7424 & 0.8849 & 0.9958 \\
SPA with no reserve & 0.3586 & 0.4933 & 0.5457 & 0.5923 & 0.7424 & 0.8849 & 0.9958 \\
 \hline
\end{tabular}

\caption{Maximin ratio for each mechanism class as a function of $a/b$, with $n = 4$ buyers.}
\label{table:maximin-ratio-mech-class-intro-n-4}
\end{table}

}

\subsection{Related Work}

\jacomment{REVIEWER AND AE ASK: Specifically, the review team would like the authors to better compare with \cite{our-paper-ec} by including a technical comparison of saddle techniques and also a more detailed comparison on the insights.}

\paragraph{Auction Design and Mechanism Design}  \cite{Vickrey61}, \cite{Myerson81} and \cite{RileySamuelson81} pioneered a long line of work on the design of auctions and other economic mechanisms with strategic agents. In particular, \cite{Myerson81} shows that if agent valuation distributions are known, i.i.d. and \textit{regular}, then the optimal (expected-revenue-maximizing) mechanism is a \textit{second-price auction with reserve}. This is the classical paradigm of Bayesian mechanism design. \jaedit{However, once we go beyond the simplest settings, this paradigm quickly leads to very complicated ``optimal'' mechanisms that are too detail-dependent and potentially fragile. In response, part of the algorithmic game theory literature instead focuses on proving approximation guarantees for specific ``simple'' mechanisms \citep{RougTalg2019}.  This line of work still assumes that the value distribution is known to both the designer and all players, and the players play a Bayes-Nash equilibrium. In our setting, however, the value distribution is \textit{not} known, the performance is evaluated in the \textit{worst case} rather than the ``Bayesian'' average case, and we require a dominant strategy equilibrium. }

\obcomment{JERRY, don't these two papers also speak about robustness/prior-free... In this case it is awakward to have them at a point when contrast based on the next sentence RESPONSE: \cite{RougTalg2019} only has Bayesian, whereas \cite{Hartline-approx-md-book} has both Bayesian (Chapter 3,4,6,8) and worst-case approximation (Chapter 5,7). I rewrote it as above. I also say ``part of'' the AGT literature rather than just ``AGT literature''}

\paragraph{Robust Mechanism Design} \obedit{The closest line of work to ours is how to robustly sell an item with non-Bayesian uncertainty on valuation distributions. This question has been studied in the ``prior-independent approximation'' literature in algorithmic game theory \citep[Chapter 5]{Hartline-approx-md-book}, often assuming the shape of the distribution (such as regular or monotone hazard rate) is known but not the ``scale'' of the distribution. In the present paper,  we assume we know the scale of the distribution (as captured by the bounds $[a,b]$) but not the shape and we derive both regret and ratio guarantees.}   The one-agent case reduces to a pricing problem; \cite{BergemannSchlag08} and \cite{ErenMaglaras10} provide exact characterization for minimax regret and maximin ratio pricing, respectively. \cite{KocyigitIKW20-distributionally-robust-md,KocyigitRK21-old} analyze minimax regret against any number $n$ of agents whose valuation distributions are arbitrarily correlated with a known upper bound on the support. They show that their problem ``reduces'' to the one-agent case because Nature can choose the worst-case distribution to only have one effective bidder. 

\jaedit{Technically and conceptually, \cite{our-paper-ec} is the closest to our work. They show that the second-price auction is robustly optimal for any number $n$ of agents when only the upper bound of the valuations \jaedit{is} known, whereas our work assumes that both the lower bound $a$ and the upper bound $b$ are known. This allows us to capture the entire spectrum of support information. Whereas both their work and ours share the guess-and-verify saddle point framework, the main difficulty of this framework is to identify the \textit{form} of the optimal mechanism in the first place. This makes our departure from the case $a=0$ challenging: it requires us to explore the space of DSIC mechanisms beyond second price auctions. As soon as one departs from SPAs, the space of mechanisms is much larger and it is not clear what should be a good candidate class a priori. We identify new focal mechanisms (namely, pooling auctions) and show that qualitatively different forms of optimal mechanisms emerge, depending on the amount of information $a/b$. Lastly, they only focus on regret, whereas we unify both regret and ratio objectives in a single framework.

}

\jaedit{Previous works that study robust mechanism design  tend to identify second-price auctions (SPA) as optimal \citep{our-paper-ec,BachrachTalgamCohen22,KocyigitKR20-new,Zhang-auctioning-multiple-goods-without-priors,Zhang22-corr-robust-auction,Che22,AllouahBesbes20}.}  One of the main contributions of this paper is to show why SPA fails to be optimal when we have sufficient relative support information and propose a new \textit{building block} for robust mechanism design, the \textit{pooling auction} mechanism. Other than the fact that the optimal mechanism in our setting is composed of these new mechanisms, this new class  may also be of independent interest in other robust mechanism design problems.  
\obcomment{JERRY, should the above paragraph be connected to the robust mechanisms paragraph?}

\jaedit{
Our work is also related to a broader literature on robustness in mechanism design and contracting \citep{Carroll19-robust-survey}. In particular, while we highlight the work on robustness to distributions here because they are most related to our work, there are other forms of robustness as well, e.g., robustness to higher-order beliefs \citep{BergemannMorris05-robust-md-ecta,BergemannMorris13-robust-md-survey}, robustness to collusion and renegotiation \citep{che-kim-06,che-kim-09,carroll-segal-resale}, and robustness to strategic behavior that is weaker than dominant strategy \citep{ChungEly07,babaioff-undominated,arya2009robust}. Robust mechanism design also has conceptual links to robust and distributionally robust optimization; see \cite{bertsimas-robust-survey} and \cite{dro-review} for overviews.

}

\paragraph{Optimal Mechanisms with Partial Information}  Our work is also related to the design of robustly optimal pricing and mechanisms with partial information about the distribution. Some works assume access to samples drawn from the i.i.d. distribution \citep{ColeRoughgarden14,DhangwatnotaiRoughgardenYan15,AllouahBahamouBesbes22-pricing-with-samples,FengHartlineLi21-revelation-gap-pricing-samples,FuHHK21} while others assume that summary statistics of distributions are known \citep{AzarDMW13,Suzdaltsev20-dr-auction,Suzdaltsev-dr-pricing,BachrachTalgamCohen22,AllouahBahamouBesbes21-pricing-single-point}. \jacomment{Should we move this under the Robust Mechanism Design header also? But that header is already very long. }


\jadelete{\cite{our-paper-ec} show that SPA minimizes worst-case regret for an auction with $n$ bidders and minimal support information for a wide range of distribution classes. \cite{BachrachTalgamCohen22} show that SPA maximizes worst-case revenue for an auction with two i.i.d. bidders when only the mean and the upper bound of the support are known. \cite{KocyigitKR20-new} and \cite{Zhang-auctioning-multiple-goods-without-priors} show that separate SPAs minimize worst-case regret for an auction with multiple goods and multiple bidders when only the upper bounds are known.
\cite{Zhang22-corr-robust-auction} shows that if the seller knows only the marginal distribution of each bidder but not the joint distribution, then SPA maximizes worst-case revenue over all DSIC mechanisms in the case of two bidders and over all standard DSIC mechanisms in the case of $n \geq 3$ bidders. \cite{Che22} assumes only the upper bound and the mean are known but considers a different notion of robustness and shows that SPA is optimal among a class of mechanisms he calls competitive. \cite{AllouahBesbes20} shows that SPA (without reserve) achieves exactly the optimal worst-case fraction of the second best benchmark revenue when there are two bidders and the distribution has monotone hazard rate.

 The upshot of this discussion is that SPA has been the focal candidate for a robust mechanism.
}

\paragraph{Pooling in Auctions} While the specific form of the pooling auction $\pool$ that we propose is new, the more general notion of pooling in auctions has appeared in the literature, starting from the pioneering work on revenue-maximizing auctions of \cite{Myerson81}. When $F$ is not regular, it is shown the distribution must be ``ironed'' such that the all bidders in the same ironing interval have the same allocation, that is, their types are pooled. The main difference is that in \cite{Myerson81}, there is a known focal distribution $F$ to iron, whereas there is no single distribution in our problem, and the pooling emerges naturally from the worst-case analysis over all feasible distributions. \jadelete{The distribution of pooling thresholds depends only on $a,b,n,\lambda$ and not on any specific distribution.\footnote{This is why we name our auctions $\pool$ rather than IRON to distinguish it from standard Myerson ironing. }} In fact, we can see from the \jaedit{proof of the} main theorem that the worst case distributions are all regular. Therefore, \textit{even if we only consider the worst case over all regular distributions, the robustly optimal mechanism will still pool}, whereas if we know the true distribution to be any specific regular distribution, the Bayesian optimal mechanism will not pool by Myerson. The robust auction framework therefore gives qualitatively different prescriptions. \obedit{Beyond \cite{Myerson81},  pooling in auctions has been shown to be optimal in a variety of settings \citep{bergemann-pooling, feldman-lookahead-pooling,laffont1996optimal,pai2014optimal} but to the best of our knowledge, these work all operate in the Bayesian setting and as such the driver of pooling appears different.} 




\jadelete{
Beyond \cite{Myerson81}, there are also some recent works featuring the pooling element. For example, motivated by conflation in digital advertising, \cite{bergemann-pooling} shows that the optimal information disclosure in auctions is to reveal low types and pool high types. The setting is different from ours in that in their setting, the distribution $F$ is known and bidders are Bayesian and the goal is to optimize the \textit{information structure}. Their mechanism  pools high values, whereas our pooling auctions pools low values. \cite{feldman-lookahead-pooling} proposes lookahead auctions with pooling and prove revenue guarantees in the Bayesian setting. Pooling also arises in the design of optimal mechanisms for financially constrained buyers even in the presence of standard regularity conditions on valuations~\citep{laffont1996optimal,pai2014optimal}.
}

\section{Problem Formulation}\label{sec:problem-formulation}

The seller wants to sell an indivisible object to one of $n$ buyers. The $n$ buyers have valuations drawn from a joint cumulative distribution $\mathbf{F}$. The seller does not know $\mathbf{F}$, and only knows a lower bound $a$ and an upper bound $b$ of the valuation of each buyer. That is, the seller only knows that the  support of the buyers' valuations belongs to $[a,b]^{n}$. \obdelete{As discussed in the introduction, the seller chooses a mechanism to minimize the worst-case ``gap'' (either absolute or relative) between the mechanism revenue and benchmark revenue; we will formalize this later in this section.}

\paragraph{Seller's Problem.} We model our problem as a game between the seller and Nature, in which the seller first selects a selling mechanism from a given class $\mathcal{M}$ and then Nature may counter such a mechanism with any distribution from a given class $\mathcal{F}$. Buyers' valuations are then drawn from the distribution chosen by Nature and they participate in the seller's mechanism.

We will now consider the choice of the mechanism class $\mathcal{M}$. A selling mechanism $m = (\mathbf{x}, \mathbf{p})$ is characterized by an allocation rule $\mathbf{x}$ and a payment rule $\mathbf{p}$, where $\mathbf{x}: [a,b]^n \to [0,1]^n$ and $\mathbf{p}: [a,b]^n \to \mathbb{R}$. Given buyers' valuations $\mathbf{v} \in [a,b]^n$, $x_i(\mathbf{v})$ gives the probability that the item is allocated to buyer $i$, and $p_i(\mathbf{v})$ his expected payment to the seller. In our main result, we will consider the class $\mathcal{M}_{\textnormal{all}}$ of all dominant strategy incentive compatible (DSIC) direct mechanisms. A mechanism is DSIC if and only if it is optimal for every buyer to report her true valuation (IR) and participate in the mechanism (IC), regardless of the realization of valuations of the other buyers, and  the seller can allocate at most one item (AC).
More formally, we require that the mechanism $m = (\mathbf{x}, \mathbf{p})$ satisfies the following constraints:
\begin{align*}
    v_i x_i(v_i, \mathbf{v}_{-i}) - p_i(v_i,\mathbf{v}_{-i}) &\geq 0, \quad \forall i, v_i, \mathbf{v}_{-i} &\text{(IR)}\\
    v_i x_i(v_i, \mathbf{v}_{-i}) - p_i(v_i,\mathbf{v}_{-i}) &\geq v_i x_i(\hat{v}_i, \mathbf{v}_{-i}) - p_i(\hat{v}_i,\mathbf{v}_{-i}) \quad \forall i,v_i,\mathbf{v}_{-i},\hat{v}_i &\text{(IC)} \\
    \sum_{i=1}^{n} x_i(v_i,\mathbf{v}_{-i}) &\leq 1 \quad \forall \mathbf{v} \,. &\text{(AC)}
\end{align*}
Note that we allow the seller's mechanism to be randomized. We can now define the class of all DSIC mechanisms
\begin{align}\label{mech-ir-ic-ac}
    \mathcal{M}_{\textnormal{all}} = \left\{ (\mathbf{x},\mathbf{p}): \text{(IR), (IC), (AC)} \right\}.
\end{align}

\paragraph{Seller's Objective.} Informally, the seller seeks to minimize the ``gap'' between the expected revenue $\bE_{\mathbf{v} \sim \mathbf{F} } \left[\sum_{i=1}^{n} p_i(\mathbf{v}) \right]$ relative to the benchmark associated with the revenues that could be collected when the valuations of the buyers are known $\bE_{\mathbf{v} \sim \mathbf{F}} \left[ \max(\mathbf{v}) \right]$.\footnote{The benchmark we use, maximum revenue with known \textit{valuations}, is called the first-best benchmark. Another plausible benchmark we can use is the second-best benchmark, maximum revenue with known \textit{distributions}. Both benchmarks are extensively used in the economics, operations, and computer science literatures; examples of papers using the first-best benchmark include  \citep{BergemannSchlag08,CaldenteyLiuLobel17,robust-monopoly-regulation,kleinberg-yuan}.  The first-best benchmark is also reminiscent of the \textit{offline optimum} benchmark, which is extensively used in the analysis of algorithms \citep{BorodinElYaniv}.} We consider two notions of gaps. First is the \textit{absolute gap}, or \textit{regret}, defined by
\begin{align}
    \text{Regret}(m,\mathbf{F}) &= \bE_{\mathbf{v} \sim \mathbf{F}} \left[ \max(\mathbf{v}) - \sum_{i=1}^{n} p_i(\mathbf{v}) \right]. \label{eq:regret} 
\end{align}

\noindent Second is the \textit{relative gap} or \textit{approximation ratio}, defined by
\begin{align}
    \text{Ratio}(m,\mathbf{F}) &= \frac{\bE_{\mathbf{v} \sim \mathbf{F}} \left[  \sum_{i=1}^{n} p_i(\mathbf{v}) \right]}{\bE_{\mathbf{v} \sim \mathbf{F}} \left[ \max(\mathbf{v})  \right]}.  \label{eq:ratio} 
\end{align}

After the seller chooses a mechanism $m$, Nature then chooses a distribution $\mathbf{F}$ from a given class of distributions $\mathcal{F}$ such that the valuation of the $n$ agents $\mathbf{v} \in \mathbb{R}_{+}^n$ are drawn from $\mathbf{F}$. The seller aims to select the mechanism $m$ to either minimize the worst-case regret or maximize the worst-case approximation ratio. Our goal, therefore, is to characterize the minimax regret and maximin ratio for different classes of mechanisms $\mathcal{M}$ and classes of distributions $\mathcal{F}$:
\begin{align}
  \textnormal{MinimaxRegret}(\mathcal{M},\mathcal{F}) &:=  \inf_{m \in \mathcal{M}} \sup_{\mathbf{F} \in \mathcal{F}} \text{Regret}(m, \mathbf{F})\,, \label{eq:minimaxregret} \\
  \textnormal{MaximinRatio}(\mathcal{M},\mathcal{F}) &:=  \sup_{m \in \mathcal{M}} \inf_{\mathbf{F} \in \mathcal{F}} \text{Ratio}(m, \mathbf{F}).\label{eq:maximinratio}
\end{align}
 
For  $\lambda \in (0,1]$, define the $\lambda$-regret
\begin{align*}
R_\lambda(m,\mathbf{F}) &= \bE_{\mathbf{v} \sim \mathbf{F}} \left[ \lambda \max(\mathbf{v}) - \sum_{i=1}^{n} p_i(\mathbf{v}) \right].
\end{align*}
To unify the minimax regret and maximin ratio objectives, we will focus on the minimax $\lambda$-regret defined by
\begin{align}\label{eqn:minimax-lmbd-regret-def}
R_\lambda(\mathcal{M},\mathcal{F}) &:= \inf_{(x,p) \in \mathcal{M}} \sup_{\mathbf{F} \in \mathcal{F}} R_\lambda(m,\mathbf{F}).
\end{align}
 The following proposition, whose proof is given in Appendix~\ref{app:sec:problem-formulation}, formalizes that the values of problems \eqref{eq:minimaxregret} and \eqref{eq:maximinratio} can be obtained from a characterization of the problem in  \eqref{eqn:minimax-lmbd-regret-def}.

\begin{proposition}\label{prop:minimax-regret-exp}
$\textnormal{MinimaxRegret}(\mathcal{M},\mathcal{F}) = R_1(\mathcal{M},\mathcal{F})$ and $\textnormal{MaximinRatio}(\mathcal{M},\mathcal{F})$ is the largest constant $\lambda \geq 0$ such that $R_\lambda(\mathcal{M},\mathcal{F}) \leq 0$.
\end{proposition}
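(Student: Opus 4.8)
The plan is to prove the two claims separately, both by unwinding definitions. For the first claim, observe that $R_1(m,\mathbf{F}) = \bE_{\mathbf{v}\sim\mathbf{F}}[\max(\mathbf{v}) - \sum_i p_i(\mathbf{v})] = \text{Regret}(m,\mathbf{F})$ holds pointwise in $(m,\mathbf{F})$, directly from the definitions of $R_\lambda$ with $\lambda = 1$ and of $\text{Regret}$ in \eqref{eq:regret}. Taking $\inf_{m\in\mathcal{M}}\sup_{\mathbf{F}\in\mathcal{F}}$ of both sides immediately gives $R_1(\mathcal{M},\mathcal{F}) = \textnormal{MinimaxRegret}(\mathcal{M},\mathcal{F})$. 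This part is essentially immediate.

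For the second claim, the key observation is the pointwise identity $R_\lambda(m,\mathbf{F}) = \bE_{\mathbf{v}\sim\mathbf{F}}[\lambda\max(\mathbf{v})] - \bE_{\mathbf{v}\sim\mathbf{F}}[\sum_i p_i(\mathbf{v})]$, so that $R_\lambda(m,\mathbf{F}) \le 0$ if and only if $\bE_{\mathbf{v}\sim\mathbf{F}}[\sum_i p_i(\mathbf{v})] \ge \lambda\,\bE_{\mathbf{v}\sim\mathbf{F}}[\max(\mathbf{v})]$, i.e., if and only if $\text{Ratio}(m,\mathbf{F}) \ge \lambda$ (using that the benchmark $\bE_{\mathbf{v}\sim\mathbf{F}}[\max(\mathbf{v})] > 0$ since $a > 0$, so the ratio is well-defined and the division is valid). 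I would first argue that $\lambda \le \textnormal{MaximinRatio}(\mathcal{M},\mathcal{F})$ is equivalent to the existence of some $m\in\mathcal{M}$ with $\inf_{\mathbf{F}\in\mathcal{F}}\text{Ratio}(m,\mathbf{F}) \ge \lambda$, hence (by the equivalence above, applied uniformly over $\mathbf{F}$) to the existence of $m$ with $\sup_{\mathbf{F}\in\mathcal{F}} R_\lambda(m,\mathbf{F}) \le 0$, which is exactly $R_\lambda(\mathcal{M},\mathcal{F}) \le 0$. Then I would take the supremum over all such $\lambda$ on both sides to conclude that $\textnormal{MaximinRatio}(\mathcal{M},\mathcal{F})$ is the largest $\lambda$ with $R_\lambda(\mathcal{M},\mathcal{F}) \le 0$.

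The one subtlety to handle carefully is the role of infima/suprema and whether they are attained — i.e., matching "$\sup_m \inf_{\mathbf{F}} \text{Ratio} \ge \lambda$" with "$\exists m:\ \inf_{\mathbf{F}}\text{Ratio}(m,\mathbf{F}) \ge \lambda$". This requires a small argument: if the maximin ratio equals some value $\lambda^*$, then for every $\lambda < \lambda^*$ there is an $m$ achieving $\inf_{\mathbf{F}}\text{Ratio} \ge \lambda$, so $R_\lambda(\mathcal{M},\mathcal{F}) \le 0$; and for $\lambda > \lambda^*$ no such $m$ exists, so $R_\lambda(\mathcal{M},\mathcal{F}) > 0$. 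A continuity/monotonicity remark — $R_\lambda(\mathcal{M},\mathcal{F})$ is nondecreasing in $\lambda$, and the relevant sets are closed appropriately — then pins down that $\lambda^*$ itself is the largest $\lambda$ with $R_\lambda(\mathcal{M},\mathcal{F}) \le 0$. I expect this bookkeeping around whether the boundary value $\lambda = \lambda^*$ satisfies $R_{\lambda^*} \le 0$ (rather than just $R_\lambda \le 0$ for $\lambda < \lambda^*$) to be the only real obstacle; it is resolved by noting $\lambda \mapsto R_\lambda(m,\mathbf{F})$ is affine (hence continuous) in $\lambda$ and $R_\lambda(\mathcal{M},\mathcal{F})$ is an infimum of such affine functions, therefore upper semicontinuous in $\lambda$, so the sublevel set $\{\lambda : R_\lambda(\mathcal{M},\mathcal{F}) \le 0\}$ is closed and its supremum is attained.
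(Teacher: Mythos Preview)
Your approach is the same as the paper's: both rewrite the maximin ratio as an epigraph problem and use the pointwise equivalence $R_\lambda(m,\mathbf{F}) \le 0 \iff \text{Ratio}(m,\mathbf{F}) \ge \lambda$. The paper's proof is terser and, in fact, silently glosses over the attainment issue you raise in your last paragraph; it simply writes the chain of equivalences and stops. So your instinct to worry about whether the boundary value $\lambda^*$ actually satisfies $R_{\lambda^*}(\mathcal{M},\mathcal{F}) \le 0$ is more careful than what the paper does.

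That said, your resolution of this point has a small error. You argue that $R_\lambda(\mathcal{M},\mathcal{F})$ is an ``infimum of affine functions'' and hence upper semicontinuous, so the sublevel set $\{\lambda : R_\lambda \le 0\}$ is closed. Two issues: first, $R_\lambda(\mathcal{M},\mathcal{F}) = \inf_m \sup_{\mathbf{F}} R_\lambda(m,\mathbf{F})$ is an infimum of \emph{convex} (not affine) functions of $\lambda$, since the inner $\sup_{\mathbf{F}}$ is over affine functions; second, and more importantly, upper semicontinuity gives closed \emph{super}level sets, not sublevel sets, so the conclusion does not follow. The clean fix is to observe that each $R_\lambda(m,\mathbf{F})$ is affine in $\lambda$ with slope $\bE_{\mathbf{F}}[\max(\mathbf{v})] \in [a,b]$, so $\sup_{\mathbf{F}} R_\lambda(m,\mathbf{F})$ is $b$-Lipschitz in $\lambda$, and an infimum of uniformly Lipschitz functions is Lipschitz; hence $\lambda \mapsto R_\lambda(\mathcal{M},\mathcal{F})$ is continuous and the sublevel set is indeed closed.
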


\jaedit{We note that the traditional notion of worst-case performance can also be obtained from $\textnormal{MaximinRevenue}(\mathcal{M},\mathcal{F}) = -R_0(\mathcal{M},\mathcal{F})$. However, with only support information $[a,b]$, the problem is trivial: without the $F$-dependent benchmark counteracting, Nature will simply put all the weight of the worst-case distribution $F$ at $a$.}

\paragraph{Admissible distributions.} Lastly, we consider the choice of the class of admissible distributions  $\mathcal{F}$. This class can be seen as capturing the ``power'' of Nature: the larger the class, the more powerful/adversarial Nature becomes. To simplify exposition, we will assume for most of the paper that $\mathcal{F}$ is a class of independently and identically distributed (i.i.d.) distributions $\mathcal{F}_{\textnormal{iid}}$, defined formally as follows.
\begin{definition}
The class  $\mathcal{F}_{\textnormal{iid}}$ consists of all distributions 
such that there exists a distribution $F$ with support on $[a,b]$, referred to as the marginal, such that $\mathbf{F}(\mathbf{v}) = \prod_{i=1}^{n} F(v_i)$ for every $\mathbf{v} \in [a,b]^n$.
\end{definition}

\jadelete{
\obcomment{I am not sure why we introduce here the saddle point approach? It seems strange to introudce it in the model section.}
\paragraph{Saddle point approach.}

For each $\mathcal{M}$ and $\mathcal{F}$ considered in this work, we establish the optimality of a mechanism and characterize the associated performance via a saddle point approach. 

\begin{definition}\label{def:saddle-point}
$(m^*,\mathbf{F}^*)$ is a \emph{saddle point} of $R_\lambda(m,\mathbf{F})$ defined in (\ref{eqn:minimax-lmbd-regret-def}) if and only if
\begin{align*}
R_\lambda(m^*,\mathbf{F}) \leq R_\lambda(m^*,\mathbf{F}^*) \leq R_\lambda(m,\mathbf{F}^*) \quad \text{ for all } m \in \mathcal{M}, \mathbf{F} \in \mathcal{F}.
\end{align*}
\end{definition}

Note that if $(m^*,\mathbf{F}^*)$ is a saddle point then $\inf_{m \in \mathcal{M}} \sup_{\mathbf{F} \in \mathcal{F}} R_\lambda(m,\mathbf{F}) \leq \sup_{\mathbf{F} \in \mathcal{F}} R_\lambda(m^*,\mathbf{F}) \leq  R_\lambda(m^*,\mathbf{F}^*) \leq \inf_{m \in \mathcal{M}} R_\lambda(m,\mathbf{F}^*) \leq \inf_{m \in \mathcal{M}} \sup_{\mathbf{F} \in \mathcal{F}} R_\lambda(m,\mathbf{F})$. Therefore, $R_\lambda(m^*,\mathbf{F}^*) = \inf_{m \in \mathcal{M}} \allowbreak \sup_{\mathbf{F} \in \mathcal{F}} R_\lambda(m,\mathbf{F})$ is the minimax $\lambda$-regret (the optimal performance), $m^*$ is an optimal mechanism, and $\mathbf{F}^*$ is a corresponding worst-case distribution. Therefore, it is sufficient to exhibit a saddle point and verify the inequalities in Definition~\ref{def:saddle-point} to obtain an optimal mechanism and its associated performance.
}

\section{Optimal mechanisms over the class of all DSIC mechanisms}\label{sec:main-all-mech}

In this section, we characterize \jaedit{an} optimal mechanism over the class of all DSIC mechanisms $\mathcal{M}_{\textnormal{all}}$ for the minimax $\lambda$-regret problem against i.i.d. distributions $\mathcal{F}_{\textnormal{iid}}$ for any $\lambda \in (0,1]$, support information $[a,b]$, and number of bidders $n$. Our main theorem presents an optimal mechanism for each uncertainty regime. We will then use the main theorem to gain insights into the structure and performance of the optimal mechanism.

\jadelete{As discussed in the introduction, we consider the mechanism classes $\spa$ and $\pool$, which are defined formally as follows.}

\jaedit{Let $v^{(1)}$ and $v^{(2)}$ be the highest and second-highest entry of $\mathbf{v}$ and let  $k$ denote the number of buyers with value equal to $v^{(1)}$.}
\jaedit{We define the $\spa$ and $\pool$ mechanism classes as follows.}

\begin{definition}[second-price and pooling auctions]\label{def:thres-mech-default} 


A \emph{second-price auction} with reserve $r$, denoted $\spa(r)$, is defined by the allocation rule $x: [a,b]^n \to [0,1]^n$ \jaedit{and the payment rule $p: [a,b]^n \to [0,1]^n$} given by, for each $i \in [n]$,
\jaedit{
\begin{align*}
x_i(\mathbf{v}) = \begin{cases}
\frac{\1(v_i = v^{(1)})}{k}  &\text{ if } v^{(1)} \geq r \,, 
\\
0 &\text{ if }  v^{(1)} < r \,,
\end{cases}
\quad \textnormal{ and } \quad 
p_i(\mathbf{v}) = \begin{cases}
v^{(2)} \frac{\1(v_i = v^{(1)})}{k}  &\text{ if } v^{(2)} \geq r \,, 
\\
r \frac{\1(v_i = v^{(1)})}{k}  &\text{ if } v^{(1)} \geq r > v^{(2)} \,, 
\\
0 &\text{ if }  v^{(1)} < r \, ,
\end{cases}
\end{align*}
}%
\jaedit{Given a distribution of reserve prices $\Phi$, we denote by $\spa(\Phi)$ a second-price auction with random reserve prices drawn from $\Phi$.}

A \emph{pooling auction} with threshold $\tau$, denoted $\pool(\tau)$, is defined by the allocation rule $x: [a,b]^n \to [0,1]^n$ \jaedit{and the payment rule $p: [a,b]^n \to [0,1]^n$} given by, for each $i \in [n]$,
\jaedit{
\begin{align*}
x_i(\mathbf{v}) = \begin{cases}
\frac{\1(v_i = v^{(1)})}{k}  &\text{ if } v^{(1)} \geq r \,, 
\\
\frac{1}{n} &\text{ if }  v^{(1)} < r \,,
\end{cases}
\quad \textnormal{ and } \quad 
p_i(\mathbf{v}) = \begin{cases}
v^{(2)} \frac{\1(v_i = v^{(1)})}{k}  &\text{ if } v^{(2)} \geq r \,, 
\\
\frac{(n-1)\tau + a}{n } \frac{\1(v_i = v^{(1)})}{k}  &\text{ if } v^{(1)} \geq r > v^{(2)} \,, 
\\
\frac{a}{n} &\text{ if }  v^{(1)} < r \, 
\end{cases}
\end{align*}
}%
\jaedit{Given a distribution of thresholds $\Psi$, we denote by $\pool(\Psi)$ a pooling auction with random thresholds drawn from $\Psi$.}
\end{definition}


\jaedit{
Note that in any DSIC mechanism (including $\spa$ and $\pool$ defined above), the payment rule is uniquely determined from the allocation rule via Myerson's envelope formula. 
In a pooling auction, we can still use the threshold to differentiate between bidders with different values and potentially extract more revenue, without risking the zero payoff that comes from not allocating the item. Of course, this has implications for payments.  We illustrate this interplay in 
 Figure~\ref{fig:spa-pool-alloc-pay}, where we depict, for the case of two agents, the allocation rule $x(\mathbf{v})$ and revenue $p_1(\mathbf{v})+p_2(\mathbf{v})$ at each valuation vector $\mathbf{v} = (v_1,v_2)$ for three mechanisms:  $\textnormal{SPA}$ (no reserve),  $\textnormal{SPA}(r)$, and  $\textnormal{POOL}(r)$. 

 We can see intuitively that pooling low types indeed softens the tradeoff associated with reserve pricing.  By increasing the allocation for the low types, we increase the payment accrued from lower-value bidders but, at the same time, we decrease the payment accrued from higher-value bidders to guarantee  incentive compatibility (so higher-value bidders do not pretend to be lower-value ones). When the relative support information is high (i.e., $a \sim b$), the lower-value and higher-value bidders are not too different, and this softer tradeoff has the potential to lead to  more robust mechanisms. 
 \begin{figure}[h!]
	\begin{subfigure}{0.32\linewidth}
    	\centering
    	\begin{tikzpicture}[>=triangle 45,xscale=3.5,yscale=3.5]
\draw[->] (0,0) -- (1.2,0) node[below] {$v_1$};
\draw[->] (0,0) -- (0,1.2) node[left] {$v_2$};
\draw[thick,-] (1,0)--(1,1);
\draw[thick,-] (0,1)--(1,1);
\node[below] at (1,0){$b$};
\node[left,below] at (0,0){$a$};
\node[left] at (0,1){$b$};

\draw[thick,-] (0,0) --(1,1);
\node at (0.7,0.35){P1};
\node at (0.35,0.7){P2};

\end{tikzpicture}
    	\caption{$\textnormal{SPA}(a)$ allocation rule}
    \end{subfigure}%
	\begin{subfigure}{0.32\linewidth}
    	\centering
    	\begin{tikzpicture}[>=triangle 45,xscale=3.5,yscale=3.5]
\draw[->] (0,0) -- (1.2,0) node[below] {$v_1$};
\draw[->] (0,0) -- (0,1.2) node[left] {$v_2$};
\draw[thick,-] (1,0)--(1,1);
\draw[thick,-] (0,1)--(1,1);
\node[below] at (0.6,0){$r$};
\node[below] at (1,0){$b$};
\node[left,below] at (0,0){$a$};
\node[left] at (0,0.6){$r$};
\node[left] at (0,1){$b$};

\draw[thick,-] (0.6,0)--(0.6,0.6);
\draw[thick,-] (0,0.6)--(0.6,0.6);
\draw[thick,-] (0.6,0.6) --(1,1);
\node at (0.8,0.35){P1};
\node at (0.35,0.8){P2};
\node at (0.3,0.3){P$\emptyset$};

\end{tikzpicture}
    	\caption{$\textnormal{SPA}(r)$ allocation rule}
    \end{subfigure}%
	\begin{subfigure}{0.32\linewidth}
    	\centering
    	\begin{tikzpicture}[>=triangle 45,xscale=3.5,yscale=3.5]
\draw[->] (0,0) -- (1.2,0) node[below] {$v_1$};
\draw[->] (0,0) -- (0,1.2) node[left] {$v_2$};
\draw[thick,-] (1,0)--(1,1);
\draw[thick,-] (0,1)--(1,1);
\node[below] at (0.6,0){$r$};
\node[below] at (1,0){$b$};
\node[left,below] at (0,0){$a$};
\node[left] at (0,0.6){$r$};
\node[left] at (0,1){$b$};

\draw[thick,-] (0.6,0)--(0.6,0.6);
\draw[thick,-] (0,0.6)--(0.6,0.6);
\draw[thick,-] (0.6,0.6) --(1,1);
\node at (0.8,0.35){P1};
\node at (0.35,0.8){P2};
\node at (0.35,0.3){$\small \begin{cases} \textnormal{P1} \text{ w.p.} \frac{1}{2} \\ \textnormal{P2} \text{ w.p.} \frac{1}{2} \end{cases}$};

\end{tikzpicture}
    	\caption{$\textnormal{POOL}(r)$ allocation rule}
    \end{subfigure}%
    
	\begin{subfigure}{0.32\linewidth}
    	\centering
    	\begin{tikzpicture}[>=triangle 45,xscale=3.5,yscale=3.5]
\draw[->] (0,0) -- (1.2,0) node[below] {$v_1$};
\draw[->] (0,0) -- (0,1.2) node[left] {$v_2$};
\draw[thick,-] (1,0)--(1,1);
\draw[thick,-] (0,1)--(1,1);
\node[below] at (1,0){$b$};
\node[left,below] at (0,0){$a$};
\node[left] at (0,1){$b$};

\draw[thick,-] (0,0) --(1,1);
\node at (0.7,0.35){$v_2$};
\node at (0.35,0.7){$v_1$};

\end{tikzpicture}
    	\caption{$\textnormal{SPA}(a)$ revenue}
    \end{subfigure}%
	\begin{subfigure}{0.32\linewidth}
    	\centering
    	\begin{tikzpicture}[>=triangle 45,xscale=3.5,yscale=3.5]
\draw[->] (0,0) -- (1.2,0) node[below] {$v_1$};
\draw[->] (0,0) -- (0,1.2) node[left] {$v_2$};
\draw[thick,-] (1,0)--(1,1);
\draw[thick,-] (0,1)--(1,1);
\node[below] at (0.6,0){$r$};
\node[below] at (1,0){$b$};
\node[left,below] at (0,0){$a$};
\node[left] at (0,0.6){$r$};
\node[left] at (0,1){$b$};

\draw[thick,-] (0.6,0)--(0.6,0.6);
\draw[thick,-] (0,0.6)--(0.6,0.6);
\draw[thick,-] (0.6,0.6) --(1,1);
\draw[thick,-] (0.6,0.6)--(1,0.6);
\draw[thick,-] (0.6,0.6)--(0.6,1);

\node at (0.3,0.3){0};
\node at (0.3,0.8){$r$};
\node at (0.8,0.3){$r$};
\node at (0.85,0.7){$v_2$};
\node at (0.7,0.85){$v_1$};

\end{tikzpicture}
    	\caption{$\textnormal{SPA}(r)$ revenue}
    \end{subfigure}%
	\begin{subfigure}{0.32\linewidth}
    	\centering
    	\begin{tikzpicture}[>=triangle 45,xscale=3.5,yscale=3.5]
\draw[->] (0,0) -- (1.2,0) node[below] {$v_1$};
\draw[->] (0,0) -- (0,1.2) node[left] {$v_2$};
\draw[thick,-] (1,0)--(1,1);
\draw[thick,-] (0,1)--(1,1);
\node[below] at (0.6,0){$r$};
\node[below] at (1,0){$b$};
\node[left,below] at (0,0){$a$};
\node[left] at (0,0.6){$r$};
\node[left] at (0,1){$b$};

\draw[thick,-] (0.6,0)--(0.6,0.6);
\draw[thick,-] (0,0.6)--(0.6,0.6);
\draw[thick,-] (0.6,0.6) --(1,1);
\draw[thick,-] (0.6,0.6)--(1,0.6);
\draw[thick,-] (0.6,0.6)--(0.6,1);

\node at (0.3,0.3){$a$};
\node at (0.3,0.8){$\frac{r+a}{2}$};
\node at (0.8,0.3){$\frac{r+a}{2}$};
\node at (0.85,0.7){$v_2$};
\node at (0.7,0.85){$v_1$};


\end{tikzpicture}
    	\caption{$\textnormal{POOL}(r)$ revenue}
    \end{subfigure}%
	\caption{Allocation rules and revenue of SPA without reserve, $\textnormal{SPA}(r)$, and $\textnormal{POOL}(r)$. In the allocation rule, P1 stands for allocating to player 1, P2 for allocating to player 2, and P$\emptyset$ for not allocating.}
\label{fig:spa-pool-alloc-pay}
\end{figure}
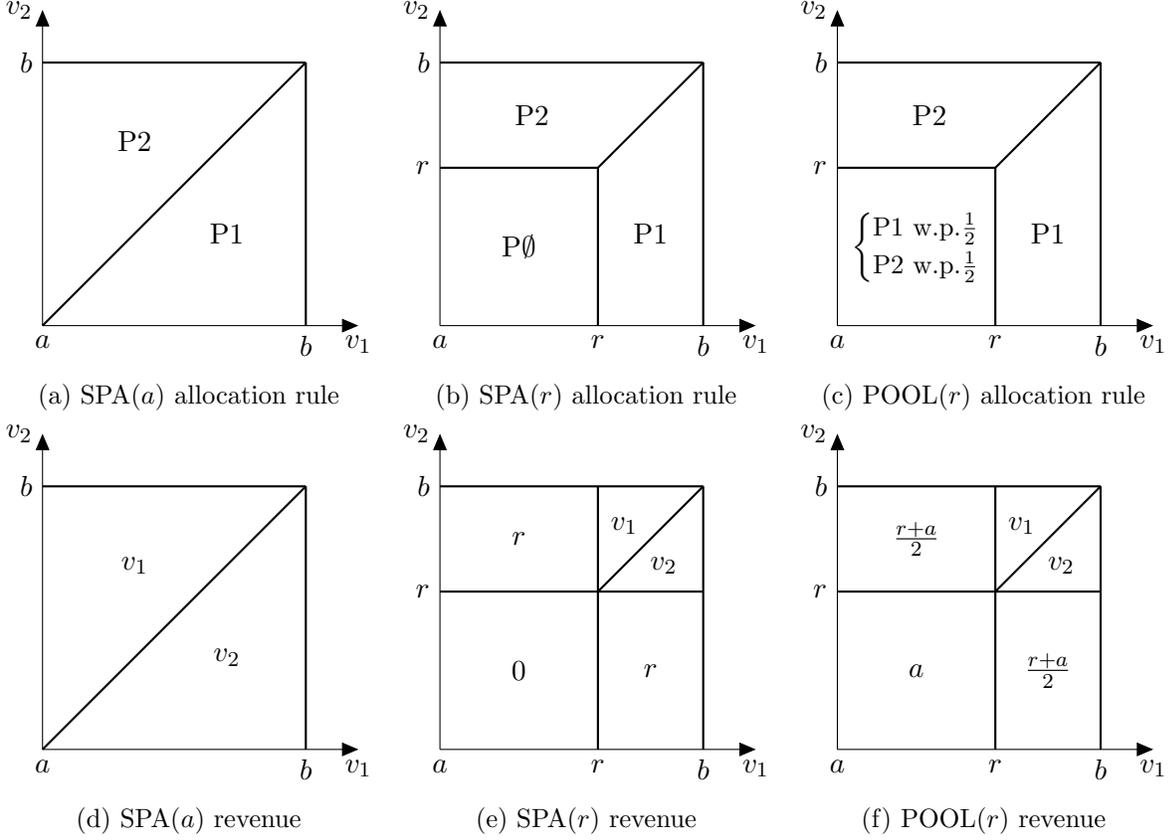

}


\jaedit{Before we state our main theorem, we first define two thresholds that will demarcate three information regimes arising in our analysis.

\begin{definition}
Fix $n$ and $\lambda \in (0,1]$. Define $k_l \in (0,1)$ as a unique solution to 
\begin{align*}
\lambda \int_{t=k_l}^{t=1} \frac{(t-k_l)^{n-1}}{t^n} dt =  (1-k_l)^{n-1}.
\end{align*}
Define $k_h$ to be $1$ for $n = 1$, and if $n \ge 2$, define $k_h \in (0,1)$ to be the  unique solution to
\begin{align*}
\int_{t=k_h}^{t=1} \left[ \frac{(t-k_h)^{n-1}}{t^n} - (1-\lambda) \frac{(t-k_h)^{n}}{t^{n+1}} \right] dt = (1-k_h)^{n}
\,.
\end{align*}   
\end{definition}
}

\jaedit{We are now ready to state our main theorem.}

\jaedit{
\begin{theorem}[Main Theorem]\label{thm:char-all-mech-full-main}
Depending on the value of $a/b$, the minimax $\lambda$-regret problem admits the following mechanism as robustly optimal.

\begin{itemize}
\item Suppose $a/b \leq k_l$, and let $r^* = k_l b$. An optimal mechanism is $\spa(\Phi^*)$ with
\begin{align*}
\Phi^*(v) = \lambda  \frac{v^{n-1}}{(v-r^*)^{n-1}} \int_{t=r^*}^{t=v} \frac{(t-r^*)^{n-1}}{t^n} dt \, .
\end{align*}
\item Suppose $a/b \geq k_h$, and let $\phi_0 = (a-k_h b)/(1-k_h) $. An optimal mechanism is $\pool(\Psi^*)$ with
\begin{align*}
    \Psi^*(v) = \frac{n\lambda}{n-1} \left( \frac{v-\phi_0}{v-a} \right)^{n} \int_{t=a}^{t=v} \frac{(t-a)^{n}}{(t-\phi_0)^{n+1}} dt\, .
\end{align*}
\item Suppose $k_l \leq a/b \leq k_h$. Let  $(v^*,\alpha)$ be the unique solution to 
\begin{align*}
\frac{(v^*-a)^{n-1}}{(v^*)^{n-1}} ( 1 - n \alpha ) &= \lambda \int_{t=a}^{t=v^*} \frac{(t-a)^{n-1}}{t^n} dt\,, \\
\frac{(b-a)^{n}}{b^{n}}  - \frac{(v^*-a)^{n}}{(v^*)^{n}} (1 - (n-1)\alpha) &= \int_{t=v^*}^{t=b} \left[ \frac{(t-a)^{n-1}}{t^n} - (1-\lambda) \frac{(t-a)^{n}}{t^{n+1}}  \right] dt\,.
\end{align*}
An optimal mechanism is based on a \emph{unified threshold} distribution $\mathcal{D}$ such that if we draw a sample $r \sim \mathcal{D}$, if $r \leq v^*$ the mechanism is $\spa(r)$, otherwise the mechanism is $\pool(r)$, where the CDF of $\mathcal{D}$ is given by
\begin{align*}
    \mathcal{D}(v) = \begin{cases}
\lambda \left( \frac{v}{v-a} \right)^{n-1} \int_{t=a}^{t=v} \frac{(t-a)^{n-1}}{t^{n}} dt
&\text{ for } v \in [a,v^*]\,, \\
-\frac{1}{n-1} +  \frac{n}{n-1} \left( \frac{v}{v-a} \right)^{n} \left[ \left( \frac{b-a}{b} \right)^{n} - \int_{t=v}^{t=b} \left[ \frac{(t-a)^{n-1}}{t^n} - (1-\lambda) \frac{(t-a)^n}{t^{n+1}}  \right]dt   \right] &\text{ for } v \in [v^*,b]\,.
\end{cases}
\end{align*}
\end{itemize}

\end{theorem}

We present the proof of Theorem \ref{thm:char-all-mech-full-main} in \S\ref{subsec:proof-main-thm}, with some computational details deferred to Appendix~\ref{app:sec:main-all-mech}. We then use our closed form characterization to gain insights into the structure and performance of the optimal mechanism in \S\ref{subsec:structure-mech-all} and show how our results reduce to the known pricing case with $n = 1$ in \S\ref{subsec:n-1-remark}.

}

\subsection{Proof of the Main Theorem}\label{subsec:proof-main-thm}

\jaedit{
The key idea for the proof of our main theorem (Theorem~\ref{thm:char-all-mech-full-main}) is to explicitly exhibit a saddle point of the zero-sum game between seller and Nature, defined as follows.

\begin{definition}\label{def:saddle-point}
$(m^*,\mathbf{F}^*)$ is a \emph{saddle point} of $R_\lambda(m,\mathbf{F})$ defined in (\ref{eqn:minimax-lmbd-regret-def}) if and only if
\begin{align*}
R_\lambda(m^*,\mathbf{F}) \leq R_\lambda(m^*,\mathbf{F}^*) \leq R_\lambda(m,\mathbf{F}^*) \quad \text{ for all } m \in \mathcal{M}, \mathbf{F} \in \mathcal{F}.
\end{align*}
\end{definition}

As discussed in the related work section, the saddle point guess-and-verify technique has been used before in robust auction design problems \citep{our-paper-ec,BachrachTalgamCohen22}, but these all identify second-price auctions as robustly optimal. \sbdelete{If we assumed that $m^*$ is $\spa(\Phi)$ and tried to derive conditions on the reserve distribution $\Phi$, we would get a contradiction, suggesting that $\spa$ is not the right mechanism class. We later formalize this in \S\ref{sec:other-mech-classes} by showing that SPAs are strictly suboptimal and precisely quantify the separation.} Because the space of DSIC mechanisms is so big and unstructured, it is unclear a priori what the candidate  mechanism family should be. Part of our technical contribution is identifying the right mechanism class depending on the support information regime. Importantly, this mechanism class is parameterized by a ``one-dimensional distribution'' (e.g. the $\Phi$ in $\spa(\Phi)$ or $\Psi$ in $\pool(\Psi)$ or a ``unified threshold distribution'' combining both $\spa$ and $\pool$), which enables us to apply first-order condition techniques to establish the saddle point conditions.

}

\sbedit{
We proceed in three steps. Firstly, we introduce a class of $(g_u,g_d)$ mechanisms, which unifies the optimal mechanisms in all three regimes (second-price auctions with random reserves, pooling auctions with random thresholds, and randomization between $\spa$ and $\pool$). Secondly, we derive sufficient conditions for a particular $(g_u,g_d)$ mechanism and a particular distribution $F^*$ to be a saddle point. The proof of the main theorem is then reduced to checking these sufficient conditions, which we defer to the Appendix. Thirdly, we illustrate how we can use our framework to guess and verify an optimal mechanism in the high information regime. While this step is not strictly necessary for verifying the optimality of a candidate saddle point, it offers valuable insights into the construction of our mechanism. Such intuition may prove beneficial for future researchers exploring related robust mechanism design problems.
}

\jadelete{

\subsection{Self-Contained Proof of the High Support Information Regime}\label{subsec:proof-high-info}

Assume that $\mathbf{F}$ is i.i.d. with marginal $F$ \obedit{and suppose $a/b \geq k_h$. The proof proceeds in two steps. We first identify a candidate pair  $(\Psi^*,F^*)$ to form a saddle point. Then we proceed with a verification argument.}

\paragraph{Identifying a candidate saddle point} We can compute the $\lambda$-regret $R_\lambda(\Psi,F)$ of $\pool(\Psi)$ and show that it has two representations using an ad-hoc integration by parts result that handles distributions that are not necessarily smooth (cf. Proposition~\ref{prop:reg-exp-g}). Suppose that $\Psi$ is arbitrary, while $F$ has a density in the interior with point masses $F(a)$ and $f_b = 1-F(b^-)$ at $a$ and $b$, then
    \begin{align*}
R_\lambda(\Psi,F)&= \lambda b - a  F(a)^n - b + b(1-f_b)^n \\
&+ \int_{v=a}^{v=b} \Bigg[ \left\{- \lambda F(v)^n + n F(v)^{n-2} F'(v) (v-a) - n F(v)^{n-1} v F'(v) \right\} \\
&+ \underbrace{n F(v)^{n-2} \big\{F(v) - F(v)^2 - (v-a) F'(v)\big\}}_{\text{coefficient of $\Psi(v)$}}  \mathbf{\color{blue} \Psi(v)} \Bigg] dv \, .
\tag{Regret-$\Psi$}  \label{eqn:regret-psi-main}
\end{align*}

Note that (\ref{eqn:regret-psi-main}) depends on $\Psi$ only through $\mathbf{\color{blue} \Psi(v)}$ and is linear in $\Psi$. This representation is useful for the seller's saddle $\inf_m R_\lambda(m,F^*)$, because we do not make any assumption on the seller's choice $\Psi$. By the first-order conditions, under the worst-case distribution $F^*$, the coefficient of each $\Psi^*(v)$ is zero wherever \jaedit{$\Psi^*(v) > 0$}. \obcomment{Is it obvious that there is always a feasible deviation? What if $\psi^*$ corresponds to a mass point?} Otherwise, the seller could decrease his regret by changing the distribution of reserves. Therefore,
\begin{align*}
    F^*(v) - F^*(v)^2 - (v-a) (F^*)'(v) = 0 \Rightarrow \frac{d}{dv} \left( v - \frac{v-a}{F^*(v)} \right) = 0 \Rightarrow v - \frac{v-a}{F^*(v)} = \phi_0\,.
\end{align*}
This pins down Nature's candidate distribution as $F^*(v) = (v-a)/(v-\phi_0)$, a distribution with constant virtual value $\phi_0$. \jaedit{We note that this part of the argument (guessing $F^*$) is technically not needed in the formal proof, because it is sufficient to simply verify the saddle point to confirm that our mechanism is robustly optimal. Therefore, we can make certain smoothness assumptions to derive this $F^*$ which is \textit{not needed} for the formal saddle verification proof. Nevertheless, we think it is useful for the readers to understand where our candidate saddle point comes from. From this point onward, we do not need any extra assumptions.}

Alternatively, let $F$ be arbitrary but assume that $\Psi$ is differentiable, then $R_\lambda(\Psi,F) $ can be written as
\begin{align*}
R_\lambda(\Psi,F) &= a(\lambda-1) + \int_{v=a}^{v=b} \left[ \lambda - \Psi(v) - (v-a) \Psi'(v) \right] dv  \\
&+  \int_{v=a}^{v=b} \underbrace{\left( -\lambda - (n-1) \Psi(v) \right) \mathbf{\color{red} F(v)}^n +   \left( n \Psi(v) + (v-a) \Psi'(v) \right)   \mathbf{\color{red} F(v)}^{n-1}}_{:=I(F(v),v)}  dv \, . \tag{Regret-$F$} \label{eqn:regret-F-main}
\end{align*}

Note that (\ref{eqn:regret-F-main}) depends on $F$ only through $\mathbf{\color{red} F(v)}$ and is ``separable'' (different $F(v)$ terms do not interact). This representation is useful for Nature's saddle $\sup_F R_\lambda(m^*,F)$, because we do not make any assumption on Nature's choice $F$ and we can optimize the integrand pointwise as an expression in $F(v)$, i.e., we can equivalently solve $\sup_{F(v)} I(F(v),v)$ for each $v$. Taking derivatives with respect to $F(v)$, the first-order condition on $F(v)$ gives 
\begin{align*}
    \left( -\lambda - (n-1) \Psi(v) \right) n F(v)^{n-1} +   \left( n \Psi(v) + (v-a) \Psi'(v) \right) (n-1) F(v)^{n-2} = 0\,,
\end{align*}
because, at the saddle, this first derivative must be zero. Substituting $F^*(v) = (v-a)/(v-\phi_0)$ gives
\begin{align*}
    (\Psi^*)'(v) + \frac{n(a-\phi_0)}{(v-\phi_0)(v-a)} \Psi^*(v) = \frac{n \lambda}{(n-1)(v-\phi_0)} \, , \tag{ODE-$\Psi$} \label{eqn:ode-psi-main}
\end{align*}
or
\begin{align*}
    \frac{d}{dv} \left[ \left( \frac{v-a}{v-\phi_0} \right)^{n} \Psi^*(v) \right] = \frac{n\lambda}{n-1} \frac{(v-a)^{n}}{(v-\phi_0)^{n+1}} \, .
\end{align*}

Because $\left( \frac{v-a}{v-\phi_0} \right)^{n} \Psi^*(v)$ is zero at $v = a$, integrating from $a$ to $v$ gives
\begin{align*}
    \left( \frac{v-a}{v-\phi_0} \right)^{n} \Psi^*(v) = \frac{n\lambda}{n-1} \int_{t=a}^{t=v} \frac{(t-a)^{n}}{(t-\phi_0)^{n+1}} dt \, .
\end{align*}
Performing the integration, we can rewrite $\Psi^*(v)$ as
\begin{align*}
    \Psi^*(v) = \frac{n\lambda}{n-1} \sum_{k=n+1}^{\infty} \frac{(v-a)^{k-n}}{k(v-\phi_0)^{k-n}} \, .
\end{align*}
This expression makes it clear that $\Psi^*(a) = 0$, so $\Psi^*$ is well-behaved and does not have a point mass at $a$, and that $\Psi^*(v)$ is increasing in $v$. 
For $\Psi^*$ to be feasible as a CDF, the only additional condition we need is  $\Psi^*(b) = 1$, which gives the following equation that determines $\phi_0$:
\begin{align*}
    \left( \frac{b-a}{b-\phi_0} \right)^{n}  = \frac{n\lambda}{n-1} \int_{t=a}^{t=b} \frac{(t-a)^{n}}{(t-\phi_0)^{n+1}} dt\,.
\end{align*}
By definition of $k_h$, we see by inspection that this equation has an explicit solution
\begin{align*}
    \phi_0 = \frac{a - k_h b}{1-k_h }\,.
\end{align*}
We need $\phi_0 \geq 0$ for Nature's saddle to hold: this is why we need $a/b \geq k_h$ in the pure $\pool$ regime. 

\paragraph{Verification argument} Now that we have identified a candidate saddle-point, we need to formally verify its optimality. Seller's saddle (optimize over $m$ with fixed $F^*$) is a standard Bayesian mechanism design problem and the optimality of $\pool(\Psi^*)$ follows because every mechanism that always allocates is optimal  since $F^*$ has positive, constant virtual value. For Nature's saddle (optimize over $F$ with fixed $m^*$), we need to check that $F^*$ maximizes the regret given the mechanism $\pool(\Psi^*)$. It is sufficient to show that $F^*(v)$ globally maximizes the integrand $I(F(v),v)$, as a function of $F(v)$, pointwise. The expression $I(F(v),v)$ is polynomial in $F(v)$ with two nonzero degree terms $F(v)^{n-1}$ and $F(v)^{n}$. Two conditions are sufficient to guarantee global optimality: (i) the first-order optimality condition (FOC), (ii) nonnegativity of the coefficient of $F(v)^{n-1}$. To see why this is true, consider the expression $I(F(v),v) := \alpha(v) F(v)^{n-1} - \beta(v) F(v)^{n}$ as a function of $F(v)$. The first derivative of this expression is $\partial I/\partial F = F(v)^{n-2} ( (n-1) \alpha(v) - n \beta(v) F(v))$, and the sign of $\partial I/\partial F$ is the same as the sign of $(n-1) \alpha(v) - n \beta(v) F(v)$. From (i), FOC is equivalent to $(n-1) \alpha(v) - n \beta(v) F^*(v) = 0$, so if we have $\alpha(v) \geq 0$ from (ii), then we also have $\beta (v) \geq 0$. We can immediately see that $I(F(v),v)$ is unimodal with a peak at $F^*(v)$ (increasing in $F(v)$ for $F(v) < F^*(v)$ and decreasing in $F(v)$ for $F(v) > F^*(v)$), justifying our claim.

We conclude by checking that the coefficient of $F(v)^{n-1}$ is nonnegative:
\begin{align*}
    n \Psi^*(v) + (v-a) (\Psi^*)'(v) \geq 0 \, .
\end{align*}
Substituting the expression for $(\Psi^*)'(v)$ from (\ref{eqn:ode-psi-main}) reduces the above inequality to 
\begin{align*}
    \frac{n(v-a)}{(v-\phi_0)} \Psi^*(v) + \frac{n\lambda}{n-1} \frac{(v-a)}{(v-\phi_0)} \geq 0\,,
\end{align*}
which is trivially true. We therefore have established a saddle point $(\pool(\Psi^*),F^*)$ of the problem in the high relative support information case.

\jadelete{We give a more detailed outline of \textit{all cases} in the proof of the main theorem in Appendix~\ref{app:sec:proof-main-thm-outline}. Details omitted from Appendix~\ref{app:sec:proof-main-thm-outline} are given in Appendix~\ref{app:sec:proof-main-thm-detail}. }

}

\jaedit{



\subsubsection{A Unifying Class of Mechanisms}

We now introduce the class of $(g_u,g_d)$ mechanisms, parameterized by two functions $g_u,g_d$.

\begin{definition}[\emph{$(g_u,g_d)$} mechanisms]\label{def:gu-gd} 
Let $g_u,g_d: [a,b] \to [0,1]$ be given functions.  A mechanism  \emph{$(g_u,g_d)$} is defined by the allocation rule $x: [a,b]^n \to [0,1]^n$ given by, for each $i \in [n]$,
\begin{align*}
x_i(\mathbf{v}) = \begin{cases}
\frac{1}{k} g_u(v_{\max}) + \frac{k-1}{k} g_d(v_{\max}) &\text{ if } v_i = \max(\mathbf{v}) := v_{\max} \text{  and there are $k$ entries in $\mathbf{v}$ equal to $v_{\max}$} \\
g_d(v_{\max}) &\text{ if } v_i < \max(\mathbf{v}) := v_{\max}
\end{cases}
\end{align*} 
and the payment rule $p: [a,b]^n \to \mathbb{R}_{++}^n$ is determined uniquely from  Myerson's formula such that the resulting mechanism $(x,p)$ is dominant strategy incentive compatible. 
\end{definition}

 \begin{figure}[t!]
 \centering
 	\begin{subfigure}{0.32\linewidth}
		\centering
  \begin{center}
    	\begin{tikzpicture}[>=triangle 45,xscale=3.5,yscale=3.5]
\draw[->] (0,0) -- (1.2,0) node[below] {$v_1$};
\draw[->] (0,0) -- (0,1.2) node[left] {$v_2$};
\draw[thick,-] (1,0)--(1,1);
\draw[thick,-] (0,1)--(1,1);
\node[below] at (1,0){$b$};
\node[below] at (0.7,0){$v_1$};
\node[left,below] at (0,0){$a$};
\node[left] at (0,1){$b$};
\node[left] at (0,0.7){$v_2$};

\draw[thick,-] (0,0) --(1,1);
\node at (0.7,0.3){$g_u(v_1)$};
\node at (0.3,0.7){$g_d(v_2)$};
\draw[dashed,-] (0,0.7) --(0.14,0.7);
\draw[dashed,-] (0.45,0.7) --(0.7,0.7);
\draw[dashed,-] (0.7,0.7) --(0.7,0.35);
\draw[dashed,-] (0.7,0.25) --(0.7,0.0);

\node at (0.5,1.1){$x_1(v_1,v_2)$};

\end{tikzpicture}
  \end{center}
	\end{subfigure}
	\begin{subfigure}{0.32\linewidth}
		\centering 
  \begin{center}
    	\begin{tikzpicture}[>=triangle 45,xscale=3.5,yscale=3.5]
\draw[->] (0,0) -- (1.2,0) node[below] {$v_1$};
\draw[->] (0,0) -- (0,1.2) node[left] {$v_2$};
\draw[thick,-] (1,0)--(1,1);
\draw[thick,-] (0,1)--(1,1);
\node[below] at (1,0){$b$};
\node[below] at (0.7,0){$v_1$};
\node[left,below] at (0,0){$a$};
\node[left] at (0,1){$b$};
\node[left] at (0,0.7){$v_2$};

\draw[thick,-] (0,0) --(1,1);
\node at (0.7,0.3){$g_d(v_1)$};
\node at (0.3,0.7){$g_u(v_2)$};
\draw[dashed,-] (0,0.7) --(0.14,0.7);
\draw[dashed,-] (0.45,0.7) --(0.7,0.7);
\draw[dashed,-] (0.7,0.7) --(0.7,0.35);
\draw[dashed,-] (0.7,0.25) --(0.7,0.0);
\node at (0.5,1.1){$x_2(v_1,v_2)$};
\end{tikzpicture}
  \end{center}
	\end{subfigure}

  \caption{Allocation rules of a $(g_u,g_d)$ mechanism for $n = 2$. For the first buyer, we have $x_1(v_1,v_2) = g_u(v_1)$ if $v_1 > v_2$, and $x_1(v_1,v_2) = g_d(v_2)$ if $v_1 < v_2$. When $v_1 = v_2$ the allocation is not shown but equal to $x_1(v_1,v_2) = (g_u(v_1) + g_d(v_1))/2$.}
  \label{fig:x1-of-gu-gd}
  \end{figure}
  
In other words, the mechanism allocates $g_u(v_{\max})$ to the highest bidder(s) and $g_d(v_{\max})$ to the non-highest bidder(s). If there are $k$ highest bidders, the mechanism breaks ties symmetrically by selecting one of them to be the ``winner'' with $g_u$ allocation uniformly at random.  The $(g_u,g_d)$ mechanism class is a powerful abstraction in our settings for two reasons we delineate below. Figure~\ref{fig:x1-of-gu-gd} illustrates the allocation under a $(g_u,g_d)$ mechanism for the case of two buyers.

First, all mechanisms in the three different support information regimes (SPA with random reserves, POOL with random thresholds, and a randomization between SPA and POOL) can be represented in this form. Note that both $\spa(r)$ and $\pool(\tau)$ are $(g_u,g_d)$ with the following specification:
\begin{align*}
    &\spa(r): &&g_u(v) = \1(v \geq r) &&&&g_d(v) = 0 \\
    &\pool(\tau): &&g_u(v) = \frac{1}{n} + \frac{n-1}{n} \1(v \geq \tau) &&&&g_d(v) = \frac{1}{n} - \frac{1}{n} \1(v \geq \tau)
\end{align*}
Furthermore, this class of mechanisms is closed under randomization. A randomization over a family of $(g_u,g_d)$ mechanisms is still a $(g_u,g_d)$ mechanism with the resulting mechanism having $g_u$ and $g_d$ that are ``convex combinations'' over the base $g_u$ and $g_d$ functions. Therefore, all mechanisms having the form in Theorem~\ref{thm:char-all-mech-full-main} are $(g_u,g_d)$ mechanisms. The formal statement and the proof are deferred to Proposition~\ref{prop:gu-gd-rep} in the Appendix. 

Second, the $(g_u,g_d)$ mechanism, while general, still captures the sense in which our mechanisms are analytically tractable for saddle point calculations. More precisely, we have the following expression for the expected regret of a $(g_u,g_d)$ mechanism under an arbitrary distribution $F$.

\begin{proposition}[Expected Regret of a $(g_u,g_d)$ mechanism]\label{prop:reg-exp-g}
Let $R_{\lambda}(\mathbf{g},F) := R_{\lambda}((g_u,g_d), F)$ be the expected $\lambda$-regret of a $(g_u,g_d)$ mechanism under i.i.d. distribution $F$.

If we assume that $g_u$ and $g_d$ are continuous everywhere and differentiable everywhere except at a finite number of points, then
\begin{align*}
R_{\lambda}(\mathbf{g},F) &= a(\lambda-g_u(a)-(n-1)g_d(a)) + \int_{v=a}^{v=b} (\lambda-g_u(v)+ g_d(v) -vg_u'(v)+ (v-na) g_d'(v))  \\
&+  \int_{v=a}^{v=b} \left( -\lambda -(n-1) (g_u(v)-g_d(v)) + v(g_u'(v)+(n-1) g_d'(v)) \right) F(v)^n dv \\
&+ \int_{v=a}^{v=b} n(g_u(v)-g_d(v)-(v-a)g_d'(v))   F(v)^{n-1}  dv \, . \tag{Regret-$F$} \label{eqn:regret-F}
\end{align*}

If, instead, we let $g_u$ and $g_d$ be arbitrary but we assume that $F$ has a density $f = F'$ on $[a,b)$ (so it potentially has point masses only at $a$ and $b$ of size $F(a)$ and $F(\{b\}) := f_b$ respectively), then we have the (Regret-$\mathbf{g}$) expression
\begin{align*}
R_{\lambda}(\mathbf{g},F) &= \lambda b - a \left( g_u(a) + (n-1) g_d(a) \right) F(a)^n - \left( b g_u(b) + (n-1)a g_d(b) \right) \left( 1 -(1-f_b)^n \right) \\
&+ (b-a) g_d(b) \left( 1-(1-f_b)^{n-1} (1+(n-1)f_b) \right) \\
&+ \int_{v=a}^{v=b} -\lambda F(v)^n + g_u(v) n F(v)^{n-1} (1-F(v) - vF'(v) ) dv \\
&+ \int_{v=a}^{v=b} g_d(v) n(n-1) F(v)^{n-2} F'(v) \left\{ (v-a)(1-F(v)) - a F(v) \right\} dv \, . \tag{Regret-$\mathbf{g}$}  \label{eqn:regret-gbold}
\end{align*}

\noindent \eqref{eqn:regret-gbold} is valid for $n \geq 1$ if we take the expression $n(n-1)F(v)^{n-2}$ to be zero for $n = 1$.

\end{proposition}


\subsubsection{Verification of a Saddle Point}
Importantly, the \eqref{eqn:regret-F} shows that the expected regret can be written as an explicit polynomial function of the marginal CDF $F(v)$ and is ``separable'' as a function of $F(v)$. This allows us to maximize regret as a function of $F$ by maximizing each individual $F(v)$ expression ``pointwise'' independently for each $v$, subject only to the constraint that $F(v)$ is weakly increasing in $v$ (which is automatically satisfied for our specific $F$). By deriving first and second conditions using \eqref{eqn:regret-F}, we derive the following sufficient conditions for Nature's saddle.

\begin{proposition}\label{prop:nature-saddle-foc-soc}
Suppose that $\mathbf{g}^*$ is a $(g_u^*,g_d^*)$ mechanism and $F^*(v)$ is an increasing function that satisfy the following conditions:
\begin{align*}
 \left( -\lambda -(n-1) (g_u^*(v)-g_d^*(v)) + v(g_u'(v)+(n-1) g_d'^*(v)) \right)  F^*(v)  \\ + (n-1) (g_u^*(v)-g_d^*(v)-(v-a)g_d'^*(v))   = 0 \tag{FOC} \label{eqn:foc} \\
g_u^*(v)-g_d^*(v)-(v-a)g_d'^*(v)  > 0 \, . \tag{SOC} \label{eqn:soc}
\end{align*}
Then $R(\mathbf{g}^*,F^*) \leq R(\mathbf{g}^*,F)$ for any $F$.
\end{proposition}


Note that first-order and second-order conditions together do \textit{not} imply global optimality in general. It is only true in this case due to the special structure of the integrand, which has the form $\alpha F(v)^{n-1} - \beta F(v)^{n}$ for each $F(v)$, that we analyze directly.

To verify Nature's saddle it is sufficient to check \eqref{eqn:foc} and \eqref{eqn:soc} for the specific $g_u^*$ and $g_d^*$ and $F^*$ for each of the three regimes. We defer these calculations to Appendix~\ref{subsubsec:saddle-verify}. 

The seller's saddle is $R(m,F^*) \leq R(m^*, F^*)$. Optimizing over $m$ given i.i.d.~$F^*$  is a standard Bayesian mechanism design problem, and optimality of $m^*$ in each respective regime follows from applying the classical theory from \citet{Myerson81}.

\subsubsection{Derivation of the Saddle Point for the High Information Regime}

To give the reader a sense of how these calculations work, we will work out the guess-and-verify procedure in the high support information regime below. Throughout the rest of this subsection, we assume $a/b \geq k_h$. We first use the formula \eqref{eqn:regret-gbold} and the sellers' saddle to pin down the worst-case distribution $F^*$. In turn, plugging $F^*$ in the first-order condition of Nature's saddle in Proposition~\ref{prop:nature-saddle-foc-soc} gives us a differential equation involving $g_u$ and $g_d$ that can be used to  solve for the mechanism. We then conclude by formally verifying that the candidate saddle point is optimal using the approach delineated in the previous section.

In the high information regime, the pooling auction is conjectured to be optimal and, thus, we should always allocate the item. So, we must have $g_u(v) + (n-1)g_d(v) = 1$ for every $v$. Using \eqref{eqn:regret-gbold}, we can write the expected regret purely in terms of $g_u$ as
\begin{align*}
&\lambda b - a  F(a)^n - b + b(1-f_b)^n \\
&+ \int_{v=a}^{v=b} \Big[ - \lambda F(v)^n + n F(v)^{n-2} F'(v) \left\{ (v-a)(1-F(v)) - a F(v) \right\}  \\ &\quad\quad\quad\quad\quad + n F(v)^{n-2} \left\{ F(v) - F(v)^2 - (v-a) F'(v) \right\} g_u(v) \Big] dv \, .
\tag{Regret-$g$}  \label{eqn:regret-g}
\end{align*}

Note that \eqref{eqn:regret-g} depends on $g_u$ only through $g_u(v)$ and is linear in $g_u$. This is useful for the seller's saddle $\inf_{m} R_{\lambda}(m,F^*)$. If the seller maximizes over the $\pool$ mechanism parameterized by $g_u$, then by the first-order conditions, under the worst-case distribution $F^*$, the coefficient of each $g_u(v)$ should be zero. Otherwise, the seller could decrease her regret by changing the distribution of reserves. Therefore,
\begin{align*}
    F^*(v) - F^*(v)^2 - (v-a) (F^*)'(v) = 0 \, \Rightarrow \, \frac{d}{dv} \left( v - \frac{v-a}{F^*(v)} \right) = 0 \, \Rightarrow \, v - \frac{v-a}{F^*(v)} = \phi_0\,.
\end{align*}
This pins down Nature's candidate distribution as $F^*(v) = (v-a)/(v-\phi_0)$, a distribution with constant virtual value $\phi_0$. We note this part of the argument (guessing $F^*$) is technically not needed in the formal proof, because it is sufficient to simply verify the saddle point to confirm that our mechanism is robustly optimal. Therefore, we can make certain smoothness assumptions to derive this $F^*$ which are \textit{not needed} for the formal saddle verification proof. Nevertheless, we think it is useful for the readers to understand where our candidate saddle point comes from. 

We now derive the mechanism $g_u^*$. Proposition~\ref{prop:nature-saddle-foc-soc} gives sufficient conditions to imply Nature's saddle. The \eqref{eqn:foc} also gives an Ordinary Differential Equation (ODE) on $g_u^*$ that uniquely determines it as follows. Substituting $g_d^*(v) = (1-g_u^*(v))/(n-1)$ and $F^*(v) = (v-a)/(v-\phi_0)$ in \eqref{eqn:foc}, we get
\begin{align*}
    \frac{d}{dv} \left[ \frac{(v-a)^{n}}{(v-\phi_0)^{n}} g_u^*(v) \right] = \frac{(v-a)^{n-1}}{(v-\phi_0)^{n}} - (1-\lambda) \frac{(v-a)^{n}}{(v-\phi_0)^{n+1}} .
\end{align*}
We therefore get
\begin{align*}
    g_u^*(v) = \frac{(v-\phi_0)^{n}}{(v-a)^{n}} \int_{t=a}^{t=v} \left[ \frac{(t-a)^{n-1}}{(t-\phi_0)^{n}} - (1-\lambda) \frac{(t-a)^{n}}{(t-\phi_0)^{n+1}} \right] dt = \frac{1}{n} + \lambda \sum_{k=n+1}^{\infty} \frac{(v-a)^{k-n}}{k(v-\phi_0)^{k-n}} .
\end{align*}

The expression makes it clear that $g_u^*(a) = 1/n$ and $g_u^*(v)$ is increasing in $v$, the latter of which is necessary for the function to correspond to a true feasible mechanism. We also impose the condition that $g_u^*(b) = 1$, which gives an equation that $\phi_0$ must satisfy. By definition of $k_h$, we see by inspection that the resulting equation has an explicit solution
\begin{align*}
    \phi_0 = \frac{a-k_h b}{1-k_h} .
\end{align*}

We need $\phi_0 \geq 0$ for Nature's saddle to hold: this is why this mechanism and the corresponding saddle is valid only in the $a/b \geq k_h$ regime. 

The above ODE manipulation not only determines $g_u^*$, but also makes sure that the resulting $g_u^*$ satisfies \eqref{eqn:foc}. The only thing that remains for Nature's saddle is to verify \eqref{eqn:soc}. By substituting $g_d^*$ with $g_u^*$ and write $(g_u^*)'(v)$ in terms of $g_u^*(v)$ using the ODE from \eqref{eqn:foc}, \eqref{eqn:soc} reduces to $ng_u^*(v) - 1 + \lambda > 0$ which is true because $g_u^*(v) \geq g_u^*(a) = 1/n$. Lastly, verifying Seller's saddle is a standard Bayesian mechanism design problem. Fixing $F^*$, the optimality of a $\pool$ mechanism follows because under a constant positive virtual value $F^*$, every mechanism that always allocates is optimal.

}





\jaedit{

\subsection{Structure of Optimal DSIC Mechanisms}\label{subsec:structure-mech-all}

We next discuss the structure of optimal mechanisms under the minimax regret and maximin ratio objectives.

\subsubsection{Minimax Regret Objective}\label{subsubsec:structure-mech-regret}

The case of minimax regret is obtained by setting $\lambda = 1$ in  Theorem~\ref{thm:char-all-mech-full-main}. In Figure~\ref{fig:alloc-heatmaps-all}, we fix $b=1$, and depict optimal mechanisms for $a = 0$ (low information), $a = 0.25$ (moderate information), and $a = 0.5$ (high information). We show the allocation rule $x_1(v_1,v_2)$ to buyer 1, and the total allocation $x_1(v_1,v_2)+x_2(v_1,v_2)$ to both buyers. Note that the mechanism is symmetric, so $x_2(v_1,v_2) = x_1(v_2,v_1)$ and is therefore not explicitly shown. The corresponding $g_u$ and $g_d$ functions are shown in the last row of Figure~\ref{fig:alloc-heatmaps-all}. 

\newcommand{\figwidth}{2in}
 \begin{figure}[tbh!]
    \centering
    \begin{tabular}{ccc}        
    \includegraphics[width=\figwidth]{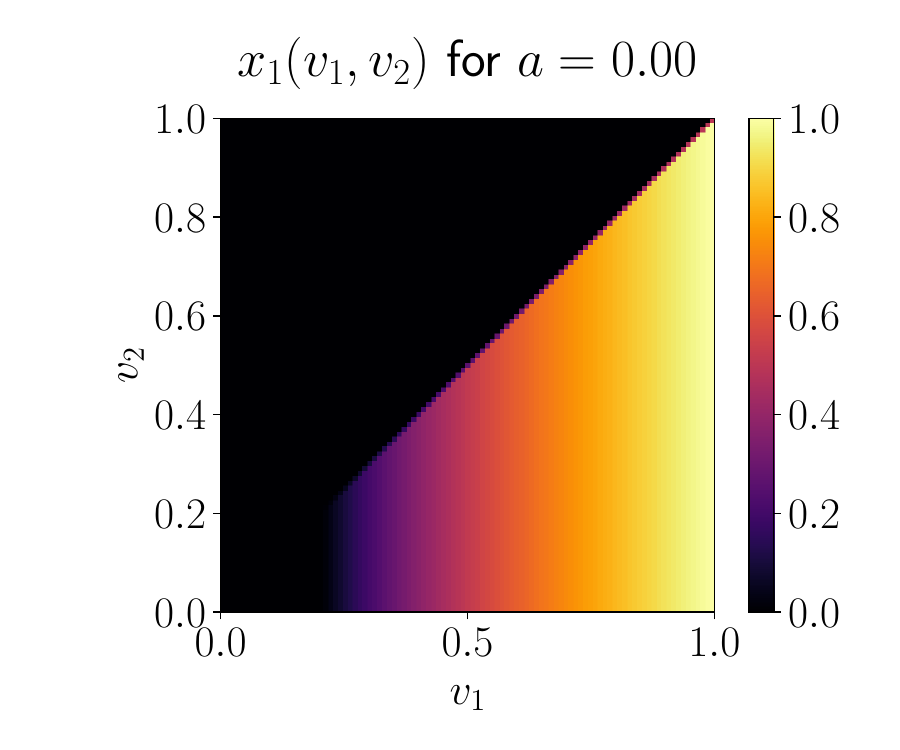} & 
    \includegraphics[width=\figwidth]{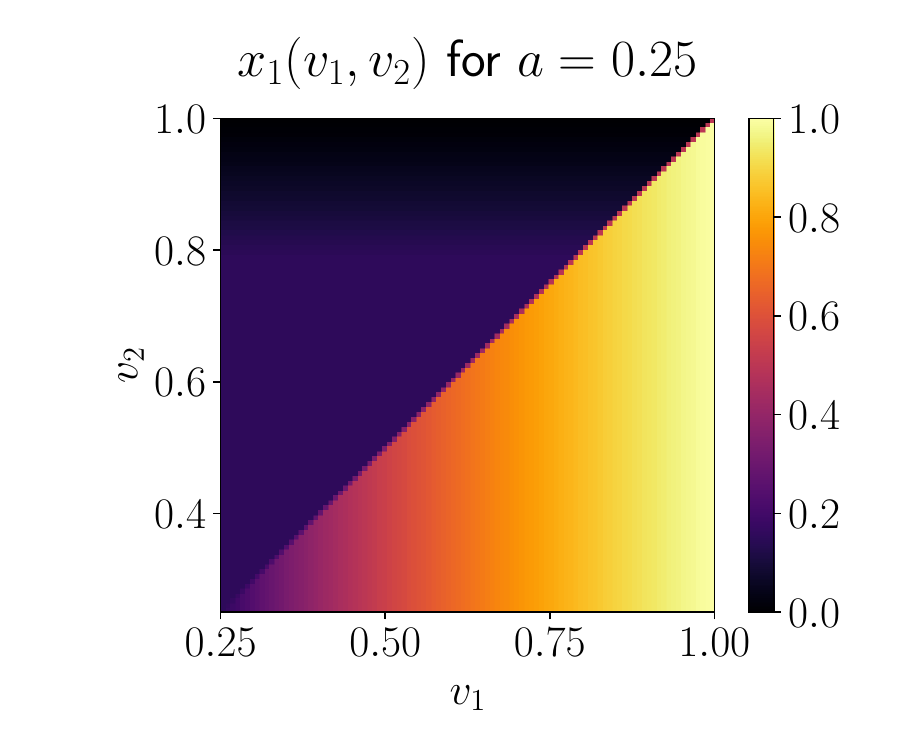} &
    \includegraphics[width=\figwidth]{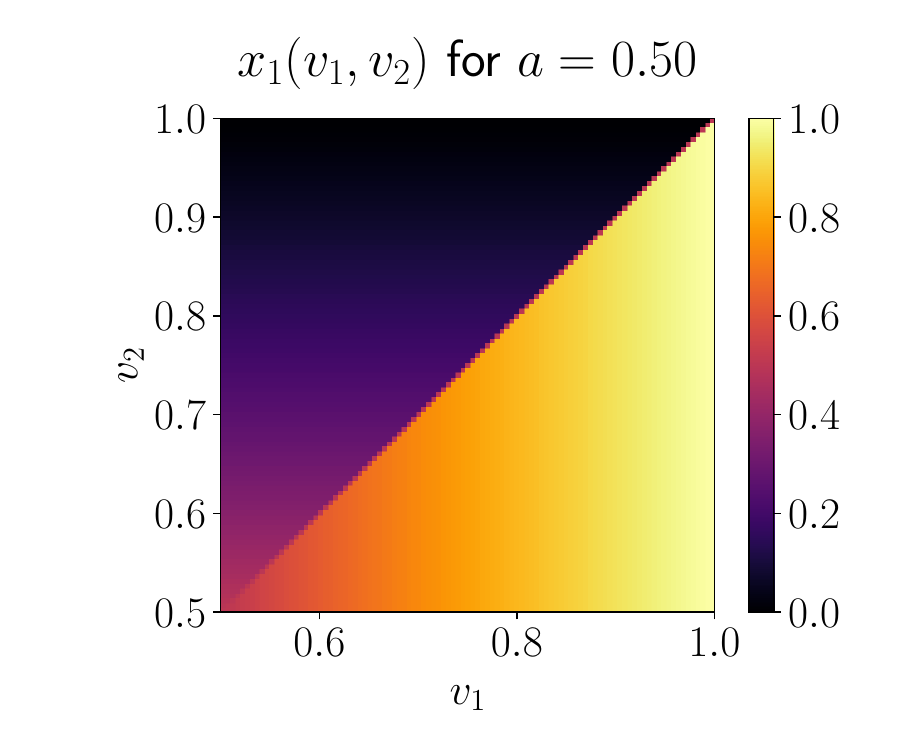} \\
    \includegraphics[width=\figwidth]{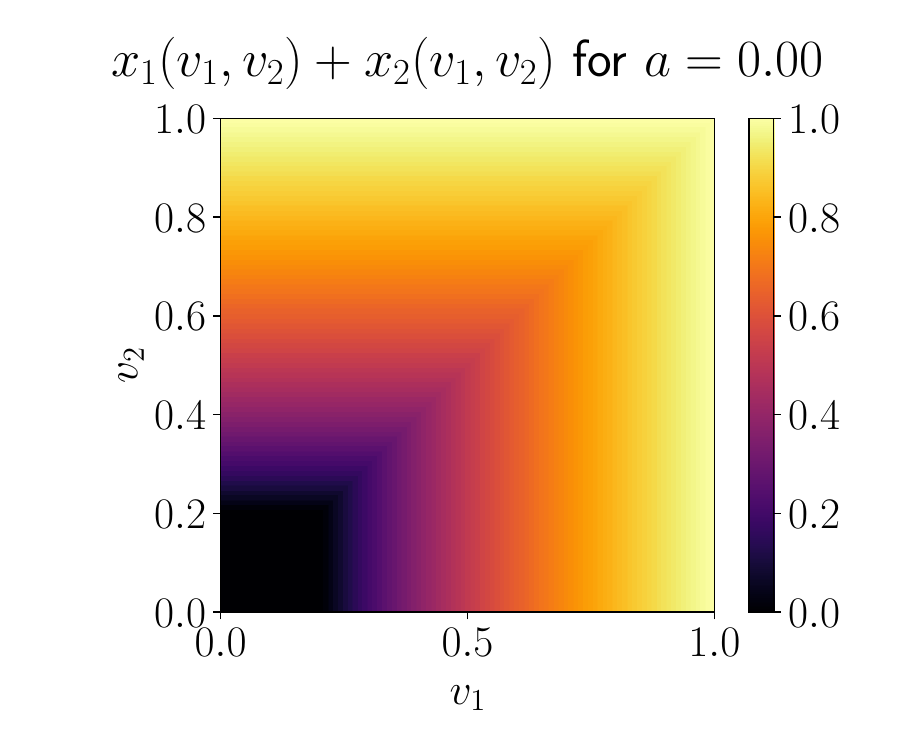} & 
    \includegraphics[width=\figwidth]{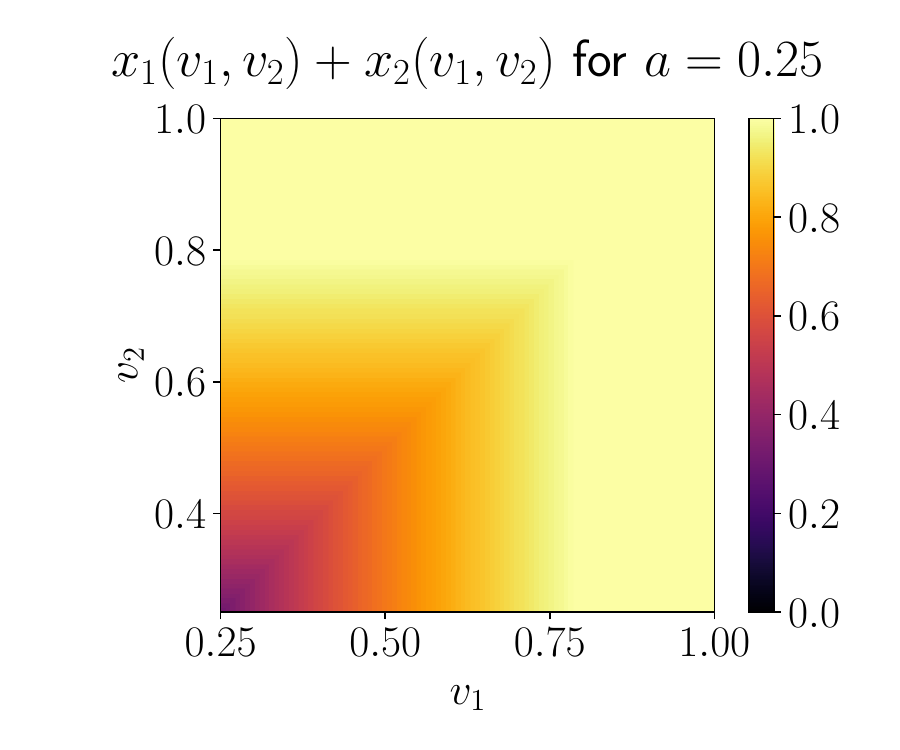} &
    \includegraphics[width=\figwidth]{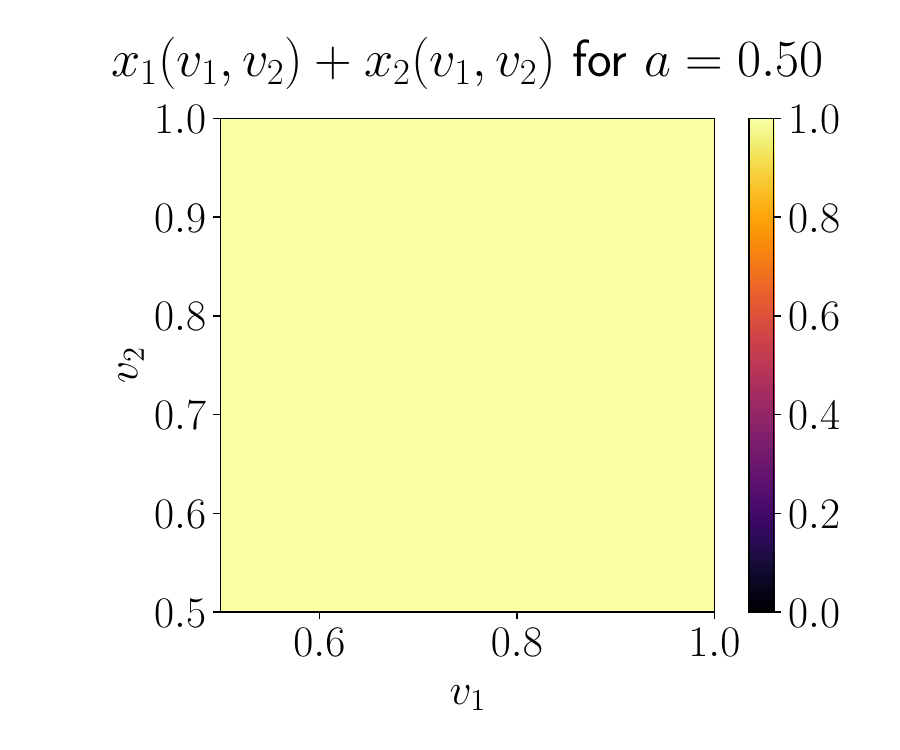} \\   \includegraphics[width=\figwidth]{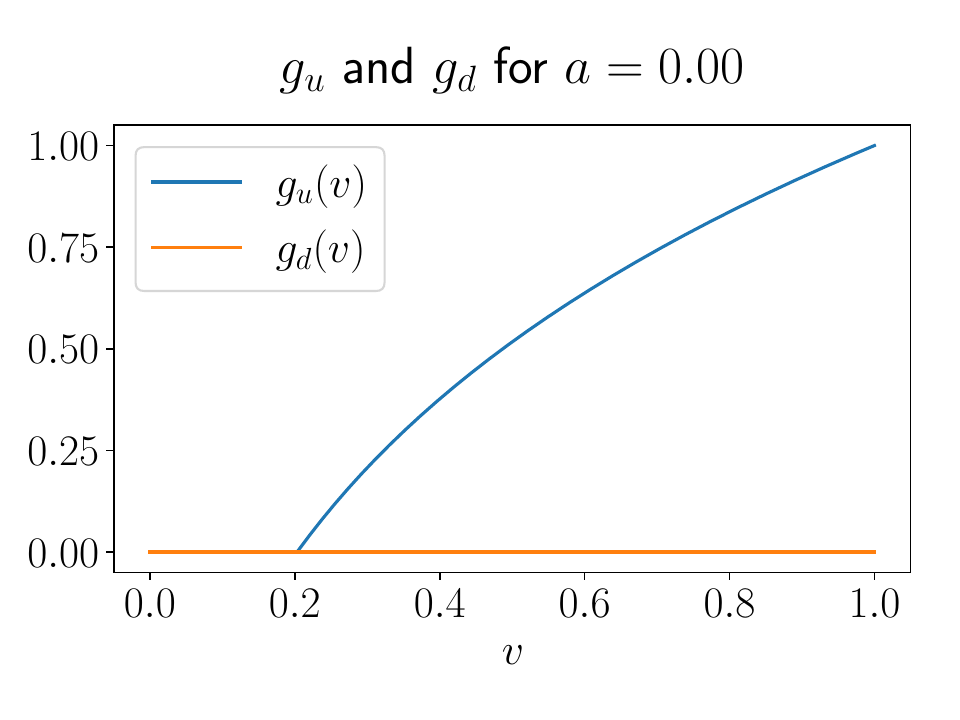} &
    \includegraphics[width=\figwidth]{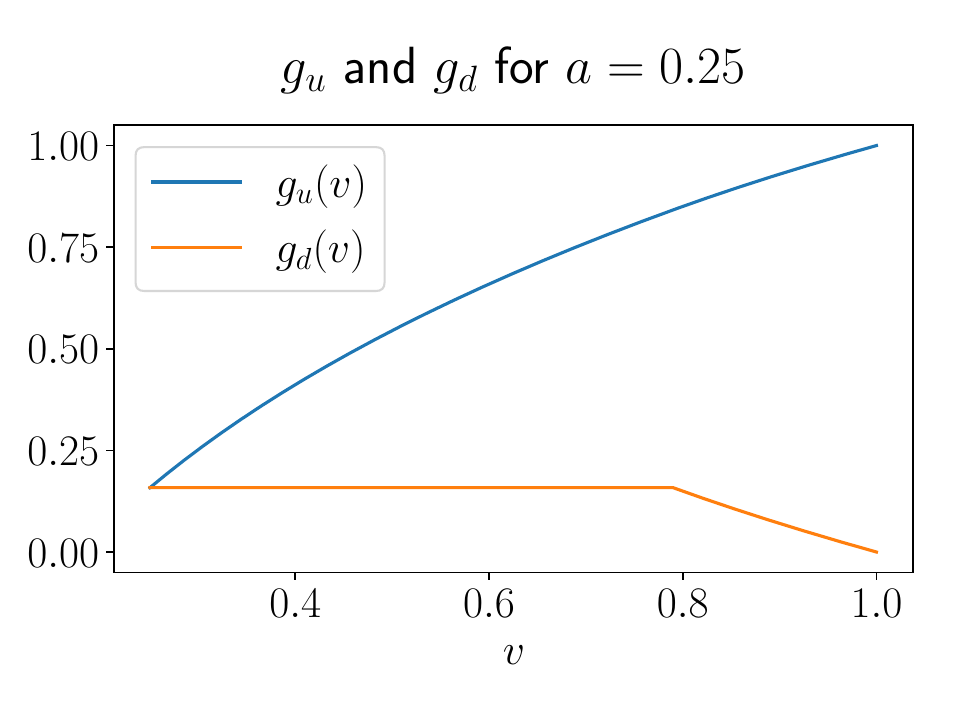} &
    \includegraphics[width=\figwidth]{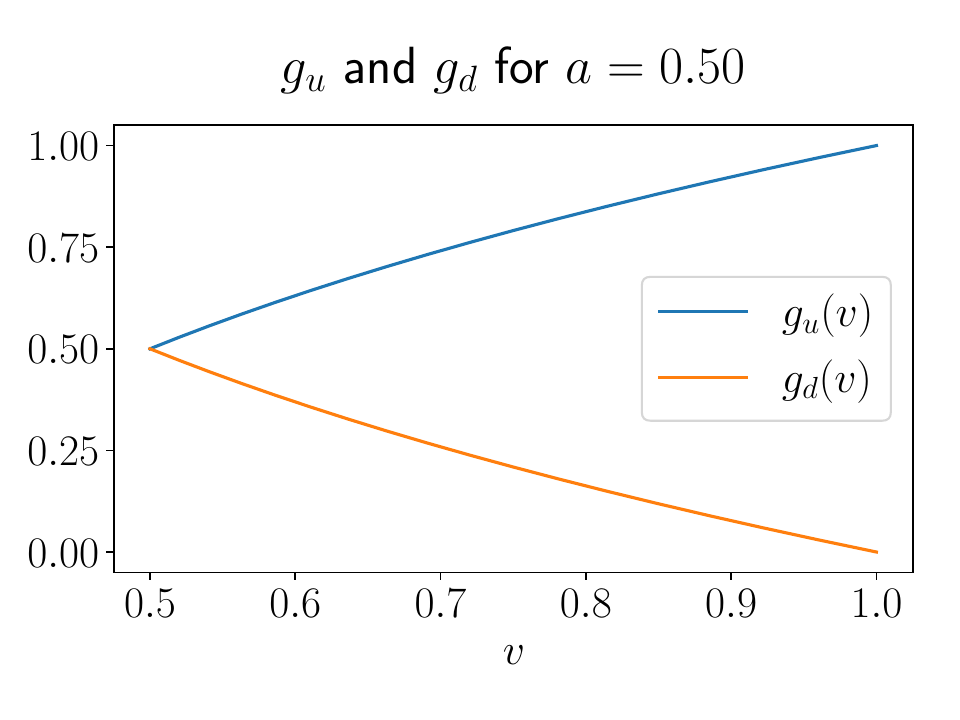} 
    \end{tabular}
	\caption{Optimal mechanisms for minimax regret. The top row shows the allocation rule $x_1(v_1,v_2)$, the second row shows the total allocation $x_1(v_1,v_2)+x_2(v_1,v_2)$, and the last row shows the $g_u$ and $g_d$ functions. The upper end of the support is $b=1$ and each columns shows a different value of the lower end $a \in \{0, 0.25, 0.50\}$ (low, moderate, and high support information regimes).}
\label{fig:alloc-heatmaps-all}
\end{figure}

The top heatmaps of Figure~\ref{fig:alloc-heatmaps-all} for $x_1(v_1,v_2)$ show that in the low information regime ($a=0$), the upper triangle is all zero because in that regime the optimal mechanism randomizes among SPAs, which never allocate to the non-highest buyer. The upper triangle is not zero for moderate (mixture of $\spa$ and $\pool$) and high information regimes ($\pool$). We can also see this from the bottom row of Figure~\ref{fig:alloc-heatmaps-all} because $g_d$ is zero for $a = 0$ but strictly positive for $a \in \{0.25,0.50\}$. 

The middle heatmaps of Figure~\ref{fig:alloc-heatmaps-all} show the total allocation $x_1(\mathbf{v}) + x_2(\mathbf{v}) = g_u(v_{\max}) + g_d(v_{\max})$. In the low information regime, it is always less than 1 because $\spa$ discards the item below the reserve. In the high information regime, it is always 1 because $\pool$ always allocates. In the moderate information regime, it is 1 in the $\pool$ region when $v_{\max} \geq v^* \approx 0.8$ and is less than 1 in the $\spa$ region when $v_{\max} < v^*$ in the lower left corner. We can also see this from the value of $g_u(v)+g_d(v)$ in the last row of Figure~\ref{fig:alloc-heatmaps-all}. 

In all heatmaps, we see that in the lower triangle ($v_1 \geq v_2$), the values are constant on each horizontal line because the allocations are $g_u(v_1)$ for bidder 1 and $g_d(v_1)$ for bidder 2 which depend on $v_1$, while in the upper triangle ($v_1 \leq v_2$), the values are constant on each vertical line because the allocations are $g_d(v_2)$ for bidder 1 and $g_u(v_2)$ for bidder 2 which depend only $v_2$, just as Figure~\ref{fig:x1-of-gu-gd} suggests. \obcomment{Jerry, you need to explain this more...}

}

\jadelete{
In the first row (low support information), $\spa(\Phi)$ is optimal. We can see this in $x_1(v_1,v_2)$. The upper triangle is all 0 because SPA is ``standard'' as in it never allocates to the non-highest buyer. The lower triangle has constant value vertically because the allocation probability is $g_u(v_1) = \Phi(v_1)$ depends only on the highest $v_1$ and not on $v_2$. It is zero from 0 up to $k_l \approx 0.2$ before increasing until reaching 1 at $v_1 = 1$, as expected. The total allocation $x_1+x_2$ is always below 1 except at $v_1 = 1$ or $v_2 = 1$ because SPAs do not allocate the item if the highest value is below the reserve.

In Figure~\ref{fig:phi-tilde-psi-tilde}a, we depict the CDF of the optimal $\textnormal{SPA}$-reserve distribution $\Phi$ as a function of $n$; in  Figure~\ref{fig:phi-tilde-psi-tilde}b, we plot the optimal $\textnormal{POOL}$-threshold distribution $\Psi$ as a function of $n$. \obcomment{Eventhough I understand why, it is a bit strange visually to have $n=1$ in one figure and not the other... also, it is unclear how the picture illustrate the structure of the mechanisms. Also, we already have Figure 2a) in our former paper, so we may want have new stuff only unless we feel it adds a lot. Maybe Figure 3 is sufficient. In addition, maybe we want something that explore a good visualisation of the characteristics of the mechanisms (e.g., a heatmap of allocation in the format of Figure 1c)? }

\begin{figure}[h!]
\captionsetup{justification=centering}
	\begin{subfigure}{0.48\linewidth}
    	\centering
    	\includegraphics[height=0.75\linewidth]{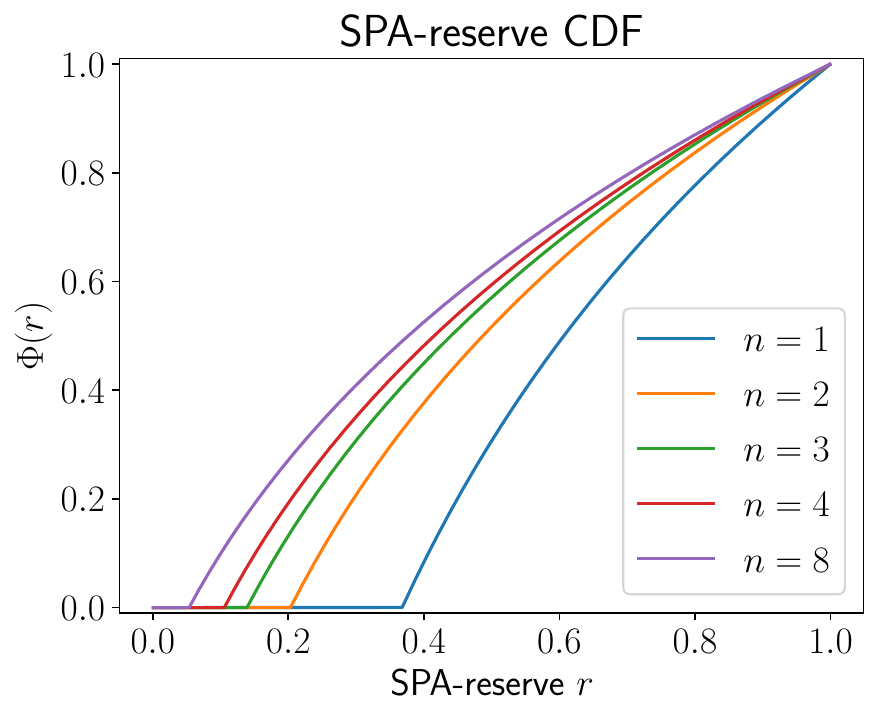}
    	\caption{Low support information \\ $a=0,b=1$}
    \end{subfigure}%
	\begin{subfigure}{0.48\linewidth}
		\centering
		\includegraphics[height=0.75\linewidth]{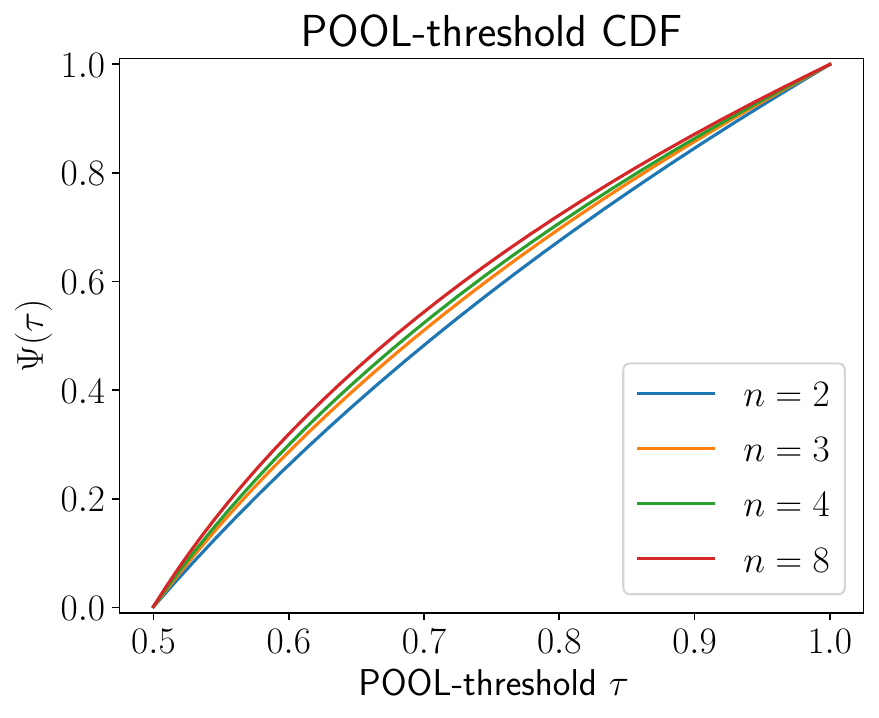}
		\caption{High support information \\ $a=0.5,b=1$}
	\end{subfigure}
	\caption{Structure of the optimal mechanism in the low and high support information regimes. (The intermediate support information regime is shown in Figure~\ref{fig:intermediate-regime-interpolation}.)}
\label{fig:phi-tilde-psi-tilde}
\end{figure}
\obcomment{In Figure 3b) above, the legend should not interfere with the curves.}

Theorem~\ref{thm:char-all-mech-full-main} states that in the moderate support information regime ($k_l \leq a/b \leq k_h$), the optimal mechanism is a randomization over $\textnormal{SPA}$s with random reserves (thresholds) in $[a,v^*]$ and $\textnormal{POOL}$s with random thresholds in $[v^*,b]$. We collectively call both types of thresholds ``unified thresholds'' such that the mechanism is characterized by the CDF of the random unified threshold $\mathcal{D}$. Figure~\ref{fig:spa-iron-a-025} shows the example of such a mechanism (the CDF $\mathcal{D}$) for $n = 2, \lambda = 1, a=0.25$, which is between $k_l \approx 0.2032$ and $k_h \approx 0.3162$. The unified threshold distribution $\mathcal{D}$ can be decomposed into a measure $\Phi$ over $\textnormal{SPA}$-reserves on $[a,v^*]$ with weight $1-n\alpha$ and a measure $\Psi$ over $\textnormal{POOL}$-thresholds on $[v^*,b]$ with weight $n\alpha$ for some $\alpha \in [0,1/n]$ as in Theorem~\ref{thm:char-all-mech-full-main}. The $\textnormal{SPA}$ and $\textnormal{POOL}$ parts are shown with solid and dashed lines, respectively. The two parts meet at boundary $v^*$ with CDF value $1-n \alpha$, the total measure of $\Phi$.

\begin{figure}[h!]
	\begin{subfigure}{0.48\linewidth}
    	\centering
    	\includegraphics[height=0.75\linewidth]{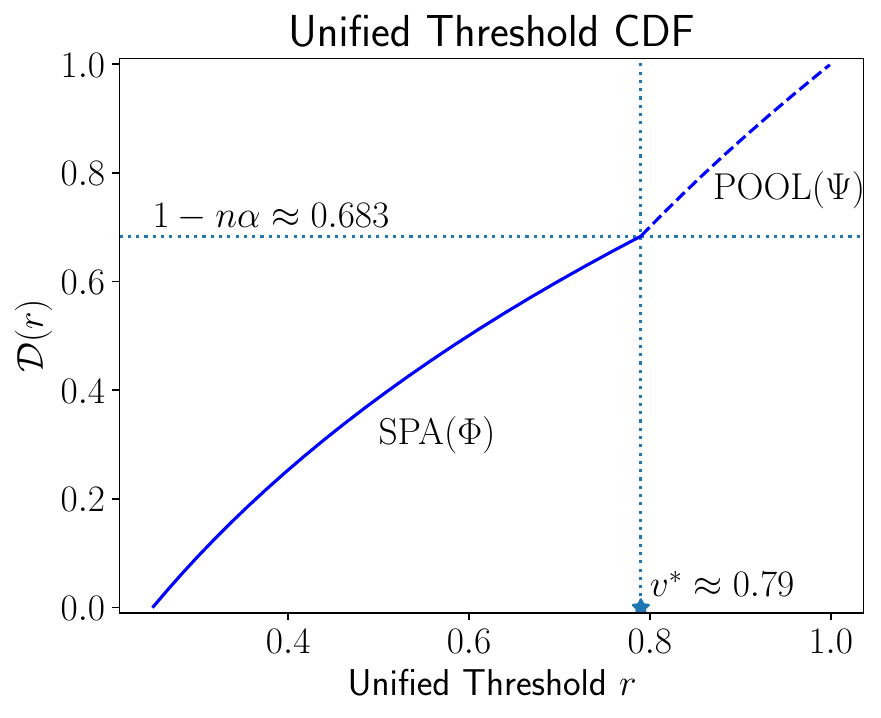}
    	\caption{Optimal mechanism in the intermediate regime}
    	\label{fig:spa-iron-a-025}
    \end{subfigure}%
	\begin{subfigure}{0.48\linewidth}
		\centering
		\includegraphics[height=0.75\linewidth]{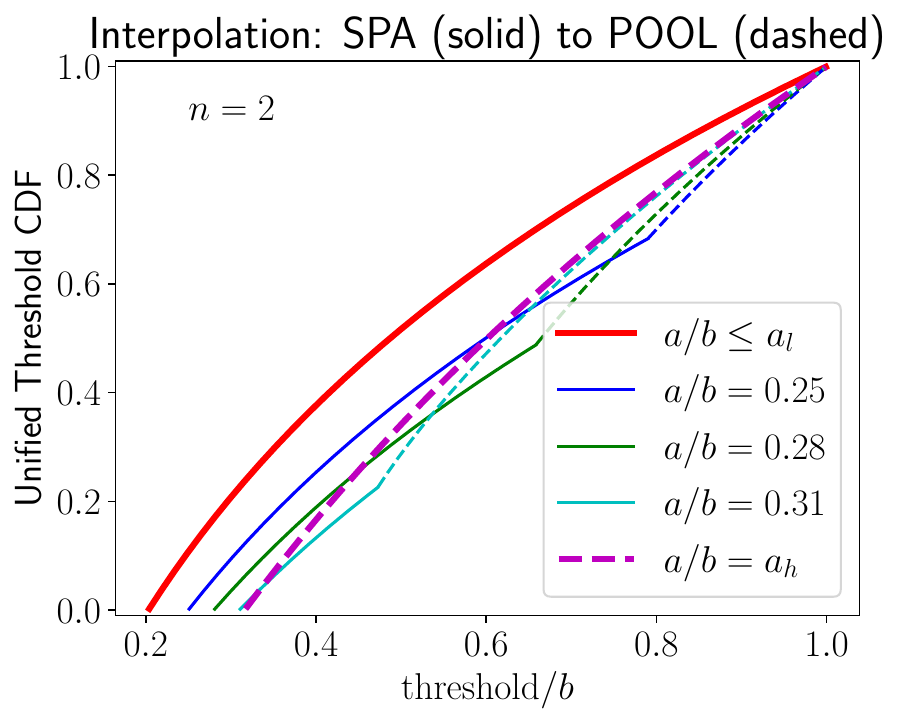}
		\caption{Continuous interpolation from $\textnormal{SPA}$ to $\textnormal{POOL}$}
		\label{fig:spa-iron-compare-a}
	\end{subfigure}
	\caption{Unified threshold CDFs with $\textnormal{SPA}$-reserve (solid lines) and $\textnormal{POOL}$-thresholds  (dashed lines) with $n = 2, \lambda = 1$. (a) highlights the important features of the mechanism with $a/b = 0.25$. (b) shows the continuous interpolation of the mechanisms for $a/b \in \{k_l,0.25,0.28,0.31,k_h\}$. }
\label{fig:intermediate-regime-interpolation}
\end{figure}

Lastly, we observe that as $a/b$ varies from $k_l$ to $k_h$, the optimal mechanism transitions \textit{smoothly} from low ($\textnormal{SPA}$) to high ($\textnormal{POOL}$) support information regimes. Figure~\ref{fig:intermediate-regime-interpolation}  shows the continuous interpolation as $a/b \in \{k_l,0.25,0.28,0.31,k_h\}$. As $a/b$ gets closer to $k_h$, the $\textnormal{SPA}$  part shrinks and the $\textnormal{POOL}$ part expands, and vice versa.

}

\subsubsection{Maximin Ratio Objective}\label{subsubsec:structure-mech-ratio}

\jaedit{

Figure~\ref{fig:maximin-ratio-intro} depicts the maximin ratio value as a function of relative support information $a/b$ for different numbers of buyers $n$. The ratio quantity gives us the fraction of ideal revenue that the mechanism can achieve. This figure provides quantitative evidence that even a small amount of knowledge can lead to nontrivial guarantees on revenue.

%

 \begin{figure}[h!]
		\centering
		\includegraphics[height=0.4\linewidth]{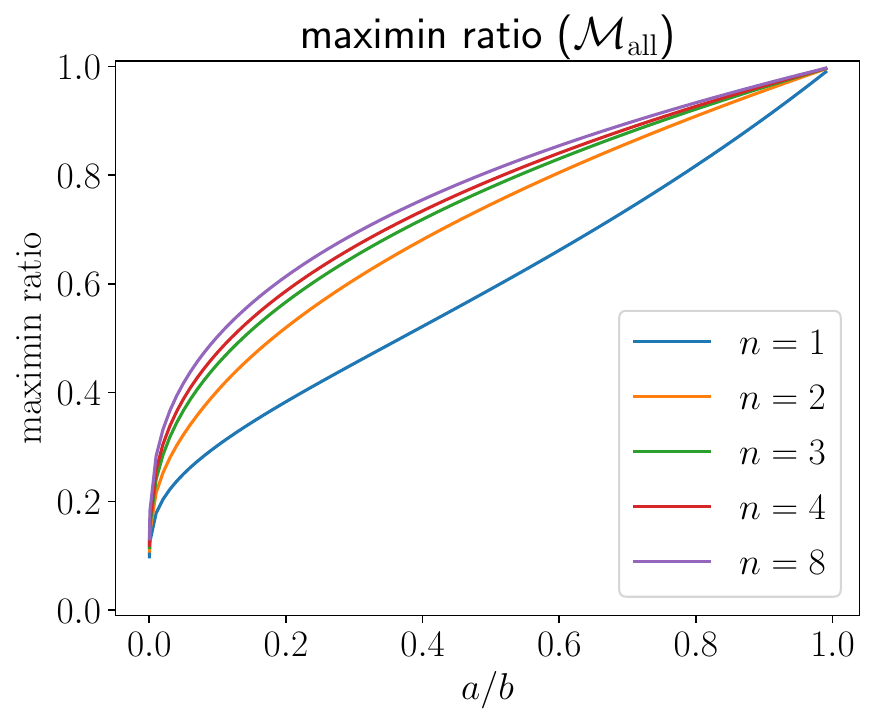}
	\caption{Maximin ratio as a function of relative support information $a/b$ and the number of buyers $n$.}
	\label{fig:maximin-ratio-intro}
\end{figure}

}

Unlike the minimax regret case where $\lambda = 1$ is set exogenously, here $\lambda$ is obtained from bisection search to find the value of $\lambda$ such that the minimax $\lambda$-regret is zero (cf. Proposition \ref{prop:minimax-regret-exp}) and $\lambda = \lambda^*(k,n)$ is a function of $k \equiv a/b$, i.e., the maximin ratio given $k$ that we computed earlier. As a result, the regime is determined \jaedit{endogenously}.  For each $n$, we then compare $k = a/b$ with $k_l = k_l(\lambda^*(k,n),n)$ and $k_h = k_h(\lambda^*(k,n),n)$ to determine the regime. \jaedit{The next result shows that the maximin ratio mechanism is either a pure $\pool$, or a mixture of $\spa$ and $\pool$.

\begin{proposition}\label{prop:maximin-ratio-regime}
The optimal mechanism identified in Theorem \ref{thm:char-all-mech-full-main}, when specialized to the maximin ratio objective, is never in the pure $\spa$ regime. 
\end{proposition}

The proof is given in Appendix~\ref{app:subsec:structure-mech-all}.  Numerically, we find that the mechanism is in a pure pooling auction regime for ``reasonable'' values of $a/b$, namely, $a/b \geq 0.0978$ for $n = 2$, $a/b \geq 0.0155$ for $n = 3$, and $a/b \geq 0.0035$ for $n = 4$.
We can visualize the pool threshold distribution $\Psi$ of $\pool(\Psi)$ as follows. Because $\Psi$ is supported on $[a,b]$, with varying parameters $a$ and $b$ we normalize the threshold $\tau$ by $\tilde{\tau} = (\tau-a)/(b-a)$ so the normalized thresholds are on the same scale $[0,1].$
}
For $n \in \{2,4\}$ and $a/b \in \{0.10,0.25,0.50,0.75,0.99\}$, we plot the normalized $\textnormal{POOL}$-threshold CDFs in Figure~\ref{fig:ratio-structure}.  \jaedit{We see that for low $a/b$, the distribution puts more weight on lower thresholds, and vice versa. However, the distributions are quite close for a wide range of $a/b$ and for reasonably high values of $a/b$, the normalized distribution is close to uniform.}

\jadelete{
We can prove that the pure $\spa$ regime is \textit{never} possible for the maximin ratio objective; we formalize and prove this in Proposition~\ref{prop:maximin-ratio-regime}.\obcomment{I think this is important and could be in the main text} For reasonable values of $a/b$,\footnote{For $a/b \geq 0.0978$ for $n = 2$, $a/b \geq 0.0155$ for $n = 3$, and $a/b \geq 0.0035$ for $n = 4$ we are in the pure $\pool$ regime. } the optimal mechanisms are in the regime of pure pooling auctions. For a $\pool$-threshold $\tau$, we define the $\textit{normalized}$ threshold as $\tilde{\tau} = (\tau-a)/(b-a)$. As $\tau \in [a,b]$, we have $\tilde{\tau} \in [0,1]$, so this normalization allows us to compare the shape of threshold distributions on the same scale. 
}

 \begin{figure}[h!]
	\begin{subfigure}{0.48\linewidth}
		\centering
	\vspace*{1em}	\includegraphics[height=0.75\linewidth]{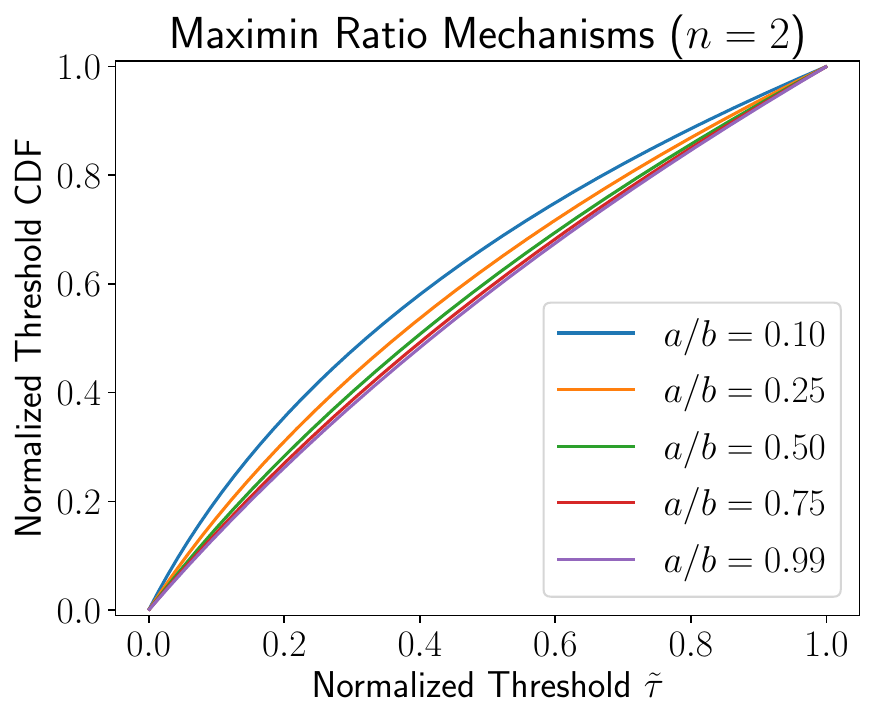}
	\end{subfigure}
	\begin{subfigure}{0.48\linewidth}
		\centering 
	\vspace*{1em}	\includegraphics[height=0.75\linewidth]{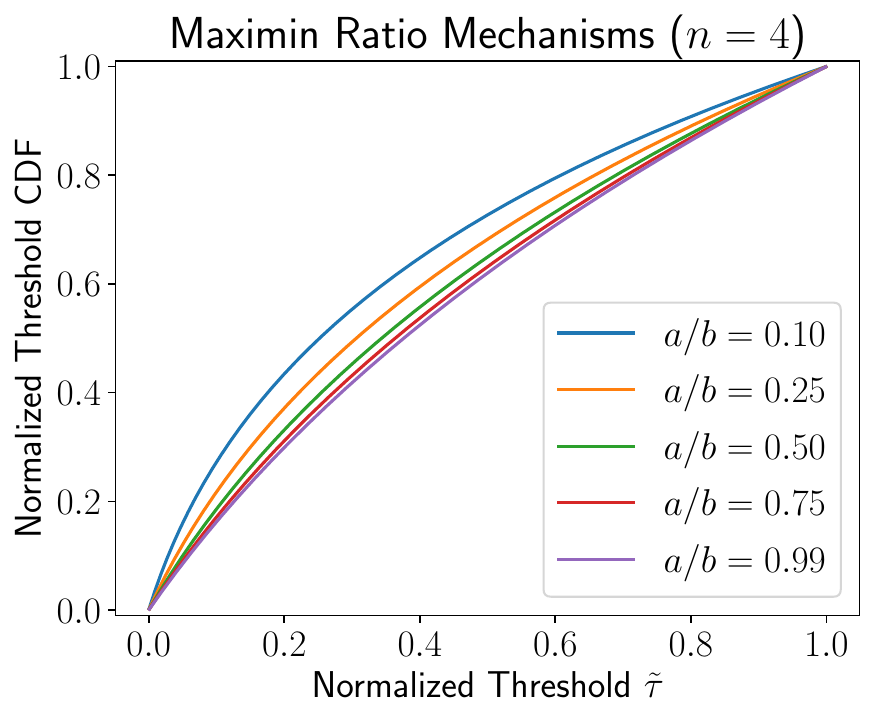}
	\end{subfigure}
	\caption{Normalized threshold distributions for $n \in \{2,4\}$ and $a/b \in \{0.10,0.25,0.50,0.75,0.99\}$}
\label{fig:ratio-structure}
\end{figure}

\subsection{Remark on the case $n=1$}\label{subsec:n-1-remark} 

An important corollary of Theorem~\ref{thm:char-all-mech-full-main} is the pricing case (one-bidder / no-competition). Applying the result with $n=1$ directly recovers the minimax regret result of \cite{BergemannSchlag08} and the maximin ratio result of \cite{ErenMaglaras10} as special cases.\footnote{More precisely, \cite{ErenMaglaras10} derives the maximin ratio against the second-best benchmark in the discrete price setting, whereas our result is against the first-best benchmark in the continuous setting. However, their numerical value for the ratio approaches ours as the grid resolution becomes finer. We can also show that for the $n = 1$ case, the maximin ratio for two benchmarks are the same. \cite{ErenMaglaras10} also does not explicitly derive the optimal mechanism, whereas we do.} We give the proof of this corollary in Appendix~\ref{app:subsec:n-1-remark}.

\begin{corollary}[Pricing]\label{cor:lambda-n-1}
Suppose $n=1$ \jaedit{and fix $\lambda$ in $[0,1]$}. For $a/b \leq e^{-1/\lambda}$, the minimax $\lambda$-regret is $\lambda e^{-1/\lambda} b$, achieved with the price distribution CDF $\Phi(v) = 1 + \lambda \log(v/b)$ for $v \geq e^{-1/\lambda} b$ and 0 otherwise. For $a/b \geq e^{-1/\lambda}$, the minimax $\lambda$-regret is $-a+\lambda a + \lambda \log(b/a)$ achieved with the price distribution CDF $\Phi(v) = 1 + \lambda \log(v/b)$ for $v \in [a,b]$.

In particular, the minimax regret is $b/e$ if $a/b \leq 1/e$ and $a \log(b/a)$ if $a/b \geq 1/e$.  For $a>0$, the maximin ratio is $1/(1+\log(b/a))$, achieved by the price distribution $\Phi(v) = 1 + \log(v/b)/(1+\log(b/a))$ for $v \in [a,b]$.
\end{corollary}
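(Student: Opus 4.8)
The plan is to obtain Corollary~\ref{cor:lambda-n-1} by specializing Theorem~\ref{thm:char-all-mech-full-main} to $n=1$, with one small twist: one of the theorem's formulas degenerates at $n=1$ and must be replaced by a short direct saddle-point check. First I would evaluate the constants. At $n=1$ the defining equation for $k_l$ becomes $\lambda\int_{k_l}^{1} t^{-1}\,dt = 1$, i.e. $\lambda\log(1/k_l)=1$, so $k_l=e^{-1/\lambda}$ — exactly the threshold appearing in the corollary — while $k_h=1$ by definition. Hence the pure-$\pool$ regime $a/b\ge k_h$ is vacuous, and only two cases remain: $a/b\le k_l$ (first bullet) and $k_l\le a/b\le 1$ (third bullet). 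I would also record the pricing reduction already noted in the introduction: when $n=1$, $\spa(r)$ is the deterministic posted price $r$, $\pool(\tau)$ is the deterministic posted price $a$, and every mechanism in $\mathcal{M}_{\textnormal{all}}$ is (equivalent to) a randomized posted price, so $R_\lambda(m,F)=\bE_{v\sim F}[\lambda v]-\big(\text{expected revenue of the induced price distribution against }F\big)$.

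\textbf{Low-information regime} ($a/b\le e^{-1/\lambda}$). I substitute $n=1$ into the first bullet of Theorem~\ref{thm:char-all-mech-full-main}. Then $r^*=k_l b=e^{-1/\lambda}b$; the reserve CDF collapses to $\Phi^*(v)=\lambda\int_{r^*}^{v}t^{-1}\,dt=\lambda\log(v/r^*)=1+\lambda\log(v/b)$ for $v\ge r^*$ (and $0$ below), which is the claimed price distribution; the worst case $F^*$ becomes the equal-revenue distribution $1-r^*/v$ on $[r^*,b)$ with an atom at $b$; and the closed-form value $-(1-\lambda)b+\big[(1-k_l)-\lambda\int_{k_l}^{1}(1-k_l/t)\,dt\big]b$ simplifies, using $\int_{k_l}^{1}(1-k_l/t)\,dt=1-k_l+k_l\log k_l$ together with $\log k_l=-1/\lambda$, to $\lambda k_l b=\lambda e^{-1/\lambda}b$. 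At $\lambda=1$ this is the Bergemann--Schlag minimax regret $b/e$.

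\textbf{High-information regime} ($k_l\le a/b<1$). Here the third-bullet formula for the unified-threshold CDF $\mathcal{D}$ carries a factor $1/(n-1)$ and is undefined at $n=1$, so I would instead exhibit and verify a saddle point directly. Take $m^*$ to be the posted-price mechanism with CDF $\Phi(v)=1+\lambda\log(v/b)$ on $[a,b]$; this is a genuine CDF precisely because $a/b\ge e^{-1/\lambda}$, with atom $m_a:=1+\lambda\log(a/b)\ge 0$ at $v=a$ and density $\lambda/v$ on $(a,b)$ (equivalently, the low-regime $\Phi^*$ truncated to $[a,b]$ with the truncated mass moved to $a$). Take $F^*(v)=1-a/v$ on $[a,b)$ and $F^*(b)=1$, the equal-revenue distribution on the whole support (this is exactly the third bullet's $F^*$, which is well-defined at $n=1$). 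Two checks remain. First, against $F^*$ every posted price $p\in[a,b]$ earns expected revenue $p\cdot(a/p)=a$; equivalently, since $\int_{[a,v]}p\,d\Phi(p)=a m_a+\lambda(v-a)$, the function $v\mapsto\lambda v-\int_{[a,v]}p\,d\Phi(p)$ is constant on $[a,b]$, so $R_\lambda(m^*,F)$ does not depend on $F$ and $R_\lambda(m^*,F)\le R_\lambda(m^*,F^*)$ holds with equality. Second, against $F^*$ one has $\sup_{p\in[a,b]}p(1-F^*(p^-))=a$, so no randomized price beats revenue $a$ and $R_\lambda(m^*,F^*)\le R_\lambda(m,F^*)$ for every $m\in\mathcal{M}_{\textnormal{all}}$. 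Thus $(m^*,F^*)$ is a saddle point in the sense of Definition~\ref{def:saddle-point}, and the minimax $\lambda$-regret equals $\lambda\bE_{F^*}[v]-a=\lambda a(1+\log(b/a))-a$ after the one-line computation $\bE_{F^*}[v]=a+\int_a^b v\,(a/v^2)\,dv=a(1+\log(b/a))$; at $\lambda=1$ this is $a\log(b/a)$. Finally, the maximin-ratio statement follows from Proposition~\ref{prop:minimax-regret-exp}: $R_\lambda$ is nondecreasing in $\lambda$, and in this regime $R_\lambda=a(\lambda(1+\log(b/a))-1)$ vanishes at $\lambda^*=1/(1+\log(b/a))$, which indeed lies in the regime since $e^{-1/\lambda^*}=(a/b)/e<a/b$; so the maximin ratio is $1/(1+\log(b/a))$, attained by $\Phi(v)=1+\log(v/b)/(1+\log(b/a))$, recovering the Eren--Maglaras value.

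\textbf{Main obstacle.} The only genuine subtlety is that the third-bullet threshold distribution degenerates at $n=1$ (the $1/(n-1)$ factor), so the high-information case cannot be read off by substitution and needs the direct argument above. That argument is nonetheless short: one-buyer design reduces to posted pricing, the equal-revenue worst case makes the seller's revenue price-independent, and the proposed price distribution is equalizing — it makes $R_\lambda(m^*,\cdot)$ constant in $F$ — so both saddle inequalities collapse to elementary integrals.
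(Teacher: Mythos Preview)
Your proof is correct. The low-information case is handled identically to the paper (direct substitution into the first bullet of Theorem~\ref{thm:char-all-mech-full-main}, yielding $k_l=e^{-1/\lambda}$, $r^*=k_l b$, and the log-linear $\Phi^*$). In the moderate regime $k_l\le a/b<1$ your route differs from the paper's. The paper works in the $(g_u,g_d)$ formulation of Theorem~\ref{app:thm:char-all-mech-gu-gd}: it solves the two defining equations for $(v^*,\alpha)$ at $n=1$, obtains $v^*=b$ and $\alpha=1-\lambda\log(b/a)$, and then reads off $g_u^*(v)=1+\lambda\log(v/b)$ and the minimax value from the general closed forms. Because $v^*=b$, only the first branch of those formulas is ever invoked, so the $1/(n-1)$ factors you flag---which live in the second branch of $\mathcal{D}$ in Theorem~\ref{thm:char-all-mech-full-main} and in $g_d^*$---never need to be evaluated; the degeneracy you identified is real for $\mathcal{D}$ as written but is sidestepped rather than confronted. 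Your alternative---guessing $(\Phi,F^*)$ directly and verifying both saddle inequalities by hand, using that the log-linear price rule makes the pointwise $\lambda$-regret constant in $v$ and that the equal-revenue distribution makes revenue constant in $p$---is a clean, self-contained shortcut that never touches the $(v^*,\alpha)$ system. The trade-off: the paper's route demonstrates that $n=1$ falls out mechanically from the general machinery once $v^*=b$ is established, reinforcing the corollary's role as a specialization; your route is more elementary and spares the reader the two-equation apparatus, but it is bespoke to $n=1$ and relies on the equalizing structure that is transparent only in the single-buyer posted-price setting.
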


We remark that in the one-bidder case, $k_l = e^{-1/\lambda}$ and $k_h = 1$, so there are only two regimes (low and moderate support information), and this is reflected in the corollary statement. Moreover, with only one bidder, POOL becomes a degenerate mechanism that always allocates. This is why in the $a/b > e^{-1/\lambda}$ regime (moderate information), the optimal mechanism, which is a mixture of SPA and POOL, always allocates with positive probability. This can be seen in the pricing CDF $\Phi(v) = 1 + \lambda \log(v/b)$, which has a point mass of positive size $1 + \lambda \log(a/b) > 0$ at $v = a$.

\section{Minimax $\lambda$-Regret across Mechanism Classes}\label{sec:other-mech-classes}

Our main theorem (Theorem~\ref{thm:char-all-mech-full-main}) gives a complete characterization of the optimal robust performance when Nature's distribution is i.i.d. ($\mathbf{F} \in \mathcal{F}_{\textnormal{iid}}$) and the seller can choose any DSIC mechanism ($m \in \mathcal{M}_{\textnormal{all}}$). It turns out that the optimal mechanism is generally a randomization over $\textnormal{SPA}$ and $\textnormal{POOL}$ mechanisms. This optimal mechanism has interesting features, and we would like to quantify how much each feature contributes to the performance.  That is, without that feature, how much (robust) performance, if any, we will lose.  Equivalently, our results quantify the ``cost of simplicity'' or the performance loss if the seller is restricted to simpler classes of mechanisms. We formalize this problem by solving minimax $\lambda$-regret problems, $\lambda \in (0,1]$, when the mechanism classes $\mathcal{M}$ are successively smaller, omitting one feature at a time.  The subclasses under consideration are shown in Figure~\ref{fig:nested-mech-subclasses}.
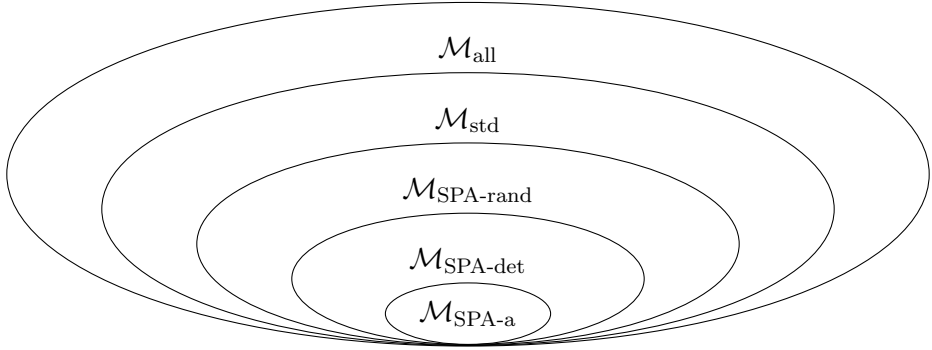
\begin{figure}[h!]
    	\centering
    	    \begin{tikzpicture}[font=\sffamily,breathe dist/.initial=2ex]
    \foreach \X [count=\Y,remember=\Y as \LastY] in 
    {$\mathcal{M}_{\textnormal{SPA-a}}$,$\mathcal{M}_{\textnormal{SPA-det}}$,$\mathcal{M}_{\textnormal{SPA-rand}}$,$\mathcal{M}_{\textnormal{std}}$,$\mathcal{M}_{\textnormal{all}}$}
     {\ifnum\Y=1
      \node[ellipse,draw,outer sep=0pt] (F-\Y) {\X};
     \else
      \node[anchor=south] (T-\Y) at (F-\LastY.north) {\X};
      \path let \p1=($([yshift=\pgfkeysvalueof{/tikz/breathe dist}]T-\Y.north)-(F-\LastY.south)$),
      \p2=($(F-1.east)-(F-1.west)$),\p3=($(F-1.north)-(F-1.south)$)
      in ($([yshift=\pgfkeysvalueof{/tikz/breathe dist}]T-\Y.north)!0.5!(F-\LastY.south)$) 
      node[minimum height=\y1,minimum width={\y1*\x2/\y3},
      draw,ellipse,inner sep=0pt] (F-\Y){};
     \fi}
    \end{tikzpicture}
\caption{Nested mechanism subclasses we consider, from biggest to smallest: DSIC mechanisms ($\mathcal{M}_{\textnormal{all}}$), standard mechanisms ($\mathcal{M}_{\textnormal{std}}$), SPA with random reserve ($\mathcal{M}_{\textnormal{SPA-rand}}$), SPA with deterministic reserve ($\mathcal{M}_{\textnormal{SPA-det}}$), SPA with no reserve ($\mathcal{M}_{\textnormal{SPA-a}}$)}
\label{fig:nested-mech-subclasses}
\end{figure}

First, our optimal mechanism is not standard because $\textnormal{POOL}$ might allocate to a bidder who is not the highest. To isolate the role of the pooling feature, we study the class of standard mechanisms that only allocate to the maximum bidder. Second, we study the need to deviate from SPAs in standard mechanisms, and hence study SPAs with randomized reserves. Lastly, we quantify the power of randomness and the power of using a reserve by computing minimax regret under the class $\mathcal{M}_{\textnormal{SPA-det}}$ of SPA with a deterministic reserve and the class $\mathcal{M}_{\textnormal{SPA-a}}$ of SPA with no reserve. Interestingly, we show that there are strict separations in terms of maximin ratio between $\mathcal{M}_{\textnormal{all}}$, $\mathcal{M}_{\textnormal{std}}$, $\mathcal{M}_{\textnormal{SPA-rand}}$, and $\mathcal{M}_{\textnormal{SPA-a}}$ (but not between $\mathcal{M}_{\textnormal{SPA-det}}$ and $\mathcal{M}_{\textnormal{SPA-a}}$). In other words, pooling and deviations from SPAs are critical for robust performance, and so is the randomization of reserve prices.

\subsection{Minimax $\lambda$-Regret Over Standard Mechanisms}\label{subsec:main-std-mech}

A mechanism is said to be \textit{standard} if it never allocates to an agent that does not have the highest value. Formally, it satisfies the following constraint:
\begin{align*}
x_i(v_i,\mathbf{v}_{-i}) &= 0 \quad \forall i, v_i, \mathbf{v}_{-i} \text{  such that  } v_i < \max(\mathbf{v})\,. \tag{STD}
\end{align*}
We can now define the class of all standard mechanisms.
\begin{definition}\label{def:std-mech}
The class of all standard mechanisms is given by 
\begin{align}\label{mech-ir-ic-ac-std}
    \mathcal{M}_{\textnormal{std}} = \left\{ (\mathbf{x},\mathbf{p}): \textnormal{(IR), (IC), (AC), (STD)} \right\}.
\end{align}
\end{definition}

It is clear that any second-price auction (SPA) with random reserve is standard, and intuitively, SPAs seem like ``natural'' and ``typical'' elements of this class, but as it turns out, other standard mechanisms lead to higher performance than SPAs when relative support information is high. \jaedit{Let $v^{(1)}$ and $v^{(2)}$ be the highest and second highest values in the vector $\mathbf{v}$.} We now introduce the following mechanism class.
\begin{definition}[Generous SPA]\label{def:genspa}
A \emph{generous SPA with reserve distribution $\Phi$},  denoted $\textnormal{GenSPA}(\Phi)$, is defined by the allocation rule $x$ given by, for each $i \in [n]$,
\begin{align*}
x_i(\mathbf{v}) = \begin{cases}
\Phi(v^{(1)}) &\text{ if $v_i$ is the highest and $v^{(2)} > a$}\,, \\
1 &\text{ if $v_i$ is the highest and $v^{(2)} = a$}\,, \\
\end{cases}
\end{align*}
and zero otherwise, breaking ties uniformly at random. The payment rule $p: [a,b]^n \to \mathbb{R}_{+}^n$ is determined uniquely from  Myerson's formula such that the resulting mechanism $(x,p)$ is dominant strategy incentive compatible. 
\end{definition}

We call this mechanism \textit{generous} SPA because it behaves like SPA, except in the case when all other non-highest agents have the lowest possible value $a$, then it always allocates (``generously''). We now state the main theorem of this section. 

\begin{theorem}[Optimal Standard Mechanism]\label{thm:char-std-mech}
Fix $n$ and $\lambda \in (0,1]$, and let $\tilde{a} = a/b \in [0,1)$. Define $k_l$ as in Theorem~\ref{thm:char-all-mech-full-main}. Then, the problem admits an optimal minimax $\lambda$-regret standard mechanism $m^*$, depending on $a/b$ as follows.
\begin{itemize}
    \item (Low Relative Support Information) For $a/b \leq k_l$, $m^* = \textnormal{SPA}(\Phi)$ is the same as in  Theorem~\ref{thm:char-all-mech-full-main}.
    \item (High Relative Support Information) For $a/b \geq k_l$, there is a probability distribution $\Phi$ such that $m^* = \textnormal{GenSPA}(\Phi)$.
\end{itemize}

\end{theorem}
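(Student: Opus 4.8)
\textbf{Overall strategy.} The plan is to mirror the saddle-point methodology used for Theorem~\ref{thm:char-all-mech-full-main}, but now with the seller restricted to $\mathcal{M}_{\textnormal{std}}$. The low-information case ($a/b \le k_l$) requires essentially no new work: the mechanism $\textnormal{SPA}(\Phi)$ from Theorem~\ref{thm:char-all-mech-full-main} is already standard, so the pair $(\textnormal{SPA}(\Phi),\mathbf{F}^*)$ from that theorem is a saddle point over the larger class $\mathcal{M}_{\textnormal{all}} \supseteq \mathcal{M}_{\textnormal{std}}$, and hence a fortiori over $\mathcal{M}_{\textnormal{std}}$; one only needs to note $R_\lambda(m^*,\mathbf{F})\le R_\lambda(m^*,\mathbf{F}^*)\le R_\lambda(m,\mathbf{F}^*)$ holds for all $m\in\mathcal{M}_{\textnormal{std}}$ since it holds for all $m\in\mathcal{M}_{\textnormal{all}}$. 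So the content is entirely in the high-information regime $a/b \ge k_l$.

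\textbf{High-information regime.} Here I would set up the restricted game and look for a saddle point $(\textnormal{GenSPA}(\Phi^*),F^*)$. First, parametrize standard mechanisms by a single allocation function $g_u:[a,b]\to[0,1]$ giving the probability the highest bidder wins as a function of the top value (with $g_d\equiv 0$), using the $(g_u,g_d)$ formalism referenced before Proposition~\ref{prop:gu-gd-rep}; the key subtlety distinguishing $\textnormal{GenSPA}$ from an ordinary $\textnormal{SPA}(\Phi)$ is the boundary behavior when the \emph{second}-highest value equals $a$, which contributes a point-mass-type term in the payment/revenue accounting and is exactly what lets the mechanism guarantee revenue $a$ while still randomizing the reserve above $a$. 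I would write $R_\lambda$ of a standard mechanism against an i.i.d.\ marginal $F$ in two forms analogous to (Regret-$\Psi$) and (Regret-$F$): one linear in the seller's reserve CDF $\Phi$ (useful for the seller's side $\inf_m R_\lambda(m,F^*)$, yielding a first-order condition that pins down $F^*$ wherever $\Phi^*$ is interior), and one separable/pointwise in $F(v)$ (useful for Nature's side $\sup_F R_\lambda(m^*,F)$). The seller's FOC should again force $F^*$ to have zero virtual value on the relevant range — i.e.\ $F^*(v)=1-a/v$ truncated appropriately — matching the moderate-regime worst-case distribution from Theorem~\ref{thm:char-all-mech-full-main}; plugging this back gives a first-order ODE for $\Phi^*$ analogous to (ODE-$\Psi$), which I would integrate with the boundary condition coming from the GenSPA structure at $v^{(2)}=a$, then impose $\Phi^*(b)=1$ to close the system. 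Finally, verify Nature's side without extra assumptions by showing the integrand $I(F(v),v)$ is unimodal in $F(v)$ with peak at $F^*(v)$ — again reducible to the FOC plus nonnegativity of the coefficient of the lower-degree power of $F(v)$ — and verify the seller's side as a standard Bayesian argument (against $F^*$ with zero virtual value, any allocation rule of the right ``always-allocate-enough'' shape is revenue-optimal within $\mathcal{M}_{\textnormal{std}}$).

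\textbf{Main obstacle.} I expect the genuinely new difficulty to be handling the boundary contribution at $v^{(2)}=a$ correctly. In $\mathcal{M}_{\textnormal{all}}$ one could pool low types to extract revenue $a$ with probability one; in $\mathcal{M}_{\textnormal{std}}$ that route is closed, and the GenSPA device — allocating with probability one precisely on the measure-zero-looking but revenue-relevant event that all losers have value $a$ — is the substitute. Getting the integration-by-parts/Myerson-envelope bookkeeping right so that this event produces the correct ``$+a$''-type term in the revenue (rather than being washed out), and then checking that the resulting $\Phi^*$ is a bona fide CDF (nonnegative, nondecreasing, total mass one) with the correct support $[k_l b, b]$ or $[a,b]$ as the regime dictates, is where the care is needed. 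A secondary check is continuity at $a/b=k_l$: the GenSPA solution must limit to the SPA solution there, which should fall out of the $k_l$ defining equation but is worth confirming. Everything else is a routine adaptation of the verification steps already carried out in Section~\ref{subsec:proof-main-thm}.
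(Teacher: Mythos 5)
Your treatment of the low-information case is exactly the paper's argument, and your high-level saddle plan (FOC on the seller side, pointwise unimodality on Nature's side, closing the system with $\Phi^*(b)=1$) is the right skeleton. But two substantive points in the high-information regime go wrong.

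First, $\textnormal{GenSPA}(\Phi)$ is \emph{not} a $(g_u,g_d)$ mechanism with $g_d\equiv 0$, and trying to force it into that formalism loses precisely the feature you call a ``subtlety.'' In a $(g_u,g_d)$ mechanism the allocation depends only on $v_{\max}$, while $\textnormal{GenSPA}$'s allocation to the top bidder is $\Phi(v^{(1)})$ or $1$ depending on whether $v^{(2)}>a$ or $v^{(2)}=a$. Consequently the expected regret of $\textnormal{GenSPA}$ cannot be read off Proposition~\ref{prop:reg-exp-g}; it requires computing the \emph{conditional} distribution of $v^{(1)}$ given $v^{(2)}=a$ for a general marginal $F$ (including ties and atoms), which is what Proposition~\ref{prop:reg-std} does. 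This gives an additional term of the form $F(a)^{n-1}\left[(b-a)(n-(n-1)F(a)) - \int_a^b\big((n-(n-1)F(a))\Phi + v(nF-(n-1)F(a))\Phi'\big)\,dv\right]$ on top of the usual $\textnormal{SPA}$ regret, and without this term the ODE and boundary conditions you plan to derive would be wrong.

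Second, your guessed worst case ``$F^*(v)=1-a/v$ truncated appropriately'' is inconsistent with $\textnormal{GenSPA}$ strictly improving on $\textnormal{SPA}$. The extra revenue term above carries a factor $F(a)^{n-1}$: if $F^*(a)=0$ (as with $1-a/v$), the term vanishes, $\textnormal{GenSPA}$ and $\textnormal{SPA}$ earn identical revenue against $F^*$, and there would be nothing for the generosity to buy. The paper's worst case is instead $F^*(v)=1-c/v$ for $v\in[a,b)$ with $F^*(b)=1$, where $c\in(0,a]$ is calibrated by $\Phi^*(b)=1$, so $F^*$ has two atoms, one at $b$ of size $c/b$ and one at $a$ of size $1-c/a>0$. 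This also corrects your phrase ``measure-zero-looking but revenue-relevant event that all losers have value $a$'': under $F^*$ this event has strictly positive probability $\Pr(v^{(2)}=a)=F^*(a)^{n-1}(n-(n-1)F^*(a))>0$, which is exactly why the conditional order-statistic calculation matters. Relatedly, the resulting $\Phi^*$ itself carries a point mass at $a$ of size $\lambda(a/c-1)/(n-1)$, another structural feature your sketch omits. With these corrections — the new regret identity for $\textnormal{GenSPA}$, the isorevenue $F^*=1-c/v$ with $c<a$ in general, and the atom of $\Phi^*$ at $a$ — your ODE-and-verify plan does go through as in the paper.
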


Note that by Theorem~\ref{thm:char-all-mech-full-main}, if $a/b \leq k_l$, then SPA with random reserve is optimal in $\mathcal{M}_{\textnormal{all}}$, and it is also standard, so it is immediate that it is also optimal in the class $\mathcal{M}_{\textnormal{std}}$. Similar to Theorem~\ref{thm:char-all-mech-full-main}, Theorem~\ref{thm:char-std-mech} highlights the structural features of our optimal mechanism and is a corollary of Theorem~\ref{app:thm:char-std-mech} in the Appendix which fully characterizes the saddle point in closed form. 

The proof of Theorem~\ref{thm:char-std-mech}  follows a similar outline to that of Theorem~\ref{thm:char-all-mech-full-main}, although the calculations are nontrivial. In particular, we need to derive the expressions of \textit{conditional} distributions of order statistics for arbitrary $F$, taking into account potential ties, which complicate the calculations.\footnote{The existing results on conditional distributions of order statistics assume that $F$ has a density, see e.g.~\cite{DavidNagaraja03-order-statistics}. These results do not apply because we do not make any assumptions on $F$. In fact, the worst case $F$ has point masses.}
In contrast, the regret of  any $(g_u,g_d)$ mechanism (whose class contains all other mechanisms in this paper) depends only on the marginal distributions of the first- and second-order statistics, which are simpler (cf. Proposition~\ref{prop:reg-exp-g}). However, the hardest part is \textit{coming up with the right structural class} $\textnormal{GenSPA}$ that contains the optimal mechanism (within the subclass of standard mechanisms) and is tractable, because our techniques based on solving differential equations can pin down the candidate mechanism only once we fix the mechanism up to a one-dimensional functional parameter. We discuss key technical challenges and give the full proof in Appendix~\ref{app:subsec:main-std-mech}.

\subsection{Minimax $\lambda$-Regret over SPA with random  and deterministic reserve}\label{subsec:main-spa-rand-mech}

We can characterize the minimax $\lambda$-regret mechanism and its corresponding worst-case distribution and performance in the following theorem. 

\begin{theorem}[Optimal SPA with Random Reserve]\label{thm:char-spa-rand-mech}
Fix $n$ and $\lambda \in (0,1]$. Define $k_l$ as in Theorem~\ref{thm:char-all-mech-full-main} and $k_h' = \lambda n / ((1+\lambda)n-1)$. Then, the problem admits a minimax $\lambda$-regret $m^* = \textnormal{SPA}(\Phi^*)$, depending on $a/b$, as follows.
\begin{itemize}
    \item (Low Relative Support Information) For $a/b \leq k_l$, $m^* = \textnormal{SPA}(\Phi^*)$ is the same as in Theorem~\ref{thm:char-all-mech-full-main}.
    \item (High Relative Support Information) For $a/b \geq k_h'$, $\Phi^*$ is a point mass only at $a$, i.e., $m^* = \textnormal{SPA}(a)$ is a SPA with no reserve.
    \item (Moderate Relative Support Information) For $k_l \leq a/b \leq k_h'$, there is $r^* \in [a,b]$ such that $\Phi^*$ has a point mass at $a$ and a density on $[r^*,b]$.
\end{itemize}

\end{theorem}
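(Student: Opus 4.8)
The plan is to run the saddle-point program already used for Theorem~\ref{thm:char-all-mech-full-main}, but now with the seller's strategy space restricted to $\mathcal{M}_{\textnormal{SPA-rand}}$, so a mechanism is parametrized by its reserve distribution $\Phi$. Since $\textnormal{SPA}(\Phi)$ is the $(g_u,g_d)$ mechanism with $g_u=\Phi$ and $g_d\equiv 0$ (Proposition~\ref{prop:gu-gd-rep}), Proposition~\ref{prop:reg-exp-g} gives a closed form for $R_\lambda(\Phi,F)$ against any i.i.d.\ marginal $F$: Myerson's formula makes the winner's payment $v^{(1)}\Phi(v^{(1)})-\int_{v^{(2)}}^{v^{(1)}}\Phi(t)\,dt$, so that $R_\lambda(\Phi,F)=\lambda\,\mathbb{E}[v^{(1)}]-\mathbb{E}[v^{(1)}\Phi(v^{(1)})]+\int_a^b \Phi(t)\,n F(t)^{n-1}(1-F(t))\,dt$. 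As in the main theorem I extract two representations: one linear in the values $\Phi(v)$ with an $F$-dependent coefficient, used to pin down Nature's best response; and one separable and pointwise-optimizable in the values $F(v)$, used to pin down the seller's best response. By Proposition~\ref{prop:minimax-regret-exp} it suffices to treat $\lambda$-regret for all $\lambda\in(0,1]$. The low-information case $a/b\le k_l$ is immediate: Theorem~\ref{thm:char-all-mech-full-main} already exhibits an optimal mechanism over $\mathcal{M}_{\textnormal{all}}\supseteq\mathcal{M}_{\textnormal{SPA-rand}}$ that is itself an SPA with random reserve, hence a fortiori optimal in the smaller class.

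For the high-information case I would first compute, for $n\ge 2$, $R_\lambda(\textnormal{SPA}(a),F)=(\lambda-1)a+\int_a^b h\bigl(F(v)\bigr)\,dv$ with $h(u)=(\lambda-1)+n u^{n-1}-(n-1+\lambda)u^{n}$, using $\mathrm{Rev}(\textnormal{SPA}(a),F)=\mathbb{E}[v^{(2)}]$. Since this is an integral of $h(F(v))$ over nondecreasing $F$ and $h$ is unimodal with interior maximizer $u^{*}=(n-1)/(n-1+\lambda)$, the upper bound $R_\lambda(\textnormal{SPA}(a),F)\le(\lambda-1)a+(b-a)h(u^{*})$ is attained by the two-point distribution $F^{*}$ with an atom of size $u^{*}$ at $a$ and the remaining mass $\lambda/(n-1+\lambda)$ at $b$, so this is Nature's best response to $\textnormal{SPA}(a)$. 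It then remains to verify the seller's side against $F^{*}$: because revenue is linear in $\Phi$ it suffices to consider deterministic reserves, and against the two-point $F^{*}$ one checks $r\mapsto\mathrm{Rev}(\textnormal{SPA}(r),F^{*})$ is linear and increasing on $(a,b]$, so the only competitor is $\textnormal{SPA}(b)$; the inequality $\mathrm{Rev}(\textnormal{SPA}(a),F^{*})\ge\mathrm{Rev}(\textnormal{SPA}(b),F^{*})$ reduces, after dividing by $(u^{*})^{n-1}$, exactly to $a/b\ge n\lambda/(n(1+\lambda)-1)=k_h'$. This simultaneously proves optimality of $\textnormal{SPA}(a)$ for $a/b\ge k_h'$ and explains the threshold; the $n=1$ case is degenerate since then $k_h'=1$ (consistent with Corollary~\ref{cor:lambda-n-1}).

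For the moderate-information case $k_l\le a/b\le k_h'$ I would guess a candidate saddle point from the two first-order conditions. The vanishing of the coefficient of $\Phi(v)$ under $F^{*}$ wherever $\Phi^{*}$ has a density forces $nF^{*}(v)^{n-1}\bigl[(1-F^{*}(v))-vF^{*\prime}(v)\bigr]=0$, i.e.\ $\frac{d}{dv}\bigl[v(1-F^{*}(v))\bigr]=0$, so $F^{*}$ is an isorevenue distribution $1-c/v$ on that range; the pointwise first-order condition in $F(v)$ then determines $\Phi^{*}$ up to constants, and seller-indifference across $\operatorname{supp}(\Phi^{*})$ together with the normalization $\Phi^{*}(b)=1$ fixes the cutoff $r^{*}$ and the size of the atom of $\Phi^{*}$ at $a$, yielding a $\Phi^{*}$ with a point mass at $a$ and a density on $[r^{*},b]$ as claimed; one checks $r^{*}\uparrow b$ as $a/b\uparrow k_h'$ and that the construction matches the SPA-regime mechanism at $a/b\downarrow k_l$. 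Finally one verifies the two saddle inequalities with no further assumptions: for Nature, that $F^{*}(v)$ globally maximizes the relevant integrand in $F(v)$ on each of $[a,r^{*}]$ (where $\Phi^{*}$ is constant) and $[r^{*},b]$, using unimodality of the degree-$n$ polynomial integrand plus the first-order condition exactly as in the proof of Theorem~\ref{thm:char-all-mech-full-main}; for the seller, that $r\mapsto\mathrm{Rev}(\textnormal{SPA}(r),F^{*})$ attains its maximum precisely on $\{a\}\cup[r^{*},b]$, so that no deterministic (hence no random) reserve improves on $\Phi^{*}$.

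The main obstacle is this last verification in the moderate regime. Because $\Phi^{*}$ is not smooth --- it has an atom at $a$ and its support skips the interval $(a,r^{*})$ --- the integrand governing Nature's problem changes form across regions, and one must check the required sign condition (nonnegativity of the coefficient of the top-degree term in $F(v)$, so that the first-order condition delivers a global rather than merely local optimum) separately on $[a,r^{*}]$ and on $[r^{*},b]$; simultaneously one must show the seller's revenue function $r\mapsto\mathrm{Rev}(\textnormal{SPA}(r),F^{*})$ is flat on $\operatorname{supp}(\Phi^{*})$ and strictly below it on $(a,r^{*})$, which is precisely where the form of $F^{*}$ on $[a,r^{*}]$ and the choice of $r^{*}$ are used. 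Everything else --- the regret formulas, the ODE integration giving $\Phi^{*}$, the boundary conditions, and the reduction of the minimax regret and maximin ratio statements via Proposition~\ref{prop:minimax-regret-exp} --- is routine bookkeeping.
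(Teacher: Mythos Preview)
Your proposal is correct and closely tracks the paper's argument in the low- and high-information regimes; in the moderate regime you take a genuinely different (and arguably cleaner) route for the seller's saddle, so it is worth recording the comparison.

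The paper handles the seller's side in the moderate regime by a Lagrangian argument: it writes $R_\lambda(\Phi,F^*)$ in the linear-in-$\Phi$ form, observes that the pointwise minimizer over $\Phi(v)\in[0,1]$ is not monotone (the coefficient of $\Phi(v)$ is strictly positive on $(a,r^*)$ while the boundary term pulls $\Phi(a)$ toward $1$), dualizes the nondecreasing constraint with a multiplier $\mu$, and then constructs $\mu^*$ via complementary slackness and Lagrangian optimality. The condition $\mu^*(r^*)=0$ is exactly the equation $aF_0=n(r^*-a)(1-F_0)$ that pins down $r^*$.

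You instead exploit the mixture identity $\mathrm{Rev}(\Phi,F^*)=\int \mathrm{Rev}(\textnormal{SPA}(r),F^*)\,d\Phi(r)$ and analyze the one-variable function $r\mapsto R_\lambda(\textnormal{SPA}(r),F^*)$ directly. Using the deterministic regret formula one checks: (i) the function jumps up by $aF_0^n$ at $r=a^+$; (ii) on $(a,r^*]$ it is strictly decreasing with slope $nF_0^{n-1}(F_0-1)$ (since $F^*\equiv F_0$ there); and (iii) on $[r^*,b]$ it is constant because $F^*(r)-1+r(F^*)'(r)=0$ for the isorevenue piece. Hence the minimum is attained precisely on $\{a\}\cup[r^*,b]=\operatorname{supp}(\Phi^*)$, and the equality of the value at $r=a$ and $r=r^*$ is \emph{exactly} $aF_0=n(r^*-a)(1-F_0)$, the same equation the paper obtains from $\mu^*(r^*)=0$. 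Thus your ``seller-indifference across $\operatorname{supp}(\Phi^*)$'' condition and the paper's Lagrangian boundary condition are two faces of the same equation.

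What each buys: your route is more elementary and transparent here because $\mathcal{M}_{\textnormal{SPA-rand}}$ is a one-dimensional mixture class, so reducing to extreme points costs nothing and avoids the duality machinery. The paper's Lagrangian route is more systematic and would scale to richer constrained classes where the extreme-point reduction is less explicit. Nature's saddle in both approaches is identical (pointwise unimodality of the degree-$n$ integrand plus the first-order condition, separately on $[a,r^*]$ and on $[r^*,b]$), as you note.
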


The second bullet point of Theorem~\ref{thm:char-spa-rand-mech} formalizes the intuition highlighted in the introduction that in the high scale information regime ($a/b$ is close enough to 1), the optimal SPA with random reserve sets no reserve at all. Similar to Theorem~\ref{thm:char-all-mech-full-main}, Theorem~\ref{thm:char-spa-rand-mech} highlights the structural features of our optimal mechanism and is a corollary of Theorem~\ref{app:thm:char-spa-rand-mech} in Appendix~\ref{app:subsec:main-spa-rand-mech} which fully characterizes the saddle point in closed form.  The proof of the moderate information regime of Theorem~\ref{thm:char-spa-rand-mech} is the most challenging. It is different from previous saddle problems because in this case, the increasing condition on the reserve price distribution $\Phi$ is \textit{binding}; if we optimize pointwise, the resulting distribution is not increasing, which is infeasible. We characterize an optimal distribution of reserves using a Lagrangian approach that involves introducing a Lagrange multiplier for the monotonicity constraint and then designing a primal-dual pair that satisfies complementary slackness and Lagrangian optimality. We discuss key technical challenges and give the full proof in Appendix~\ref{app:subsec:main-spa-rand-mech}. 

Lastly, we characterize the optimal SPA with deterministic reserve $
\mathcal{M}_{\textnormal{SPA-det}}$ and SPA with no reserve $\mathcal{M}_{\textnormal{SPA-a}}$. Proposition~\ref{app:prop:sup-regret-spa-r-lmbd} in the Appendix gives the the minimax $\lambda$-regret for $\textnormal{SPA}(r)$ with a fixed deterministic reserve $r$. In particular, it subsumes the problem of choosing the regret-minimizing reserve $r$ as well as computing worst-case regret of SPA without reserve ($r=a$).

\subsection{Performance Separation Between Mechanism Classes}\label{subsec:other-mech-classes-discuss}

\jadelete{Table~\ref{table:maximin-ratio-mech-class-intro-n-4} in the introduction shows the maximin ratio as a function of $a/b$ of all mechanism classes we consider for $n = 4$.}

\jaedit{Figure~\ref{fig:ratio-mech-classes-vary-n} shows the maximin ratio as a function of $a/b$ of all mechanism classes for $n \in \{2,4\}$.} This metric captures the performance of the optimal mechanism. We can see that while $\mathcal{M}_{\textnormal{SPA-det}}$ and $\mathcal{M}_{\textnormal{SPA-a}}$ have the same maximin ratios (so a fixed reserve does not improve over no reserve), there are \textit{strict separations} between $\mathcal{M}_{\textnormal{all}}$, $\mathcal{M}_{\textnormal{std}}$, $\mathcal{M}_{\textnormal{SPA-rand}}$, and $\mathcal{M}_{\textnormal{SPA-a}}$.

\begin{figure}[h!]
	\begin{subfigure}{0.48\linewidth}
    	\centering
\includegraphics[height=0.75\linewidth]{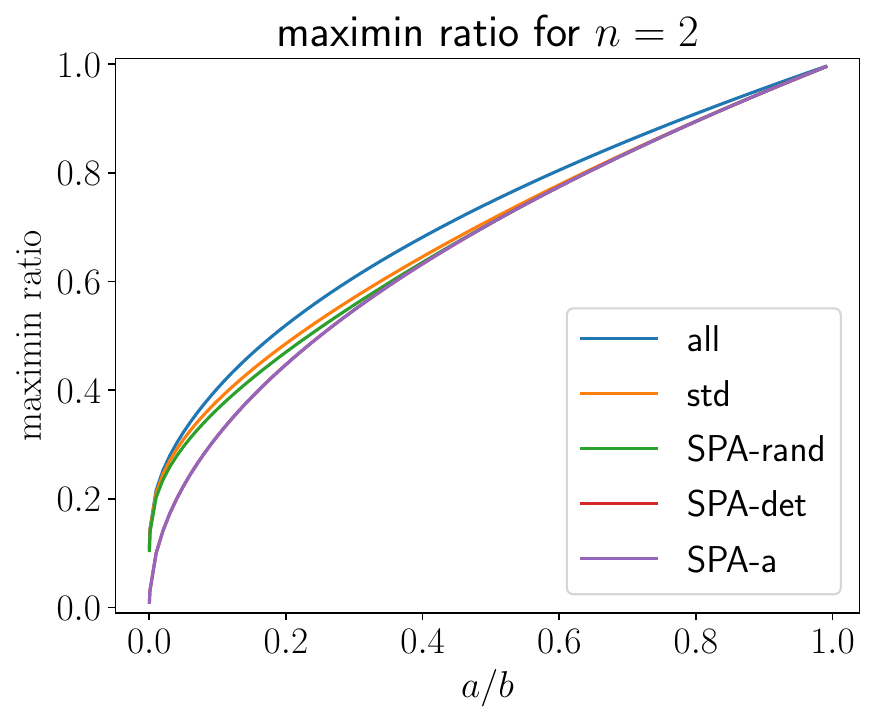}
    	\label{fig:ratio-mech-classes-n-2}
    \end{subfigure}%
	\begin{subfigure}{0.48\linewidth}
		\centering
		\includegraphics[height=0.75\linewidth]{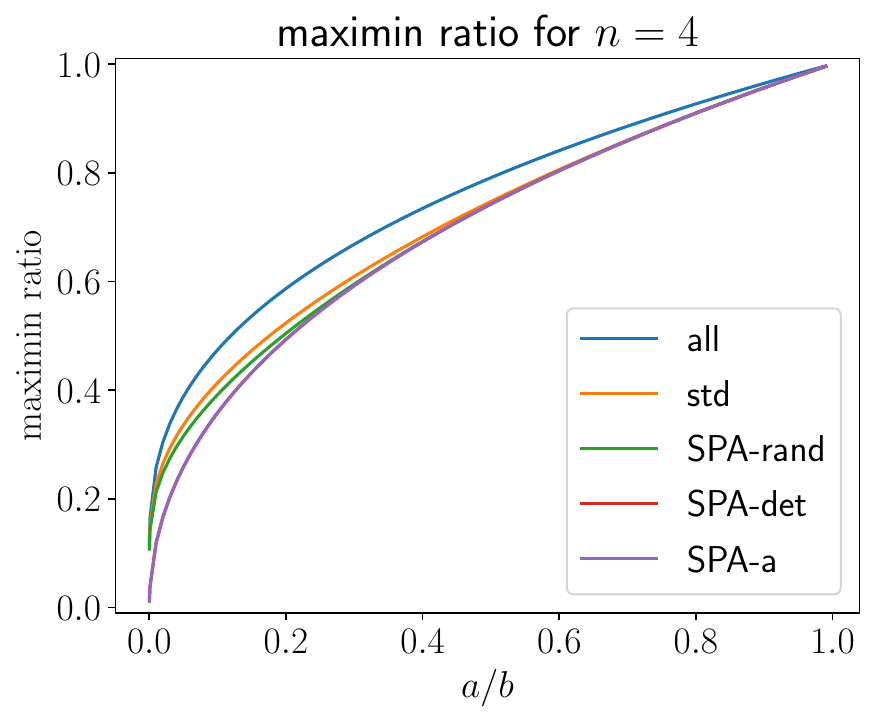}
		\label{fig:ratio-mech-classes-n-4}
	\end{subfigure}
	\caption{Maximin ratio as a function of $a/b$ for $n \in \{2,4\}$.}
\label{fig:ratio-mech-classes-vary-n}
\end{figure}

The gap between $\mathcal{M}_{\textnormal{std}}$ and $\mathcal{M}_{\textnormal{SPA-rand}}$ shows that no SPA is optimal within the class of standard mechanisms, even though the gap is quantitatively small. In contrast, the gap between $\mathcal{M}_{\textnormal{all}}$ and $\mathcal{M}_{\textnormal{std}}$ is significant. This means that in robust settings, \textit{it is important to sometimes allocate to non-highest bidders}. We can see from the plots with $n = 2$ and $n = 4$ that the non-standard gap becomes bigger and dominates all other gaps as $n$ gets large, so this becomes more important with more bidders.

\jadelete{\obcomment{The introduction feels far at this stage...} 

Echoing discussions in the introduction,}

\jaedit{These structural results} shows that there are interesting mechanism classes in DSIC mechanisms beyond SPA in the sense that they are robustly optimal in natural settings. \jaedit{In particular}, SPA is not optimal even within the class of standard mechanisms; $\textnormal{GenSPA}$ is. It is an open question whether $\textnormal{GenSPA}$ will also be useful in other settings as well.
 
\jadelete{
\jacomment{These are not really needed. Just the discussion of $\spa$ vs $\pool$ in the main text is enough.}
We can also use analytical results derived in this section to gain insights into the structure of optimal mechanisms within subclasses. We further discuss this in  Appendix~\ref{app:subsec:other-mech-classes-structure}.
}

\section{Extensions and Conclusion}\label{sec:conclude}

In this paper, we give an explicit characterization of a \textit{robustly} optimal mechanism to sell an item to $n$ buyers knowing only a lower bound and an upper bound of the support of values, where the seller's performance is evaluated in the worst case. Our general framework is broadly applicable to an arbitrary number $n$ of buyers and  several mechanism classes $\mathcal{M}$ and captures both regret and ratio objectives.

Furthermore, we note that it is possible to extend the framework to other classes of distributions. It is possible to show that the minimax $\lambda$-regret we have obtained for the case of i.i.d. distributions (and the corresponding optimal mechanism) does not change if Nature optimizes over broader classes of distributions capturing positive dependence: exchangeable and affiliated values, a common class considered with knowledge of the distributions \citep{MilgromWeber82};  and mixtures of i.i.d. distributions, another common class.  The results also do not change if Nature optimizes over the smaller class of i.i.d. regular distributions.

There are many avenues for future work. This present paper is a step in the more general agenda of \textit{robust mechanism design with partial information}, and it would be interesting to investigate how other forms of side information (such as moments, samples, and shapes of distributions) impact the structure and performance of optimal or near-optimal mechanisms, and the value of such information. Another direction is to consider other benchmarks, especially the second-best benchmark rather than the first-best benchmark considered in this paper.

\newpage

\bibliographystyle{plainnat}
\bibliography{references}

\newpage
\appendix

\pagenumbering{arabic}
\renewcommand{\thepage}{App-\arabic{page}}
\renewcommand{\theequation}{\thesection-\arabic{equation}}
\renewcommand{\thelemma}{\thesection-\arabic{lemma}}
\renewcommand{\theproposition}{\thesection-\arabic{proposition}}
\setcounter{page}{1}
\setcounter{section}{0}
\setcounter{proposition}{0}
\setcounter{lemma}{0}
\setcounter{equation}{0}

\setcounter{footnote}{0}

\begin{center}
 {\Large \textbf{Electronic Companion: 
\\ Robust Auction Design with Support Information \\}
\medskip
\ifx\blind\undefined
Jerry Anunrojwong\footnote{Columbia University, Graduate School of Business. Email: {\tt janunrojwong25@gsb.columbia.edu}}, ~
Santiago R. Balseiro\footnote{Columbia University, Graduate School of Business. Email: {\tt srb2155@columbia.edu}.}, ~ and Omar Besbes\footnote{Columbia University, Graduate School of Business. Email: {\tt ob2105@columbia.edu}.}.
\fi}
\end{center}

\vspace{-4em}

\addcontentsline{toc}{section}{Appendix} 
\part{Appendix} 
\setstretch{1.0}
\parttoc 
\newpage

\section{Proofs for Section~\ref{sec:problem-formulation}}\label{app:sec:problem-formulation}

\begin{proof}[Proof of Proposition~\ref{prop:minimax-regret-exp}]
The definition of $\textnormal{MaximinRatio}(\mathcal{M},\mathcal{F})$ says that it is a solution to
\begin{align*}
\sup_{(x,p) \in \mathcal{M}} \lambda \text{ s.t. } 
\frac{ \bE_{\mathbf{v} \sim \mathbf{F}} \left[ \sum_{i=1}^{n} p_i(\mathbf{v}) \right]   }{ \bE_{\mathbf{v} \sim \mathbf{F}} \left[ \max(\mathbf{v}) \right] }  \leq \lambda \quad \forall \mathbf{F} \in \mathcal{F} \, ,
\end{align*}
or 
\begin{align*}
\sup_{(x,p) \in \mathcal{M}} \lambda \text{ s.t. } 
\bE_{\mathbf{v} \sim \mathbf{F}} \left[ \lambda \max(\mathbf{v}) - \sum_{i=1}^{n} p_i(\mathbf{v}) \right] \leq 0 \quad \forall \mathbf{F} \in \mathcal{F} \, ,
\end{align*}
or 
\begin{align*}
\sup_{(x,p) \in \mathcal{M}} \lambda \text{ s.t. } 
\sup_{\mathbf{F} \in \mathcal{F}} \bE_{\mathbf{v} \sim \mathbf{F}} \left[ \lambda \max(\mathbf{v}) - \sum_{i=1}^{n} p_i(\mathbf{v}) \right] \leq 0 \, .
\end{align*}

That is, the maximin ratio is the highest value of $\lambda$ such that \textit{there exists} $(x,p) \in \mathcal{M}$ such that $\sup_{\mathbf{F} \in \mathcal{F}} \bE_{\mathbf{v} \sim \mathbf{F}} \left[ \lambda \max(\mathbf{v}) - \sum_{i=1}^{n} p_i(\mathbf{v}) \right] \leq 0$. 
Equivalently, it is the highest value of $\lambda$ such that $$R_{\lambda}(\mathcal{M},\mathcal{F}) = \inf_{(x,p) \in \mathcal{M}} \sup_{\mathbf{F} \in \mathcal{F}} \bE_{\mathbf{v} \sim \mathbf{F}} \left[ \lambda \max(\mathbf{v}) - \sum_{i=1}^{n} p_i(\mathbf{v}) \right] \leq 0 \,.\qedhere$$
\end{proof}

\jaedit{
\section{Proofs for Section~\ref{sec:main-all-mech}}\label{app:sec:main-all-mech} 

\subsection{Proofs for Section~\ref{subsec:proof-main-thm}}

This subsection contains the technical details deferred from \S\ref{subsec:proof-main-thm}, the proof of the main theorem.

We first state and derive technical lemmas in \S\ref{subsubsec:technical-lemmas}. We then give a reformulation of our main theorem in terms of the $(g_u,g_d)$ mechanism and the distribution $F^*$ in a saddle point in \S\ref{subsubsec:main-thm-reformulation}. We prove key results supporting the saddle calculation in \S\ref{subsubsec:saddle-key-results} and verify the saddle in \S\ref{subsubsec:saddle-verify}, thus proving the main theorem.

\subsubsection{Technical Lemmas}\label{subsubsec:technical-lemmas}

\begin{lemma}\label{lem:integral}
Let $\phi_0$ be a constant, then for any positive integer $n$ and $v \geq a$ we have the identity
\begin{align*}
\int_{t=a}^{t=v} \frac{(t-a)^{n-1}}{(t-\phi_0)^n} dt = \log\left( \frac{v-\phi_0}{a-\phi_0} \right) - \sum_{k=1}^{n-1} \frac{(v-a)^k}{k (v-\phi_0)^k} = \sum_{k=n}^{\infty} \frac{(v-a)^k}{k (v-\phi_0)^k} \, .
\end{align*}
\end{lemma}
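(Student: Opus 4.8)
The plan is to establish the two equalities separately. The claimed identity is
\[
\int_{t=a}^{t=v} \frac{(t-a)^{n-1}}{(t-\phi_0)^n}\,dt = \log\!\left(\frac{v-\phi_0}{a-\phi_0}\right) - \sum_{k=1}^{n-1}\frac{(v-a)^k}{k(v-\phi_0)^k} = \sum_{k=n}^{\infty}\frac{(v-a)^k}{k(v-\phi_0)^k}.
\]
The cleanest route to the first equality is the substitution $u = \frac{t-a}{t-\phi_0}$, which maps $t=a$ to $u=0$ and $t=v$ to $u = \frac{v-a}{v-\phi_0} =: w$. A short computation gives $t - \phi_0 = \frac{a-\phi_0}{1-u}$, hence $\frac{t-a}{t-\phi_0} = u$ and $dt = \frac{a-\phi_0}{(1-u)^2}\,du$, while $\frac{(t-a)^{n-1}}{(t-\phi_0)^n} = \frac{u^{n-1}}{t-\phi_0} = \frac{u^{n-1}(1-u)}{a-\phi_0}$. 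Therefore the integrand times $dt$ becomes $\frac{u^{n-1}}{1-u}\,du$, so the integral equals $\int_0^w \frac{u^{n-1}}{1-u}\,du$.

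The remaining step is purely about $\int_0^w \frac{u^{n-1}}{1-u}\,du$. Writing $\frac{u^{n-1}}{1-u} = \frac{u^{n-1}-1}{1-u} + \frac{1}{1-u} = -\sum_{k=0}^{n-2} u^k + \frac{1}{1-u}$ via the finite geometric sum, we integrate term by term to get $-\sum_{k=0}^{n-2}\frac{w^{k+1}}{k+1} - \log(1-w) = -\sum_{k=1}^{n-1}\frac{w^k}{k} - \log(1-w)$. Substituting back $w = \frac{v-a}{v-\phi_0}$ gives $-\log\!\bigl(1-\tfrac{v-a}{v-\phi_0}\bigr) = -\log\!\bigl(\tfrac{a-\phi_0}{v-\phi_0}\bigr) = \log\!\bigl(\tfrac{v-\phi_0}{a-\phi_0}\bigr)$ and $-\sum_{k=1}^{n-1}\frac{(v-a)^k}{k(v-\phi_0)^k}$, which is exactly the middle expression. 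For the second equality, note $w \in [0,1)$ for $v \ge a$ (assuming $a > \phi_0$, which holds in all applications since $\phi_0 \le 0$ there), so $-\log(1-w) = \sum_{k=1}^{\infty}\frac{w^k}{k}$ converges, and subtracting the first $n-1$ terms leaves $\sum_{k=n}^{\infty}\frac{w^k}{k} = \sum_{k=n}^\infty \frac{(v-a)^k}{k(v-\phi_0)^k}$.

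I do not expect any genuine obstacle here; the only points requiring a word of care are (i) checking that the substitution is valid (monotone, no singularity on $[a,v]$, which needs $\phi_0 \notin [a,v]$ — true since $\phi_0 \le 0 < a$ in the uses of this lemma, and for $\phi_0$ a generic constant one restricts to that range) and (ii) justifying the geometric series manipulation and the convergence of the log series, both of which are standard for $|w| < 1$. The boundary case $v = a$ (both sides zero) and the case $n=1$ (empty finite sum, integral equal to $\log\frac{v-\phi_0}{a-\phi_0}$) are immediate sanity checks.
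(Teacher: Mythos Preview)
Your proposal is correct. The approach to the first equality differs from the paper's: the paper verifies the identity by checking that both sides vanish at $v=a$ and that their derivatives with respect to $v$ agree (differentiating the middle expression and collapsing a finite geometric sum to recover the integrand $(v-a)^{n-1}/(v-\phi_0)^n$). Your route is more direct: the substitution $u=(t-a)/(t-\phi_0)$ reduces the integral to $\int_0^w \frac{u^{n-1}}{1-u}\,du$ with $w=(v-a)/(v-\phi_0)$, which you then evaluate via the decomposition $\frac{u^{n-1}}{1-u} = \frac{1}{1-u} - \sum_{k=0}^{n-2}u^k$. Both arguments are short; yours is constructive (it computes the antiderivative rather than verifying a guessed one), while the paper's avoids the change of variables entirely. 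The second equality, via the Taylor series $-\log(1-w)=\sum_{k\ge 1}w^k/k$, is handled the same way in both.

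One small factual slip in your commentary: you write that $\phi_0 \le 0$ in the applications, but in the high-information regime the paper has $\phi_0 = (a-k_h b)/(1-k_h) \in [0,a)$. The condition that actually matters for your substitution and the series convergence is $\phi_0 < a$ (so that $w\in[0,1)$), and that does hold throughout.
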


\begin{proof}[Proof of Lemma~\ref{lem:integral}]
We first check the equality of the first and the second expressions. Note that both expressions are zero when $v = a$. It is then sufficient to check that the derivatives of the two expressions agree. The derivative of the second expression is
\begin{align*}
    \frac{1}{v-\phi_0} - \sum_{k=1}^{n-1} \frac{1}{k} k \left( \frac{v-a}{v-\phi_0} \right)^{k-1} \frac{(a-\phi_0)}{(v-\phi_0)^2} \\
    = \frac{1}{v-\phi_0} - \frac{(a-\phi_0)}{(v-\phi_0)^2} \frac{1 - \left(\frac{v-a}{v-\phi_0} \right)^{n-1}}{1-\frac{v-a}{v-\phi_0}} = \frac{1}{v-\phi_0} \left( \frac{v-a}{v-\phi_0} \right)^{n-1} = \frac{(v-a)^{n-1}}{(v-\phi_0)^n},
\end{align*}
which is the derivative of the first expression.

Now we check the third expression. We have the Taylor Series
\begin{align*}
    - \log(1-x) = \sum_{k=1}^{\infty} \frac{x^k}{k}.
\end{align*}
Substituting $x = \frac{v-a}{v-\phi_0}$ gives
\begin{align*}
    \log \left( \frac{v-\phi_0}{a-\phi_0} \right) = - \log \left( 1 - \frac{v-a}{v-\phi_0} \right) = \sum_{k=1}^{\infty} \frac{1}{k} \left( \frac{v-a}{v-\phi_0} \right)^{k}.
\end{align*}
We see that the first $n-1$ terms of $k$ cancel out, and we get the third expression.
\end{proof}

\begin{lemma}\label{lem:integral2} The following holds
    \begin{align*}
        & \int_{t=a}^{t=v} \left[ \frac{(t-a)^{n-1}}{(t-\phi_0)^{n}} - (1-\lambda) \frac{(t-a)^n}{(t-\phi_0)^{n+1}} \right] dt = \frac{(v-a)^{n}}{n(v-\phi_0)^n} + \lambda \sum_{k=n+1}^{\infty} \frac{(v-a)^k}{k(v-\phi_0)^k} \\
        &\qquad =\lambda \log \left( \frac{v-\phi_0}{a-\phi_0} \right) - \lambda \sum_{k=1}^{n-1} \frac{(v-a)^{k}}{k(v-\phi_0)^{k}} + (1-\lambda) \frac{(v-a)^{n}}{n(v-\phi_0)^{n}}\,.
    \end{align*}
\end{lemma}
\begin{proof}[Proof of Lemma~\ref{lem:integral2}]

We apply Lemma~\ref{lem:integral} to obtain
\begin{align*}
\int_{t=a}^{t=v} \frac{(t-a)^{n-1}}{(t-\phi_0)^n} dt = \log\left( \frac{v-\phi_0}{a-\phi_0} \right) - \sum_{k=1}^{n-1} \frac{(v-a)^k}{k (v-\phi_0)^k} = \sum_{k=n}^{\infty} \frac{(v-a)^k}{k (v-\phi_0)^k}  \\
\int_{t=a}^{t=v} \frac{(t-a)^{n}}{(t-\phi_0)^{n+1}} dt = \log\left( \frac{v-\phi_0}{a-\phi_0} \right) - \sum_{k=1}^{n} \frac{(v-a)^k}{k (v-\phi_0)^k} = \sum_{k=n+1}^{\infty} \frac{(v-a)^k}{k (v-\phi_0)^k} \, .
\end{align*}
Therefore,
\begin{align*}
&\int_{t=a}^{t=v} \left[ \frac{(t-a)^{n-1}}{(t-\phi_0)^{n}} - (1-\lambda) \frac{(t-a)^n}{(t-\phi_0)^{n+1}} \right] dt\\ &\qquad = \lambda \log \left( \frac{v-\phi_0}{a-\phi_0} \right) - \lambda \sum_{k=1}^{n-1} \frac{(v-a)^{k}}{k(v-\phi_0)^{k}} + (1-\lambda) \frac{(v-a)^{n}}{n(v-\phi_0)^{n}}   \\
&\qquad = \frac{(v-a)^{n}}{n(v-\phi_0)^n} + \lambda \sum_{k=n+1}^{\infty} \frac{(v-a)^k}{k(v-\phi_0)^k}\,,
\end{align*}
where the last equation follows from the Taylor series for the logarithm.
\end{proof}

\begin{lemma}\label{lem:int-by-part}
If $h$ is a differentiable function, then
\begin{align*}
\int_{w \in [a,b]} h(w) dG(w) = h(a) + \int_{w \in [a,b]} h'(w) (1-G(w)) dw  \, .
\end{align*}
\end{lemma}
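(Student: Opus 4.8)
The statement to prove is Lemma~\ref{lem:int-by-part}: for a differentiable function $h$ and a distribution $G$ on $[a,b]$,
\[
\int_{w\in[a,b]} h(w)\,dG(w) = h(a) + \int_{w\in[a,b]} h'(w)\bigl(1-G(w)\bigr)\,dw.
\]

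\textbf{Proof plan.} The plan is to recognize this as an integration-by-parts identity for Lebesgue--Stieltjes integrals, being careful because $G$ may have atoms (in particular at $a$ and $b$), so I cannot simply invoke the smooth integration-by-parts formula. First I would write $h(w) = h(a) + \int_{t=a}^{t=w} h'(t)\,dt$ using the fundamental theorem of calculus (valid since $h$ is differentiable; if one wants to be fully rigorous about integrability of $h'$ one can assume $h'$ continuous or of bounded variation, which holds in all our applications). Substituting this into the left-hand side gives
\[
\int_{w\in[a,b]} h(w)\,dG(w) = h(a)\int_{w\in[a,b]} dG(w) + \int_{w\in[a,b]}\left(\int_{t=a}^{t=w} h'(t)\,dt\right) dG(w).
\]
Since $G$ is a probability distribution on $[a,b]$, the first term is just $h(a)$. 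For the second term, the key step is to swap the order of integration via Fubini--Tonelli: the double integral is over the region $\{(t,w): a\le t\le w\le b\}$, so
\[
\int_{w\in[a,b]}\int_{t=a}^{t=w} h'(t)\,dt\,dG(w) = \int_{t\in[a,b]} h'(t)\left(\int_{w:\,w\ge t} dG(w)\right) dt = \int_{t\in[a,b]} h'(t)\,\bigl(1-G(t^-)\bigr)\,dt.
\]
Finally, since $G$ has at most countably many atoms, $G(t^-) = G(t)$ for all but countably many $t$, hence almost everywhere with respect to Lebesgue measure, so the last integral equals $\int_{t\in[a,b]} h'(t)\bigl(1-G(t)\bigr)\,dt$. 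Renaming $t$ to $w$ completes the proof.

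\textbf{Main obstacle.} The only subtlety — and the step I would be most careful about — is the treatment of atoms of $G$: one must use $1 - G(t^-)$ (the measure of $[t,b]$, i.e.\ $\{w \ge t\}$) rather than $1 - G(t)$ (the measure of $(t,b]$) when applying Fubini, because the inner region $\{w \ge t\}$ is closed; the passage back to $1-G(t)$ is then justified only because the set of atoms is Lebesgue-null. This matters precisely because in the intended applications (cf.\ (\ref{eqn:regret-gbold})) the distribution $F$ does have point masses at $a$ and $b$, which is exactly why the paper flags that standard smooth formulas do not apply. A secondary point is ensuring $h'$ is integrable so that Fubini--Tonelli applies; under the hypothesis that $h$ is differentiable on the compact interval $[a,b]$ with, say, continuous (or bounded) derivative — as is the case for $g_u, g_d$ which are piecewise differentiable with bounded pieces — integrability is immediate and the order-swap is legitimate.
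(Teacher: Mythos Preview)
Your proposal is correct and follows essentially the same route as the paper: write $h(w)=h(a)+\int_a^w h'$, apply Fubini, and read off $1-G$. The only cosmetic difference is that the paper takes the inner $dt$-integral over the open interval $\tilde w<w$ (harmless since Lebesgue measure has no atoms), so after swapping it lands directly on $\int_{w\in(\tilde w,b]}dG(w)=1-G(\tilde w)$ without passing through $G(t^-)$; your extra step noting that $G(t^-)=G(t)$ Lebesgue-a.e.\ is a perfectly valid alternative handling of the same point.
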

\begin{proof}[Proof of Lemma~\ref{lem:int-by-part}]
\begin{align*}
\int_{w \in [a,b]} h(w) dG(w) &= \int_{w \in [a,b]} \left( h(a) + \int_{\tilde{w}=a}^{\tilde{w}=w} h'(\tilde{w}) d\tilde{w} \right) dG(w)\\
&= h(a) + \int_{\tilde{w}=a}^{\tilde{w}=b} h'(\tilde{w})\int_{w \in (\tilde{w},b]} dG(w) dw \\
&= h(a) + \int_{\tilde{w}=a}^{\tilde{w}=b} h'(\tilde{w}) (G(b)-G(\tilde{w})) d\tilde{w}\\
&= h(a) + \int_{\tilde{w}=a}^{\tilde{w}=b} h'(\tilde{w}) (1-G(\tilde{w})) d\tilde{w}\,.\qedhere
\end{align*}
\end{proof}

Now, we give a formal proposition that $(g_u,g_d)$ mechanisms and convex combinations of $\{\textnormal{SPA}(r),$ $\textnormal{POOL}(\tau)\}$ are \textit{almost} equivalent representations of the same mechanism class in the sense that one can be converted to another. 

\begin{proposition}\label{prop:gu-gd-rep} 
We have the following correspondence between the $(g_u,g_d)$ mechanisms arising in our main theorem (Theorem~\ref{app:thm:char-all-mech-gu-gd}) and convex combinations of SPAs and POOLs.

\begin{itemize}
\item[(1)] A mechanism is a $(g_u,g_d)$ mechanism with $g_u(v) \in [0,1]$ increasing in $v$, $g_u(a) = 0$, and $g_d(v) = 0$ for all $v$ if and only if it is $\textnormal{SPA}(\Phi)$, $\Phi$ has measure 1, and $\Phi(v) = g_u(v)$.
\item[(2)] A mechanism is a $(g_u,g_d)$ mechanism with $g_u(v) \in [0,1]$ increasing in $v$, $g_u(a)=g_d(a)=1/n$, and $g_u(v) + (n-1)g_d(v) = 1$ for all $v$ if and only if is $\textnormal{POOL}(\Psi)$, $\Psi$ has measure 1, and $\Psi(v) = 1-n g_d(v)$. 
\item[(3)] A mechanism is a $(g_u,g_d)$ mechanism with $g_u(v) \in [0,1]$ increasing in $v$ for $v \in [a,b]$, $g_u(a) = g_d(a) := \alpha$, $g_d(v) = \alpha$ for $v \in [a,v^*]$ for some constant $v^* \in [a,b]$, $g_u(v) + (n-1)g_d(v) = 1$ for $v \in [v^*,b]$ if and only if it is a randomization over $\textnormal{SPA}(\Phi)$, with $\Phi$ supported on $[a,v^*]$ and $\textnormal{POOL}(\Psi)$ with $\Psi$ supported on $[v^*,b]$. Furthermore, their cumulative probabilities are given by $\Phi(v) = g_u(v)-\alpha$ for $v \in [a,v^*]$ and $\Psi(v) = n(\alpha-g_d(v))$ for $v \in [v^*,b]$.
\end{itemize}

\end{proposition}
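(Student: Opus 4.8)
The whole statement is a bookkeeping correspondence, and the key reductions are: (i) a $(g_u,g_d)$ mechanism (Definition~\ref{def:gu-gd}) and a convex combination of $\textnormal{SPA}(r)$'s and $\textnormal{POOL}(\tau)$'s (Definition~\ref{def:thres-mech-default}) are each determined by their allocation rule, since in both cases the payments are pinned down from the allocations by Myerson's envelope formula together with the normalization $p_i(a,\mathbf v_{-i})=a\,x_i(a,\mathbf v_{-i})$; and (ii) the allocation rule of a convex combination of threshold mechanisms is the corresponding convex combination of allocation rules, by linearity of expectation over the random threshold. Hence it suffices to match allocation rules, and to do that I would first record the ``elementary'' dictionary, read directly off the definitions: $\textnormal{SPA}(r)$ is the $(g_u,g_d)$ mechanism with $g_u(v)=\1(v\ge r)$ and $g_d\equiv 0$; and $\textnormal{POOL}(\tau)$ is the $(g_u,g_d)$ mechanism with $g_d(v)=\tfrac1n\1(v<\tau)$ and $g_u(v)=\1(v\ge\tau)+\tfrac1n\1(v<\tau)$. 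The only delicate point here is a $k$-fold tie at $v_{\max}<\tau$, where $\textnormal{POOL}(\tau)$ allocates $1/n$ to each tied bidder; in the parametrization $\tfrac1k g_u+\tfrac{k-1}{k}g_d$ this forces $g_u=g_d=1/n$ on $\{v<\tau\}$, not $g_u=1$.

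Next, for the ``if'' direction (threshold mixture $\Rightarrow$ $(g_u,g_d)$), I would average the elementary profiles over the threshold distribution. Averaging $\textnormal{SPA}(r)$ over $r\sim\Phi$ gives $g_u=\Phi$, $g_d\equiv0$, which is case~(1). Averaging $\textnormal{POOL}(\tau)$ over $\tau\sim\Psi$ gives $g_d(v)=\tfrac1n(1-\Psi(v))$ and therefore $g_u(v)=\tfrac{n-1}{n}\Psi(v)+\tfrac1n$, so that $\Psi=1-n g_d$ and $g_u+(n-1)g_d\equiv1$, which is case~(2). Averaging a unified-threshold mixture that runs $\textnormal{SPA}(r)$ when $r\le v^*$ and $\textnormal{POOL}(r)$ when $r>v^*$ gives, for $v_{\max}\le v^*$, the constant $g_d(v_{\max})=\alpha$ where $n\alpha$ is the total $\textnormal{POOL}$ probability, together with $g_u(v_{\max})=\Phi(v_{\max})+\alpha$; and for $v_{\max}>v^*$ it gives $g_d(v_{\max})=\alpha-\tfrac1n\Psi(v_{\max})$ with $g_u+(n-1)g_d\equiv1$; this is case~(3), with $\Phi(v)=g_u(v)-\alpha$ on $[a,v^*]$ and $\Psi(v)=n(\alpha-g_d(v))$ on $[v^*,b]$. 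In each case the monotonicity of $g_u$ is inherited from the fact that $\Phi$ and $\Psi$ are CDFs. For the ``only if'' direction I would just invert the dictionary: given a $(g_u,g_d)$ mechanism satisfying the stated structural conditions, define $\Phi$ and $\Psi$ by those same formulas and check they are bona fide probability distributions. Here the hypotheses do exactly the right work: monotonicity of $g_u$ makes the CDFs nondecreasing; the boundary values ($g_u(a)=0$ in (1); $g_u(a)=g_d(a)=1/n$ in (2); $g_u(a)=g_d(a)=\alpha$ in (3)) fix the correct left endpoint and rule out a spurious atom at $a$; and the identity $g_u+(n-1)g_d=1$ supplies the normalization to total mass one — evaluated at $b$ in case~(2), and at the junction $v^*$ in case~(3), where it reads $g_u(v^*)=1-(n-1)\alpha$ and hence forces the $\textnormal{SPA}$-mass on $[a,v^*]$ plus the $\textnormal{POOL}$-mass on $[v^*,b]$ to sum to one. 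Once valid threshold distributions are in hand, the ``if'' computation shows the resulting mixture has allocation profile exactly $(g_u,g_d)$, so by reduction~(i) it is the given mechanism.

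The conceptual content is light; the work lies entirely in the endpoint/measure-theoretic bookkeeping. Concretely, the three places to be careful are: handling ties in the $(g_u,g_d)$ parametrization so that $\textnormal{POOL}$'s uniform allocation below its threshold is read off as $g_u=g_d=1/n$ (the first paragraph's point); tracking possible atoms of $\Phi,\Psi$ at $a$ and at $b$ and reconciling them with the ``measure $1$'' normalization (right-continuity of $g_u$, needed to read it as a CDF, is automatic for the mechanisms of Theorem~\ref{app:thm:char-all-mech-gu-gd}, whose $g_u$ are continuous with $g_u(b)=1$); and, in case~(3), verifying that the $\textnormal{SPA}$-piece on $[a,v^*]$ and the $\textnormal{POOL}$-piece on $[v^*,b]$ genuinely glue into a single probability measure — which, as noted, is precisely where the hypothesis $g_u(v)+(n-1)g_d(v)=1$ on $[v^*,b]$ is used at $v=v^*$. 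I expect this gluing step in case~(3) to be the main obstacle, since it is the one place where the two families interact and where the constant-$\alpha$ plateau of $g_d$ on $[a,v^*]$ must be matched consistently on both sides.
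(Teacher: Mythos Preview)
Your proposal is correct and takes essentially the same approach as the paper: both record the elementary $(g_u,g_d)$ profile of $\textnormal{SPA}(r)$ and $\textnormal{POOL}(\tau)$, average over the threshold distribution to obtain the ``if'' direction, and then invert those formulas for the ``only if'' direction. If anything, you are more careful than the paper about the tie-handling, atom bookkeeping, and the gluing at $v^*$; the paper dispatches these points tersely, but the substance is identical.
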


\begin{proof}[Proof of Proposition~\ref{prop:gu-gd-rep}]
    Note that the $(g_u,g_d)$ representation of $\spa(r)$ is $g_u(v) = \1(v \geq r)$ and $g_d(v) = 0$, and the $(g_u,g_d)$ representation of $\pool(r)$ is $g_u(v) = \frac{1}{n} + \frac{n-1}{n} \1(v \geq r)$ and $g_d(v) = \frac{1}{n} - \frac{1}{n} \1(v \geq r)$. All three cases follow from computing the convex combination of these.

    (1) is straightforward. For (2), the $(g_u,g_d)$ representation of $\pool(\Psi)$ is 
    \begin{align*}
        g_u(v) &= \int\left( \frac{1}{n} + \frac{n-1}{n} \1(v \geq r) \right) d\Psi(r) = \frac{1}{n} + \frac{n-1}{n} \Psi(v) \, , \\
        g_d(v) &= \int\left( \frac{1}{n} - \frac{1}{n} \1(v \geq r) \right) d\Psi(v) = \frac{1}{n} - \frac{1}{n} \Psi(v) \, .
    \end{align*}
    We can then see that $g_u(a) = g_d(a) = 1/n$, $\Psi(v) = 1 - n g_d(v)$ and $g_u(v) + (n-1) g_d(v) = 1$. Conversely, given this $(g_u,g_d)$, we can let $\Psi(v) = 1 - n g_d(v)$ giving a valid $\pool(\Psi)$.

    For (3), the $(g_u,g_d)$ mechanism is
\begin{align*}
    g_u(v) &= \int \1(v \geq r) d\Phi(r) + \int \left( \frac{1}{n} + \frac{n-1}{n} \1(v \geq \tau) \right) d\Psi(r) &&= \Phi(v) + \frac{1}{n} | \Psi | + \frac{n-1}{n} \Psi(v) \, , \\
    g_d(v) &= \int 0 d\Phi(r) + \int \left( \frac{1}{n} - \frac{1}{n} \1(v \geq \tau) \right) d\Psi(v) &&= \frac{1}{n} | \Psi | - \frac{1}{n} \Psi(v) \, .
\end{align*}

We therefore have a formula that transforms $(\Phi,\Psi)$ to $(g_u,g_d)$. 
From these formula, we immediately see that $g_u(a) = g_d(a)$; we let this be $\alpha$. We also see that $g_u(v) + (n-1) g_d(v) = \Phi(v) + |\Psi|$ is increasing in $v$, while $g_d(v) = \frac{1}{n}(|\Psi|-\Psi(v))$ is decreasing in $v$, because $\Phi$ and $\Psi$ are increasing functions. 

Conversely, assume that $(g_u,g_d)$ has these properties. We will show that we can invert these formulas and find the corresponding $(\Phi,\Psi)$. From $g_u(v) = \Phi(v) + \frac{1}{n} |\Psi| +\frac{n-1}{n} \Psi(v)$, setting $v = a$ gives $\alpha = g_u(a) = \frac{1}{n} |\Psi|$, so $|\Psi| = n\alpha$, and $|\Phi| = 1-|\Psi| = 1-n\alpha$. From  $g_d(v) = \frac{1}{n}(|\Psi|-\Psi(v))$, we get $\Psi(v) = |\Psi|-n g_d(v) = n(\alpha-g_d(v))$, and from $g_u(v) = \Phi(v) + \frac{1}{n}|\Psi| + \frac{n-1}{n} \Psi(v) = \Phi(v) + \frac{1}{n}(n\alpha) + \frac{n-1}{n} \cdot n(\alpha-g_d(v)) = \Phi(v) + n\alpha -(n-1)g_d(v)$, we get $\Phi(v) = g_u(v) + (n-1) g_d(v) - n\alpha$.
\end{proof}

\subsubsection{Reformulation of the Main Theorem}\label{subsubsec:main-thm-reformulation}

Armed with Proposition~\ref{prop:gu-gd-rep}, we can reformulate our main theorem (Theorem~\ref{thm:char-all-mech-full-main}) as follows.

\begin{theorem}[Main Theorem in $(g_u,g_d)$]\label{app:thm:char-all-mech-gu-gd}
Fix $n$ and $\lambda \in (0,1]$. Define $k_l \in (0,1)$ as a unique solution to 
\begin{align*}
\lambda \int_{t=k_l}^{t=1} \frac{(t-k_l)^{n-1}}{t^n} dt =  (1-k_l)^{n-1},
\end{align*}
and if $n = 1$, define $k_h = 1$ and if $n \geq 2$, define $k_h \in (0,1)$ to be a unique solution to
\begin{align*}
    \int_{t=k_h}^{t=1} \left[ \frac{(t-k_h)^{n-1}}{t^{n}} - (1-\lambda) \frac{(t-k_h)^{n}}{t^{n+1}}\right] dt = (1-k_h)^{n} .
\end{align*}

Then we have $R_{\lambda}(m,\mathbf{F}^*) \leq R_{\lambda}(m^*,\mathbf{F}^*) \leq R_{\lambda}(m^*,\mathbf{F})$ for any $m \in \mathcal{M}_{\textnormal{all}}$ and $\mathbf{F} \in \mathcal{F}_{\textnormal{iid}}$, where $m^*$ and $\mathbf{F}^*$ (which is $n$ i.i.d. with marginal $F^*$) is defined depending on the value of $a/b$ as follows.
\begin{itemize}
\item Suppose $a/b \leq k_l$, and let $r^* = k_l b$. We define $m^*$ as a $(g_u^*,g_d^*)$ mechanism with $g_u^*(v) = \Phi^*(v), g_d^*(v) = 0$,  where
\begin{align*}
\Phi^*(v) = g_u^*(v) = \lambda  \frac{v^{n-1}}{(v-r^*)^{n-1}} \int_{t=r^*}^{t=v} \frac{(t-r^*)^{n-1}}{t^n} dt \, , 
\end{align*}
and 
\begin{align*}
F^*(v) = \begin{cases}
0 &\text{ if } v \in [a,r^*] \\
1 - \frac{r^*}{v} &\text{ if } v \in [r^*, b) \\
1 &\text{ if } v = b \, .
\end{cases}
\end{align*}

\item Suppose $a/b \geq k_h$, and let $\phi_0 = (a-k_h b)/(1-k_h) \in [0,a]$. We define $m^*$ as a $(g_u^*,g_d^*)$ mechanism with
\begin{align*}
g_u^*(v) &= \frac{1}{n} + \lambda \left( \frac{v-\phi_0}{v-a} \right)^{n} \int_{t=a}^{t=v} \frac{(t-a)^{n}}{(t-\phi_0)^{n+1}} dt  \\
g_d^*(v) &= \frac{1-g_u^*(v)}{n-1} \, ,
\end{align*}
and
\begin{align*}
F^*(v) = \begin{cases}
1 - \frac{a-\phi_0}{v-\phi_0} &\text{ if } v \in [a,b) \\
1 &\text{ if } v = b \, .
\end{cases}
\end{align*}

\item For $k_l \leq a/b \leq k_h$, we define $m^*$ as a $(g_u^*,g_d^*)$ mechanism with 
\begin{align*}
g_u^*(v) = \begin{cases}
\alpha + \lambda \left( \frac{v}{v-a} \right)^{n-1} \int_{t=a}^{t=v} \frac{(t-a)^{n-1}}{t^{n}} dt
&\text{ for } v \in [a,v^*] \\
\frac{v^n}{(v-a)^n} \left[\frac{(b-a)^n}{b^n} - \int_{t=v}^{t=b} \left[ \frac{(t-a)^{n-1}}{t^n} - (1-\lambda) \frac{(t-a)^n}{t^{n+1}}  \right]dt   \right] &\text{ for } v \in [v^*,b] \, ,
\end{cases}
\end{align*}
and 
\begin{align*}
g_d^*(v) = \begin{cases}
\alpha &\text{ if } v \in [a,r^*] \\
\frac{1-g_u^*(v)}{n-1} &\text{ if } v \in [r^*,b] \, ,
\end{cases}
\end{align*}
and 
\begin{align*}
F^*(v) = \begin{cases}
1 - \frac{a}{v} &\text{ if } v \in [a,b) \\
1 &\text{ if } v = b \, ,
\end{cases}
\end{align*}
where $(v^*,\alpha)$ is the unique solution to
\begin{align*}
\frac{(r^*-a)^{n-1}}{(r^*)^{n-1}} ( 1 - n \alpha ) &= \lambda \int_{t=a}^{t=r^*} \frac{(t-a)^{n-1}}{t^n} dt \\
\frac{(b-a)^{n}}{b^{n}}  - \frac{(r^*-a)^{n}}{(r^*)^{n}} (1 - (n-1)\alpha) &= \int_{t=r^*}^{t=b} \left[ \frac{(t-a)^{n-1}}{t^n} - (1-\lambda) \frac{(t-a)^{n}}{t^{n+1}}  \right] dt .
\end{align*}
\end{itemize}
\end{theorem}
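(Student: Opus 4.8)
The plan is to exhibit the stated pair $(m^*,\mathbf{F}^*)$ as a saddle point of $R_\lambda$ over $\mathcal{M}_{\textnormal{all}}\times\mathcal{F}_{\textnormal{iid}}$; by the remark following Definition~\ref{def:saddle-point}, verifying the two inequalities $R_\lambda(m,\mathbf{F}^*)\le R_\lambda(m^*,\mathbf{F}^*)\le R_\lambda(m^*,\mathbf{F})$ for all $m\in\mathcal{M}_{\textnormal{all}}$, $\mathbf{F}\in\mathcal{F}_{\textnormal{iid}}$ certifies both that $m^*$ is minimax optimal and that the displayed quantity is the minimax $\lambda$-regret. The first step is to pass, via Proposition~\ref{prop:gu-gd-rep}, to the $(g_u,g_d)$ representation, and then to apply the regret formula of Proposition~\ref{prop:reg-exp-g}, which writes $R_\lambda(m,\mathbf{F})$ for $n$-i.i.d.\ $\mathbf{F}$ entirely in terms of the marginal $F$ (equivalently, the laws of the top two order statistics), using an ad hoc integration-by-parts to handle atoms of $F$. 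In each regime this produces two equivalent integral expressions for $R_\lambda$: one \emph{linear} in the design variables ($g_u,g_d$, or the reserve/threshold CDF, evaluated pointwise), and one in which $F$ enters \emph{pointwise and separably}. These two expressions drive the seller's and Nature's halves of the argument.

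I would then run, in each of the three regimes, the two-step scheme already carried out for the pure-$\textnormal{POOL}$ case in Section~\ref{subsec:proof-main-thm}. \emph{Step 1 (guessing the saddle via necessary conditions).} Imposing seller-stationarity at the conjectured worst case in the representation linear in the design variable forces the coefficient of that variable to vanish wherever it is interior; this is the ODE $\frac{d}{dv}\bigl(v-\tfrac{v-a}{F^*(v)}\bigr)=0$, so $F^*$ has constant virtual value — $\phi_0\ge 0$ in the high regime, $\phi_0=0$ (the isorevenue law $1-a/v$) in the low and moderate regimes, with the $\textnormal{SPA}$ part in the low regime instead pinned at $r^*=k_lb$. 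Substituting this $F^*$ into the representation pointwise in $F$ and imposing Nature-stationarity yields a linear ODE for the surviving design parameter ($\Psi^*$, or $g_u^*$ on each sub-interval), which integrates in closed form with the natural boundary value at $v=a$; the remaining free constant ($\phi_0$, or the pair $(v^*,\alpha)$) is pinned by feasibility of the mechanism — the threshold CDF must integrate to $1$, resp.\ $g_u^*$ must be continuous across $v^*$ and the $\textnormal{SPA}$/$\textnormal{POOL}$ weights must sum to $1$ — and these requirements are exactly the defining equations for $k_l$, $k_h$, and $(v^*,\alpha)$, with the sign constraint $\phi_0\ge 0$ (equivalently $a/b\ge k_h$) selecting the high regime. \emph{Step 2 (verification, with no extra assumptions).} For the seller's inequality, fixing $\mathbf{F}^*$ reduces the problem to Bayesian revenue maximization: in the high regime $F^*$ has constant positive virtual value, so every always-allocating mechanism — hence $\textnormal{POOL}(\Psi^*)$, and any mixture of $\textnormal{POOL}$s — is optimal; in the low and moderate regimes $F^*$ is isorevenue with an atom at $b$, so any DSIC mechanism that pins individual rationality at $a$, never allocates below the effective support, and awards the item to a top bidder whenever the highest value equals $b$ is optimal — and $\textnormal{SPA}(\Phi^*)$, resp.\ the $\textnormal{SPA}$/$\textnormal{POOL}$ mixture, has exactly these properties. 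For Nature's inequality, fixing $m^*$, it suffices to show that the integrand of the pointwise-in-$F$ representation — a polynomial of the form $c_1(v)F(v)^{n-1}-c_2(v)F(v)^{n}$ — is maximized over $F(v)\in[0,1]$ at $F^*(v)$; as in Section~\ref{subsec:proof-main-thm}, the first-order condition (built into the mechanism ODE) together with $c_1(v)\ge 0$ force unimodality with peak $F^*(v)$, and the sign condition reduces, after substituting the ODE, to an elementary inequality.

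The main obstacle is the moderate regime. There the worst-case marginal $F^*(v)=1-a/v$ is uniform over all of $[a,b]$, but $m^*$ undergoes a genuine phase change at an endogenous breakpoint $v^*$: it is an $\textnormal{SPA}$ with random reserve on $[a,v^*]$ (so $g_d\equiv\alpha$ there) and a $\textnormal{POOL}$ with random threshold on $[v^*,b]$ (so $g_u+(n-1)g_d\equiv 1$ there). One must solve the two Nature-stationarity ODEs on the two sub-intervals, glue them so that $g_u^*$ is continuous and nondecreasing across $v^*$ and so that the induced reserve measure on $[a,v^*]$ and threshold measure on $[v^*,b]$ have masses summing to one — this yields the two-equation system for $(v^*,\alpha)$ — and then prove this system has a unique solution with $\alpha\in[0,1/n]$ exactly when $k_l\le a/b\le k_h$, and that the resulting $(g_u^*,g_d^*)$ is a bona fide allocation rule (nondecreasing, valued in $[0,1]$, implementable as a mixture of $\textnormal{SPA}$s and $\textnormal{POOL}$s). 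The seller-side verification is also more delicate here than in the high regime because of the atom of $F^*$ at $b$ and the IR pin at $a$: one checks that, for this $F^*$, allocating near the bottom of the support and the information rent left to types near $a$ do not erode the revenue of $m^*$. Finally, a routine but non-trivial preliminary is establishing existence, uniqueness, and the monotone dependence on $a/b$ of $k_l$ and $k_h$ as solutions of their defining integral equations, so that the three regime formulas agree continuously at $k_l$ and $k_h$.
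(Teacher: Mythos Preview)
Your proposal is correct and follows essentially the same approach as the paper: both reduce to the $(g_u,g_d)$ representation, exploit the two dual representations of Proposition~\ref{prop:reg-exp-g} (linear in the design variable, separable in $F$), derive the mechanism ODEs and the constant-virtual-value $F^*$ from necessary saddle conditions, and then verify the saddle by invoking Bayesian optimality for the seller side (via \cite{monteiro-svaiter-optimal-auction-general-dist}, noting $F^*$ has constant nonnegative virtual value on its support) and the FOC-plus-sign argument for Nature's side (exactly your $c_1(v)\ge 0$ unimodality check, which the paper packages as Proposition~\ref{prop:nature-saddle-foc-soc}). Your identification of the moderate regime as the delicate case---with two ODEs to glue at $v^*$ and a uniqueness argument for $(v^*,\alpha)$---also matches the paper's Proposition~\ref{prop:opt-g-ode}, which unifies the three regimes through a single parameterization $(v^*,\alpha,\phi_0)$ and proves the required existence, uniqueness, and continuous interpolation as $a/b$ moves through $[k_l,k_h]$.
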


\subsubsection{Key Results Supporting the Saddle Calculation}\label{subsubsec:saddle-key-results}

\begin{proof}[Proof of Proposition~\ref{prop:reg-exp-g}]

We first derive the (Regret-$\mathbf{F}$) expression, i.e. the expected regret of a $(g_u,g_d)$ mechanism under an arbitrary joint distribution $\mathbf{F}$, assuming that $g_u$ and $g_d$ are continuous everywhere and differentiable everywhere except a finite number of points. Let $\mathbf{F}_n^{(1)}$ and $\mathbf{F}_{n}^{(2)}$ be the distributions of $v^{(1)}$ and $v^{(2)}$, the highest and second-highest entry of $\mathbf{v}$, respectively.

From Myerson's lemma,
\begin{align*}
p_i(\mathbf{v}) = v_i x_i(\mathbf{v}) - \int_{\tilde{v}_i=a}^{\tilde{v}_i=v_i} x_i(\tilde{v}_i,\mathbf{v}_{-i}) d\tilde{v}_i \, .
\end{align*}
the allocation rule $(g_u,g_d)$ gives 
\begin{align*}
p_i(\mathbf{v}) = \begin{cases}
v_i g_u(v_i) - (v^{(2)}-a) g_d(v^{(2)}) - \int_{t=v^{(2)}}^{t=v_i} g_u(t) dt &\text{ if $v_i$ is the highest and} \\ & \text{ $v^{(2)}$ is the second-highest}\,, \\
a g_d(v^{(1)}) &\text{ if $v_i$ is not the highest} \, ,
\end{cases}
\end{align*}
so the pointwise regret is
\begin{align*}
v^{(1)}(\lambda-g_u(v^{(1)})) - (n-1) a g_d(v^{(1)}) + (v^{(2)}-a) g_d(v^{(2)}) + \int_{t=v^{(2)}}^{t=v^{(1)}} g_u(t) dt \, .
\end{align*}

Therefore, by Lemma~\ref{lem:int-by-part},
\begin{align*}
&\bE[v^{(1)}(\lambda-g_u(v^{(1)}) ) - (n-1)a g_d(v^{(1)})] \\
&= a(\lambda-g_u(a)-(n-1)g_d(a)) + \int_{v \in [a,b]} (\lambda-g_u(v)-vg_u'(v)-(n-1)a g_d'(v)) (1-\mathbf{F}_n^{(1)}(v)) dv \, .
\end{align*}

Now we compute the second term.
\begin{align*}
\bE[(v^{(2)}-a) g_d(v^{(2)}) ] &= (a-a) g_d(a) + \int_{v \in [a,b]} (g_d(v)+(v-a)g_d'(v))(1-\mathbf{F}_n^{(2)}(v)) dv \\
&= \int_{v \in [a,b]} (g_d(v)+(v-a)g_d'(v))(1-\mathbf{F}_n^{(2)}(v)) dv \, .
\end{align*}

Lastly, we compute the third term
\begin{align*}
\bE\left[ \int_{t=v^{(2)}}^{t=v^{(1)}} g_u(t) dt \right] &= \bE\left[ \int_{v \in [a,b]} g_u(v) \1(v^{(2)} < v \leq v^{(1)}) dv \right] \\
&= \int_{v \in [a,b]} g_u(v) \Pr(v^{(2)} < v \leq v^{(1)}) dv \\
&= \int_{v \in [a,b]} g_u(v) (\mathbf{F}_n^{(2)}(v) - \mathbf{F}_n^{(1)}(v)) dv \, .
\end{align*}
 
Therefore, the regret is
\begin{align*}
a(\lambda-g_u(a)-(n-1)g_d(a)) + \int_{v \in [a,b]} (\lambda-g_u(v)-vg_u'(v)-(n-1)a g_d'(v)) (1-\mathbf{F}_n^{(1)}(v)) dv \\
+ \int_{v \in [a,b]} (g_d(v)+(v-a)g_d'(v))(1-\mathbf{F}_n^{(2)}(v)) dv  
+\int_{v  \in [a,b]} g_u(v) (\mathbf{F}_n^{(2)}(v)-\mathbf{F}_n^{(1)}(v)) dv \, .
\end{align*}

Rearranging this gives the (Regret-$\mathbf{F}$) expression.
\begin{align*}
R(g,\mathbf{F}) = a(\lambda-g_u(a)-(n-1)g_d(a)) + \int_{v \in [a,b]} (\lambda-g_u(v)+ g_d(v) -vg_u'(v)+ (v-na) g_d'(v))  \\
+  \int_{v \in [a,b]} \left( - \lambda -(n-1) (g_u(v)-g_d(v)) + v(g_u'(v)+(n-1) g_d'(v)) \right) \mathbf{F}_n^{(1)}(v)dv \\
+ \int_{v \in [a,b]} n(g_u(v)-g_d(v)-(v-a)g_d'(v))   \mathbf{F}_{n-1}^{(1)}(v)  dv \, .
\end{align*}

To derive (Regret-$F$), the expected regret expression with i.i.d. $F$, substitute $\mathbf{F}_n^{(1)}(v) = F(v)^{n}$ and $\mathbf{F}_n^{(2)}(v) = n F(v)^{n-1} - (n-1) F(v)^{n}$. \end{proof}

\jacomment{We don't really need the (Regret-$g$) expression because it is only used to motivate the guessing of $F^*$ which we omit (and in any case is already done in \S\ref{subsubsec:proof-high-info} in the $\pool$ case. It is also convenient for deriving expressions of the expected regret at the saddle point, but not strictly necessary. It is no longer in the theorem statement, so we omit the proof as well.}

\jadelete{
Now we derive the (Regret-$g$) expression, assuming that $g_u$ and $g_d$ are arbitrary but $\mathbf{F}$ is i.i.d. with marginal $F$ with a density $F'$ in the interior. We start with the expected pointwise regret expression
\begin{align*}
\bE\left[ v^{(1)}(\lambda-g_u(v^{(1)})) - (n-1) a g_d(v^{(1)}) + (v^{(2)}-a) g_d(v^{(2)}) + \int_{t=v^{(2)}}^{t=v^{(1)}} g_u(t) dt \right] \, .
\end{align*}
The third term is still the same
\begin{align*}
\bE\left[  \int_{t=v^{(2)}}^{t=v^{(1)}} g_u(t) dt \right] &= \int_{v \in [a,b]} g_u(v) (\mathbf{F}_n^{(2)}(v) - \mathbf{F}_n^{(1)}(v) ) dv \\
&= \int_{v \in [a,b]} g_u(v) (n F(v)^{n-1} - n F(v)^n ) dv \, .
\end{align*}

We will write the point mass of $F$ at $b$ as $f_b := F(\{b\}) = 1-F(b^-)$ for convenience. 
Now for the first term, we know that the probability that $v^{(1)} = b$ is
\begin{align*}
\Pr(v^{(1)} = b) = 1 - \Pr(v^{(1)} < b) = 1 - \prod_{i=1}^{n} \Pr(v_i < b) = 1 - F(b^-)^n = 1-(1-f_b)^n \, .
\end{align*}
The probability that $v^{(1)} = a$ is
\begin{align*}
\Pr(v^{(1)} = a) = \prod_{i=1}^{n} \Pr(v_i = a) = F(a)^n \, .
\end{align*}
For $v \in (a,b)$, $F^{(1)}$ has a density given by
\begin{align*}
f^{(1)}(v) = n F(v)^{n-1} F'(v) \, .
\end{align*}
Now,
\begin{align*}
\Pr(v^{(2)} = b) = \Pr(\text{ at least 2 of the $n$ $v$'s are $b$ }) \\
= 1- \Pr(\text{ exactly 0 of the $n$ $v$'s are $b$ }) - \Pr(\text{ exactly 1 of the $n$ $v$'s are $b$ }) \\
= 1 - (1-f_b)^n - n f_b(1-f_b)^{n-1} \\
= 1 - (1-f_b)^{n-1} (1+(n-1) f_b ) \, ,
\end{align*}
and
\begin{align*}
\Pr(v^{(2)} = a) = \Pr(\text{ all are $a$ }) + \Pr(\text{ $n-1$ are $a$, $1$ are $>a$}) \\
= F(a)^n + n F(a)^{n-1} (1-F(a)) = F(a)^{n-1} (1+(n-1) F(a)) \, .
\end{align*}
For $v \in (a,b)$, $F^{(2)}$ has a density given by
\begin{align*}
f^{(2)}(v) = n (n-1) F(v)^{n-2} (1-F(v)) F'(v) \, ,
\end{align*}
and CDF given by
\begin{align*}
F^{(2)}(v) = n F(v)^{n-1} - (n-1) F(v)^n \, .
\end{align*}

We then have
\begin{align*}
&\bE\left[ v^{(1)}(\lambda-g_u(v^{(1)})) - (n-1) a g_d(v^{(1)})  \right] \\
&= ( a(\lambda-g_u(a)) - (n-1)ag_d(a) ) \Pr(v^{(1)}=a) + ( b(\lambda-g_u(b)) - (n-1)a g_d(b) ) \Pr(v^{(1)} = b)  \\
&+ \int_{v \in [a,b]} ( v(\lambda-g_u(v)) - (n-1)a g_d(v) ) f^{(1)}(v) dv \\
&= ( a(\lambda-g_u(a)) - (n-1)ag_d(a) ) F(a)^n + ( b(\lambda-g_u(b)) - (n-1)a g_d(b) ) (  1 - ( 1 - f_b )^n )  \\
&+ \int_{v \in [a,b]} ( v(\lambda-g_u(v)) - (n-1)a g_d(v) ) n F(v)^{n-1} F'(v) dv \, ,
\end{align*}
and
\begin{align*}
&\bE\left[  (v^{(2)}-a) g_d(v^{(2)})  \right] \\
&= (a-a) g_d(a) \Pr(v^{(2)} = a) + (b-a) g_d(b) \Pr(v^{(2)} = b) + \int_{v=a}^{v=b}  (v-a) g_d(v) f^{(2)}(v) dv \\
&= (b-a) g_d(b) ( 1 - (1-f_b)^{n-1} (1+(n-1) f_b )) ) + \int_{v=a}^{v=b} (v-a) g_d(v) n (n-1) F(v)^{n-2} (1-F(v)) F'(v) dv \, .
\end{align*}

Therefore, 
\begin{align*}
R(g,F) &= ( a(\lambda-g_u(a)) - (n-1)ag_d(a) ) F(a)^n + (  b(\lambda-g_u(b)) - (n-1)a g_d(b) ) (  1 - ( 1 - f_b )^n )  \\
&+ (b-a) g_d(b) ( 1 - (1-f_b)^{n-1} (1+(n-1) f_b )) ) \\
&+ \int_{v=a}^{v=b}   \left[ (n-1)(v-a)g_d(v) (1-F(v)) + (v (\lambda-g_u(v) ) - (n-1)a g_d(v)) F(v) \right]n F(v)^{n-2} F'(v) dv \\
&+ \int_{v=a}^{v=b}  g_u(v) (n F(v)^{n-1} - n F(v)^n ) dv \, ,
\end{align*}
or
\begin{align*}
R(g,F) &= ( a(\lambda-g_u(a)) - (n-1)ag_d(a) ) F(a)^n + (  b(\lambda-g_u(b)) - (n-1)a g_d(b) ) (  1 - ( 1 - f_b )^n )  \\
&+ (b-a) g_d(b) ( 1 - (1-f_b)^{n-1} (1+(n-1) f_b )) ) \\
&+ \int_{v=a}^{v=b}  \lambda v n F(v)^{n-1} F'(v)  - g_u(v) n v F(v)^{n-1} F'(v)  \\
&+ \int_{v=a}^{v=b} \left\{ (v-a)(1-F(v)) - a F(v) \right\} g_d(v) n(n-1) F(v)^{n-2} F'(v) + g_u(v) (n F(v)^{n-1} - n F(v)^n ) dv \, .
\end{align*}

By integration by parts, the first term of the third line is $\lambda$ times
\begin{align*}
\int_{v=a}^{v=b} v n F(v)^{n-1} F'(v) dv = b F(b^-)^n - aF(a)^n - \int_{v=a}^{v=b} F(v)^n dv \, ,
\end{align*}
where $F(b^-) = (1-f_b)$. Substituting this in gives
\begin{align*}
R(g,F) &= \lambda b - a \left( g_u(a) + (n-1) g_d(a) \right) F(a)^n - \left( b g_u(b) + (n-1)a g_d(b) \right) \left( 1 -(1-f_b)^n \right) \\
&+ (b-a) g_d(b) \left( 1-(1-f_b)^{n-1} (1+(n-1)f_b) \right) \\
&+ \int_{v=a}^{v=b} -\lambda F(v)^n + g_u(v) n F(v)^{n-1} (1-F(v) - vF'(v) ) \\
&+ \int_{v=a}^{v=b} g_d(v) n(n-1) F(v)^{n-2} F'(v) \left\{ (v-a)(1-F(v)) - a F(v) \right\} dv\,.\qedhere
\end{align*}
}

\begin{proof}[Proof of Proposition~\ref{prop:nature-saddle-foc-soc}]

We use the following (Regret-$F$) expression for $\lambda$-regret
\begin{align*}
R(g,F) = a(\lambda-g_u(a)-(n-1)g_d(a)) + \int_{v \in [a,b]} (\lambda-g_u(v)+ g_d(v) -vg_u'(v)+ (v-na) g_d'(v))  \\
+  \int_{v \in [a,b]} \left( -\lambda -(n-1) (g_u(v)-g_d(v)) + v(g_u'(v)+(n-1) g_d'(v)) \right) F(v)^n dv \\
+ \int_{v \in [a,b]} n(g_u(v)-g_d(v)-(v-a)g_d'(v))   F(v)^{n-1}  dv
\end{align*}

In Nature's saddle, we fix the mechanism $(g_u,g_d)$ and optimize over $F$.  The integral expression is separable over $F(v)$ for $v \in (a,b)$.  Here we will assume that the optimization is done pointwise. 

The first order condition on $F$ on the regret pointwise is
\begin{align*}
 \left( -\lambda -(n-1) (g_u^*(v)-g_d^*(v)) + v(g_u'^*(v)+(n-1) g_d'^*(v)) \right) \cdot n F(v)^{n-1}   \\
+ n(g_u(v)-g_d(v)-(v-a)g_d'(v))  \cdot (n-1) F(v)^{n-2}  = 0
\end{align*}

Nature's saddle states that over all $F$,  $F^*$ maximizes the $\lambda$-regret.  If pointwise optimization is valid, then $F^*$ must satisfy the above FOC equation. Dividing both sides by $n F^*(v)^{n-2}$ gives (\ref{eqn:foc}) as required.

For $F^*$ to be maximizing, we also need the second-order conditions to hold, namely,  that the second derivative with respect to $F(v)$ evaluated at $F^*(v)$ is negative\footnote{Note that if $n = 2$ the last term disappear, so we can write $F^*(v)^{n-3}$ there with the understanding that the entire term becomes zero for $n =2$.}:
\begin{align*}
 \left( -\lambda -(n-1) (g_u^*(v)-g_d^*(v)) + v(g_u'^*(v)+(n-1) g_d'^*(v)) \right) \cdot n (n-1) F^*(v)^{n-2}   \\
+ n(g_u(v)-g_d(v)-(v-a)g_d'(v))  \cdot (n-1) (n-2) F^*(v)^{n-3}  < 0
\end{align*}
or
\begin{align*}
 &\left( -\lambda -(n-1) (g_u^*(v)-g_d^*(v)) + v(g_u'^*(v)+(n-1) g_d'^*(v)) \right)   F^*(v) \\
 &\qquad + (n-2)(g_u(v)-g_d(v)-(v-a)g_d'(v))    < 0
\end{align*}
but from the (\ref{eqn:foc}) equality that we have just derived,
\begin{align*}
 &\left( -\lambda -(n-1) (g_u^*(v)-g_d^*(v)) + v(g_u'^*(v)+(n-1) g_d'^*(v)) \right)  F^*(v) \\
 &\qquad + (n-2)(g_u(v)-g_d(v)-(v-a)g_d'(v))   \\
&=   \left( -\lambda -(n-1) (g_u^*(v)-g_d^*(v)) + v(g_u'^*(v)+(n-1) g_d'^*(v)) \right)   F^*(v) \\
& \qquad + (n-1)(g_u^*(v)-g_d^*(v)-(v-a)g_d'^*(v))  
- (g_u^*(v)-g_d^*(v)-(v-a)g_d'^*(v))  \\
&= - (g_u^*(v)-g_d^*(v)-(v-a)g_d'^*(v)) \,.
\end{align*}
Therefore,  our condition reduces to (\ref{eqn:soc}), as required. \end{proof}

\subsubsection{Verification of the Saddle}\label{subsubsec:saddle-verify}

The last step of the proof is to verify the (\ref{eqn:foc}) and (\ref{eqn:soc}) conditions for the $((g_u^*,g_d^*),F^*)$ pair given in Theorem~\ref{app:thm:char-all-mech-gu-gd}. To do this, we first show that the $(g_u^*,g_d^*)$ satisfies a certain ordinary differential equation (ODE).

\begin{proposition}\label{prop:opt-g-ode}
Define $(v^*, \alpha) = (b, 0)$ in the $a/b \leq k_l$ regime, $(v^*, \alpha) = (a, 1/n)$ in the $a/b \geq k_h$ regime, and $(v^*,\alpha)$ be defined as stated in Theorem~\ref{app:thm:char-all-mech-gu-gd} in the the $k_l \leq a/b \leq k_h$ regime. Also define $\phi_0$ as in Theorem~\ref{app:thm:char-all-mech-gu-gd} in the $a/b \geq k_h$ regime, and $\phi_0 = 0$ in other regimes. Let $(g_u^*,g_d^*)$ be given as in Theorem~\ref{app:thm:char-all-mech-gu-gd}. Then $(g_u^*,g_d^*)$ satisfies $g_u^*(v) = g^*(v)$, $g_d^*(v) = \alpha$ for $v \in [a,v^*]$ and $g_d^*(v) = (1-g^*(v))/(n-1)$ for $v \in [v^*,b]$, and $g^*$ is an increasing continuous function that satisfies the ODE
\begin{align*}
(g^*)'(v) + \frac{(n-1)r^*}{v(v-r^*)}(g^*(v)-\alpha) &= \frac{\lambda}{v} &\text{ for } v \in (r^*,v^*) \label{eqn:ode-g-1} \tag{ODE-$g$-1} \\
(g^*)'(v) + \frac{n(a-\phi_0)}{(v-\phi_0)(v-a)} g^*(v) &= \frac{1}{v-a} - \frac{1-\lambda}{v-\phi_0} &\text{ for } v \in (v^*,b) \, . \label{eqn:ode-g-2}  \tag{ODE-$g$-2}
\end{align*}

Furthermore, in the $k_l \leq a/b \leq k_h$ regime, the system of equations defining $(v^*,\alpha)$ actually has a unique solution, and if we view $r^*$ and $\alpha$ as a function of $a/b$, then we have $r^* \uparrow b$ and $\alpha \downarrow 0$ as $a/b \downarrow k_l$, while $r^* \downarrow a$ and $\alpha \uparrow 1/n$ as $a/b  \uparrow k_h$.
\end{proposition}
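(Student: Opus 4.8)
Proposition~\ref{prop:opt-g-ode} is a verification statement in two parts: (a) the closed forms $g_u^*,g_d^*$ of Theorem~\ref{app:thm:char-all-mech-gu-gd} have the claimed piecewise shape, solve the linear ODEs \eqref{eqn:ode-g-1}--\eqref{eqn:ode-g-2}, and give a continuous increasing $g^*$; and (b) in the moderate regime the pair $(v^*,\alpha)$ exists, is unique, and $v^*\uparrow b,\ \alpha\downarrow 0$ as $a/b\downarrow k_l$ while $v^*\downarrow a,\ \alpha\uparrow 1/n$ as $a/b\uparrow k_h$. The plan is to do (a) by recognizing each ODE in integrating-factor form and integrating, and to do (b) by eliminating $\alpha$ via the first defining equation and studying the resulting scalar equation in $v^*$.

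\textbf{Part (a).} In each case the coefficient of $g^*$ in the relevant ODE is rational with partial-fraction form $(n-1)(\frac{1}{v-c}-\frac{1}{v})$ on the interval $(c,v^*)$ (with $c=r^*$ in the low regime, $c=a$ in the moderate one; $\alpha=0$ in the low regime) and $n(\frac{1}{v-a}-\frac{1}{v-\phi_0})$ on $(v^*,b)$. Hence the integrating factors are $\mu(v)=(\frac{v-c}{v})^{n-1}$ and $(\frac{v-a}{v-\phi_0})^{n}$, and each ODE becomes $\frac{d}{dv}[\mu(v)(g^*(v)-\alpha)]=\mu(v)\cdot(\text{RHS})$. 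Integrating from the left endpoint, where $\mu$ vanishes so the boundary term drops, reproduces the formula for $g_u^*$ on $[c,v^*]$; integrating the second ODE from $v$ to $b$ with the value $g^*(b)=1$ reproduces the formula on $[v^*,b]$ (the normalizations $\Phi^*(b)=\Psi^*(b)=g_u^*(b)=1$ are exactly where the definitions of $k_l,k_h,\phi_0$ are used). The two-piece shape of $g_d^*$ is then immediate: $g_d^*\equiv\alpha$ below $v^*$ by construction and $g_d^*=(1-g^*)/(n-1)$ above $v^*$ by the pooling identity $g_u+(n-1)g_d=1$ of Proposition~\ref{prop:gu-gd-rep}; continuity of $g^*$ at $v^*$ is precisely the first equation defining $(v^*,\alpha)$. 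For monotonicity, on $[c,v^*]$ the substitution $u=c/t$ gives $g^*(v)=\alpha+\lambda(1-c/v)\int_0^1\frac{w^{n-1}}{1-(1-c/v)w}\,dw$, a sum of $\alpha$ and a product of two positive factors each increasing in $v$; on $[v^*,b]$, writing $g^*(v)=(\frac{v}{v-a})^n H(v)$ with $H'(v)=\frac{(v-a)^{n-1}(\lambda v+(1-\lambda)a)}{v^{n+1}}$, one computes $(g^*)'(v)=\frac{v^{n-1}}{(v-a)^{n+1}}\Delta(v)$ with $\Delta(v)=\frac{(v-a)^n(\lambda v+(1-\lambda)a)}{v^n}-naH(v)$ satisfying $\Delta'(v)=\frac{\lambda(v-a)^n}{v^n}>0$, so it suffices to check $\Delta(v^*)>0$; substituting $1-n\alpha=A(v^*)$ from the first defining equation, where $A(v):=\lambda\frac{v^{n-1}}{(v-a)^{n-1}}\int_a^v\frac{(t-a)^{n-1}}{t^n}dt$, gives $\Delta(v^*)=(\frac{v^*-a}{v^*})^n[\lambda(v^*-a)-(n-1)a\,A(v^*)]$, which is positive by the inequality established in Part (b). (In the pure $\spa$/$\pool$ regimes one ODE-interval is empty and monotonicity of $g^*=\Phi^*$, resp.\ $g^*=\frac1n+\frac{n-1}{n}\Psi^*$, follows from the same substitution, resp.\ the power-series representation of $\Psi^*$ in $\frac{v-a}{v-\phi_0}$ from Section~\ref{subsec:proof-main-thm}.)

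\textbf{Part (b).} Solve the first defining equation for $\alpha=\alpha_1(v^*):=\frac1n(1-A(v^*))$. The function $A$ is continuous, strictly increasing (again via $u=a/t$), with $A(a^+)=0$ and, expanding $\int_a^v=\sum_{i\ge n}\frac{(1-a/v)^i}{i}$, $A(b)=\lambda\sum_{j\ge 1}\frac{(1-a/b)^j}{j+n-1}$, which equals $1$ iff $a/b=k_l$ and is $<1$ iff $a/b>k_l$; hence $\alpha_1(v^*)\in[0,1/n]$ on $[a,b]$ throughout the moderate regime. Substituting $\alpha=\alpha_1(v^*)$ into the second defining equation yields a scalar equation $G(v^*)=0$, where $G(v)=(1-\frac ab)^n-\frac1n(\frac{v-a}{v})^n-\frac{(n-1)\lambda(v-a)}{nv}\int_a^v\frac{(t-a)^{n-1}}{t^n}dt-\int_v^b[\frac{(t-a)^{n-1}}{t^n}-(1-\lambda)\frac{(t-a)^n}{t^{n+1}}]dt$. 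At the endpoints: $G(b)=(1-\frac ab)^n(n-1)\alpha_1(b)\ge 0$, vanishing iff $a/b=k_l$; and (after $t=bs$) $\lim_{v\downarrow a}G(v)=(1-\frac ab)^n-\int_{a/b}^1[\frac{(s-a/b)^{n-1}}{s^n}-(1-\lambda)\frac{(s-a/b)^n}{s^{n+1}}]ds$ is the very expression defining $k_h$, evaluated at $a/b$; since $k_h$ is its unique zero in $(0,1)$ and it tends to $-\infty$ as $a/b\downarrow 0$, it is $<0$ for $a/b\in(k_l,k_h)$ and $=0$ at $k_h$ (this also forces $k_l<k_h$). A direct differentiation, using $\frac{d}{dv}\int_a^v\frac{(t-a)^{n-1}}{t^n}dt=\frac{(v-a)^{n-1}}{v^n}$ and the relation between that integral and $A$, collapses everything to $G'(v)=\frac{(v-a)^{n-1}}{n v^{n+1}}[\lambda(v-a)-(n-1)a\,A(v)]$. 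The bracket is positive: the substitution $u=a/t$ gives $\frac{v-a}{A(v)}=\frac{a}{\lambda q\int_0^1\frac{w^{n-1}}{1-(1-q)w}dw}$ with $q=a/v$, and $\frac{d}{dq}[\frac{q w^{n-1}}{1-(1-q)w}]=\frac{w^{n-1}(1-w)}{(1-(1-q)w)^2}\ge 0$ shows $\frac{v-a}{A(v)}$ is strictly increasing with $\lim_{v\downarrow a}\frac{v-a}{A(v)}=\frac{na}{\lambda}$, so $\lambda(v-a)>na\,A(v)>(n-1)a\,A(v)$ on $(a,b]$. Thus $G$ is strictly increasing and continuous on $[a,b]$, so on $(k_l,k_h)$ the inequalities $G(a^+)<0<G(b)$ give a unique zero $v^*\in(a,b)$, and $(v^*,\alpha_1(v^*))$ is the unique solution. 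Finally, $v^*$ depends continuously on $a/b$; as $a/b\downarrow k_l$, $G(b;\cdot)\to 0$ while $G(b-\delta;\cdot)<0$ near $k_l$, forcing $v^*\uparrow b$, hence $\alpha=\alpha_1(v^*)\to\alpha_1(b)|_{k_l}=0$; as $a/b\uparrow k_h$, $G(a^+;\cdot)\to 0$ while $G(a+\delta;\cdot)>0$ near $k_h$, forcing $v^*\downarrow a$, hence $\alpha\to\alpha_1(a^+)=1/n$ — matching the boundary conventions $(v^*,\alpha)=(b,0)$ and $(a,1/n)$ for the low and high regimes.

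\textbf{Main obstacle.} The delicate step is the computation in Part (b): differentiating $G$ through the implicitly substituted $\alpha_1(v^*)$ and seeing that all terms cancel down to the single bracket $\lambda(v-a)-(n-1)a\,A(v)$, and then proving that bracket is positive via strict monotonicity of $(v-a)/A(v)$, which rests on the non-obvious pointwise identity $\frac{d}{dq}[\frac{q w^{n-1}}{1-(1-q)w}]=\frac{w^{n-1}(1-w)}{(1-(1-q)w)^2}$. Everything else — the integrating-factor verifications, the shape of $g_d^*$, continuity, and even monotonicity of $g^*$ on the pooling piece — is routine or reduces to this same inequality, so once it is in hand the proposition follows.
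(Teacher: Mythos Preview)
Your proof is correct and follows the same overall route as the paper: integrating-factor verification of the ODEs, then in the moderate regime elimination of $\alpha$ to obtain a single equation in $v^*$ whose monotonicity reduces to the integral inequality $\int_a^v\frac{(t-a)^{n-1}}{t^n}\,dt<\frac{(v-a)^n}{(n-1)a\,v^{n-1}}$, with endpoint values pinned by the definitions of $k_l$ and $k_h$. The technical execution differs in two pleasant ways. First, the paper proves that inequality by comparing derivatives of both sides; you instead show $(v-a)/A(v)$ is strictly increasing via the substitution $u=a/t$ and the pointwise identity $\partial_q\bigl[\tfrac{qw^{n-1}}{1-(1-q)w}\bigr]=\tfrac{w^{n-1}(1-w)}{(1-(1-q)w)^2}\ge 0$, which actually yields the stronger bound $\lambda(v-a)>na\,A(v)$. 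Second, for monotonicity of $g^*$ on the pooling piece $[v^*,b]$ the paper appeals to the power-series representation (their Lemma~\ref{lem:integral2}), whereas you compute $(g^*)'$ directly and reduce positivity, through $\Delta(v^*)$, to the \emph{same} inequality used for $G'>0$, unifying both monotonicity claims into one lemma. Your scalar function $G$ is just $(1-a/b)^n-\tfrac{1}{n}\,\mathrm{fn}$ in the paper's notation, so your ``$G$ strictly increasing'' and the paper's ``$\mathrm{fn}$ decreasing'' are the same statement; the endpoint and limit arguments at $k_l$ and $k_h$ match the paper's in substance.
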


\begin{proof}[Proof of Proposition~\ref{prop:opt-g-ode}]

We first consider the regime $v \leq v^*$ where the ODE is (\ref{eqn:ode-g-1}). By multiplying both sides of (\ref{eqn:ode-g-1}) by $(v-r^*)^{n-1}/v^{n-1}$, we observe that (\ref{eqn:ode-g-1}) is equivalent to
\begin{align*}
    \frac{d}{dv} \left[ \frac{(v-r^*)^{n-1}}{v^{n-1}} (g^*(v) - \alpha) \right] = \lambda \frac{(v-r^*)^{n-1}}{v^{n}} \, .
\end{align*}

For the case $a/b \leq k_l$, we set $\alpha = 0, g^*(r^*) = 0$ and integrate the above equation from $v = r^*$ to arbitrary $v$ to get the $g_u^* \equiv g^*$ as stated in the theorem statement.

For the case $k_l \leq a/b \leq k_h$, we set $g^*(a) = \alpha$ and integrate the above equation from $v = a$ to arbitrary $v$  to get the $g_u^* \equiv g^*$ as stated in the theorem statement. 

In both cases, we can use Lemma~\ref{lem:integral} to write $g^*(v)$ in the valid region as
\begin{align*}
    g^*(v) = \alpha + \lambda \sum_{k=n}^{\infty} \frac{1}{k} \left( \frac{v-r^*}{v} \right)^{k-(n-1)} \, ,
\end{align*}
which immediately implies that $g^*$ is an increasing continuous function in $v$. 

In the $a/b \leq k_l$ case, the valid region starts at $v = r^*$ and the expression immediately implies that $g^*(r^*) = 0 = \alpha$, and
\begin{align*}
    g^*(b) = \lambda \sum_{k=n}^{\infty} \frac{1}{k} \left( \frac{b-r^*}{b} \right)^{k-(n-1)} \, , 
\end{align*}
which equals 1 because $r^*/b$ satisfies the same defining equation as $k_l$, so we can set $r^* = k_l b$, and the expression is decreasing in $r^*$, so the equation setting the above to 1 has a unique solution in $r^*$ (equivalently, in $k_l$) if and only if as $r^* \downarrow a$, the expression is $\geq 1$, which is equivalent to $a/b \leq k_l$ that we had just assumed.

In the $k_l \leq a/b \leq k_h$ case, the valid region is from $v = a$ to $v = v^*$. The expression implies $g^*(a) = \alpha$. The defining equation for $(v^*,\alpha)$ in this regime implies that 
\begin{align*}
    \frac{(v^*-a)^{n-1}}{(v^*)^{n-1}} (g^*(v^*) - \alpha) = \lambda \int_{t=a}^{t=v^*} \frac{(t-a)^{n-1}}{t^{n}} dt = \frac{(v^*-a)^{n-1}}{(v^*)^{n-1}} ( 1 - n \alpha) \, ,
\end{align*}
so $g^*(v^*) = 1-(n-1) \alpha$. (We still need to prove that the two equations defining $(v^*,\alpha)$ has a unique solution; we will defer this to the end of the proof.)

Now we consider the regime $v \geq v^*$ where the ODE is (\ref{eqn:ode-g-2}). By multiplying both sides by $(v-a)^{n}/(v-\phi_0)^{n}$, we observe that (\ref{eqn:ode-g-2}) is equivalent to
\begin{align*}
    \frac{d}{dv} \left[ \frac{(v-a)^{n}}{(v-\phi_0)^{n}} g^*(v) \right] = \frac{(v-a)^{n-1}}{(v-\phi_0)^{n}} - (1-\lambda) \frac{(v-a)^{n}}{(v-\phi_0)^{n+1}} \, .
\end{align*}

In the case $a/b \geq k_h$, this equation applies for all $v \in [a,b]$, so we integrate this equation from $v = a$ to arbitrary $v$ and note that $\frac{(v-a)^{n}}{(v-\phi_0)^{n}} g^*(v)$ is 0 when $v = a$ (because of the $(v-a)$ factor), so we get
\begin{align*}
    \frac{(v-a)^{n}}{(v-\phi_0)^{n}} g^*(v) - 0 = \int_{t=a}^{t=v} \left[ \frac{ (t-a)^{n-1}}{(t-\phi_0)^{n}}  - (1-\lambda)  \frac{(t-a)^{n}}{(t-\phi_0)^{n+1}} \right] dt \, ,
\end{align*}
which is equivalent to the $g^*$ as stated in the theorem statement. By Lemma~\ref{lem:integral2}, we can write $g^*(v)$ as
\begin{align*}
    g^*(v) = \frac{1}{n} + \lambda \sum_{k=n+1}^{\infty} \frac{(v-a)^{k-n}}{k(v-\phi_0)^{k-n}} \, .
\end{align*}
The expression immediately implies that $g^*(v)$ is an increasing continuous function of $v$ and $g^*(a) = 1/n$. We also have
\begin{align*}
    g^*(b) = \frac{1}{n} + \lambda \sum_{k=n+1}^{\infty} \frac{(b-a)^{k-n}}{k(b-\phi_0)^{k-n}} = 1 \, ,
\end{align*}
by the defining equation of $\phi_0$, and by inspecting the defining equations for $\phi_0$ and $k_h$ we see that $\phi_0 = (a-k_h b)/(1-k_h)$ as claimed. The defining equation of $k_h$ is
\begin{align*}
    \frac{1}{n} + \lambda \sum_{k=n+1}^{\infty} \frac{(1-k_h)^{k-n}}{k}  = 1 \, .
\end{align*}
The expression is decreasing in $k_h$ and it is $1/n < 1$ as $k_h \uparrow 1$, and $\frac{1}{n} + \lambda \sum_{k=n+1}^{\infty} \frac{1}{k} = \infty$ as $k_h \downarrow 0$ because the harmonic series is divergent, so the equation has a unique solution $k_h$. 

In the case $k_l \leq a/b \leq k_h$, this equation applies for $v \in [v^*,b]$. By requiring that $g^*(b) = 1$, integrating the equation from arbitrary $v$ to $v = b$ gives
\begin{align*}
    \frac{(b-a)^{n}}{b^{n}} - \frac{(v-a)^{n}}{v^{n}} g^*(v) = \int_{t=v}^{t=b} \left[ \frac{(t-a)^{n-1}}{t^{n}} - (1-\lambda) \frac{(t-a)^{n}}{t^{n+1}} \right] dt \, .
\end{align*}
Just as before, Lemma~\ref{lem:integral2} implies that the right hand side is decreasing and continuous in $v$, so $g^*(v)$ is increasing and continuous in $v$. We also have
\begin{align*}
    \frac{(b-a)^{n}}{b^{n}} - \frac{(v-a)^{n}}{v^{n}} g^*(v^*) &= \int_{t=v^*}^{t=b} \left[ \frac{(t-a)^{n-1}}{t^{n}} - (1-\lambda) \frac{(t-a)^{n}}{t^{n+1}} \right] dt\\
    &= \frac{(b-a)^{n}}{b^{n}} - \frac{(v-a)^{n}}{v^{n}} (1-(n-1)\alpha) \, ,
\end{align*}
by the defining equations for $(v^*,\alpha)$, so $g^*(v^*) = 1-(n-1) \alpha$. We therefore see that the value of $g^*$ at $v^*$ from both the $v \leq v^*$ and the $v \geq v^*$ regions are equal, so $g^*$ is continuous at $v^*$ as well.

Finally, we will prove that the equations defining $(v^*,\alpha)$ in the $k_l \leq a/b \leq k_h$ regime have a unique solution.

Eliminating $\alpha$ from the two equations gives
\begin{align*}
n -  \frac{n (v^*)^{n}}{(v^*-a)^{n}} \left( \frac{(b-a)^n}{b^n} - \int_{t=v^*}^{t=b} \left[ \frac{(t-a)^{n-1}}{t^n} - (1-\lambda) \frac{(t-a)^{n}}{t^{n+1}}  \right] dt \right) \\
 = (n-1) - \frac{(n-1)(v^*)^{n-1}}{(v^*-a)^{n-1}} \lambda\int_{t=a}^{t=v^*} \frac{(t-a)^{n-1}}{t^n} dt \, ,
\end{align*}
or
\begin{align*}
\frac{n(b-a)^n}{b^n}  &= \frac{(r^*-a)^{n}}{(r^*)^{n}} + \frac{(n-1)(r^*-a)}{(r^*)} \lambda\int_{t=a}^{t=r^*} \frac{(t-a)^{n-1}}{t^n} dt\\
&\quad+ n \int_{t=r^*}^{t=b} \left[ \frac{(t-a)^{n-1}}{t^n} - (1-\lambda) \frac{(t-a)^{n}}{t^{n+1}}  \right] dt \, .
\end{align*}

Let $\text{fn}(r^*)$ denote the right hand side viewed as a function of $r^*$, namely,
\begin{align*}
\text{fn}(v) &:=  \frac{(v-a)^{n}}{v^{n}} + \frac{(n-1)(v-a)}{v} \lambda\int_{t=a}^{t=v} \frac{(t-a)^{n-1}}{t^n} dt + n \int_{t=v}^{t=b} \left[ \frac{(t-a)^{n-1}}{t^n} - (1-\lambda) \frac{(t-a)^{n}}{t^{n+1}}  \right] dt   \, .
\end{align*}

We claim that $\text{fn}(v)$ is a decreasing function. That is, we want to show that $d\text{fn}(v)/dv = \text{fn}'(v) \leq 0$. 
We compute
\begin{align*}
\text{fn}'(v) &= n \left( \frac{v-a}{v} \right)^{n-1}  \frac{a}{v^2} + (n-1) \lambda \frac{d}{dv} \left[ \frac{(v-a)}{v} \int_{t=a}^{t=v} \frac{(t-a)^{n-1}}{t^n} dt \right]\\
&\qquad- n \left( \frac{(v-a)^{n-1}}{v^n} - (1-\lambda) \frac{(v-a)^{n}}{v^{n+1}} \right) \, .
\end{align*}
Note that
\begin{align*}
n \left( \frac{v-a}{r} \right)^{n-1}  \frac{a}{v^2}  = \frac{na (v-a)^{n-1}}{v^{n+1}} = \frac{n(v-(v-a))(v-a)^{n-1}}{v^{n+1}} = \frac{n(v-a)^{n-1}}{v^{n}} - \frac{n(v-a)^{n}}{v^{n+1}} \, .
\end{align*}
We then have
\begin{align*}
\text{fn}'(v) &= - n \lambda \frac{ (v-a)^{n}}{v^{n+1}} + (n-1) \lambda \frac{d}{dv} \left[ \frac{(v-a)}{v} \int_{t=a}^{t=v} \frac{(t-a)^{n-1}}{t^n} dt \right] \\
&= - n \lambda \frac{ (v-a)^{n}}{v^{n+1}} + (n-1) \lambda  \left[\frac{(v-a)}{v} \frac{(v-a)^{n-1}}{v^n} + \frac{a}{v^2} \int_{t=a}^{t=v} \frac{(t-a)^{n-1}}{t^n} dt  \right] \\
&= \lambda \left[  - \frac{(v-a)^{n}}{v^{n+1}} + \frac{(n-1)a}{v^2} \int_{t=a}^{t=v} \frac{(t-a)^{n-1}}{t^n} dt \right] \, .
\end{align*}
Therefore, we have $\text{fn}'(v) \leq 0$ if and only if
\begin{align*}
\int_{a}^{v} \frac{(t-a)^{n-1}}{t^n} dt \leq \frac{1}{(n-1)a} \frac{(v-a)^{n}}{v^{n-1}} \, .
\end{align*}

We can prove this inequality as follows.  Both sides are zero for $v = a$, so it is sufficient to show that the derivative of the LHS is $\leq$ the derivative of the RHS. This is true because the derivative of the LHS is $(v-a)^{n-1}/v^n$ and the derivative of the RHS is
\begin{align*}
\frac{1}{(n-1)a} \frac{v^{n-1} n(v-a)^{n-1} - (v-a)^{n} (n-1)v^{n-2}}{v^{2n-2}} = \frac{1}{(n-1)a} \frac{(v-a)^{n-1}}{v^{n}} (nv-(n-1)(v-a)) \\
= \frac{(v-a)^{n-1}}{v^n} + \frac{(v-a)^{n-1}}{(n-1)av^{n-1}} \geq  \frac{(v-a)^{n-1}}{v^n} \, .
\end{align*}

Therefore, we have proved that $\text{fn}(v)$ is decreasing in $v$.

To show that the equation $\text{fn}(v^*) = n \left( \frac{b-a}{b} \right)^{n}$ has a unique solution $v^* \in [a,b]$, it is sufficient to show that $\text{fn}(a) \geq n \left( \frac{b-a}{b} \right)^{n} \geq \text{fn}(b)$, or
\begin{align*}
n \int_{a}^{b} \left[ \frac{(t-a)^{n-1}}{t^n} - (1-\lambda) \frac{(t-a)^{n}}{t^{n+1}} \right] dt &\geq  n \left( \frac{b-a}{b} \right)^{n}\\
&\geq \left( \frac{b-a}{b} \right)^{n} + \frac{(n-1)(b-a)}{b} \lambda \int_{a}^{b} \frac{(t-a)^{n-1}}{t^n} dt \, .
\end{align*}

The first inequality
\begin{align*}
\int_{a}^{b} \left[ \frac{(t-a)^{n-1}}{t^n} - (1-\lambda) \frac{(t-a)^{n}}{t^{n+1}} \right] dt \geq  \left( \frac{b-a}{b} \right)^{n}
\end{align*}
is true by the definition of $k_h$ and $a/b \leq  k_h$. The second inequality is equivalent to
\begin{align*}
\lambda \int_{t=a}^{t=b} \frac{(t-a)^{n-1}}{t^n} dt \leq \frac{(b-a)^{n-1}}{b^{n-1}} \, ,
\end{align*}
which is true by the definition of $k_l$ and $a/b \geq k_l$. 

We also conclude from the above that as $a/b \uparrow k_h$ we have $r^* \downarrow a$, while as $a/b \downarrow k_l$, we have $r^* \uparrow b$.

Now we will show that $\alpha \in [0,1/n]$ and  as $a/b \uparrow k_h$ we have $\alpha \uparrow 1/n$, while as $a/b \downarrow k_l$, we have $\alpha \downarrow 0$.

We will first show that for any $v \in [a,b]$, we have $\int_{a}^{v} \frac{(t-a)^{n-1}}{t^n} dt \leq \frac{1}{\lambda} \frac{(v-a)^{n-1}}{v^{n-1}}$. Let $\text{fn}(v) := \int_{a}^{v} \frac{(t-a)^{n-1}}{t^n} dt - \frac{1}{\lambda} \frac{(v-a)^{n-1}}{v^{n-1}}$. (We overload the $\text{fn}$ notation here --- it has nothing to do with the earlier $\text{fn}$; it is just a shorthand that we discard after we finish proving the technical statement.) Note that $\text{fn}(a) = 0$ and $\text{fn}(b) \leq 0$ by the definition of $a/b \geq k_l$. We have
\begin{align*}
\text{fn}'(v)  = \frac{(v-a)^{n-1}}{v^{n}} - \frac{1}{\lambda} (n-1) \frac{(v-a)^{n-2}}{v^{n-2}} \frac{a}{v^2} = \frac{(v-a)^{n-2}}{v^n} \left(v- \left( 1 + \frac{n-1}{\lambda} \right) a \right) \, .
\end{align*}
So $\text{fn}(v)$ is decreasing for $v \leq \left( 1 + \frac{n-1}{\lambda} \right) a$ and increasing for $v \geq \left( 1 + \frac{n-1}{\lambda} \right) a$. Regardless of whether $b \leq \left( 1 + \frac{n-1}{\lambda} \right) a$ or not, we have $\text{fn}(v) \leq \max(\text{fn}(a), \text{fn}(b)) \leq 0$, and we are done. Note also that as $a/b \downarrow k_l$, we have $r^* \uparrow b$ so every inequality here approaches equality, and $\alpha \downarrow 0$.

From the defining equation we have
\begin{align*}
1 - n \alpha = \lambda \frac{(r^*)^{n-1}}{(r^*-a)^{n-1}} \int_{a}^{r^*} \frac{(t-a)^{n-1}}{t^n} dt \, ,
\end{align*}
but by the lemma we have just proved, 
\begin{align*}
\int_{a}^{r^*} \frac{(t-a)^{n-1}}{t^n} dt \leq \frac{1}{\lambda} \frac{(r^*-a)^{n-1}}{(r^*)^{n-1}} \, ,
\end{align*}
so $1-n\alpha \leq 1$, which implies $\alpha \geq 0$. 

From the same equation, it is clear that $\frac{(r^*)^{n-1}}{(r^*-a)^{n-1}} \int_{a}^{r^*} \frac{(t-a)^{n-1}}{t^n} dt \geq 0$, so $1-n\alpha \geq 0$, so $\alpha \leq 1/n$. Furthermore, as $a/b \uparrow k_h$, we have $r^* \downarrow a$ so
\begin{align*}
1-n\alpha = \lambda \frac{(r^*)^{n-1}}{(r^*-a)^{n-1}} \int_{a}^{r^*} \frac{(t-a)^{n-1}}{t^n} dt \leq  \lambda \frac{(r^*)^{n-1}}{(r^*-a)^{n-1}} \int_{a}^{r^*} \frac{(t-a)^{n-1}}{a^n} dt = \lambda \frac{(r^*)^{n-1}(r^*-a)}{na^n} \downarrow 0 \, ,
\end{align*} 
so $\alpha \uparrow 1/n$.

We conclude that for $k_l \leq a/b \leq k_h$, there is a unique valid solution $(r^*,\alpha)$. Furthermore, as $a/b \uparrow k_h$ we have $r^* \downarrow a$ and $\alpha \uparrow 1/n$, while as $a/b \downarrow k_l$, we have $r^* \uparrow b$ and $\alpha \downarrow 0$, as desired.   
\end{proof}

Proposition~\ref{prop:opt-g-ode} gives a unifying description of the mechanism across three support information regimes and makes it clear that the intermediate regime interpolates between the SPA regime and the POOL regime. 

We will now use (\ref{eqn:ode-g-1}) and (\ref{eqn:ode-g-2}) given in Proposition~\ref{prop:opt-g-ode} to check that (\ref{eqn:foc}) and (\ref{eqn:soc}) hold.

\paragraph{Checking (\ref{eqn:foc})} We first consider the case $v \leq v^*$, where (\ref{eqn:ode-g-1}) applies. We only need to check this case in the low information ($a/b \leq k_l$) and moderate information ($k_l \leq a/b \leq k_h$) regimes, because this case becomes vacuous ($v^* = a$) in the high information ($a/b \geq k_h$) regime. In both of these regimes, $F^*(v) = 1-r^*/v$. Therefore, the (\ref{eqn:foc}) equation is
\begin{align*}
    \left[ - \lambda - (n-1) (g^*(v) -\alpha) + v (g^*)'(v) \right]\left( 1 - \frac{r^*}{v} \right) + (n-1) (g^*(v) - \alpha) \\
    = \left( 1 - \frac{r^*}{v} \right) \left[ - \lambda + \frac{(n-1)r^*}{v(v-r^*)} (g^*(v) - \alpha) + v (g^*)'(v) \right]  = 0 \, ,
\end{align*}
where the last equality holds by (\ref{eqn:ode-g-1}).

Now we consider the case $v \geq v^*$ where (\ref{eqn:ode-g-1}) applies. We only need to check this case in the high information ($a/b \geq k_h$) and moderate information ($k_l \leq a/b \leq k_h$) regimes, because this case becomes vacuous ($v^* = b$) in the low information ($a/b \geq k_h$) regime. In both regimes, $F^*(v) = 1 - (a-\phi_0)/(v-\phi_0) = (v-a)/(v-\phi_0)$. ($\phi_0 = 0$ for the moderate information regime, and $\phi_0 > 0$ for the high information regime.) Therefore, the (\ref{eqn:foc}) is
\begin{align*}
    &\left[ - \lambda - (n-1) \left( g^*(v) - \frac{1-g^*(v)}{n-1} \right)  + v \cdot 0 \right] \left( \frac{v-a}{v-\phi_0} \right) \\
    &\qquad+ (n-1) \left[ g^*(v) - \frac{1-g^*(v)}{n-1} - (v-a) \left( \frac{-(g^*)'(v)}{n-1} \right) \right] \\
    &= \left( - \lambda - (n-1) g^*(v) - 1 + g^*(v) \right) \left( \frac{v-a}{v-\phi_0} \right) + (n-1) g^*(v) -1+g^*(v) + (v-a) (g^*)'(v) \\
    &= (v-a) (g^*)'(v) - \left( \frac{v-a}{v-\phi_0} \right) \lambda + (n g^*(v)-1 ) \left( \frac{a-\phi_0}{v-\phi_0} \right) \\
    &= (v-a) \left( \frac{1}{v-a} - \frac{1-\lambda}{v-\phi_0} - \frac{n(a-\phi_0)}{(v-\phi_0)(v-a)} g^*(v) \right) - \left( \frac{v-a}{v-\phi_0} \right) \lambda + (n g^*(v)-1 ) \left( \frac{a-\phi_0}{v-\phi_0} \right) = 0 \, ,
\end{align*}
where the second-to-last equality holds by (\ref{eqn:ode-g-1}).

\paragraph{Checking (\ref{eqn:soc}).}

The following applies whether we are in the regime $a/b \leq k_l$, $a/b \geq k_h$, or $k_l \leq a/b \leq k_h$.

In the case $v \leq v^*$, we have $g_u^*(v) = g^*(v)$ and $g_d^*(v) = \alpha$, so (\ref{eqn:soc}) reduces to $g^*(v) - \alpha > 0$, which is true by definition of $g^*$ in the interior.

In the case $v \geq v^*$, we have $g_u^*(v) = g^*(v)$ and $g_d^*(v) = (1-g^*(v))/(n-1)$, so (\ref{eqn:soc}) reduces to $n g^*(v) - 1 + (v-a) (g^*)'(v) > 0$. 

Here, $g^*$ satisfies (\ref{eqn:ode-g-2}) with $\phi_0 = 0$ in the case $k_l \leq a/b \leq k_h$ and $\phi_0 \geq 0$ in the case $a/b \geq k_h$. Substituting $(g^*)'(v) = \frac{1}{v-a} - \frac{1-\lambda}{v-\phi_0} - \frac{n(a-\phi_0)}{(v-\phi_0)(v-a)} g^*(v)$, we get that (\ref{eqn:soc}) reduces to  $n g^*(v) - 1 + (v-a) (g^*)'(v) = \frac{(v-a)}{(v-\phi_0)} (n g^*(v) - 1+\lambda) > 0$, so we have to prove that $n g^*(v) -1 + \lambda > 0$. From $v \geq v^*$ we have $g^*(v) \geq 1-(n-1)\alpha$, we have $n g^*(v) -1+\lambda \geq n(1-(n-1)\alpha) -1+\lambda = (n-1)(1-\alpha) + \lambda \geq 0$, because $\alpha \leq 1, \lambda \geq 0$, with strict inequality everywhere but the boundary, as desired.

\subsection{Proofs for Section~\ref{subsec:structure-mech-all}}\label{app:subsec:structure-mech-all}

\begin{proof}[Proof of Proposition~\ref{prop:maximin-ratio-regime}]

Suppose for the sake of contradiction that the $\spa$ regime is possible. By Theorem~\ref{thm:char-all-mech-full-main}, the $\lambda$-regret
\begin{align*}
    -1 + (1-k_l)^n + \lambda \left(1 - \int_{t=k_l}^{t=1} \left( 1 - \frac{k_l}{t} \right)^n dt \right)
\end{align*}
is zero, while the corresponding $\lambda$ satisfies
\begin{align*}
    \lambda = \frac{(1-k_l)^{n-1}}{\int_{t=k_l}^{t=1} \frac{(t-k)^{n-1}}{t^{n}} dt }\,.
\end{align*}
Substituting the expression of $\lambda$ gives
\begin{align}
    -1 + (1-k_l)^n + (1-k_l)^{n-1} \frac{1 - \int_{t=k_l}^{t=1} \left( 1 - \frac{k_l}{t} \right)^{n} dt }{ \int_{t=k_l}^{t=1} \frac{(t-k_l)^{n-1}}{t^{n}} dt } = 0\,. \label{eqn:lmbd-reg-kl}
\end{align}
Let $f(k_l)$ be the left hand side of (\ref{eqn:lmbd-reg-kl}).
We will derive a contradiction by showing that $f(k_l) > 0$ for all $0 < k_l \leq 1$. We will prove this by viewing $k_l \in (0,1]$ as a free variable.
Let 
\begin{align*}
    I(k_l) = \int_{t=k_l}^{t=1} \frac{(t-k_l)^{n-1}}{t^{n}} dt = \sum_{i=n}^{\infty} \frac{1}{i} (1-k_l)^{i}\,.
\end{align*}
Note that
\begin{align*}
    I'(k_l) = \sum_{i=n}^{\infty} (1-k_l)^{i-1} (-1) = - \frac{(1-k_l)^{n-1}}{1-(1-k_l)} = - \frac{(1-k_l)^{n-1}}{k_l} \, .
\end{align*}
Also, by integration by parts,
\begin{align*}
    \int_{t=k_l}^{t=1} \left( 1 - \frac{k_l}{t} \right)^{n} dt = \left[ \left( 1 - \frac{k_l}{t} \right)^{n} t \right]_{t=k_l}^{t=1} - \int_{t=k_l}^{t=1} t n \left( 1 - \frac{k_l}{t} \right)^{n-1} \frac{k}{t^2} dt = (1-k_l)^{n} - n k I(k_l)\,.
\end{align*}
Therefore, both integrals in $f(k_l)$ can be written in terms of $I(k_l)$. We want to show that
\begin{align*}
    -1 + (1-k_l)^n + (1-k_l)^{n-1} \frac{1 - (1-k_l)^{n} + n k I(k_l) }{ I(k_l) } > 0\,.
\end{align*}
This is equivalent to
\begin{align*}
    I(k_l) < \frac{1-(1-k_l)^{n}}{(1-k_l)^{-(n-1)} - (1+(n-1)k_l)}\,.
\end{align*}
Note that $(1-k_l)^{-(n-1)} > (1+(n-1)k_l)$, so the above manipulation is valid, and both sides of the inequality are positive. Let
\begin{align*}
    g(k_l) = I(k_l) - \frac{1-(1-k_l)^{n}}{(1-k_l)^{-(n-1)} - (1+(n-1)k_l)}\,.
\end{align*}
It is clear from the integral definition of $I(k_l)$ that $\lim_{k_l \uparrow 1} I(k_l) = 1$. We now compute, by L'Hopital's rule,
\begin{align*}
    \lim_{k_l \uparrow 1} \frac{1-(1-k_l)^{n}}{(1-k_l)^{-(n-1)} - (1+(n-1)k_l)} = \lim_{k_l \uparrow 1} \frac{-n(1-k_l)^{n-1}(-1)}{(-n+1)(1-k_l)^{-n}(-1)-(n-1)} = 0\,.
\end{align*}
Therefore, $\lim_{k_l \uparrow 1} g(k_l) = 0$ To prove that $g(k_l) < 0$ it is sufficient to prove that $g(k_l)$ is strictly increasing in $k_l$, i.e., $g'(k_l) > 0$. This is very convenient because $I'(k_l)$ does not involve an integral. 
We compute
\begin{align*}
    g'(k_l) &= I'(k_l) - \frac{d}{dk_l} \left[ \frac{1-(1-k_l)^{n}}{(1-k_l)^{-(n-1)} - (1+(n-1)k_l)} \right] \\
    &= - \frac{(1-k_l)^{n-1}}{k_l} - \frac{ 1 }{ \left[ (1-k_l)^{-n+1} - (1+(n-1)k_l) \right]^2 } \times \Bigg\{ \\ &\left[ (1-k_l)^{-n+1} - (1+(n-1)k_l) \right] \left[ -n(1-k_l)^{n-1}(-1) \right] \\ &- \left[ 1 - (1-k_l)^n \right] \left[ (-n+1)(1-k_l)^{-n} (-1) - (n-1) \right] \Bigg\} \\
    &= - \frac{(1-k_l)^{n-1}}{k_l} + \frac{ \frac{(n-1)( 1-(1-k_l)^{n} )^2}{ (1-k_l)^{n} } - n \left[ 1 - (1+(n-1)k_l)(1-k_l)^{n-1} \right]  }{ \frac{1}{(1-k_l)^{2n-2}} \left[ 1 - (1+(n-1)k_l)(1-k_l)^{n-1} \right]^2 }\,.
\end{align*}
Then $g'(k_l) > 0$ is equivalent to
\begin{align*}
    (n-1) k_l (1-(1-k_l)^n)^2 - n (1-k_l)^n k_l \left[ 1 - (1+(n-1)k_l)(1-k_l)^{n-1} \right] \\ - (1-k_l) \left[ 1 - (1+(n-1)k_l)(1-k_l)^{n-1} \right]^2 > 0\,.
\end{align*}
Let $x = 1-k_l \in [0,1)$, algebraic simplification gives that the left hand side is
\begin{align*}
    &(n-1)(1-x)(1-x^n)^2 - nx^n (1-x) \left[ 1-(1+(n-1)(1-x))x^{n-1} \right]\\
    &\quad- \left[ 1-(1+(n-1)(1-x))x^{n-1} \right]^2 \\
    &= (1-x^n)(n-1-nx+x^n)\,.
\end{align*}
It is clear that $1-x^n > 0$. We also have
\begin{align*}
    n-1-nx+x^n = n(1-x) - (1-x^n) = (1-x) \left(n - \sum_{i=0}^{n-1} x^{i} \right) > 0\,.
\end{align*}

We conclude that the optimal maximin ratio mechanism is never in the $\spa$ regime.\end{proof}

\subsection{Proofs for Section~\ref{subsec:n-1-remark} }\label{app:subsec:n-1-remark}

\begin{proof}[Proof of Corollary~\ref{cor:lambda-n-1}]
We will use the equivalent formulation of Theorem~\ref{app:thm:char-all-mech-gu-gd}. $k_l$ is a solution to $\lambda \log(1/k_l) = 1$, so $k_l = \exp(-1/\lambda)$. $k_h$ is a solution to $(1-\lambda)(1-k_h) + \lambda \log(1/k_h) = (1-k_h)$ or $\log(1/k_h) = (1-k_h)$, so $k_h = 1$.

Therefore, we only need to consider the regime $a/b \leq k_l$ and $k_l \leq a/b \leq k_h$. 

For $a/b \leq k_l$, the regret is $b/e$. For $a/b \geq k_l$, we know that $(v^*,\alpha)$ is a solution to the following system of equations
\begin{align*}
    1 - \alpha &= \lambda \log\left( \frac{v^*}{a} \right) \\
    \frac{b-a}{b} - \frac{v^* - a}{v^*} &= \lambda \log\left( \frac{b}{v^*} \right) - \frac{(1-\lambda)a}{b} + \frac{(1-\lambda)a}{v^*} \, .
\end{align*}
The solution to this is $v^* = b$ and $\alpha = 1- \lambda \log(b/a)$.

We now substitute $n = 1$ to the minimax $\lambda$-regret and $g_u^*(v)$ expressions. (We have $\Phi(v) = g_u^*(v)$, the price distribution CDF.) For $a/b \leq k_l$, the minimax $\lambda$-regret is
\begin{align*}
    -(1-\lambda) b + \left[ \left( 1 - k_l \right) - \lambda \int_{\tilde{r}=k_l}^{\tilde{r}=1} \left( 1 - \frac{k_l}{\tilde{r}} \right) d\tilde{r} \right] b \\
    = - (1-\lambda) b + \left[ (1-k_l) - \lambda (1-k_l + k_l \log(k_l)) \right] b \\
    = (-1+\lambda + 1 - e^{-1/\lambda} - \lambda + \lambda e^{-1/\lambda} + e^{-1/\lambda}) b \\
    = \lambda e^{-1/\lambda} b \, ,
\end{align*}
and the price CDF is
\begin{align*}
    g_u^*(v) = \lambda \frac{v^{n-1}}{(v-r^*)^{n-1}} \int_{t=r^*}^{t=v} \frac{(t-r^*)^{n-1}}{t^{n}} dt = \lambda \int_{t=r^*}^{t=v} \frac{1}{t} dt =  \lambda \log \left( \frac{v}{r^*} \right) = \lambda \log \left( \frac{v}{e^{-1/\lambda} b} \right) \\ = 1 + \lambda \log \left( \frac{v}{b} \right) \text{ for } v \in [r^*,b]  \, .
\end{align*}
For $a/b \geq k_l$, the minimax $\lambda$-regret is
\begin{align*}
    -(1-\lambda) b + b \left( 1 - \frac{a}{b} \right) + (b-a) \alpha (1-1) - \lambda \int_{v=a}^{v=b} \left( 1 - \frac{a}{v} \right) dv \\
    = - (1-\lambda) b + (b-a) - \lambda (b-a-a \log(b/a)) = -a + \lambda a + \lambda \log(b/a) \, ,
\end{align*}
and the price CDF is
\begin{align*}
    g_u^*(v) &= \alpha + \lambda \left( \frac{v}{v-a} \right)^{n-1} \int_{t=a}^{t=v} \frac{(t-a)^{n-1}}{t^{n}} dt = 1 - \lambda \log \left( \frac{b}{a} \right) + \lambda \int_{t=a}^{t=v} \frac{1}{t} dt\\
    &= 1 - \lambda \log\left( \frac{b}{a} \right) + \lambda \log \left( \frac{v}{a} \right) 
    = 1 + \lambda \log \left( \frac{v}{b} \right) \text{ for } v \in [a,b] \, .
\end{align*}

We get the minimax regret by substituting $\lambda = 1$, which is $b/e$ for $a/b \leq 1/e$ and $a \log (b/a)$ for $a/b \geq 1/e$. The maximin ratio is $\lambda$ such that the minimax $\lambda$-regret is zero. Because $\lambda e^{-1/\lambda} b > 0$ always, minimax $\lambda$-regret cannot be zero in this regime. In the other regime, $k_l \leq a/b \leq k_h$, The $\lambda$ such that $-a + \lambda a + \lambda \log(b/a) = 0$ is $\lambda^* = 1/(1+\log(b/a))$, so this is the maximin ratio value, and the corresponding price CDF is
\begin{align*}
    \Phi^*(v) = 1 + \lambda^* \log \left( \frac{v}{b} \right) = 1 + \frac{\log(v/b)}{1+\log(b/a)} \text{ for } v \in [a,b] \,  . \quad \qedhere
\end{align*}
\end{proof}
}

\jadelete{

\section{Full Proof of Theorem \ref{thm:char-all-mech-full-main}}\label{app:sec:proof-main-thm-outline}

\subsection{$(g_u,g_d$) Mechanisms As A Unified Direct Mechanism Representation}\label{app:subsec:spa-pool-gu-gd}

Our main theorem shows that the optimal mechanism is a convex combination over the base mechanisms $\{\textnormal{SPA}(r),\textnormal{POOL}(\tau)\}$. To prove the necessary saddle inequalities, we need to be able to compute the expected regret of any such mechanism against some distribution. It is more mathematically convenient to ``flatten'' the randomness over thresholds and give an almost equivalent ``direct'' representation that gives the allocation rule $x(\mathbf{v})$ and payment rule $p(\mathbf{v})$ for any valuation vector $\mathbf{v}$ as follows.

\begin{definition}[\emph{$(g_u,g_d)$} mechanisms]\label{def:gu-gd} 
Let $g_u,g_d: [a,b] \to [0,1]$ be given functions.  A mechanism  \emph{$(g_u,g_d)$} is defined by the allocation rule $x: [a,b]^n \to [0,1]^n$ given by, for each $i \in [n]$,
\begin{align*}
x_i(\mathbf{v}) = \begin{cases}
\frac{1}{k} g_u(v_{\max}) + \frac{k-1}{k} g_d(v_{\max}) &\text{ if } v_i = \max(\mathbf{v}) := v_{\max} \text{  and there are $k$ entries in $\mathbf{v}$ equal to $v_{\max}$} \\
g_d(v_{\max}) &\text{ if } v_i < \max(\mathbf{v}) := v_{\max}
\end{cases}
\end{align*} 
and the payment rule $p: [a,b]^n \to \mathbb{R}_{++}^n$ is determined uniquely from  Myerson's formula such that the resulting mechanism $(x,p)$ is dominant strategy incentive compatible. 
\end{definition}

In other words, the mechanism allocates $g_u(v_{\max})$ to the highest bidder(s) and $g_d(v_{\max})$ to the non-highest bidder(s). If there are $k$ highest bidders, select one of them to be the ``winner'' with $g_u$ uniformly at random. 

The $(g_u,g_d)$ mechanism representation has the advantage that all mechanisms in the three different support information regimes (SPA, POOL, and a convex combination between SPA and POOL) can all be represented in this form, so we can have a unified treatment for all of them.

We prove Proposition~\ref{prop:gu-gd-rep} in  Appendix~\ref{app:subsec:spa-pool-gu-gd-detail}.

We choose to highlight $\{\textnormal{SPA}(r), \textnormal{POOL}(\tau)\}$ in the main text because the base mechanisms SPA and POOL are intuitive and interpretable. However, the $(g_u,g_d)$ representation is more mathematically convenient because it directly gives the allocation probabilities and payments for each valuation vector $\mathbf{v}$ which are ingredients for the expected regret calculation, and also because it allows the three cases to be treated in a unified manner.

The rest of this Appendix will be devoted to proving the above reformulated main theorem (Theorem~\ref{app:thm:char-all-mech-gu-gd}). We need to verify Nature's saddle (fixing the mechanism $(g_u^*,g_d^*)$, the best distribution $F$ is $F^*$) and Seller's saddle (fixing the distribution $F^*$, the best mechanism is the $(g_u^*,g_d^*)$ mechanism). (The above Theorem~\ref{app:thm:char-all-mech-gu-gd} only states the optimal mechanism and worst-case distribution that together form a saddle pair. We will separately verify the expressions of the minimax $\lambda$-regret that we also state in Theorem~\ref{thm:char-all-mech-full-main} later in Proposition~\ref{prop:minimax-lmbd-regret-expressions}.)

The Seller's Saddle is a Bayesian mechanism design problem, and we can check optimality through the conditions of \cite{monteiro-svaiter-optimal-auction-general-dist}. (In particular, we see that the mechanism and the distribution have the same support, and the distribution has constant nonnegative virtual value on the support.) Therefore, for the rest of this proof we will focus on Nature's saddle.

\subsection{Conditions for Nature's Saddle}\label{app:subsec-conditions-nature-saddle}

To prove the necessary saddle inequalities, we need an expression for the expected regret of a $(g_u,g_d)$ mechanism under a given distribution $F$, given in the following proposition.

\begin{proposition}[Expected Regret of a $(g_u,g_d)$ mechanism]\label{prop:reg-exp-g-full}
Let $R_{\lambda}(\mathbf{g},\mathbf{F}) := R_{\lambda}((g_u,g_d), \mathbf{F})$ be the expected regret of a $(g_u,g_d)$ mechanism and valuation distribution $\mathbf{F}$.

If we assume that $g_u$ and $g_d$ are continuous everywhere and differentiable everywhere except a finite number of points, and $\mathbf{F}$ is arbitrary, then the expected regret is
\begin{align*}
R_{\lambda}(\mathbf{g},\mathbf{F}) = a(\lambda-g_u(a)-(n-1)g_d(a)) + \int_{v=a}^{v=b} (\lambda-g_u(v)+ g_d(v) -vg_u'(v)+ (v-na) g_d'(v)) dv  \\
+  \int_{v=a}^{v=b} (-\lambda+vg_u'(v)+(n-1)a g_d'(v))  \mathbf{F}_n^{(1)}(v)dv + (g_u(v)-g_d(v)-(v-a)g_d'(v))  \mathbf{F}_n^{(2)}(v)dv  \, , \tag{Regret-$\mathbf{F}$} \label{eqn:regret-Fbold}
\end{align*}
where $\mathbf{F}_n^{(1)}$ and $\mathbf{F}_n^{(2)}$ are the first and second order statistics (highest and second-highest) among $n$ agents whose valuations are drawn from $\mathbf{F}$. 

If we further assume that $\mathbf{F}$ is i.i.d. with marginal $F$, then
\begin{align*}
R_{\lambda}(\mathbf{g},\mathbf{F}) &= a(\lambda-g_u(a)-(n-1)g_d(a)) + \int_{v=a}^{v=b} (\lambda-g_u(v)+ g_d(v) -vg_u'(v)+ (v-na) g_d'(v))  \\
&+  \int_{v=a}^{v=b} \left( -\lambda -(n-1) (g_u(v)-g_d(v)) + v(g_u'(v)+(n-1) g_d'(v)) \right) F(v)^n dv \\
&+ \int_{v=a}^{v=b} n(g_u(v)-g_d(v)-(v-a)g_d'(v))   F(v)^{n-1}  dv \, . \tag{Regret-$F$} \label{eqn:regret-F}
\end{align*}

If, instead, we let $g_u$ and $g_d$ be arbitrary but we assume that $\mathbf{F}$ is i.i.d. with marginal $F$ such that $F$ has a density $f = F'$ on $[a,b)$ (so it potentially has point masses only at $a$ and $b$ of size $F(a)$ and $F(\{b\}) := f_b$ respectively), then we have the (Regret-$\mathbf{g}$) expression

\begin{align*}
R_{\lambda}(\mathbf{g},\mathbf{F}) &= \lambda b - a \left( g_u(a) + (n-1) g_d(a) \right) F(a)^n - \left( b g_u(b) + (n-1)a g_d(b) \right) \left( 1 -(1-f_b)^n \right) \\
&+ (b-a) g_d(b) \left( 1-(1-f_b)^{n-1} (1+(n-1)f_b) \right) \\
&+ \int_{v=a}^{v=b} -\lambda F(v)^n + g_u(v) n F(v)^{n-1} (1-F(v) - vF'(v) ) dv \\
&+ \int_{v=a}^{v=b} g_d(v) n(n-1) F(v)^{n-2} F'(v) \left\{ (v-a)(1-F(v)) - a F(v) \right\} dv \, . \tag{Regret-$\mathbf{g}$}  \label{eqn:regret-gbold}
\end{align*}

The above expression is valid for $n \geq 1$ if we take the expression $n(n-1)F(v)^{n-2}$ to be zero for $n = 1$.

\end{proposition}

We prove Proposition~\ref{prop:reg-exp-g} in Appendix~\ref{app:subsec:spa-pool-gu-gd-detail}. The precise expression for the regret is not qualitatively as important as the crucial fact that the regret can be written as an integral over a polynomial function of $F(v)$ only, as seen in (\ref{eqn:regret-F}). Therefore, in Nature's saddle, if we fix $(g_u^*,g_d^*)$ and optimize over $F$, we can optimize over each $F(v)$ pointwise, subject only to monotonicity constraints. If we can find a global minimum of the polynomial that also satisfies the monotonicity constraints, then we will automatically achieve pointwise optimality. As argued in the main text, the expression $\alpha(v) F(v)^{n-1} - \beta(v) F(v)^{n}$ as a function of $F(v)$ has a global maximum at $F^*(v)$ if the following two conditions hold: (i) first-order condition $(n-1) \alpha(v) - n \beta(v) F^*(v) = 0$, (ii) $\alpha(v) \geq 0$. We therefore have the following proposition. 

\begin{proposition}\label{prop:nature-saddle-foc-soc}
Suppose that $m^*$ is a $(g_u^*,g_d^*)$ mechanism and $F^*(v)$ is an increasing function that satisfies the following conditions:
\begin{align*}
 \left( -\lambda -(n-1) (g_u^*(v)-g_d^*(v)) + v(g_u'(v)+(n-1) g_d'^*(v)) \right)  F^*(v)  \\ + (n-1) (g_u^*(v)-g_d^*(v)-(v-a)g_d'^*(v))   = 0 \tag{FOC} \label{eqn:foc} \\
g_u^*(v)-g_d^*(v)-(v-a)g_d'^*(v)  > 0 \, . \tag{SOC} \label{eqn:soc}
\end{align*}
Then $R(m^*,F^*) \leq R(m^*,F)$ for any $F$.
\end{proposition}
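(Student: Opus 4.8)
The plan is to turn Nature's best-response problem $\sup_{\mathbf F\in\mathcal F_{\textnormal{iid}}}R_\lambda(m^*,\mathbf F)$ into a family of decoupled one-variable problems, one per valuation level $v$. Assuming $g_u^*,g_d^*$ have the regularity needed for Proposition~\ref{prop:reg-exp-g} (continuous and differentiable off a finite set, as is the case for every mechanism produced by Theorem~\ref{app:thm:char-all-mech-gu-gd}), and writing $\mathbf F$ as i.i.d.\ with marginal $F$, the (Regret-$F$) identity (\ref{eqn:regret-F}) takes the form
\begin{align*}
R_\lambda(m^*,\mathbf F)\;=\;C\;+\;\int_a^b h\big(F(v),v\big)\,dv,\qquad h(y,v):=A(v)\,y^n+B(v)\,y^{n-1},
\end{align*}
where $C$ gathers the terms of (\ref{eqn:regret-F}) free of $F$, $A(v):=-\lambda-(n-1)\big(g_u^*(v)-g_d^*(v)\big)+v\big((g_u^*)'(v)+(n-1)(g_d^*)'(v)\big)$ is the coefficient of $F(v)^n$, and $B(v):=n\big(g_u^*(v)-g_d^*(v)-(v-a)(g_d^*)'(v)\big)$ is the coefficient of $F(v)^{n-1}$. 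Since $C$ does not involve $F$ and the integrand depends on $F$ only through the scalar $F(v)$, it suffices to show that for a.e.\ $v\in[a,b]$ the point $y=F^*(v)$ maximizes $h(\cdot,v)$ over $[0,1]$; then $\int_a^b h(F(v),v)\,dv\le\int_a^b h(F^*(v),v)\,dv$ for \emph{every} admissible marginal $F$, and the monotonicity constraint on CDFs never binds because the (unconstrained) pointwise maxima are attained simultaneously by the non-decreasing $F^*$.

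Next I would run the elementary scalar optimization previewed in Section~\ref{subsec:proof-main-thm}. One computes $\partial_y h(y,v)=y^{n-2}\big(nA(v)y+(n-1)B(v)\big)$, so interior critical points solve $nA(v)y+(n-1)B(v)=0$; condition (FOC), rearranged, says exactly $A(v)F^*(v)+\tfrac{n-1}{n}B(v)=0$, so $y=F^*(v)$ is that critical point. Condition (SOC) gives $B(v)>0$, which forces $F^*(v)>0$ (otherwise (FOC) would force $B(v)=0$) and hence, via (FOC), $A(v)=-\tfrac{n-1}{n}B(v)/F^*(v)<0$. Therefore the affine factor $nA(v)y+(n-1)B(v)$ is strictly decreasing in $y$ with its zero at $F^*(v)$, while $y^{n-2}\ge0$ on $[0,1]$, so $h(\cdot,v)$ strictly increases on $[0,F^*(v)]$ and strictly decreases on $[F^*(v),1]$, making $F^*(v)$ the global maximizer of $h(\cdot,v)$ on $[0,1]$ (if $F^*(v)=1$ the same sign analysis gives $\partial_y h>0$ throughout $[0,1)$ and $y=1=F^*(v)$ is still optimal; the finitely many non-differentiability points of $g_u^*,g_d^*$ form a null set and are irrelevant to the integral). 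Integrating over $v$ and adding back $C$ yields $R_\lambda(m^*,\mathbf F)\le R_\lambda(m^*,\mathbf F^*)$ for all $\mathbf F\in\mathcal F_{\textnormal{iid}}$, i.e.\ $\mathbf F^*$ attains $\sup_{\mathbf F}R_\lambda(m^*,\mathbf F)$ — precisely the Nature-side saddle inequality of Definition~\ref{def:saddle-point}.

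I expect the main obstacle to be conceptual rather than computational: justifying that ``optimize the integrand pointwise'' is legitimate, that is, $\sup_F\int_a^b h(F(v),v)\,dv=\int_a^b\sup_{y\in[0,1]}h(y,v)\,dv$ \emph{and} that the right-hand supremum is achieved at $F^*(v)$ for a.e.\ $v$ simultaneously. This relies on the separability made explicit by (\ref{eqn:regret-F}) (no coupling across distinct $v$) together with the fact that the two-monomial structure of $h(\cdot,v)$ lets the single pair (FOC)+(SOC) promote a bare critical point into a \emph{global} maximum on $[0,1]$; for a general polynomial in $F(v)$ the first-order condition alone would not be enough, which is exactly the role played by (SOC), the nonnegativity of the $F(v)^{n-1}$-coefficient. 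A minor additional check is that $F^*$, supplied in closed form by Theorem~\ref{app:thm:char-all-mech-gu-gd}, is a genuine CDF valued in $[0,1]$, so that ``for any $F$'' ranges over the correct feasible set; this is immediate from its explicit expression in each regime. (For $n=1$ the map $h(\cdot,v)$ is affine and (FOC) forces $A(v)=0$ wherever $F^*(v)>0$, so the $\lambda$-regret does not depend on $F$ at all and the inequality holds with equality.)
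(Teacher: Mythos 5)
Your proposal is correct and follows essentially the same route as the paper: both reduce Nature's best response, via the separable (Regret-$F$) representation, to pointwise maximization over $y=F(v)\in[0,1]$ of the two-monomial integrand $A(v)y^{n}+B(v)y^{n-1}$, with (\ref{eqn:foc}) locating the critical point and (\ref{eqn:soc}) (nonnegativity of the $y^{n-1}$ coefficient) upgrading it to a global maximum through the unimodality of $y^{n-2}\bigl(nA(v)y+(n-1)B(v)\bigr)$. Note that you prove the direction $R_\lambda(m^*,\mathbf F)\le R_\lambda(m^*,\mathbf F^*)$, which is the Nature-side inequality of Definition~\ref{def:saddle-point} that the paper actually needs and uses, even though the displayed inequality in the proposition statement is written with the opposite orientation.
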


We label the condition corresponding to $\alpha(v) \geq 0$ as (\ref{eqn:soc}) because it is equivalent to the second-order condition on the $I(F) \equiv \alpha F^{n-1} - \beta F^{n}$. To see this, we compute $\partial^2 I(F)/\partial F^2 |_{F = F^*} = (n-1) (F^*)^{n-3} ( (n-2) \alpha + n \beta F^* ) = (n-1) F^{n-3} (-\alpha)$, where the last equality holds because of FOC $(n-1) \alpha + n \beta F^* = 0$. Therefore, SOC $\partial^2 I(F)/\partial F^2 |_{F = F^*}  \leq 0$ is equivalent to $\alpha(v) \geq 0$. Nevertheless, we want to emphasize that we do \textit{not} say that first-order and second-order conditions together imply global optimality in general. This is not true; it is only true in this case due to the special structure of the integrand, which we analyze directly.

\subsection{Verification of First and Second Order Conditions for Nature's Saddle}\label{subsec:verify-foc-soc-nature-saddle}

}

\jadelete{
\section{Technical Results Supporting the Proof of the Main Theorem (Theorem~\ref{thm:char-all-mech-full-main} and~\ref{app:thm:char-all-mech-gu-gd}) From Appendix~\ref{app:sec:proof-main-thm-outline}}\label{app:sec:proof-main-thm-detail}

\subsection{Technical lemmas}\label{app:subsec:technical-lemmas-proof-main-thm}

Throughout this section, we will use the following technical lemmas.

\subsection{$(g_u,g_d)$ Mechanisms As A Unified Direct Mechanism Representation (Appendix~\ref{app:subsec:spa-pool-gu-gd})}\label{app:subsec:spa-pool-gu-gd-detail}

\subsection{Conditions for Nature's Saddle (Appendix~\ref{app:subsec-conditions-nature-saddle})}\label{app:subsec-conditions-nature-saddle-detail}

\subsection{Verification of Conditions for Nature's Saddle (Appendix~\ref{subsec:verify-foc-soc-nature-saddle})}\label{subsec:verify-foc-soc-nature-saddle-detail}

\subsection{Minimax $\lambda$-Regret Expressions}\label{app:subsec:proof-main-thm-additional}

The expected regret expression of a $(g_u,g_d)$ mechanism, as shown in Proposition~\ref{prop:reg-exp-g} naturally depends on both $g_u$ and $g_d$. However, we know that in the actual saddle point, there is a definite relationship between $g_u$ and $g_d$, so we can let $g_u \equiv g$ and write both $g_u$ and $g_d$ in terms of $g$. In Lemma~\ref{lem:regret-gbold-to-regret-g} below, we will derive such an expression (\ref{eqn:regret-g}). We then use (\ref{eqn:regret-g}) to derive the expressions of the minimax $\lambda$-regret stated in the main theorem (Theorem~\ref{thm:char-all-mech-full-main}) in Proposition~\ref{prop:minimax-lmbd-regret-expressions}.

\begin{lemma}\label{lem:regret-gbold-to-regret-g}
    If we set $g_u(v) = g(v)$, $g_d(v) = \alpha$, for $v \in [a,v^*]$ and $g_u(v) = g(v), g_d(v) = (1-g(v))/(n-1)$ for $v \in [v^*,b]$, $g(a) = \alpha, g(b) = 1$ in (\ref{eqn:regret-gbold}), then we get the following expression
    \begin{align*}
&\lambda b - a n \alpha F(a)^n - b + b(1-f_b)^n \\
&+\int_{v=a}^{v=v^*} \Big[ -  \lambda F(v)^n  + \alpha n(n-1) F(v)^{n-2} F'(v) \left\{ (v-a)(1-F(v))-aF(v) \right\} \\ &\quad\quad\quad\quad\quad + n F(v)^{n-1} (1-F(v)-vF'(v)) g(v)  \Big] dv \\
&+ \int_{v=v^*}^{v=b} \Big[ - \lambda F(v)^n + n F(v)^{n-2} F'(v) \left\{ (v-a)(1-F(v)) - a F(v) \right\}  \\ &\quad\quad\quad\quad\quad + n F(v)^{n-2} \left\{ F(v) - F(v)^2 - (v-a) F'(v) \right\} g(v) \Big] dv \, .
\tag{Regret-$g$}  \label{eqn:regret-g}
\end{align*}
    
\end{lemma}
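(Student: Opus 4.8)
The plan is a direct substitution into the expression (Regret-$\mathbf{g}$) from Proposition~\ref{prop:reg-exp-g}, followed by simplification of the resulting integrands. First I would handle the boundary terms. Using $g_u(a)=g(a)=\alpha$ and $g_d(a)=\alpha$, the coefficient $g_u(a)+(n-1)g_d(a)$ collapses to $n\alpha$, producing the term $-an\alpha F(a)^n$. Using $g_u(b)=g(b)=1$ and $g_d(b)=(1-g(b))/(n-1)=0$, the term $-(bg_u(b)+(n-1)ag_d(b))(1-(1-f_b)^n)$ becomes $-b(1-(1-f_b)^n)=-b+b(1-f_b)^n$, and the following line $(b-a)g_d(b)(\cdots)$ vanishes identically since $g_d(b)=0$. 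This recovers the first line of the claimed expression.

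Next I would split $\int_{v=a}^{v=b}$ into $\int_{v=a}^{v=v^*}$ and $\int_{v=v^*}^{v=b}$, matching the two regimes in which $g_d$ is defined differently. On $[a,v^*]$, where $g_u(v)=g(v)$ and $g_d(v)=\alpha$, the two integrands in (Regret-$\mathbf{g}$) add up verbatim to the $[a,v^*]$ integrand of (Regret-$g$); no further simplification is needed. On $[v^*,b]$, where $g_u(v)=g(v)$ and $g_d(v)=(1-g(v))/(n-1)$, the factor $n(n-1)F(v)^{n-2}F'(v)$ multiplying $g_d(v)$ becomes $n(1-g(v))F(v)^{n-2}F'(v)$. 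I would then collect the part independent of $g(v)$, namely $-\lambda F(v)^n + nF(v)^{n-2}F'(v)\{(v-a)(1-F(v))-aF(v)\}$, which already matches the claim, and separately collect the part proportional to $g(v)$.

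The one computation requiring care is the coefficient of $g(v)$ on $[v^*,b]$: it equals $nF(v)^{n-2}\big[-F'(v)\{(v-a)(1-F(v))-aF(v)\} + F(v)(1-F(v)-vF'(v))\big]$. Expanding the bracket, the terms carrying a factor $F(v)F'(v)$ are $(v-a)F(v)F'(v)+aF(v)F'(v)-vF(v)F'(v)$, whose coefficients sum to $(v-a)+a-v=0$, so they cancel; what remains is $F(v)-F(v)^2-(v-a)F'(v)$, yielding the stated coefficient $nF(v)^{n-2}\{F(v)-F(v)^2-(v-a)F'(v)\}$. Assembling the boundary terms together with the two integrals then gives the identity (Regret-$g$). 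The main (mild) obstacle is simply bookkeeping this cancellation cleanly; everything else is mechanical substitution, and there are no convergence or regularity subtleties beyond those already present in Proposition~\ref{prop:reg-exp-g}.
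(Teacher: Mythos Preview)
Your proposal is correct and follows essentially the same approach as the paper's own proof: direct substitution into (Regret-$\mathbf{g}$), handling the boundary terms via $g_u(a)+(n-1)g_d(a)=n\alpha$ and $g_d(b)=0$, then splitting the integral at $v^*$ and simplifying. The paper's proof is terser, merely writing out the two integrands after substitution and stating ``which simplifies to the integrals in the lemma statement,'' whereas you carry out explicitly the cancellation of the $F(v)F'(v)$ terms on $[v^*,b]$ that the paper leaves to the reader.
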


\begin{proof}[Proof of Lemma~\ref{lem:regret-gbold-to-regret-g}]
We start from
    \begin{align*}
R_{\lambda}(\mathbf{g},\mathbf{F}) &= \lambda b - a \left( g_u(a) + (n-1) g_d(a) \right) F(a)^n - \left( b g_u(b) + (n-1)a g_d(b) \right) \left( 1 -(1-f_b)^n \right) \\
&+ (b-a) g_d(b) \left( 1-(1-f_b)^{n-1} (1+(n-1)f_b) \right) \\
&+ \int_{v=a}^{v=b} -\lambda F(v)^n + g_u(v) n F(v)^{n-1} (1-F(v) - vF'(v) ) dv \\
&+ \int_{v=a}^{v=b} g_d(v) n(n-1) F(v)^{n-2} F'(v) \left\{ (v-a)(1-F(v)) - a F(v) \right\} dv \, . \tag{Regret-$\bold{g}$}  
\end{align*}
The first line is 
\begin{align*}
    \lambda b - a (n\alpha) F(a)^n - (b \cdot 1 + (n-1) a \cdot 0) (1-(1-f_b)^n) \\
    = \lambda b - a n \alpha F(a)^n - b + b(1-f_b)^n \, .
\end{align*}
The second line is zero because $g_d(b) = 0$.
The third and fourth line (with integrals) are
\begin{align*}
    &\int_{v=a}^{v=v^*} - \lambda F(v)^n + g(v) n F(v)^{n-1} (1-F(v)-vF'(v))\\
    &\quad + \alpha n(n-1) F(v)^{n-2} F'(v) \left\{ (v-a)(1-F(v))-aF(v) \right\} dv \\
    &+ \int_{v=v^*}^{v=b} - \lambda F(v)^n + g(v) n F(v)^{n-1} (1-F(v)-vF'(v))\\
    &\quad+  (1-g(v)) n F(v)^{n-2} F'(v) \left\{ (v-a)(1-F(v))-aF(v) \right\} dv \, ,
\end{align*}
which simplifies to the integrals in the lemma statement.
\end{proof}

\begin{proposition}[Minimax $\lambda$-Regret Expressions]\label{prop:minimax-lmbd-regret-expressions}
The minimax $\lambda$-regret expressions given in Theorem~\ref{thm:char-all-mech-full-main} are correct.
\end{proposition}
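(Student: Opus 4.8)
The plan is to verify, for each of the three relative-support-information regimes, that plugging the candidate saddle pair $(m^*,\mathbf{F}^*)$ from Theorem~\ref{app:thm:char-all-mech-gu-gd} into the $\lambda$-regret formula yields the closed-form value asserted in Theorem~\ref{thm:char-all-mech-full-main}. Since the saddle property is already established in Appendix~\ref{app:sec:proof-main-thm-outline}, the value of the game is simply $R_\lambda(m^*,\mathbf{F}^*)$, so the proof reduces to a (somewhat involved but mechanical) evaluation of one integral expression per regime. I would use (\ref{eqn:regret-g}) from Lemma~\ref{lem:regret-gbold-to-regret-g} as the starting point, since $g_u^*,g_d^*$ are expressed in terms of a single function $g^*$ and $\alpha$, and I would substitute the explicit $F^*$ for the regime in question.

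First I would do the \textbf{low information regime} ($a/b\le k_l$): here $\alpha=0$, $v^*=b$, $F^*$ has no mass on $[a,r^*]$ with $r^*=k_l b$, density $r^*/v^2$ on $(r^*,b)$, and mass $f_b=r^*/b$ at $b$. The first integral in (\ref{eqn:regret-g}) splits into the $[a,r^*]$ piece (where $F^*\equiv 0$, contributing nothing) and the $[r^*,b]$ piece. On $(r^*,b)$ one checks that the coefficient of $g^*(v)$, namely $nF^*(v)^{n-1}(1-F^*(v)-vF^{*\prime}(v))$, vanishes identically because $F^*$ is the isorevenue distribution $1-r^*/v$ (so $1-F^*(v)-vF^{*\prime}(v)=0$); hence only the $-\lambda F^*(v)^n$ term and the boundary terms survive, giving
\begin{align*}
R_\lambda = \lambda b - b + b(1-f_b)^n - \lambda\int_{v=r^*}^{v=b}\Bigl(1-\tfrac{r^*}{v}\Bigr)^n dv,
\end{align*}
which after the change of variable $t=v/b$, $k_l=r^*/b$ is exactly $-(1-\lambda)b+\bigl[(1-k_l)^n-\lambda\int_{k_l}^1(1-k_l/t)^n dt\bigr]b$. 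The \textbf{moderate regime} ($k_l\le a/b\le k_h$) is handled the same way with $F^*(v)=1-a/v$ the isorevenue distribution anchored at $a$, which again kills the coefficient of $g^*(v)$ in \emph{both} integrals (since $F^*-F^{*2}-(v-a)F^{*\prime}=0$ as well), and here $f_b=a/b$; the $\alpha$-terms also telescope cleanly because $F^*$ is isorevenue, leaving $-(1-\lambda)b+b(1-a/b)^n-\lambda\int_{a}^{b}(1-a/t)^n dt$. The \textbf{high information regime} ($a/b\ge k_h$) is the one I expect to be the real work: $F^*(v)=(v-a)/(v-\phi_0)$ has \emph{constant positive} virtual value $\phi_0$, so the coefficient of $g^*(v)$ does \emph{not} vanish, and one must genuinely integrate $g^*(v)$ against that coefficient. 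Here I would exploit the ODE (\ref{eqn:ode-g-2}) satisfied by $g^*$ together with the integration-by-parts Lemma~\ref{lem:int-by-part}, or directly substitute the series representation $g^*(v)=\tfrac1n+\lambda\sum_{k\ge n+1}\tfrac{(v-a)^{k-n}}{k(v-\phi_0)^{k-n}}$ from Lemma~\ref{lem:integral2} and integrate term by term, using $\phi_0=(a-k_h b)/(1-k_h)$ and the substitution $t\mapsto (v-a)/(v-\phi_0)$; collecting terms and rewriting via $k_h'=k_h/(1-k_h)$ should produce $-(1-\lambda)b+\bigl[(1-k_h)^n+\int_0^1\tfrac{t^{n-1}(nk_h'^2-\lambda t(t+k_h'))}{(t+k_h')^{n+1}}dt\bigr](b-a)$.

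The main obstacle is therefore the bookkeeping in the high-information regime: unlike the other two, no convenient cancellation eliminates the $g^*$ integral, so one must carry the full integration through, either via the ODE identity or the explicit series, and then match the result to the stated form — in particular identifying the $(b-a)$ scaling and the precise integrand in the $t$-variable. A secondary, lower-risk obstacle is making sure the point-mass boundary terms at $a$ and $b$ (the $F(a)^n$ and $(1-f_b)^n$ contributions in (\ref{eqn:regret-g})) are tracked correctly in each regime, since the worst-case $F^*$ genuinely has atoms; but these are finite algebraic substitutions. I would organize the write-up as three short lemmas (one per regime) feeding into Proposition~\ref{prop:minimax-lmbd-regret-expressions}, presenting the low and moderate cases briefly and allocating the bulk of the space to the high-information computation.
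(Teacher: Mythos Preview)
Your overall strategy is the same as the paper's: start from (\ref{eqn:regret-g}), substitute the regime-specific $(g^*,\alpha,v^*,F^*)$, and simplify. For the low and moderate regimes your sketch is correct and matches the paper essentially verbatim.

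For the high-information regime, however, you have misidentified which term survives. You write that because $F^*$ has constant positive virtual value $\phi_0$, ``the coefficient of $g^*(v)$ does \emph{not} vanish.'' In fact it does: the coefficient of $g(v)$ in the second integral of (\ref{eqn:regret-g}) is $nF^{n-2}\bigl(F-F^2-(v-a)F'\bigr)$, and for $F^*(v)=(v-a)/(v-\phi_0)$ one has $F^*-F^{*2}=(v-a)(a-\phi_0)/(v-\phi_0)^2=(v-a)F^{*\prime}$, so this factor is identically zero for every $\phi_0$, not just $\phi_0=0$. (This is no accident: $F^*$ was constructed precisely so that the seller is indifferent over the $\pool$ threshold, i.e., so that the coefficient of $\Psi(v)$ in (\ref{eqn:regret-psi-main}) vanishes, and that coefficient is the same expression.) So the ODE/series machinery you propose for integrating $g^*$ is unnecessary; that integral is zero.

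What actually survives in the high regime is the \emph{other} term in the second integral, $nF^{*\,n-2}F^{*\prime}\{(v-a)(1-F^*)-aF^*\}$, which for $\phi_0>0$ equals $-\phi_0\, nF^{*\,n-1}F^{*\prime}/(1-F^*)\cdot\text{(something nonzero)}$ and does \emph{not} cancel (whereas in the moderate regime $\phi_0=0$ kills it). The paper's computation handles this by splitting $(v-a)(1-F^*)-aF^*$ into $-aF^*$ and $(v-a)(1-F^*)$, integrating the first piece directly to $-a(1-k_h)^n$, and then applying the substitution $v=a+(b-a)t$ (so that $F^*=t/(t+k_h')$ and $F^{*\prime}=(b-a)^{-1}k_h'/(t+k_h')^2$) to the remainder; this is what produces the $(b-a)$ prefactor and the stated integrand $t^{n-1}(nk_h'^2-\lambda t(t+k_h'))/(t+k_h')^{n+1}$. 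Once you redirect the effort from the $g^*$ integral (zero) to this surviving non-$g^*$ integral, the high-information case is no harder than the other two.
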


\begin{proof}[Proof of Proposition~\ref{prop:minimax-lmbd-regret-expressions}]
We first consider the case $a/b \leq k_l$, then $v^* = b$, $\alpha = 0$. We note that
$f_b = r^*/b = k_l$ and $1-F^*(v) - v(F^*)'(v) = 0$ so only the first integral is nonzero, and the $g^*(v)$ term disappears. Therefore, the minimax $\lambda$-regret is 
\begin{align*}
    \lambda b - b + b (1-k_l)^{n} + \int_{v=k_l b}^{v=b} - \lambda \left( 1 - \frac{k_l b}{v} \right)^{n} = - (1-\lambda) b + \left[ (1-k_l)^{n} - \lambda \int_{t=k_l}^{t=1} \left( 1 - \frac{k_l}{t} \right)^{n} dt \right] b \, .
\end{align*}
Now we consider the case $a/b \geq k_h$, then $v^* = a$, $\alpha = 1/n$. We note that
\begin{align*}
    a - \phi_0 = a - \frac{a - k_h b}{1 - k_h} = \frac{k_h}{1-k_h}(b-a) = k_h' (b-a) \, ,
\end{align*}
and $b - \phi_0 = (b-a) + (a-\phi_0) = (1+k_h')(b-a)$, so $f_b = \frac{a-\phi_0}{b-\phi_0} = \frac{k_h'}{1+k_h'} = k_h$. We let $v = a + (b-a) t$, so as $v$ varies from $a$ to $b$, we have $t$ varies from 0 to 1. We also have
\begin{align*}
    F^*(v) = \frac{v-a}{v-\phi_0} = \frac{(b-a)t}{a-\phi_0 + (b-a) t} = \frac{t}{k_h' + t} \, ,
\end{align*}
and
\begin{align*}
    (F^*)'(v) = \frac{(a-\phi_0)}{(v-\phi_0)^2} = \frac{1}{b-a} \frac{k_h'}{(k_h'+t)^2} \, .
\end{align*}
Therefore, the minimax regret is
\begin{align*}
    \lambda b - b  + b(1-k_h)^n + \int_{v=a}^{v=b} \left[ - \lambda F^*(v)^{n} + n F^*(v)^{n-2} (F^*)'(v) (v-a) - n F^*(v)^{n-1} v F'(v) \right] dv \\
    = -(1-\lambda) b + (1-k_h)^{n} b + \int_{t=0}^{t=1} \Big[ - \lambda \left( \frac{t}{k_h'+t} \right)^{n} + n \left( \frac{t}{k_h'+t} \right)^{n-2} \frac{1}{(b-a)}  \frac{k_h'}{(k_h'+t)^2} \cdot (b-a) t \\ - n \left( \frac{t}{k_h' + t} \right)^{n-1} (a + (b-a) t) \frac{1}{b-a} \frac{k_h'}{(k_h'+t)^2}  \Big] (b-a) dt \, .
\end{align*}

We have
\begin{align*}
    f_b = \frac{a-\phi_0}{b - \phi_0} = \frac{(1-k_h) a - (a-k_h b)}{(1-k_h) b - (a-k_h b)} = k_h,
\end{align*}
and $F^*(v) - F^*(v)^2 - (v-a) (F^*)'(v) = 0$ so only the second integral is nonzero, and the $g^*(v)$ term disappears. We have $F^*(v) = (v-a)/(v-\phi_0)$ and $(F^*)'(v) = (a-\phi_0)/(v-\phi_0)^2$ Therefore, the minimax $\lambda$-regret is
\begin{align*}
    \lambda b - b + b (1-k_h)^{n}  - a \int_{v=a}^{v=b} n F^*(v)^{n-1} (F^*)'(v) dv \\+ \int_{v=a}^{v=b} \Big[ - \lambda F^*(v)^n + n F^*(v)^{n-2} (v-a) (F^*)'(v) (1-F^*(v))    \Big] dv \, .
\end{align*}
We separate the integral into two because we can directly evaluate the first one, and the second one can be written purely in terms of $t$ when we write $v = a + (b-a) t$. The first integral is
\begin{align*}
    a \left[ F^*(v)^{n} \right]_{v=a}^{v=b}  = a ( F^*(b^-)^{n} - F^*(a)^{n} ) = a ( (1-k_h)^n -0^{n} )b = a(1-k_h)^{n} \, .
\end{align*}

For the second integral, we write $v = a + (b-a) t$ for $t \in [0,1]$ and use the expressions for $F^*(v)$ and $(F^*)'(v)$ in terms of $t$ to get
\begin{align*}
    (b-a) \int_{t=0}^{t=1} \Big[ - \lambda \left( \frac{t}{k_h'+t} \right)^{n} + n \left( \frac{t}{k_h'+t} \right)^{n-2}  \cdot t \cdot \frac{k_h'}{(k_h'+t)^2} \cdot \frac{k_h'}{k_h'+t} \Big] dt \\
    = (b-a) \int_{t=0}^{t=1} \frac{t^{n-1}  ( n k_h'^2 - \lambda t (t+k_h')}{(t+k_h')^{n+1}} dt \, .
\end{align*}
Substituting these in gives the expression for minimax $\lambda$-regret for $a/b \geq k_h$.

Lastly, we consider the case $k_l \leq a/b \leq k_h$. Here, $F^*(v) = 1 - a/v$, so we have that $(v-a)(1-F^*(v)) - a F^*(v) = 0 $ and $F^*(v) - F^*(v)^2 - (v-a) (F^*)'(v) = 0$. We also have $F^*(a) = 0$ and $f_b = a/b$. Therefore, all terms in the integrals both from $a$ to $v^*$ and from $v^*$ to $b$ vanish except the $-\lambda F(v)^{n}$ term. Therefore, the minimax $\lambda$-regret is
\begin{align*}
    \lambda b - b + b (1-f_b)^{n} - \lambda \int_{v=a}^{v=b} \left( 1 - \frac{a}{v} \right)^{n} dv \, ,
\end{align*}
which is the desired expression.
\end{proof}

\subsection{Optimal Pricing}\label{app:subsec:optimal-pricing}

}

\section{Proofs and Discussions from Section~\ref{sec:other-mech-classes}}\label{app:sec:other-mech-classes}

\subsection{Proofs and Discussions from Section~\ref{subsec:main-std-mech}}\label{app:subsec:main-std-mech}

Before we prove the main theorem characterizing the minimax $\lambda$-regret standard mechanism (Theorem~\ref{thm:char-std-mech}), we first derive the regret expression of generous SPA $\textnormal{GenSPA}(\Phi)$.

\begin{proposition}\label{prop:reg-std}
The regret of the mechanism $\textnormal{GenSPA}(\Phi)$ under distribution $\mathbf{F}$ that is $n$ i.i.d. with marginal $F$ is
\begin{align*} 
    & a(\lambda-\Phi(a)) + \int_{v=a}^{v=b} (\lambda-\Phi(v)-v \Phi'(v)) (1-F(v)^n) + \Phi(v) (n F(v)^{n-1} - n F(v)^n) dv \\
    &\quad + F(a)^{n-1} (b-a)(n-(n-1)F(a)) \\
    &\quad - F(a)^{n-1} \int_{v=a}^{v=b} \left( (n-(n-1)F(a)) \Phi(v) + v(nF(v)-(n-1)F(a)) \Phi'(v)  \right) dv \, .
\end{align*}
\end{proposition}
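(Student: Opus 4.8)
The plan is to reduce to the $\lambda$-regret of an ordinary second-price auction with a random reserve, which is already at hand, and then to correct for the single event on which $\textnormal{GenSPA}$ behaves differently. Let $v^{(1)}\ge v^{(2)}$ denote the top two order statistics of $\mathbf v$. Fixing the highest bidder $i$ and writing $w=\max_{j\ne i}v_j$, the allocation rule of $\textnormal{GenSPA}(\Phi)$ gives bidder $i$'s allocation as a function of her report $\hat v_i$ as $\Phi(\hat v_i)\,\1(\hat v_i\ge w)$ when $w>a$ and as $\1(\hat v_i\ge a)\equiv 1$ when $w=a$. Since $\textnormal{GenSPA}$ is standard, its revenue equals this bidder's payment, which Myerson's envelope formula evaluates to $v^{(1)}\Phi(v^{(1)})-\int_{v^{(2)}}^{v^{(1)}}\Phi(t)\,dt$ on $\{v^{(2)}>a\}$ and to $a$ on $\{v^{(2)}=a\}$. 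Thus on $\{v^{(2)}>a\}$ the allocation and the payments of $\textnormal{GenSPA}(\Phi)$ coincide with those of the $(g_u,g_d)$ mechanism $g_u=\Phi$, $g_d\equiv 0$, which we call $\textnormal{SPA}(\Phi)$, so
\begin{align*}
R_\lambda(\textnormal{GenSPA}(\Phi),\mathbf F)=R_\lambda(\textnormal{SPA}(\Phi),\mathbf F)+\bE\Big[\Big(v^{(1)}\Phi(v^{(1)})-\int_a^{v^{(1)}}\Phi(t)\,dt-a\Big)\1\{v^{(2)}=a\}\Big].
\end{align*}

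For the first summand, $\textnormal{SPA}(\Phi)$ is a $(g_u,g_d)$ mechanism, so its $\lambda$-regret is \eqref{eqn:regret-F} of Proposition~\ref{prop:reg-exp-g} with $g_u=\Phi$, $g_d\equiv 0$; grouping the $F(v)^n$ terms against the constant term reduces \eqref{eqn:regret-F} to exactly the first line of the claimed expression. (Alternatively, this line follows directly from the pointwise revenue above via $\mathbf F_n^{(1)}(v)=F(v)^n$, $\mathbf F_n^{(2)}(v)=nF(v)^{n-1}-(n-1)F(v)^n$ and the Stieltjes integration-by-parts identity of Lemma~\ref{lem:int-by-part}.)

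It remains to compute the correction term, which is where the real work lies. Put $h(v):=v\Phi(v)-\int_a^v\Phi(t)\,dt-a$, so that $h(a)=a\Phi(a)-a$ and $h'(v)=v\Phi'(v)$. Since the support lies in $[a,b]$, the event $\{v^{(2)}=a\}$ splits into ``all $n$ valuations equal $a$'' (probability $F(a)^n$, on which $v^{(1)}=a$) and ``exactly one valuation exceeds $a$'' (on which that valuation is $v^{(1)}$, with relevant measure $nF(a)^{n-1}$ times the restriction of $dF$ to $(a,b]$). Applying Lemma~\ref{lem:int-by-part} to $\int_{[a,b]}h\,dF$ yields $\bE[h(v_1)]=h(a)+\int_a^b v\Phi'(v)(1-F(v))\,dv$, hence $\bE[h(v_1)\1\{v_1>a\}]=h(a)(1-F(a))+\int_a^b v\Phi'(v)(1-F(v))\,dv$; collecting the two pieces, the correction equals
\begin{align*}
F(a)^{n-1}\Big[\big(n-(n-1)F(a)\big)\,h(a)+n\int_a^b v\Phi'(v)\big(1-F(v)\big)\,dv\Big].
\end{align*}
Substituting $h(a)=a\Phi(a)-a$ and using that $\Phi$ is a probability CDF on $[a,b]$ (so $\Phi(b)=1$ and $\int_a^b(\Phi(v)+v\Phi'(v))\,dv=[v\Phi(v)]_a^b=b-a\Phi(a)$), a routine algebraic rearrangement turns this into precisely the second line of the claimed formula. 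The main obstacle is exactly this bookkeeping: isolating the event $\{v^{(2)}=a\}$, treating the tie-breaking corner $v^{(1)}=v^{(2)}=a$, carrying out the computation without assuming $F$ has a density (which rules out the standard density-based order-statistic formulas and forces the Stieltjes route), and reconciling the two equivalent algebraic forms of the final answer.
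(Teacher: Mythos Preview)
Your proof is correct and follows essentially the same decomposition as the paper: write $R_\lambda(\textnormal{GenSPA}(\Phi),\mathbf F)$ as $R_\lambda(\textnormal{SPA}(\Phi),\mathbf F)$ plus a correction on $\{v^{(2)}=a\}$, then evaluate each piece. The only difference is in how you handle the correction term: the paper computes the conditional law $\tilde F(x)=\Pr(v^{(1)}\le x\mid v^{(2)}=a)=\frac{nF(x)-(n-1)F(a)}{n-(n-1)F(a)}$ and integrates $h$ against it, whereas you split $\{v^{(2)}=a\}$ into ``all equal $a$'' and ``exactly one exceeds $a$'' and use symmetry to reduce to the marginal expectation $\bE[h(v_1)\1\{v_1>a\}]$. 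Your route is slightly more direct since it bypasses the conditional-distribution bookkeeping; the paper's route has the advantage of making the conditional law explicit, which it reuses later in the saddle analysis for $\mathcal M_{\textnormal{std}}$.
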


\begin{proof}[Proof of Proposition~\ref{prop:reg-std}]

The $\lambda$-regret expression, pointwise at $\mathbf{v}$, is 
\begin{align*}
    \lambda v^{(1)} - \left( v^{(1)} \Phi(v^{(1)}) - \int_{t=v^{(2)}}^{t=v^{(1)}} \Phi(t) dt \right) \1(v^{(2)} > a) - \left( v^{(1)} \cdot 1 - \int_{t=v^{(2)}}^{t=v^{(1)}} 1 dt \right) \1(v^{(2)} = a) \\
    = \lambda v^{(1)} - \left( v^{(1)} \Phi(v^{(1)}) - \int_{t=v^{(2)}}^{t=v^{(1)}} \Phi(t) dt \right) \1(v^{(2)} > a) - \left( v^{(2)}  \right) \1(v^{(2)} = a) \, .
\end{align*}

Writing $\1(v^{(2)} > a) = 1 - \1(v^{(2)} = a)$, the regret expression becomes
\begin{align*}
\left(  \lambda v^{(1)} - v^{(1)} \Phi(v^{(1)}) + \int_{t=v^{(2)}}^{t=v^{(1)}} \Phi(t) dt \right) + \left(  v^{(1)} \Phi(v^{(1)}) - \int_{t=v^{(2)}}^{t=v^{(1)}} \Phi(t) dt - v^{(2)} \right) \1(v^{(2)} = a) \\
= \left(  \lambda v^{(1)} - v^{(1)} \Phi(v^{(1)}) + \int_{t=v^{(2)}}^{t=v^{(1)}} \Phi(t) dt \right) + \left(  v^{(1)} \Phi(v^{(1)}) - \int_{t=a}^{t=v^{(1)}} \Phi(t) dt - a \right) \1(v^{(2)} = a) \, .
\end{align*}

We will now calculate the distribution of $v^{(1)} | v^{(2)} = a$. 

We note that, for any $x \in [a,b]$,
\begin{align*}
    \Pr(v^{(1)} > x,v^{(2)} = a) = \Pr(\text{ $n-1$ $v$'s are $a$, one is $> x$}) = n F(a)^{n-1} F((x,b])) = n F(a)^{n-1} (1-F(x)) \, ,
\end{align*}
and
\begin{align*}
    \Pr(v^{(2)} = a) = \Pr(\text{ exactly $n-1$ are $a$ }) + \Pr(\text{ exactly $n$ are $a$ }) \\
    = n F(a)^{n-1} (1-F(a)) + F(a)^n = F(a)^{n-1} (n-(n-1)F(a)) \, .
\end{align*}

If $\Pr(v^{(2)} = a) > 0$ (that is, if $F$ has an atom at $a$), then dividing the two equations gives
\begin{align*}
    \Pr(v^{(1)} > x| v^{(2)} = a) = \frac{n(1-F(x))}{n-(n-1)F(a)} \quad \text{ for } x \in [a,b] \, .
\end{align*}
This gives us the CDF of $v^{(1)}|v^{(2)} = a$ as
\begin{align*}
\tilde{F}(x) := \Pr(v^{(1)} \leq x | v^{(2)} = a) = 1 -  \frac{n(1-F(x))}{n-(n-1)F(a)} = \frac{n F(x) - (n-1) F(a)}{n-(n-1)F(a)} \quad \text{ for } x \in [a,b] \, .
\end{align*}
(If $\Pr(v^{(2)} = a) = 0$, then the expression we want to evaluate is zero and we won't need this conditional distribution anyway.)

We can now calculate the expected regret. The first term is exactly the expected regret expression from SPA:
\begin{align*}
    &\bE_{\mathbf{v} \sim F} \left[ \lambda v^{(1)} - \left( v^{(1)} \Phi(v^{(1)}) - \int_{t=v^{(2)}}^{t=v^{(1)}} \Phi(t) dt \right) \right] \\
    &= a(\lambda-\Phi(a)) + \int_{v=a}^{v=b} (\lambda-\Phi(v)-v \Phi'(v)) (1-F(v)^n) + \Phi(v) (n F(v)^{n-1} - n F(v)^n) dv \, .
\end{align*}

The second term can be written as
\begin{align*}
    &\bE_{\mathbf{v} \sim F}  \left[     \left(  v^{(1)} \Phi(v^{(1)}) - \int_{t=v^{(2)}}^{t=v^{(1)}} \Phi(t) dt - a \right) \1(v^{(2)} = a) \right] \\
    &= \Pr(v^{(2)} = a) \bE \left[ \left(  v^{(1)} \Phi(v^{(1)}) - \int_{t=a}^{t=v^{(1)}} \Phi(t) dt - a \right) \Bigg| v^{(2)} = a \right] \\
    &= F(a)^{n-1} (n-(n-1)F(a))  \left[ -a + \int_{v \in [a,b]} \left( v \Phi(v) - \int_{t=a}^{t=v} \Phi(t) dt  \right) d\tilde{F}(v) \right] \, .
\end{align*}

We now use the following integration-by-part-like statements.
\begin{align*}
    \int_{w \in [a,b]} h(w) dG(w) &= h(b)  - \int_{v=a}^{v=b} h'(v)G(v) dv \, .
\end{align*}

Therefore,
\begin{align*}
&\int_{v \in [a,b]} v \Phi(v) d\tilde{F}(v) \\
&= b \Phi(b) - \int_{v=a}^{v=b} (\Phi(v) + v \Phi'(v))\tilde{F}(v) dv \\
&= b - \int_{v=a}^{v=b} (\Phi(v) + v\Phi'(v)) \frac{n F(v) - (n-1) F(a)}{n-(n-1)F(a)} dv \, .
\end{align*}

We also have
\begin{align*}
&\int_{v \in [a,b]} \int_{t=a}^{t=v} \Phi(t) dt d\tilde{F}(v) \\
&= \int_{t=a}^{t=b} \int_{v \in (t,b]} \Phi(t) d\tilde{F}(v) dt \\
&= \int_{t=a}^{t=b} \Phi(t) (1-\tilde{F}(t)) dt \\
&= \int_{v=a}^{v=b} \Phi(v) \frac{n(1-F(v))}{n-(n-1)F(a)} dv \, .
\end{align*}

Therefore, the second term is
\begin{align*}
& F(a)^{n-1} (n-(n-1)F(a)) \left[ -a + b - \int_{v=a}^{v=b} (\Phi(v) + v\Phi'(v)) \frac{nF(v)-(n-1)F(a)}{n-(n-1)F(a)} - \int_{v=a}^{v=b} \Phi(v) \frac{n(1-F(v))}{n-(n-1)F(a)} dv \right] \\
&= F(a)^{n-1} (n-(n-1)F(a)) \left[ b-a - \int_{v=a}^{v=b} \left( \Phi(v) + v \Phi'(v) \frac{nF(v)-(n-1)F(a)}{n-(n-1)F(a)} \right) dv \right] \\
&= F(a)^{n-1} \left[ (b-a)(n-(n-1)F(a)) - \int_{v=a}^{v=b} \left( (n-(n-1)F(a)) \Phi(v) + v(nF(v)-(n-1)F(a)) \Phi'(v)  \right) dv \right] \, .
\end{align*}

We therefore have the regret expression
\begin{align*} 
    &= a(\lambda-\Phi(a)) + \int_{v=a}^{v=b} (\lambda-\Phi(v)-v \Phi'(v)) (1-F(v)^n) + \Phi(v) (n F(v)^{n-1} - n F(v)^n) dv \\
    &\quad + F(a)^{n-1} (b-a)(n-(n-1)F(a))\\
    &\quad - F(a)^{n-1} \int_{v=a}^{v=b} \left( (n-(n-1)F(a)) \Phi(v) + v(nF(v)-(n-1)F(a)) \Phi'(v)  \right) dv \, .
\end{align*}
Note that the extra term (in the second line) is linear in $F$. 
\end{proof}

We are now ready to state and prove the main theorem.

\begin{theorem}\label{app:thm:char-std-mech}
Fix $n$ and $\lambda \in (0,1]$ and let $k = a/b \in [0,1)$. Define $k_l$ as in Theorem~\ref{thm:char-std-mech}. The minimax $\lambda$-regret problem $R_{\lambda}(\mathcal{M}_{\textnormal{std}},\mathcal{F}_{\textnormal{iid}})$ admits the following saddle point $(m^*,F^*)$, depending on $a/b$, as follows. 
\begin{itemize}
\item For $a/b \leq k_l$, the optimal mechanism $m^*$ and worst-case distribution $F^*$ are the same as those of Theorem~\ref{thm:char-all-mech-full-main}. 
\item For $a/b \geq k_l$, the optimal mechanism $m^*$ is $\textnormal{GenSPA}(\Phi^*)$ with
\begin{align*}
 \Phi^*(v) =  \frac{ \int_{t=a}^{t=v} \frac{\lambda \left( 1 - c/t \right)^{n-1}}{t} dt	   }{  \left( 1 - c/v \right)^{n-1} - \left( 1 - c/a \right)^{n-1}  } \quad \text{ for } v \in [a,b] \, ,
\end{align*}
where $c \in [0,a]$ is a unique constant such that $\Phi^*(c) = 1$. The worst case distribution $F^*$ is an isorevenue distribution defined by $F^*(v) = 1-c/v$ for $v \in [a,b)$ and $F^*(b) = 1$.
\end{itemize}
\end{theorem}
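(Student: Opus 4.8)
The plan is to establish the saddle-point inequalities $R_\lambda(m^*,F) \le R_\lambda(m^*,F^*) \le R_\lambda(m,F^*)$ for $m \in \mathcal{M}_{\textnormal{std}}$ and $F$ i.i.d. with marginal on $[a,b]$, exactly mirroring the architecture used for Theorem~\ref{thm:char-all-mech-full-main} but now using the generous-SPA regret formula from Proposition~\ref{prop:reg-std} in place of the $(g_u,g_d)$ formula. For $a/b \le k_l$ there is nothing new: the optimal mechanism in $\mathcal{M}_{\textnormal{all}}$ given by Theorem~\ref{thm:char-all-mech-full-main} is $\textnormal{SPA}(\Phi^*)$, which is standard, so it is a fortiori optimal in the smaller class $\mathcal{M}_{\textnormal{std}}$; the worst-case distribution and performance carry over verbatim. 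So the content is the regime $a/b \ge k_l$, where I claim the optimal standard mechanism is $\textnormal{GenSPA}(\Phi^*)$ against the isorevenue distribution $F^*(v) = 1 - c/v$.

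The first step is to pin down the candidate $(\Phi^*, F^*, c)$ by necessary conditions. Looking at the regret expression in Proposition~\ref{prop:reg-std}, I would note that it is linear in $\Phi$ (the seller's variable) once $F$ is fixed, and for fixed $\Phi$ the dependence on $F$ is through a polynomial in $F(v)$ plus terms involving $F(a)$. For Nature's side: fixing $\textnormal{GenSPA}(\Phi^*)$ and optimizing pointwise over $F(v)$ in the interior, the first-order condition picks out the isorevenue form $F^*(v) = 1-c/v$ (zero virtual value), which is why the $\lambda F(v)^n$-type terms dominate; one checks the coefficient sign so that this is a genuine pointwise maximum, just as in Proposition~\ref{prop:nature-saddle-foc-soc}. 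For the Seller's side: fixing $F^*$, the coefficient of $\Phi(v)$ in the regret must vanish wherever $\Phi^*$ is interior, which after substituting $F^*(v) = 1-c/v$ gives a first-order linear ODE for $\Phi^*$; integrating with the boundary condition forced by the structure of $\textnormal{GenSPA}$ yields
\[
\Phi^*(v) = \frac{\int_{t=a}^{t=v} \frac{\lambda (1-c/t)^{n-1}}{t}\,dt}{(1-c/v)^{n-1} - (1-c/a)^{n-1}}\,,
\]
and the constant $c \in [0,a]$ is then determined by the feasibility requirement that $\Phi^*$ is a valid CDF; the natural normalization (here written as $\Phi^*(c) = 1$, i.e. matching the structural boundary of the generous region) gives a single equation in $c$ whose unique solvability in $[0,a]$ I would verify by a monotonicity argument in $c$ together with checking the two endpoint values, analogous to the $k_l$/$k_h$ existence arguments in the appendix.

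The second, and harder, step is the verification that this candidate is genuinely a saddle point \emph{within $\mathcal{M}_{\textnormal{std}}$}, with no regularity assumptions on $F$. Seller's saddle requires showing $\textnormal{GenSPA}(\Phi^*)$ minimizes $R_\lambda(m, F^*)$ over \emph{all} standard mechanisms, not just over generous SPAs — this is the main obstacle, because a general standard mechanism is parameterized by an allocation $x(\mathbf{v})$ that may depend on the full profile (not merely on $v^{(1)}, v^{(2)}$), and one must argue that against the isorevenue $F^*$ no such richer dependence helps. The route is to use Myerson's payment formula to write the seller's revenue for an arbitrary standard mechanism, recognize that $F^*$ has zero virtual value on $[a,b)$ with a point mass at $b$ of virtual value $b$, so that revenue equals (essentially) the probability-weighted expected ``ironed'' value; then the envelope/virtual-value calculation shows that revenue is maximized by always allocating when possible — but standardness plus the constraint $\sum_i x_i \le 1$ limits how much one can allocate when there are ties or all-low profiles, and the generous rule is precisely the one that allocates maximally subject to monotonicity of the interim allocation. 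I would make this precise by reducing, via standardness, to the interim allocation $x^{(1)}(\cdot)$ of the highest bidder conditioned on the second-highest, handling ties at $a$ explicitly (this is where the ``conditional distribution of order statistics with atoms'' computations from Proposition~\ref{prop:reg-std} are reused), and then invoking the optimality conditions of \cite{monteiro-svaiter-optimal-auction-general-dist} as in the $\mathcal{M}_{\textnormal{all}}$ proof. Nature's saddle is comparatively routine: plug $\textnormal{GenSPA}(\Phi^*)$ into Proposition~\ref{prop:reg-std}, observe the integrand is (after the ODE substitution) a polynomial in $F(v)$ of the form $\alpha(v) F(v)^{n-1} - \beta(v) F(v)^n$ plus boundary terms in $F(a)$, verify FOC holds at $F^*(v) = 1-c/v$ and the leading coefficient is nonnegative so the pointwise maximum is attained there, and separately check that the $F(a)$-dependent boundary block is maximized at $F^*(a) = 1-c/a$ — the latter being the genuinely new bookkeeping relative to the $\mathcal{M}_{\textnormal{all}}$ argument, since $\textnormal{GenSPA}$ introduces a dependence on the atom at $a$ that plain SPA does not. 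Finally I would compute the resulting minimax $\lambda$-regret in closed form by substituting $F^*$ and $\Phi^*$ back into the regret expression, which reduces to a one-dimensional integral in $t$ after the change of variables $v = a + (b-a)t$ or $v \mapsto c/v$, completing the proof of Theorem~\ref{app:thm:char-std-mech}.
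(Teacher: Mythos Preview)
Your architecture --- derive the candidate from saddle first-order conditions using Proposition~\ref{prop:reg-std}, then verify both inequalities --- is the paper's approach. Two points need correction.

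First, you have the two first-order conditions interchanged. The regret of $\textnormal{GenSPA}(\Phi)$ is linear in $(\Phi,\Phi')$, so the Seller's condition ``coefficient of $\Phi(v)$ vanishes wherever $\Phi^*$ is interior'' is an equation on $F^*$ alone and cannot produce an ODE for $\Phi^*$. In the paper it is \emph{Nature's} FOC --- $\partial(\text{integrand})/\partial F(v)=0$ --- that, after substituting the isorevenue ansatz $F^*(v)=1-c/v$ (so that $(n-1)F^{*\,n-2}(1-F^*) = v(F^{*\,n-1})'$), collapses to the exact ODE
\[
\frac{d}{dv}\Bigl[\bigl(F^*(v)^{n-1}-F^*(a)^{n-1}\bigr)\Phi(v)\Bigr] \;=\; \frac{\lambda F^*(v)^{n-1}}{v}\,,
\]
which integrates directly to the displayed $\Phi^*$. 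If you follow your plan literally, the first step cannot isolate $F^*=1-c/v$ without already knowing $\Phi^*$, and the second step leaves no $\Phi$ to solve for.

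Second, you misplace the difficulty. The Seller's saddle is not the obstacle: since $F^*$ has zero virtual value on $(a,b)$ with point masses at both endpoints, the paper dispatches it in one sentence by citing \cite{monteiro-svaiter-optimal-auction-general-dist} --- $\textnormal{GenSPA}(\Phi^*)$ is Bayesian-optimal against this $F^*$ even over $\mathcal{M}_{\textnormal{all}}$, hence a fortiori over $\mathcal{M}_{\textnormal{std}}$, with no interim-allocation reduction or tie analysis needed. The substantive work is all on Nature's side, and it is not the FOC/SOC check you sketch (the extra term from Proposition~\ref{prop:reg-std} is linear in $F(v)$ and couples to $F(a)$, so the clean $\alpha F^{n-1}-\beta F^n$ argument needs the additional remark that adding a linear term preserves unimodality); rather, the two nontrivial verifications are (i) that $\Phi^*$ is monotone, proved via the change of variable $u=1-c/t$ and an explicit derivative bound, and (ii) that the normalization $\Phi^*(b)=1$ --- the statement's ``$\Phi^*(c)=1$'' is a typo, not a structural boundary --- has a unique solution $c\in(0,a]$ precisely when $a/b\ge k_l$, by showing the defining ratio is monotone in $c$ and comparing its values at $c\downarrow 0$ and $c=a$.
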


\paragraph{Remark.} Note that the worst case distribution has two point masses at $v = a$ and $v = b$ (of size $1-c/a$ and $c/b$ respectively), whereas the worst case distributions of $\mathcal{M}_{\textnormal{all}}$ in Theorem~\ref{thm:char-all-mech-full-main} each only has one point mass at $v = b$. Also, the distributions $\Phi$ and $\Psi$ of the reserve of $\textnormal{SPA}$ and threshold of $\textnormal{POOL}$ in Theorem~\ref{thm:char-all-mech-full-main} do not have any point masses, whereas the distribution $\Phi$ of $\textnormal{GenSPA}$ in Theorem~\ref{app:thm:char-std-mech} has a point mass at $v = a$ of size $\Phi^*(a) = \lim_{v \downarrow a} \Phi^*(v) = \lambda(a/c-1)/(n-1)$.

\begin{proof}[Proof of Theorem~\ref{app:thm:char-std-mech}]
Seller's saddle is that, fixing $F^*$, the \jaedit{given} mechanism $m^*$ gives the lowest regret over all standard mechanisms. Because $F^*$ is fixed, this is equivalent to that $m^*$ maximizes expected revenue under $F^*$. This is a standard Bayesian mechanism design problem, and we check with \cite{MonteiroSvaiter10} that $m^*$ is indeed optimal, even over all DSIC mechanisms. Henceforth, we will focus on Nature's Saddle.

Note that if $a/b \leq k_l$, then Theorem~\ref{thm:char-all-mech-full-main} immediately tells us that the same $\textnormal{SPA}(\Phi^*)$ is optimal over $\mathcal{M}_{\textnormal{all}}$, and thus over $\mathcal{M}_{\textnormal{std}}$ also. Henceforth we assume $a/b \geq k_l$.

We have the regret expression from Proposition~\ref{prop:reg-std}:

\begin{align*} 
    & a(\lambda-\Phi(a)) + \int_{v=a}^{v=b} (\lambda-\Phi(v)-v \Phi'(v)) (1-F(v)^n) + \Phi(v) (n F(v)^{n-1} - n F(v)^n) dv \\
    &\quad + F(a)^{n-1} (b-a)(n-(n-1)F(a)) \\
    &\quad- F(a)^{n-1} \int_{v=a}^{v=b} \left( (n-(n-1)F(a)) \Phi(v) + v(nF(v)-(n-1)F(a)) \Phi'(v)  \right) dv \, .
\end{align*}

The additional term is linear in $F$, so if pointwise optimization gives a global maximum then, it will also give a global maximum now. The first-order condition gives
\begin{align*}
(\lambda - \Phi(v) - v \Phi'(v)) (-nF(v)^{n-1}) + \Phi(v) ( n(n-1)F(v)^{n-2} - n^2 F(v)^{n-1}) - nv F(a)^{n-1} \Phi'(v) = 0 \, ,
\end{align*}
or
\begin{align*}
v(F(v)^{n-1} - F(a)^{n-1}) \Phi'(v) + (n-1) F(v)^{n-2} (1-F(v)) \Phi(v) = \lambda F(v)^{n-1} \, .
\end{align*}

We want $\Phi$ such that $F(v) = 1-c/v$ is a solution to that equation. With $F(v) = 1-c/v$ we have $F'(v) = \frac{c}{v^2} = \frac{1-F(v)}{v}$. So
\begin{align*}
(F(v)^{n-1} - F(a)^{n-1}) \Phi'(v) + (n-1) F(v)^{n-2} F'(v) \Phi(v) = \frac{\lambda F(v)^{n-1}}{v} \, ,
\end{align*}
or
\begin{align*}
 \frac{d}{dv} \left[(F(v)^{n-1} - F(a)^{n-1}) \Phi(v)  \right]= \frac{\lambda F(v)^{n-1}}{v} \, .
\end{align*}

Because $(F(v)^{n-1} - F(a)^{n-1}) \Phi(v)$ is zero when $v = a$, we have
\begin{align*}
 \Phi(v) = \frac{ \int_{t=a}^{t=v} \frac{\lambda F(t)^{n-1}}{t} dt	   }{  F(v)^{n-1} - F(a)^{n-1}  } =  \frac{ \int_{t=a}^{t=v} \frac{\lambda \left( 1 - \frac{c}{t}\right)^{n-1}}{t} dt	   }{  \left( 1 - \frac{c}{v} \right)^{n-1} - \left( 1 - \frac{c}{a} \right)^{n-1}  } \, .
\end{align*}

We will now show that, for any $c \in (0,a]$, $\Phi(v)$ is increasing in $v$.

Let $\tilde{v} = 1-c/v$, $\tilde{a} = 1-c/a$, and use the substitution $u = 1-c/t$, $t = \frac{c}{1-u}$, $dt = \frac{c}{(1-u)^2} du$ to get
\begin{align*}
    \frac{1}{\lambda} \Phi(v) = \frac{ \int_{u=\tilde{a}}^{u=\tilde{v}} \frac{u^{n-1}}{1-u} du }{\tilde{v}^{n-1} - \tilde{a}^{n-1}} \, .
\end{align*}

Showing that $\Phi(v)$ is increasing in $v$ is equivalent to showing that the right hand side is increasing in $\tilde{v} = 1-c/v$. From $v \geq a$ we have $\tilde{v} \geq \tilde{a}$. The derivative with respect to $\tilde{v}$ of the right hand side is
\begin{align*}
    \frac{(\tilde{v}^{n-1}-\tilde{a}^{n-1}) \frac{\tilde{v}^{n-1}  }{1-\tilde{v}} - \left( \int_{u=\tilde{a}}^{u=\tilde{v}} \frac{u^{n-1}}{1-u} du \right) (n-1)\tilde{v}^{n-2}  }{  (\tilde{v}^{n-1} - \tilde{a}^{n-1})^2  } \, .
\end{align*}

This expression is nonnegative if and only if
\begin{align*}
    \int_{u=\tilde{a}}^{u=\tilde{v}} \frac{u^{n-1}}{1-u} du \leq \frac{\tilde{v}^n - \tilde{a}^{n-1} \tilde{v}}{(n-1)(1-\tilde{v})} \, .
\end{align*}

It is clear that this is true if it holds for $\tilde{v} = a$ and the derivative of LHS is $\leq$ the derivative of RHS. For $\tilde{v} = a$, both sides are zero, so the inequality holds. The condition that the derivative of LHS is $\leq$ the derivative of RHS is
\begin{align*}
    \frac{\tilde{v}^{n-1}}{1-\tilde{v}} \leq \frac{1}{n-1} \frac{ (1-\tilde{v})(n\tilde{v}^{n-1}-\tilde{a}^{n-1}) - (\tilde{v}^{n} - \tilde{a}^{n-1} \tilde{v})(-1) }{(1-\tilde{v})^2} \, .
\end{align*}
This is equivalent to 
\begin{align*}
    (n-1) \tilde{v}^{n-1} - (n-1) \tilde{v}^n \leq n \tilde{v}^{n-1} - \tilde{a}^{n-1} - n \tilde{v}^n + \tilde{a}^{n-1} \tilde{v} + \tilde{v}^n - \tilde{a}^{n-1} \tilde{v} \, ,
\end{align*}
or
\begin{align*}
    \tilde{a}^{n-1} \leq \tilde{v}^{n-1} \, ,
\end{align*}
which is true because $\tilde{v} \geq \tilde{a}$.

Lastly, we will show that if $a/b \geq k_l$, then there is a $c \in (0,a]$ such that $\Phi(b) = 1$, making this a valid solution. 
From
\begin{align*}
 \Phi(b) =   \frac{ \int_{t=a}^{t=b} \frac{\lambda \left( 1 - \frac{c}{t}\right)^{n-1}}{t} dt	   }{  \left( 1 - \frac{c}{b} \right)^{n-1} - \left( 1 - \frac{c}{a} \right)^{n-1}  } \, ,
\end{align*}
consider the right hand side as a function of $c$. The numerator is clearly a decreasing function of $c$, because $\frac{\lambda \left( 1 - \frac{c}{t}\right)^{n-1}}{t} $, for each fixed $t$, is a decreasing function of $c$. The denominator is an increasing function of $c$ because 
\begin{align*}
 \left( 1 - \frac{c}{b} \right)^{n-1} - \left( 1 - \frac{c}{a} \right)^{n-1} = c \left( \frac{1}{a} - \frac{1}{b} \right) \left( \sum_{k=0}^{n-1} \left( 1 - \frac{c}{b} \right)^{k} \left( 1 - \frac{c}{a} \right)^{n-1-k} \right) \, ,
\end{align*}
and each of $c$, $1-\frac{c}{b}$, $1-\frac{c}{a}$ are positive and increasing in $c$, and $\frac{1}{a} -\frac{1}{b} > 0$.

Therefore, the right hand side is decreasing in $c$. As $c \downarrow 0$, the numerator converges to $\int_{t=a}^{t=b} \frac{\lambda }{t} dt = \lambda \log(b/a) > 0$, and the denominator converges to 0, so the expression converges to $+\infty$. At $c = a$, the expression is
\begin{align*}
\frac{ \int_{t=a}^{t=b} \frac{\lambda \left( 1 - \frac{a}{t}\right)^{n-1}}{t} dt	   }{  \left( 1 - \frac{a}{b} \right)^{n-1}  } \leq 1 \, ,
\end{align*}
where the $\leq 1$ holds because $a/b \geq k_l$. Therefore, there is a $c \in (0,a]$ such that $\Phi(b) = 1$. 

Lastly, we note that
\begin{align*}
    \Phi(a) = \lim_{v \downarrow a} \Phi(v) = \lambda \lim_{v \downarrow a} \frac{ \frac{\left( 1 - \frac{c}{v} \right)^{n-1}}{v} }{  (n-1) \left( 1 - \frac{c}{v} \right)^{n-2} \frac{c}{v^2} } = \lambda \frac{(a-c)}{(n-1)c} \, ,
\end{align*}
where the second equality holds by L'Hopital's rule. 
\end{proof}

\subsection{Proofs and Discussions from Section~\ref{subsec:main-spa-rand-mech}}\label{app:subsec:main-spa-rand-mech}

The main theorem in the main text (Theorem~\ref{thm:char-spa-rand-mech}) is an immediate corollary of the main theorem in this Appendix (Theorem~\ref{app:thm:char-spa-rand-mech}). We first outline key challenges of the proof before diving into the full proof of the main theorem.

\begin{proof}[Key Challenges of the Proof of The Main Theorem]
The proof of the moderate information regime is the hardest, and we outline key technical ideas here. Our candidate $\Phi^*$ has a point mass $\Phi_0$ at $a$ and a density on $[r^*,b]$, while the candidate worst-case distribution $F^*$ has a point mass $F_0$, an isorevenue density on $[r^*,b)$, and a point mass at $b$. 
The seller's saddle is to find $\Phi$ that minimizes the regret $R(\Phi) := R(\Phi,F^*)$ such that $\Phi(v) \in [0,1]$ is an increasing function. This is different from previous saddle problems because in this case, the increasing condition is \textit{binding}; if we optimize pointwise, the result is nonincreasing, which is infeasible. We write the Lagrangian $\mathcal{L}(\Phi,\mu) = R(\Phi) - \int_{a}^{b} \mu(v) d\Phi(v)$. Here, $\mu: [a,b] \to \mathbb{R}_{+}$ is the dual variable associated with the increasing constraint. Complementary slackness requires that $\mu^*$ is zero wherever $\Phi^*$ is strictly increasing; this immediately suggests the correct form of $\Phi^*$, and that $\mu^*(v) = 0$ on $[r^*,b]$. Lagrangian optimality requires that  $\Phi^*$ also minimizes $\mathcal{L}(\Phi,\mu^*)$, which is linear in $\Phi$ (as can be made explicit by integration by parts on the $d\Phi(v)$ term). We satisfy this by requiring that the coefficients of every $\Phi(v)$ term to be zero. These, together with complementary slackness, pin down $\mu^*$. Nature's saddle is significantly simpler because the condition that $F$ is increasing does not bind here, so pointwise optimization (as before) works.  The full proof is given in Appendix~\ref{app:subsec:main-spa-rand-mech}.
\end{proof}

Before we proceed to the proof, we derive the regret expression for $\textnormal{SPA}(\Phi)$, which is just $(g_u,g_d)$ with $g_u = \Phi$ and $g_d \equiv 0$.

\begin{proposition}\label{prop:reg-spa-rand}
Let the mechanism be a second-price auction with random reserve CDF $\Phi$ and distribution $\mathbf{F}$ with regret $R(\Phi,\mathbf{F})$.

Suppose $\Phi: [a,b] \to [0,1]$ is absolutely continuous, while $\mathbf{F}$ is arbitrary then we have the (Regret-$\mathbf{F}$) expression
\begin{align*}
R(\Phi,\mathbf{F}) &= a(\lambda-\Phi(a)) + \int_{v \in [a,b]} (\lambda-\Phi(v)-v \Phi'(v)) (1-F^{(1)}(v)) dv \\&\quad
+\int_{v  \in [a,b]} \Phi(v) (F^{(2)}(v) - F^{(1)}(v)) dv \, .
\end{align*}
If we further assume that $\mathbf{F}$ is $n$ i.i.d. then
\begin{align*}
R(\Phi,\mathbf{F}) &= a(\lambda-\Phi(a)) + \int_{v \in [a,b]} (\lambda-\Phi(v)-v \Phi'(v)) (1-F(v)^n) dv \\&\quad
+\int_{v  \in [a,b]} \Phi(v) (n F(v)^{n-1} - n F(v)^n) dv \, .
\end{align*}

Suppose instead that $\mathbf{F}$ is i.i.d. with marginal $F$ that has a density in $(a,b)$, and we denote by $F(\{b\})$ and $\mathbf{F}^{(1)}(b)$ the mass at $b$. Then we have the (Regret-$\Phi$) expression
\begin{align*}
R(\Phi,F) &= \lambda b - a \Phi(a) F(a)^n - (1-(1-f_b)^n) b \Phi(b) + \\&\quad \int_{v=a}^{v=b} - \lambda F(v)^n + \Phi(v) n F(v)^{n-1} (1-F(v) - vF'(v)) dv \, .
\end{align*}

\end{proposition}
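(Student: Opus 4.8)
The entire statement is a direct specialization of Proposition~\ref{prop:reg-exp-g}, so the plan is to feed in the appropriate $(g_u,g_d)$. First I would observe that $\spa(\Phi)$ is exactly the $(g_u,g_d)$ mechanism with $g_u=\Phi$ and $g_d\equiv 0$: the two have the same allocation rule (give the whole item to a highest bidder with probability $\Phi(v_{\max})$, ties broken uniformly, nothing to non-highest bidders) and both fix payments by Myerson's envelope formula, hence they coincide; this is part (1) of Proposition~\ref{prop:gu-gd-rep}. Consequently every regret identity for $(g_u,g_d)$ mechanisms collapses to one for $\spa(\Phi)$ once we drop all $g_d$ and $g_d'$ terms and use $g_d(a)=g_d(b)=0$.

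For the first (Regret-$\mathbf{F}$) expression I would start from the (Regret-$\mathbf{F}$) formula of Proposition~\ref{prop:reg-exp-g}, substitute $g_u=\Phi$, $g_d\equiv 0$, obtaining
\[
a(\lambda-\Phi(a)) + \int_{a}^{b}\bigl(\lambda-\Phi(v)-v\Phi'(v)\bigr)\,dv + \int_{a}^{b}\bigl(-\lambda+v\Phi'(v)\bigr)\mathbf{F}_n^{(1)}(v)\,dv + \int_{a}^{b}\Phi(v)\,\mathbf{F}_n^{(2)}(v)\,dv,
\]
and then regroup by adding and subtracting $\int_{a}^{b}\bigl(\lambda-\Phi(v)-v\Phi'(v)\bigr)\mathbf{F}_n^{(1)}(v)\,dv$, which converts the three integrals into $\int_{a}^{b}\bigl(\lambda-\Phi(v)-v\Phi'(v)\bigr)\bigl(1-\mathbf{F}_n^{(1)}(v)\bigr)\,dv+\int_{a}^{b}\Phi(v)\bigl(\mathbf{F}_n^{(2)}(v)-\mathbf{F}_n^{(1)}(v)\bigr)\,dv$, i.e.\ the asserted form. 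The one point worth a remark is a regularity mismatch: that formula of Proposition~\ref{prop:reg-exp-g} is stated for $g_u$ continuous and differentiable off a finite set, whereas here $\Phi$ is only assumed absolutely continuous; but its derivation uses differentiability solely through the integration-by-parts Lemma~\ref{lem:int-by-part}, whose hypothesis is met by an absolutely continuous CDF via the fundamental theorem of calculus, so the computation carries over verbatim.

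For the i.i.d.\ case I would substitute the standard order-statistic CDFs $\mathbf{F}_n^{(1)}(v)=F(v)^n$ and $\mathbf{F}_n^{(2)}(v)=nF(v)^{n-1}-(n-1)F(v)^n$, so that $\mathbf{F}_n^{(2)}(v)-\mathbf{F}_n^{(1)}(v)=n\bigl(F(v)^{n-1}-F(v)^n\bigr)$, which yields the second display directly. Finally, for the (Regret-$\Phi$) expression, which must hold for \emph{arbitrary} $\Phi$ (possibly with an atom at $a$), I would instead start from the (Regret-$\mathbf{g}$) formula of Proposition~\ref{prop:reg-exp-g}, valid for arbitrary $g_u$ when $F$ has a density on $(a,b)$ and atoms only at $a$ and $b$; setting $g_u=\Phi$, $g_d\equiv 0$ kills the final integral, the $(b-a)g_d(b)$ term, and the $g_d$ parts of the boundary terms, leaving exactly $\lambda b - a\Phi(a)F(a)^n - \bigl(1-(1-f_b)^n\bigr)b\Phi(b) + \int_{a}^{b}\bigl(-\lambda F(v)^n+\Phi(v)\,nF(v)^{n-1}(1-F(v)-vF'(v))\bigr)\,dv$, as claimed. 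I expect no real obstacle here: the argument is bookkeeping on top of Proposition~\ref{prop:reg-exp-g}, and the only mildly delicate points are the regularity-hypothesis matching above and keeping the regrouping of the integrals straight.
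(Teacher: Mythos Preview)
Your proposal is correct and matches the paper's own proof, which simply states that the expressions follow immediately by taking $g_u(v)=\Phi(v)$ and $g_d(v)=0$ in Proposition~\ref{prop:reg-exp-g}. Your write-up is in fact more detailed than the paper's one-line justification, spelling out the regrouping of the integrals and flagging the mild regularity mismatch on $\Phi$, but the underlying approach is identical.
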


\begin{proof}[Proof of Proposition~\ref{prop:reg-spa-rand}]
These expressions follow immediately by taking $g_u(v) = \Phi(v)$ and $g_d(v) = 0$.
\end{proof}

We are now ready to state and prove the main theorem.

\begin{theorem}[Optimal SPA with Random Reserve]\label{app:thm:char-spa-rand-mech}
Fix $n$ and $\lambda \in (0,1]$ and let $k = a/b \in [0,1)$. Define $k_l$ and $k_h'$ as in Theorem~\ref{thm:char-spa-rand-mech}. The minimax $\lambda$-regret problem $R_{\lambda}(\mathcal{M}_{\textnormal{SPA-rand}},\mathcal{F}_{\textnormal{iid}})$ admits a  saddle point $(m^*=\textnormal{SPA}(\Phi^*),F^*)$ which is characterized as follows.  
\begin{itemize}
\item For $a/b \leq k_l$, an optimal mechanism $m^*$ and worst-case distribution $F^*$ are the same as those identified in Theorem~\ref{thm:char-all-mech-full-main} and its proof.  
\item For $a/b \geq k_h'$, an optimal mechanism is a SPA with no reserve and the worst-case distribution $F^*$ is a two-point distribution with point masses at $v = a$ and $v=b$ with weights $f_a := \frac{n-1}{n-1+\lambda}$ and $f_b := \frac{\lambda}{n-1+\lambda}$. 
\item For $k_l \leq a/b \leq k_h'$, let $F_0 \in [0,1)$ be a unique solution to
\begin{align*}
&\frac{ \lambda F_0^n}{(n-1)(1-F_0)} - \left( 1 - \frac{n-(n-1)F_0}{n} \tilde{a} \right)^{n-1} \\&\quad  = \lambda \left[ \log\left( \frac{n-(n-1)F_0}{n(1-F_0)} \tilde{a} \right) - \sum_{k=1}^{n-1} \frac{1}{k} F_0^k + \sum_{k=1}^{n-1} \frac{1}{k} \left( 1 - \frac{n-(n-1)F_0}{n} \tilde{a} \right)^{k} \right] ,
\end{align*}
and
\begin{align*}
r^* &= \frac{n-(n-1)F_0}{n(1-F_0)}a,\\
\Phi_0 &= \frac{\lambda F_0}{(n-1)(1-F_0)},\\
c &= \frac{n-(n-1)F_0}{n} a,\\
d &= \frac{F_0^{n}}{(n-1)(1-F_0)} + \log(1-F_0) + \sum_{k=1}^{n-1} \frac{1}{k} F_0^k \, .
\end{align*}
Then, an optimal mechanism (optimal reserve distribution) is 
\begin{align*}
\Phi^*(v) = \begin{cases}
\Phi_0 &\text{ for } v \in [a,r^*] \\
\lambda \left( \frac{v}{v-c} \right)^{n-1} \left[ d + \log\left( \frac{v}{c} \right) - \sum_{k=1}^{n-1} \frac{1}{k} \left( \frac{v-c}{v} \right)^{k} \right] &\text{ for } v \in [r^*,b] \, ,
\end{cases} 
\end{align*}
and a worst-case distribution is given by $F^*(v) = F_0$ for $v \in [a,r^*]$, $F^*(v) = 1-c/v$ for $v \in [r^*,b)$ and $F^*(b) = 1$.
\end{itemize}
\end{theorem}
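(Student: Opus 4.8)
\textbf{Proof proposal for Theorem~\ref{app:thm:char-spa-rand-mech}.}

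The plan is to establish the three cases by exhibiting, in each regime, a saddle point $(m^*,F^*) = (\textnormal{SPA}(\Phi^*),F^*)$ and verifying the two saddle inequalities $R_\lambda(\textnormal{SPA}(\Phi^*),\mathbf{F}) \le R_\lambda(\textnormal{SPA}(\Phi^*),\mathbf{F}^*)$ (Nature's saddle) and $R_\lambda(\textnormal{SPA}(\Phi^*),\mathbf{F}^*) \le R_\lambda(\textnormal{SPA}(\Phi),\mathbf{F}^*)$ (Seller's saddle over $\mathcal{M}_{\textnormal{SPA-rand}}$), using the closed-form regret expressions from Proposition~\ref{prop:reg-spa-rand}. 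The case $a/b \le k_l$ is free: Theorem~\ref{thm:char-all-mech-full-main} already gives that $\textnormal{SPA}(\Phi^*)$ is optimal over \emph{all} DSIC mechanisms, hence a fortiori over the smaller class $\mathcal{M}_{\textnormal{SPA-rand}}$, and the worst-case distribution carries over verbatim. So the work is concentrated in the high and moderate information regimes.

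For the high information regime $a/b \ge k_h'$, I would first guess that $F^*$ is the two-point distribution with masses $f_a = (n-1)/(n-1+\lambda)$ at $a$ and $f_b = \lambda/(n-1+\lambda)$ at $b$, which makes the candidate mechanism $\textnormal{SPA}(a)$ (no reserve, i.e.\ $\Phi^* = \mathbf{1}[v \ge a]$). For Nature's saddle, plug $\Phi \equiv 1$ into the (Regret-$\mathbf{F}$) expression of Proposition~\ref{prop:reg-spa-rand} for i.i.d.\ $\mathbf{F}$ with marginal $F$: since $\Phi - v\Phi' = 1$ on $(a,b]$ (with the point mass at $a$ handled separately), the regret becomes a function of $F(a)^n$ and $1-F(b^-)^n$ only, and I would optimize this over the marginal $F$ subject to $F$ nondecreasing, showing the two-point distribution above is the maximizer and that the choice of $k_h'$ is exactly the threshold at which the optimal reserve for this $F^*$ drops to $a$. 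For Seller's saddle, against this $F^*$ (supported on $\{a,b\}$) any $\textnormal{SPA}(r)$ with $r \in (a,b]$ yields weakly less revenue than $\textnormal{SPA}(a)$ by a direct comparison — placing any reserve above $a$ only risks losing the sale when $v^{(1)}=a$ while gaining nothing when $v^{(1)}=b$; a convexity/pointwise argument on $\Phi$ (the (Regret-$\Phi$) expression is affine in $\Phi$, so the optimum is at an extreme point) finishes it.

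The moderate information regime $k_l \le a/b \le k_h'$ is the main obstacle, exactly as the ``Key Challenges'' paragraph flags: here the monotonicity constraint $\Phi$ nondecreasing is \emph{binding}, so naive pointwise optimization of the (Regret-$\Phi$) expression produces an infeasible (nonmonotone) $\Phi$. I would run a Lagrangian / primal-dual argument. Introduce a multiplier measure $\mu \ge 0$ for the constraint ``$d\Phi \ge 0$'' and form $\mathcal{L}(\Phi,\mu) = R_\lambda(\Phi,F^*) - \int_a^b \mu(v)\,d\Phi(v)$; integrate by parts on the $d\Phi$ term so $\mathcal{L}$ is affine in the values $\Phi(v)$. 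Posit $\Phi^*$ with a point mass $\Phi_0$ at $a$, a flat piece on $[a,r^*]$, and a density on $[r^*,b]$; complementary slackness forces $\mu^* \equiv 0$ on $[r^*,b]$ (where $\Phi^*$ is strictly increasing) and determines $\mu^*$ on $[a,r^*)$. Lagrangian optimality (the coefficient of each $\Phi(v)$ in the affine form must vanish on the support of the minimizer) gives an ODE for $\Phi^*$ on $(r^*,b)$ whose solution, together with the boundary conditions $\Phi^*(b)=1$ and continuity at $r^*$, is the stated formula; simultaneously one reads off the isorevenue form $F^*(v) = 1 - c/v$ on $[r^*,b)$ with the atoms $F_0$ at $a$ and $c/b$ at $b$. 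The transcendental equation defining $F_0$ (and hence $r^*,\Phi_0,c,d$) comes from imposing $\Phi^*(b)=1$; I would show this equation has a unique root $F_0 \in [0,1)$ by a monotonicity argument in $F_0$ (the left side increasing, right side decreasing, or vice versa, with the right endpoint inequalities pinned by $a/b = k_l$ and $a/b = k_h'$), and check $c \in [0,a]$, $\Phi_0 \in [0,1]$, and $r^* \in [a,b]$ in that range. For Nature's saddle I then verify that $F^*(v)$ globally maximizes the (Regret-$\mathbf{F}$) integrand pointwise in $F(v)$ — here monotonicity does \emph{not} bind, so the FOC plus a sign (unimodality) check, exactly as in Proposition~\ref{prop:nature-saddle-foc-soc}, suffices; for Seller's saddle I verify that $(\Phi^*,\mu^*)$ satisfies complementary slackness and Lagrangian minimality, which by weak duality gives optimality of $\Phi^*$ over all nondecreasing $\Phi$. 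The routine but lengthy parts are the explicit integrations (Lemma~\ref{lem:integral}-style sums) and the verification that the pieced-together $\Phi^*$ is genuinely nondecreasing and lands in $[0,1]$; the genuinely delicate part is getting the free boundary $r^*$ and the atom sizes $F_0,\Phi_0$ to be mutually consistent and to satisfy all the inequalities across the whole interval $[k_l,k_h']$.
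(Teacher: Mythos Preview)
Your low and moderate regime treatments are essentially correct and match the paper's approach: the Lagrangian dualization of the monotonicity constraint on $\Phi$, complementary slackness pinning $\mu^* \equiv 0$ on $[r^*,b]$, the ODE for $\Phi^*$ on $(r^*,b)$, and pointwise optimization for Nature's saddle are exactly what the paper does.

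The high information regime contains two concrete errors. For Nature's saddle, plugging $\Phi \equiv 1$ into (Regret-$\mathbf{F}$) does \emph{not} reduce the regret to a function of $F(a)^n$ and $1-F(b^-)^n$ only; with $\Phi\equiv 1$, $\Phi'=0$, one gets
\[
R_\lambda(\textnormal{SPA}(a),F) \;=\; -(1-\lambda)b \;+\; \int_a^b \bigl(nF(v)^{n-1} - (n-1+\lambda)F(v)^n\bigr)\,dv,
\]
which depends on all of $F$ on $(a,b)$. The right move (and the paper's) is to maximize the integrand pointwise in $z=F(v)$: the maximizer is $z=(n-1)/(n-1+\lambda)$, and the constant-$F$ solution is precisely the stated two-point distribution. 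This is where $f_a$ comes from; it has nothing to do with $k_h'$. For Seller's saddle, your claim that a reserve $r>a$ ``gains nothing when $v^{(1)}=b$'' is false: when $v^{(1)}=b$ and $v^{(2)}=a$, $\textnormal{SPA}(r)$ collects $r$ rather than $a$, so a positive reserve \emph{does} help there. The actual trade-off against the two-point $F^*$ is losing $a$ with probability $f_a^n$ versus gaining $r-a$ with probability $nf_a^{n-1}f_b$; no reserve is optimal iff the loss dominates even at $r=b$, i.e.\ $af_a \ge n(b-a)f_b$, which with the stated $f_a,f_b$ is exactly $a/b \ge k_h'$. So $k_h'$ is the Seller's-saddle threshold, not Nature's. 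The paper carries this out via the (Regret-$\Phi$) expression under $F^*$, bounding with $\Phi(b)\le 1$ and $\Phi(v)\ge \Phi(a)$ and checking that the resulting coefficient of $(1-\Phi(a))$ is nonnegative precisely when $a/b \ge k_h'$.
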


\begin{proof}[Proof of Theorem~\ref{app:thm:char-spa-rand-mech}]

We will prove the 3 cases separately.

\paragraph{Low Information ($a/b \leq k_l$) Regime.} We know from Theorem~\ref{thm:char-all-mech-full-main} that $\textnormal{SPA}(\Phi^*)$ is minimax optimal over $\mathcal{M}_{\textnormal{all}}$ and so it is also minimax optimal over $\mathcal{M}_{\textnormal{SPA-rand}}$.

\paragraph{High Information ($a/b \geq \frac{\lambda n}{(1+\lambda)n-1}$) Regime.}

We claim that the minimax optimal mechanism $\Phi^*$ in this regime is SPA without reserve and $F^*$ is a two-point distribution with mass $f_a := \frac{n-1}{n-1+\lambda}$ at $a$ and $f_b := \frac{\lambda}{n-1+\lambda}$ at $b$. Here, $R(\Phi^*,F^*) = -(1-\lambda)b + \left(\frac{n-1}{n-1+\lambda}\right)^{n-1} (b-a)$.

We want to show that $R(\Phi^*,F) \leq R(\Phi^*,F^*) \leq R(\Phi,F^*)$.

\underline{Part 1: $R(\Phi,F^*) \geq R(\Phi^*,F^*)$}

We have the regret expression $R(\Phi,F)$ from Proposition~\ref{prop:reg-spa-rand}:
\begin{align*}
R(\Phi,F) &= \lambda b - a \Phi(a) F(a)^n - (1-(1-f_b)^n) b \Phi(b) \\&\quad+ \int_{v=a}^{v=b} - \lambda F(v)^n + \Phi(v) n F(v)^{n-1} (1-F(v) - vF'(v)) dv \, .
\end{align*}

Under $F^*$, we have $F^*(v) = F(a) = f_a \frac{n-1}{n-1+\lambda}$ for all $v \in (a,b)$, which means every appearance of $F(v)$ becomes a constant:
\begin{align*}
R(\Phi,F^*) = \lambda b - a f_a^n \Phi(a) - (1-f_a^n) b \Phi(b) + \int_{v=a}^{v=b} -\lambda f_a^n + n f_a^{n-1} (1-f_a) \Phi(v) dv \, .
\end{align*}

We have $\Phi(b) \leq 1$ and $-\lambda f_a^n + n f_a^{n-1} (1-f_a) \Phi(v) \geq -\lambda f_a^n + n f_a^{n-1} (1-f_a) \Phi(a)$, so
\begin{align*}
R(\Phi,F^*) \geq \lambda b - a \Phi(a) f_a^n -(1-f_a^n) b + (b-a) (-\lambda f_a^n + nf_a^{n-1} (1-f_a) \Phi(a)) \\
= - (1-\lambda) b + (n f_a^{n-1} - (n-1+\lambda) f_a^n) (b-a) + (1-\Phi(a)) f_a^{n-1} ( a f_a - (b-a)(n-nf_a)) \, .
\end{align*}

The last term is $\geq 0$ because we require that $a f_a - (b-a)(n-nf_a) \geq 0 \Leftrightarrow \frac{a}{b} \geq \frac{n-n f_a}{n-(n-1) f_a} = \frac{\lambda n}{(1+\lambda)n-1}$. We now see that the choice $f_a = \frac{n-1}{n-1+\lambda}$ is chosen so the second term is maximized, and the bound on $\frac{a}{b}$ that is required to make the third term work follows accordingly.  Therefore,
\begin{align*}
R(\Phi,F^*) \geq  - (1-\lambda) b + (n f_a^{n-1} - (n-1+\lambda) f_a^n) (b-a)  = -(1-\lambda) b +  \left( \frac{n-1}{n-1+\lambda} \right)^{n-1} (b-a) \, .
\end{align*}

\underline{Part 2: $R(\Phi^*,F) \leq R(\Phi^*,F^*) $}

Because $\Phi^*$ is an SPA without reserve, we have $\Phi^*(v) = 1$ for every $v$. 
This gives
\begin{align*}
R(\Phi^*,F) = a(\lambda-1) + \int_{v=a}^{v=b} (\lambda - 1)(1-F(v)^n) + (nF(v)^{n-1} - n F(v)^n) dv \\
= -(1-\lambda) b + \int_{v=a}^{v=b} n F(v)^{n-1} -(n-1+\lambda) F(v)^n dv \\
\leq -(1-\lambda) b + (b-a) \sup_{z \in [0,1]} n z^{n-1} - (n-1+\lambda) z^n \\
= -(1-\lambda) b + \left( \frac{n-1}{n-1+\lambda} \right)^{n-1}(b-a) = R(\Phi^*,F^*) \, .
\end{align*}

\paragraph{Moderate Information ($k_l \leq a/b \leq \frac{\lambda n}{(1+\lambda)n-1}$) Regime.}

We will exhibit a saddle point with the following structure: $\Phi^*$ has a point mass $\Phi^*(a)$ at $a$, then it is flat on $[a,r^*]$ (that is, $\Phi^*(v) = \Phi^*(a)$ on $[a,r^*]$), then it has a density on $[r^*,b]$, but no point mass at $b$ (so $\Phi^*(b) = 1$). $F^*$ has a point mass $F^*(a)$ at $a$, then it is flat on $[a,r^*]$, then it is $F^*(v) = 1-c/v$ on $[r^*,b)$. That is, $F^*(v) = \max(1-c/v,F^*(a))$ on $v \in [a,b)$. Importantly, we assume that $F^*$ is continuous at $r^*$, so $1-c/r^* = F^*(a)$, and this $r^*$ is the same as $r^*$ of $\Phi^*$.

We use the regret expression (Regret-$\Phi$)  under $F^*$:
\begin{align*}
    &R_n(\mathbf{\Phi},\mathbf{F}) = \lambda b - F(a)^n a \Phi(a) - (1-(1-f_b)^n) b \Phi(b)\\&\quad + \int_{v=a}^{v=b} -\lambda F(v)^n + \Phi(v) n
    F(v)^{n-1} (1-F(v)-vF'(v)) dv \, .
\end{align*}

Note that $F^*(b^-) =  1-f_b = 1-c/b$. We will write $F_0 := F^*(a)$ for convenience. 

We can then write
\begin{align*}
    R_n(\Phi, \mathbf{F}^*) = \lambda b - F_0^n a \Phi(a) - (1 - (1-c/b)^n) b \Phi(b) + \int_{v=a}^{v=r^*} - \lambda F_0^n +  \Phi(v) n F_0^{n-1} (1-F_0) dv \\ + \int_{v=r^*}^{v=b} - \lambda \left( 1 - \frac{c}{v} \right)^n dv \, .
\end{align*}

Consider the problem
\begin{align*}
    R^* := \min_{\Phi} R_n(\Phi, \mathbf{F}^*) = \text{Regret}(\Phi) \text{ s.t. } \Phi(v) \in [0,1] \text{ non-decreasing } \, .
\end{align*}

We dualize the non-decreasing constraint. 

Let
\begin{align*}
    \mathcal{L}(\Phi,\mu) = \text{Regret}(\Phi) - \int_{a}^{b} \mu(v) d\Phi(v) \, ,
\end{align*}
and
\begin{align*}
    q(\mu) = \min_{\Phi(v) \in [0,1]} \mathcal{L}(\Phi,\mu) \text{ with } \mu(v) \geq 0 \, .
\end{align*}

Weak duality says that $R^* \geq q(\mu)$ for all $\mu: [a,b] \to \mathbb{R}_+$.

To get to strong duality, we want to choose a specific $\mu^*$ such that:
\begin{itemize}
    \item Complementary Slackness (CS): $\int_{a}^{b} \mu^*(v) d\Phi^*(v) = 0$. That is, wherever $\Phi^*$ is strictly increasing, $\mu^*$ is zero.
    \item Lagrangian Optimality (LO): $\Phi^* \in \arg\min_{\Phi \in \text{CDF} } \mathcal{L}(\Phi,\mu^*)$.
\end{itemize}

Because the condition that $\Phi(v)$ is non-decreasing doesn't bind (it is strictly increasing) on $[r^*,b]$, by complementary slackness $\mu^*(v) = 0$ on $[r^*,b]$. 

We can then use the integration by partss to get (using $\mu^*(r^*) = 0$)
\begin{align*}
    \int_{v=a}^{v=b} \mu^*(v) d\Phi(v) = \int_{v=a}^{v=r^*} \mu^*(v) d\Phi(v) = \mu^*(r^*) \Phi(r^*) - \mu^*(a) \Phi(a) - \int_{v=a}^{v=r^*} (\mu^*)'(v) \Phi(v) dv \\ 
    = - \mu^*(a) \Phi(a) - \int_{v=a}^{v=r^*} (\mu^*)'(v) \Phi(v)  dv \, .
\end{align*}

We substitute this into the expression for $\mathcal{L}(\Phi,\mu^*)$ to get

\begin{align*}
    \mathcal{L}(\Phi,\mu^*) = \lambda b   - \lambda (r^*-a) F_0^n + (\mu^*(a) - F_0^n a) \Phi(a) - \left(1 - \left(1-\frac{c}{b} \right)^n \right) b \Phi(b) - \lambda \int_{v=r^*}^{v=b} \left( 1 - \frac{c}{v} \right)^n dv \\
    + \int_{v=a}^{v=r^*} \left[ (\mu^*)'(v) + nF_0^{n-1} (1-F_0) \right] \Phi(v) dv \, .
\end{align*}

We want to choose $\mu^*$ such that
\begin{align*}
    \mu^*(a) - F_0^n a &= 0\\
    (\mu^*)'(v) + n F_0^{n-1} (1-F_0) &= 0 \, ,
\end{align*}
so that the above expression for $\mathcal{L}(\Phi,\mu^*)$ becomes independent of $\Phi$ (zero out the coefficient of $\Phi(a)$ and $\Phi(v)$ between $a$ and $r^*$; there still is $\Phi(b)$ but we will let this be 1). From
\begin{align*}
    \mu^*(r^*) = \mu^*(a) + \int_{v=a}^{v=r^*} (\mu^*)'(v) dv \\
    0 = a F_0^n - (r^*-a) n F_0^{n-1} (1-F_0) \\
    a F_0 = n (r^*-a)  (1-F_0) \, .
\end{align*}

With that $\mu^*$, we have
\begin{align*}
    \mathcal{L}(\Phi,\mu^*) = \lambda b   - \lambda (r^*-a) F_0^n - \left(1 - \left(1-\frac{c}{b} \right)^n \right) b \Phi(b) - \lambda \int_{v=r^*}^{v=b} \left( 1 - \frac{c}{v} \right)^n dv \, .
\end{align*}

 If we further assume that $\Phi(b) = 1$, then
\begin{align*}
    \mathcal{L}(\Phi,\mu^*) = -(1-\lambda) b   - \lambda (r^*-a) F_0^n + b  \left(1-\frac{c}{b} \right)^n  - \lambda \int_{v=r^*}^{v=b} \left( 1 - \frac{c}{v} \right)^n dv \, .
\end{align*} 

Now we derive conditions from the fact that $F$ maximizes regret given fixed $\Phi^*$, that is, the saddle $R(\Phi^*,F) \leq R(\Phi^*,F^*)$. We use the Regret-$F$ equation
\begin{align*}
    R(\Phi,F) = a(\lambda-\Phi(a)) +  \int_{v=a}^{v=b} \left[ (\lambda-\Phi(v)-v\Phi'(v)) (1-F(v)^n) + \Phi(v) n F(v)^{n-1}(1-F(v))\right] dv \, .
\end{align*}

We do pointwise optimization for each $v$. For $v \in (a,r^*)$. $\Phi^*(v) = \Phi^*(a)$ is a constant, so $F(v)$ that maximizes that is a constant, the same for every $v$, given by 
\begin{align*}
    F^*(v) \in \arg\max_{z} (\lambda- \Phi^*(a)) (1-z^n) + \Phi^*(a) n z^{n-1} (1-z) \, .
\end{align*}
Taking the derivative of $z$ gives
\begin{align*}
    -(\lambda-\Phi^*(a)) nz^{n-1} + n \Phi^*(a) ((n-1)z^{n-2}-nz^{n-1}) = 0 \\
    -(\lambda-\Phi^*(a)) z + \Phi^*(a) (n-1-nz) = 0 \, .
\end{align*}

Write $\Phi^*(a) = \Phi_0$ for convenience. By the first-order condition, $z = F^*(a) = F_0$ satisfies this equation, so
\begin{align*}
    -(\lambda-\Phi_0) F_0 + \Phi_0 (n-1-n F_0) = 0 \, .
\end{align*}

For $v \in (r^*,b)$, $\Phi^*(v)$ is no longer a constant (but this is the regime that we have dealt with before). We have
\begin{align*}
    F^*(v) \in \arg\max_{z}   (\lambda-\Phi(v)-v\Phi'(v)) (1-z^n) + \Phi(v) n z^{n-1}(1-z) \, .
\end{align*}
The first order condition gives
\begin{align*}
    -(\lambda-\Phi(v)-v\Phi'(v)) nz^{n-1} + \Phi(v) n ((n-1)z^{n-2} - nz^{n-1}) = 0 \\
    -(\lambda-\Phi(v)-v\Phi'(v)) z + \Phi(v)  ((n-1) - nz) = 0 \, .
\end{align*}

By the first order condition, $z = F^*(v) = 1-c/v$ satisfies this equation, so
\begin{align*}
    -(\lambda-(\Phi^*)(v)-v(\Phi^*)'(v)) \left( 1 - \frac{c}{v} \right) + \Phi(v)  \left((n-1) - n\left( 1 - \frac{c}{v} \right) \right) = 0 \, ,
\end{align*}
which simplifies to
\begin{align*}
    (\Phi^*)'(v) = \frac{\lambda}{v} - \frac{(n-1)c}{v(v-c)} \Phi^*(v) \, .
\end{align*}

We have seen this ODE before. The solution is
\begin{align*}
    \left( \frac{v-c}{v} \right)^{n-1} \Phi^*(v) = \lambda \left[ d + \log\left( \frac{v}{c} \right) - \sum_{k=1}^{n-1} \frac{1}{k} \left( \frac{v-c}{v} \right)^{k} \right] \, ,
\end{align*}
for some constant $d$.

$\Phi^*(r^*) = \Phi^*(a) = \Phi_0$ is the point mass of $\Phi^*$ at $a$, which is unknown. We get
\begin{align*}
    \left( \frac{r^*-c}{r^*} \right)^{n-1} \Phi_0 = \lambda \left[ d + \log \left( \frac{r^*}{c} \right) - \sum_{k=1}^{n-1} \frac{1}{k} \left( \frac{r^*-c}{r^*} \right)^{k} \right] \, .
\end{align*}

From $1-c/r^* = F_0 \geq 0$ we have $c \leq r^*$ with strict inequality if $F_0 > 0$. So the summation makes sense (and also tells us it doesn't necessarily go away as zero like before). With $\Phi^*(b) = 1$ we get
\begin{align*}
    \left( \frac{b-c}{b} \right)^{n-1} = \lambda \left[ d + \log \left( \frac{b}{c} \right) - \sum_{k=1}^{n-1} \frac{1}{k} \left( \frac{b-c}{b} \right)^{k} \right] 
    , .
\end{align*}

Therefore, we have 5 equations for 5 unknowns $\Phi_0, F_0, c, d, r^*$:
\begin{align}
    a F_0 &= n (r^*-a)  (1-F_0) \label{eq:spa-sys-1-lmbd}  \\
    -(\lambda-\Phi_0) F_0 + \Phi_0 (n-1-n F_0) &= 0 \label{eq:spa-sys-2-lmbd}\\
    \left( \frac{r^*-c}{r^*} \right)^{n-1} \Phi_0 &= \lambda \left[ d + \log \left( \frac{r^*}{c} \right) - \sum_{k=1}^{n-1} \frac{1}{k} \left( \frac{r^*-c}{r^*} \right)^{k} \right] \label{eq:spa-sys-3-lmbd} \\
    \left( \frac{b-c}{b} \right)^{n-1} &= \lambda \left[ d + \log \left( \frac{b}{c} \right) - \sum_{k=1}^{n-1} \frac{1}{k} \left( \frac{b-c}{b} \right)^{k} \right] \label{eq:spa-sys-4-lmbd} \\
    1 - \frac{c}{r^*} &= F_0 \label{eq:spa-sys-5-lmbd} \, .
\end{align}

For notational convenience, we will write $F^*(a)$ as $F_0$. 

We will write every variable in terms of $F_0$, so we have a single-variable equation we can solve. 

From (\ref{eq:spa-sys-1-lmbd}), we get
\begin{align}
r^* = a \left( \frac{F_0}{n(1-F_0)} + 1 \right) =  \frac{n-(n-1)F_0}{n(1-F_0)} a \label{eq:spa-sys-1p-lmbd} \, .
\end{align}

From (\ref{eq:spa-sys-2-lmbd}) we get
\begin{align}
\Phi_0 = \frac{ \lambda F_0}{(n-1)(1-F_0)} \label{eq:spa-sys-2p-lmbd} \, .
\end{align}

From (\ref{eq:spa-sys-5-lmbd}) we get
\begin{align}
c = r^*(1-F_0) = \frac{n-(n-1)F_0}{n} a \label{eq:spa-sys-5p-lmbd} \, .
\end{align}

Subtracting (\ref{eq:spa-sys-3-lmbd}) and (\ref{eq:spa-sys-4-lmbd}) gives
\begin{align*}
\left( \frac{r^*-c}{r^*} \right)^{n-1} \Phi_0 -  \left( \frac{b-c}{b} \right)^{n-1} = \lambda \left[ \log \left( \frac{r^*}{b} \right) - \sum_{k=1}^{n-1} \frac{1}{k} \left( \frac{r^*-c}{r^*} \right)^{k} + \sum_{k=1}^{n-1} \frac{1}{k} \left( \frac{b-c}{b} \right)^{k} \right] \, .
\end{align*}

Substituting the expressions of $r^*, \Phi_0, c$ in terms of $F_0$ from (\ref{eq:spa-sys-1p-lmbd}), (\ref{eq:spa-sys-2p-lmbd}), (\ref{eq:spa-sys-5p-lmbd}) gives (writing $\tilde{a} = a/b$)
\begin{align*}
\frac{ \lambda F_0^n}{(n-1)(1-F_0)} - \left( 1 - \frac{n-(n-1)F_0}{n} \tilde{a} \right)^{n-1}  \\ = \lambda \left[ \log\left( \frac{n-(n-1)F_0}{n(1-F_0)} \tilde{a} \right) - \sum_{k=1}^{n-1} \frac{1}{k} F_0^k + \sum_{k=1}^{n-1} \frac{1}{k} \left( 1 - \frac{n-(n-1)F_0}{n} \tilde{a} \right)^{k} \right] \, .
\end{align*}

Let $\text{fn}(F_0)$ be the left hand side minus the right hand side (taking $\tilde{a} = a/b$ as fixed):
\begin{align*}
\text{fn}(F_0) = \frac{ \lambda F_0^n}{(n-1)(1-F_0)} -  \left( 1 - \frac{n-(n-1)F_0}{n} \tilde{a} \right)^{n-1} - \lambda \log\left( \frac{n-(n-1)F_0}{n(1-F_0)} \tilde{a} \right) \\ + \lambda \sum_{k=1}^{n-1} \frac{1}{k} F_0^k - \lambda \sum_{k=1}^{n-1} \frac{1}{k} \left( 1 - \frac{n-(n-1)F_0}{n} \tilde{a} \right)^{k} \, .
\end{align*}

(When $\tilde{a}$ is not fixed, we will write the above expression instead as $L(F_0,\tilde{a})$, and we will use this notation later in the proof.)

We first note that, because $\tilde{a} \geq k_l$, by definition of $k_l$ we have
\begin{align*}
\text{fn}(0) = -(1-\tilde{a})^{n-1} - \lambda \log(\tilde{a}) - \lambda \sum_{k=1}^{n-1} \frac{1}{k} (1-\tilde{a})^k \leq 0 \, ,
\end{align*}
with equality only when $\tilde{a} = k_l$.

We also note that as $F_0 \uparrow 1$, LHS grows as $\frac{1}{1-F_0}$ whereas RHS grows as $\log\left( \frac{1}{1-F_0}\right)$, so $\lim_{F_0 \uparrow 1} \text{fn}(F_0) = +\infty$.

Lastly, we compute the derivative of $\text{fn}$ as 

\begin{align*}
\text{fn}'(F_0) &= \frac{\lambda F_0^{n-1} (n-(n-1)F_0)}{(n-1)(1-F_0)^2} - (n-1) \left( 1 - \frac{n-(n-1)F_0}{n} \tilde{a} \right)^{n-2} \cdot \frac{(n-1)}{n} \tilde{a}  \\ 
&+\lambda \left[  \frac{(n-1)}{n-(n-1)F_0} - \frac{1}{1-F_0} + \sum_{k=1}^{n-1} F_0^{k-1} - \sum_{k=1}^{n-1} \left( 1 - \frac{n-(n-1)F_0}{n} \tilde{a} \right)^{k-1} \cdot \frac{(n-1)}{n} \tilde{a}  \right] \\
&= \frac{\lambda F_0^{n-1} (n-(n-1)F_0)}{(n-1)(1-F_0)^2} - (n-1) \left( 1 - \frac{n-(n-1)F_0}{n} \tilde{a} \right)^{n-2} \cdot \frac{(n-1)}{n} \tilde{a} \\
&+ \lambda \left[ \frac{(n-1)}{n-(n-1)F_0} - \frac{1}{1-F_0}  + \frac{1-F_0^{n-1}}{1-F_0} - \frac{(n-1)}{n-(n-1)F_0} \left[ 1 - \left( 1 - \frac{n-(n-1)F_0}{n} \tilde{a} \right)^{n-1} \right] \right]  \\
&= \frac{ \lambda F_0^{n-1} (n-(n-1)F_0)}{(n-1)(1-F_0)^2} - \frac{(n-1)^2}{n} \tilde{a} \left( 1 - \frac{n-(n-1)F_0}{n} \tilde{a} \right)^{n-2} \\ 
&+ \lambda \left[  - \frac{F_0^{n-1}}{1-F_0} + \frac{(n-1)}{n-(n-1)F_0} \left( 1 - \frac{n-(n-1)F_0}{n} \tilde{a} \right)^{n-1} \right] \\
&= F_0^{n-1} \lambda \left( \frac{n-(n-1)F_0}{(n-1)(1-F_0)^2} - \frac{1}{1-F_0} \right)  \\
&+ \left( 1- \frac{n-(n-1)F_0}{n} \tilde{a} \right)^{n-2} \left[ - \frac{(n-1)^2}{n} \tilde{a} + \frac{\lambda(n-1)}{n-(n-1)F_0} \left( 1-  \frac{n-(n-1)F_0}{n} \tilde{a} \right) \right] \\
&=  \frac{\lambda F_0^{n-1}}{(n-1)(1-F_0)^2}   + \left( 1- \frac{n-(n-1)F_0}{n} \tilde{a} \right)^{n-2} (n-1) \left( \frac{\lambda}{n-(n-1)F_0} - \frac{n-1+\lambda}{n} \tilde{a} \right) \, .
\end{align*}

We can see from the expression that $\text{fn}'(F_0)$ is increasing in $F_0$, so either $\text{fn}$ is increasing for all $F_0$ in range (if $\text{fn}'(0) \geq 0$, or it is decreasing for $F_0 \leq F_0^*$ for some constant $F_0^*$ and increasing for $F_0 \geq F_0^*$. Because, for $a/b > k_l$, $\text{fn}(0) < 0$ and $\lim_{F_0 \uparrow 1} \text{fn}(F_0) = +\infty$, we conclude that the equation $\text{fn}(F_0) = 0$ has a unique solution $F_0 \in (0,1)$. 

We will also need to show that this unique solution $F_0$ leads to feasible values of other parameters as well. The ones that concern us are $\Phi^*(a)$ and $r^*$.

We must have
\begin{align*}
\Phi^*(a) = \frac{ \lambda F_0}{(n-1)(1-F_0)} \leq 1 \Leftrightarrow F_0 \leq \frac{n-1}{n-1+\lambda} \, ,
\end{align*}
and
\begin{align*}
r^* = \frac{n-(n-1)F_0}{n(1-F_0)} a \leq b \Leftrightarrow F_0 \leq \frac{n(1-\tilde{a})}{n-(n-1)\tilde{a}} \, .
\end{align*}

We will show that $F_0 \leq \frac{n-1}{n}$.

We note that $F_0(\tilde{a}=\frac{\lambda n}{(1+\lambda)n-1}) = \frac{n-1}{n-1+\lambda}$ because when we plug in $\tilde{a} = \frac{\lambda n}{(1+\lambda)n-1}, F_0 = \frac{n-1}{n-1+\lambda}$ in the $L(F_0,\tilde{a})$ expression we get zero. We claim that $L(F_0, \tilde{a}=\frac{\lambda n}{(1+\lambda)n-1})$ is increasing in $F_0$ for $\frac{n-1}{n-1+\lambda} \leq F_0 \leq 1$. If this is true then we are done, because if for some $\tilde{a} < \frac{\lambda n}{(1+\lambda)n-1}$ we have $F_0(\tilde{a}) > \frac{n-1}{n-1+\lambda}$, then $0 = L(F_0(\tilde{a}), \tilde{a}) > L(F_0(\tilde{a}), \frac{\lambda n}{(1+\lambda)n-1}) > L(\frac{n-1}{n-1+\lambda},\frac{\lambda n}{(1+\lambda)n-1}) = 0$, a contradiction.  

To show that $L(F_0, \frac{\lambda n}{(1+\lambda)n-1})$ is increasing in $F_0$ for $1 \geq F_0 \geq \frac{n-1}{n-1+\lambda}$, we calculate (taking the expression from the $\text{fn}'(F_0)$ earlier)
\begin{align*}
\frac{\partial L(F_0, \tilde{a} = \frac{\lambda n}{(1+\lambda)n-1})}{F_0} = \\
= \frac{ \lambda F_0^{n-1}}{(n-1)(1-F_0)^2} +  \left( 1 - \frac{\lambda (n-(n-1)F_0)}{(1+\lambda)n-1} \right)^{n-2}   \cdot (n-1) \lambda \left( \frac{1}{n-(n-1)F_0} - \frac{n-1+\lambda}{(1+\lambda)n-1} \right) \geq 0 \, .
\end{align*}
This is true because
\begin{align*}
\frac{1}{n-(n-1)F_0} \geq \frac{1}{n-(n-1) \frac{\lambda n}{(1+\lambda)n-1}} = \frac{n-1+\lambda}{(1+\lambda)n-1} \, ,
\end{align*}
and
\begin{align*}
1 - \frac{\lambda(n-(n-1)F_0)}{(1+\lambda)n-1}  > 0 \, ,
\end{align*}
so the derivative is positive, as desired. 
(Note also that the expression $\frac{\partial L(F_0, \tilde{a} = \frac{ \lambda n}{(1+\lambda) n-1})}{F_0}$ is increasing in $F_0$ so we can also just plug in $F_0 = \frac{n-1}{n-1+\lambda}$ in that expression and check that the resulting expression is $> 0$.

Lastly, we want to show that $F_0 \leq \frac{n(1-\tilde{a})}{n-(n-1)\tilde{a}}$. Note that $\tilde{a} < \frac{ \lambda n}{(1+\lambda)n-1}$ implies that $\frac{n-1}{n-1+\lambda} < \frac{n(1-\tilde{a})}{n-(n-1)\tilde{a}}$, so this inequality is immediately implied by $F_0 \leq \frac{n-1}{n-1+\lambda}$ which we just proved.

We conclude that for $k_l \leq \tilde{a} \leq \frac{ \lambda n}{(1+\lambda) n-1}$, these parameters give rise to a feasible mechanism that is minimax optimal, as desired.
\end{proof}

\begin{proposition}\label{app:prop:sup-regret-spa-r-lmbd}
The worst-case $\lambda$-regret of $\emph{SPA}(r)$, the second-price auction with fixed reserve $r$, is 
\begin{align*}
R_{\lambda}(\textnormal{SPA}(r),\mathcal{F}_{\textnormal{iid}}) =\begin{cases}
-(1-\lambda) b + (b-a) \left( \frac{n-1}{n-1+\lambda} \right)^{n-1}  &\text{ if } r = a \\
-(1-\lambda) b + \frac{(b-r)^n (n-1)^{n-1}}{((n-1+\lambda)(b-r)-r)^{n-1}} &\text{ if } a < r \leq \frac{\lambda}{1+\lambda} b \\
\lambda r &\text{ if } \frac{\lambda}{1+\lambda} b \leq r \leq b \, .
\end{cases}
\end{align*}

%

Therefore, 
\begin{align*}
\inf_{r \in [a,b]} R_{\lambda}(\textnormal{SPA}(r),\mathcal{F}_{\textnormal{iid}}) = \begin{cases}  - (1-\lambda) b  +  \left( \frac{n}{n+\lambda} \right)^{n} b  &\text{ if } \frac{a}{b} \leq 1 - \left( \frac{n}{n+\lambda}\right)^{n} \left( \frac{n-1+\lambda}{n-1} \right)^{n-1}   \\
- (1-\lambda) b  + 
\left( \frac{n-1}{n-1+\lambda} \right)^{n-1} (b-a) &\text{ if }   \frac{a}{b} \geq 1 - \left( \frac{n}{n+\lambda}\right)^{n} \left( \frac{n-1+\lambda}{n-1} \right)^{n-1} \, .
\end{cases}
\end{align*}


The optimal $r$  in the first case (``low'' $a/b$) is $r = \frac{\lambda}{n+\lambda} b$ and the optimal $r$ in the second case (``high'' $a/b$) is $r = a$.

\end{proposition}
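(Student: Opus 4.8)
The plan is to compute the worst-case $\lambda$-regret of $\textnormal{SPA}(r)$ directly from the regret expression for $(g_u,g_d)$ mechanisms with $g_u = \mathbf{1}(v \ge r)$ and $g_d \equiv 0$, specialized to i.i.d.\ distributions. From Proposition~\ref{prop:reg-exp-g}, the (Regret-$F$) form reduces here to a pointwise-in-$F(v)$ optimization: on $[a,r)$ the reserve is not met so the relevant integrand is linear in $F(v)$, while on $[r,b]$ it has the shape $n v F(v)^{n-1}F'(v)$ arising from the payment, which after integration by parts becomes an integral of $F(v)^n$ plus boundary terms. The key simplification is that once the mechanism is fixed as $\textnormal{SPA}(r)$, the regret against an i.i.d.\ $F$ depends only on the marginal through terms of the form $\alpha(v)F(v)^{n-1} - \beta(v)F(v)^n$ (plus the linear-in-$F$ pieces on $[a,r)$), so Nature's best response can be found by pointwise maximization over $F(v)\in[0,1]$ subject to monotonicity. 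The guess, matching the structure seen throughout the paper, is that the worst-case $F^*$ puts a mass at $a$ (or at $r$), follows an isorevenue arc $1-c/v$ on an interval, and has a mass at $b$; I would plug this family in and optimize the free constants.

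First I would handle the three regimes of $r$ separately. For $\frac{\lambda}{1+\lambda}b \le r \le b$: here the reserve is so high that Nature's best move is to concentrate all mass just below $r$ (so the item is never sold, revenue $0$) while $\max(\mathbf v)$ is essentially $r$; this yields regret $\lambda r$, and one checks no distribution does better because raising any value above $r$ costs more in payment extracted than it gains in benchmark when $r \ge \frac{\lambda}{1+\lambda}b$. For $a < r \le \frac{\lambda}{1+\lambda}b$: I expect the worst case to be a two-point-plus-arc distribution; substituting $F^*(v) = \max(1-c/v,\;F_0)$ with a mass at $b$ into the regret integral, carrying out the elementary integration of $F(v)^n$ and the boundary terms, and optimizing over the single free parameter gives the closed form $-(1-\lambda)b + \frac{(b-r)^n(n-1)^{n-1}}{((n-1+\lambda)(b-r)-r)^{n-1}}$. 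The case $r=a$ is the boundary $c$ collapsing so that $F^*$ is the pure two-point distribution with masses $\frac{n-1}{n-1+\lambda}$ at $a$ and $\frac{\lambda}{n-1+\lambda}$ at $b$; this is exactly the computation already carried out in the high-information regime of Theorem~\ref{app:thm:char-spa-rand-mech} (Part~1 and Part~2 there), so I can quote it, giving $-(1-\lambda)b + (b-a)\big(\tfrac{n-1}{n-1+\lambda}\big)^{n-1}$.

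Next I would minimize over $r\in[a,b]$. The third branch $\lambda r$ is increasing in $r$, so it is dominated by its left endpoint $r=\frac{\lambda}{1+\lambda}b$, which agrees with the second branch there; hence the minimum is attained on $[a,\frac{\lambda}{1+\lambda}b]$. On that interval I would differentiate $g(r) := -(1-\lambda)b + \frac{(b-r)^n(n-1)^{n-1}}{((n-1+\lambda)(b-r)-r)^{n-1}}$ with respect to $r$; the stationary point works out to $r = \frac{\lambda}{n+\lambda}b$, giving value $-(1-\lambda)b + \big(\tfrac{n}{n+\lambda}\big)^n b$. But this interior optimum is feasible only if $\frac{\lambda}{n+\lambda}b \ge a$, i.e.\ $a/b \le \frac{\lambda}{n+\lambda}$; more precisely one must compare the interior value with the boundary value at $r=a$ (the $r=a$ branch formula equals the $r>a$ branch formula in the limit $r\downarrow a$, so $g$ is continuous on $[a,\frac{\lambda}{1+\lambda}b]$), and the crossover happens exactly at $\frac{a}{b} = 1 - \big(\tfrac{n}{n+\lambda}\big)^n\big(\tfrac{n-1+\lambda}{n-1}\big)^{n-1}$. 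For $a/b$ below this threshold the unconstrained optimum $r=\frac{\lambda}{n+\lambda}b$ lies in the feasible range and is the minimizer; for $a/b$ above it, $g$ is increasing on the feasible range so $r=a$ is optimal, giving $-(1-\lambda)b + \big(\tfrac{n-1}{n-1+\lambda}\big)^{n-1}(b-a)$.

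The main obstacle I anticipate is not any single step but the bookkeeping in the middle regime: correctly identifying the worst-case $F^*$ (the mass locations and the arc), verifying that the pointwise maximizer is a genuine global maximizer of $\alpha(v)F(v)^{n-1}-\beta(v)F(v)^n$ via the unimodality argument already used in the paper (sign of the $F^{n-1}$ coefficient together with the FOC), and then doing the algebra that collapses the optimized integral to the stated rational expression. The monotonicity check on $F^*$ and the feasibility constraint $(n-1+\lambda)(b-r) - r > 0$ on $[a,\frac{\lambda}{1+\lambda}b]$ (which is where the denominator stays positive) also need care, but are routine once the form is fixed. Differentiating $g(r)$ and locating the threshold in $a/b$ is elementary calculus.
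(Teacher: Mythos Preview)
Your proposal has two genuine gaps.

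First, the guess for Nature's worst-case distribution in the middle regime $a<r\le \frac{\lambda}{1+\lambda}b$ is wrong. Starting from the regret expression for $\textnormal{SPA}(r)$ (Proposition~\ref{prop:reg-spa-det-a}),
\[
R_\lambda(\textnormal{SPA}(r),F)=-(1-\lambda)b+rF(r^-)^n-\lambda\int_a^r F(v)^n\,dv+\int_r^b\bigl(nF(v)^{n-1}-(n-1+\lambda)F(v)^n\bigr)\,dv,
\]
the integrand on $[r,b]$ is $nz^{n-1}-(n-1+\lambda)z^{n}$ with $z=F(v)$: it has \emph{no explicit $v$-dependence}. Hence the pointwise maximizer is the constant $z^*=\frac{n-1}{n-1+\lambda}$ (or $c:=F(r^-)$ if that is larger), not an isorevenue arc $1-c/v$. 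The isorevenue arc arises in the paper only when the coefficients depend on $v$ through $\Phi(v),\Phi'(v)$, i.e.\ in the \emph{random} reserve setting; for a deterministic $r$ it would fail the FOC. The paper's worst case is therefore a step function: mass $c$ placed just below $r$ (to make $F(r^-)=c$ while keeping $\int_a^r F^n$ negligible), $F$ constant on $(r,b)$, and a mass at $b$. Optimizing over the single scalar $c$ gives the rational expression directly; your arc ansatz would either not satisfy the FOC or would require a much longer detour.

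Second, your claim that the $r=a$ branch equals the $r\downarrow a$ limit of the $r>a$ branch is false whenever $a>0$. The paper shows
\[
(b-a)\Bigl(\tfrac{n-1}{n-1+\lambda}\Bigr)^{n-1}\;<\;\frac{(b-a)^n(n-1)^{n-1}}{\bigl((n-1+\lambda)(b-a)-a\bigr)^{n-1}}\qquad(a>0),
\]
so $R_\lambda(\textnormal{SPA}(a),\mathcal F_{\textnormal{iid}})<\lim_{r\downarrow a}R_\lambda(\textnormal{SPA}(r),\mathcal F_{\textnormal{iid}})$. The reason is structural: for $r>a$ Nature can place mass $c$ in $[a,r)$ and collect the $rc^n$ term, while for $r=a$ there is no room below the reserve and $c=0$ is forced. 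This discontinuity means the minimization over $r$ cannot be done by treating $g$ as continuous on $[a,\frac{\lambda}{1+\lambda}b]$; you must compare the interior stationary point $r=\frac{\lambda}{n+\lambda}b$ against the isolated value at $r=a$ separately, which is exactly how the threshold $\frac{a}{b}=1-\bigl(\tfrac{n}{n+\lambda}\bigr)^n\bigl(\tfrac{n-1+\lambda}{n-1}\bigr)^{n-1}$ emerges.
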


This result is valid for any $n \geq 1$, if for $n = 1$ we interpret any term with the $n-1$ exponent as $1$. In agreement with results from $\mathcal{M}_{\textnormal{SPA-rand}}$ and our intuition, when scale information is important ($a/b$ is high), the regret-minimizing reserve is no reserve $r = a$. 

Before we prove the main result (Proposition~\ref{app:prop:sup-regret-spa-r-lmbd}) characterizing the minimax $\lambda$-regret SPA with fixed reserve (including no reserve), we first derive the regret expression of $\textnormal{SPA}(r)$ for a fixed $r$.

\begin{proposition}\label{prop:reg-spa-det-a}
Fix $r \in [a,b]$. The regret of $\textnormal{SPA}(r)$ against a joint distribution $\mathbf{F}$ is
\begin{align*}
-(1-\lambda) b + r \mathbf{F}_n^{(1)}(r^-) - \lambda  \int_{v=a}^{v=r} \mathbf{F}_n^{(1)}(v) dv + \int_{v=r}^{v=b} ( \mathbf{F}_n^{(2)}(v) - \lambda \mathbf{F}_n^{(1)}(v) ) dv \, ,
\end{align*}
where $\mathbf{F}_n^{(1)}(r^-) = \Pr(v^{(1)} < v)$.
If we further assume that $\mathbf{F}$ is $n$ i.i.d. with marginal $F$, then the regret is
\begin{align*}
-(1-\lambda) b + r F(r^-)^{n} - \lambda  \int_{v=a}^{v=r} F(v)^{n} dv + \int_{v=r}^{v=b} ( n F(v)^{n-1} - (n-1+\lambda) F(v)^{n}  ) dv \, .
\end{align*}
\end{proposition}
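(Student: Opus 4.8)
\textbf{Proof proposal for Proposition~\ref{prop:reg-spa-det-a}.}

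The plan is to compute the expected $\lambda$-regret of $\textnormal{SPA}(r)$ directly from the pointwise payment formula, and then express the resulting expectation in terms of the distributions of the first and second order statistics. First I would recall that for $\textnormal{SPA}(r)$, at a valuation vector $\mathbf{v}$ with highest value $v^{(1)}$ and second-highest $v^{(2)}$, the payment collected by the seller is $0$ if $v^{(1)} < r$, is $r$ if $v^{(2)} < r \le v^{(1)}$, and is $v^{(2)}$ if $v^{(2)} \ge r$. These three cases can be written compactly as $\max(r, v^{(2)}) \cdot \mathbf{1}(v^{(1)} \ge r)$. Hence the pointwise $\lambda$-regret is $\lambda v^{(1)} - \max(r,v^{(2)}) \mathbf{1}(v^{(1)} \ge r)$. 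Taking expectations gives $R_\lambda(\textnormal{SPA}(r),\mathbf{F}) = \lambda\, \bE[v^{(1)}] - \bE[\max(r,v^{(2)}) \mathbf{1}(v^{(1)} \ge r)]$.

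Next I would evaluate each expectation using the layer-cake / integration-by-parts identity $\bE[Y] = a + \int_{a}^{b}(1-G_Y(t))\,dt$ for a random variable $Y$ supported on $[a,b]$ with CDF $G_Y$, specialized appropriately. For the first term, $\bE[v^{(1)}] = a + \int_{a}^{b}(1-\mathbf{F}_n^{(1)}(v))\,dv = b - \int_a^b \mathbf{F}_n^{(1)}(v)\,dv$, which contributes $\lambda b - \lambda\int_a^b \mathbf{F}_n^{(1)}(v)\,dv$; splitting the integral at $r$ accounts for the $-\lambda\int_a^r \mathbf{F}_n^{(1)}(v)\,dv$ term and part of the $[r,b]$ integral. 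For the second term, I would write $\max(r,v^{(2)})\mathbf{1}(v^{(1)}\ge r)$ as $r\,\mathbf{1}(v^{(1)}\ge r) + (v^{(2)}-r)^+ \mathbf{1}(v^{(1)} \ge r)$; since $(v^{(2)}-r)^+ > 0$ forces $v^{(2)} \ge r$ and hence $v^{(1)} \ge r$, the indicator is redundant on that piece, so this equals $r(1 - \mathbf{F}_n^{(1)}(r^-)) + \int_{r}^{b}(1 - \mathbf{F}_n^{(2)}(v))\,dv$, using the layer-cake formula for $(v^{(2)}-r)^+$. Combining, $\bE[\max(r,v^{(2)})\mathbf{1}(v^{(1)}\ge r)] = r - r\mathbf{F}_n^{(1)}(r^-) + (b-r) - \int_r^b \mathbf{F}_n^{(2)}(v)\,dv$. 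Subtracting this from $\lambda b - \lambda\int_a^b\mathbf{F}_n^{(1)}(v)\,dv$ and regrouping the constant terms as $-(1-\lambda)b$ yields exactly the claimed expression $-(1-\lambda)b + r\mathbf{F}_n^{(1)}(r^-) - \lambda\int_a^r \mathbf{F}_n^{(1)}(v)\,dv + \int_r^b(\mathbf{F}_n^{(2)}(v) - \lambda\mathbf{F}_n^{(1)}(v))\,dv$.

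Finally, for the i.i.d.\ specialization I would substitute the standard order-statistic CDFs for $n$ i.i.d.\ draws from marginal $F$: $\mathbf{F}_n^{(1)}(v) = F(v)^n$ (so $\mathbf{F}_n^{(1)}(r^-) = F(r^-)^n$) and $\mathbf{F}_n^{(2)}(v) = nF(v)^{n-1} - (n-1)F(v)^n$. Plugging these in, the integrand on $[r,b]$ becomes $nF(v)^{n-1} - (n-1)F(v)^n - \lambda F(v)^n = nF(v)^{n-1} - (n-1+\lambda)F(v)^n$, giving the stated formula. The main obstacle — though it is more of a bookkeeping subtlety than a deep difficulty — is handling the boundary/atom behavior carefully: one must be careful that $F$ may have atoms (in particular at $a$ and $b$, as the worst-case distributions do), so the layer-cake identities must be applied with the correct one-sided limits (hence the $\mathbf{F}_n^{(1)}(r^-)$ rather than $\mathbf{F}_n^{(1)}(r)$), and the decomposition $\max(r,v^{(2)}) = r + (v^{(2)}-r)^+$ together with the redundancy of the indicator must be justified even when $\Pr(v^{(2)} = r) > 0$. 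Once the measure-theoretic care is taken, the rest is routine algebra.
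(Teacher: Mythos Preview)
Your proposal is correct and follows essentially the same approach as the paper: start from the pointwise revenue $\max(r,v^{(2)})\mathbf{1}(v^{(1)}\ge r)$, evaluate the relevant expectations via layer-cake/tail-integral identities, and then substitute the i.i.d.\ order-statistic CDFs. The only cosmetic difference is the decomposition of the revenue term: the paper splits it as $v^{(2)}\mathbf{1}(v^{(2)}>r)+r\,\mathbf{1}(v^{(2)}\le r\le v^{(1)})$ and computes $\bE[v^{(2)}\mathbf{1}(v^{(2)}>r)]$ by a Fubini argument, whereas you write it as $r\,\mathbf{1}(v^{(1)}\ge r)+(v^{(2)}-r)^{+}$ and invoke the standard tail formula for the positive part; both routes are equivalent and equally short.
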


\begin{proof}
The regret is
\begin{align*}
    &  \lambda \bE[v^{(1)}] - \bE[\max(v^{(2)},r) \1(v^{(1)} \geq r) ] \\
    &= \lambda \bE[v^{(1)}] - \bE[v^{(2)} \1(v^{(2)} > r)] - \bE[r \1(v^{(2)} \leq r \leq v^{(1)})] \, .
\end{align*}

The first term is 
\begin{align*}
    \bE[v^{(1)}] = \int_{v \geq 0} \Pr(v^{(1)} > v) dv = b - \int_{v=a}^{v=b} \mathbf{F}_n^{(1)}(v) dv \, .
\end{align*}

The second term's calculation is analogous to that of Lemma~\ref{lem:int-by-part}. We have
\begin{align*}
    \bE[v^{(2)} \1(v^{(2)} > r) ] 
    &= \int_{v' \in (r,b]} v' d\mathbf{F}_n^{(2)}(v') 
    = \int_{v' \in (r,b]} \left( a + \int_{v \in [a,v')} dv \right) d\mathbf{F}_n^{(2)}(v')\\
    &= a (\mathbf{F}_n^{(2)}(b) - \mathbf{F}_n^{(2)}(r)) + \int_{v \in [a,r]} \int_{v' \in (r,b]} d\mathbf{F}_n^{(2)}(v')dv +  \int_{v \in (r,b]} \int_{v' \in (v,b]} d\mathbf{F}_n^{(2)}(v') dv \\
    &=a (1 - \mathbf{F}_n^{(2)}(r)) +  \int_{v \in [a,r]} (1 - \mathbf{F}_n^{(2)}(r) ) dv + \int_{v \in (r,b]} (1 - \mathbf{F}_n^{(2)}(v) ) dv \\
    &= b - r \mathbf{F}_n^{(2)}(r) - \int_{v=r}^{v=b} \mathbf{F}_n^{(2)}(v) dv \, .
\end{align*}

The third term is
\begin{align*}
    \bE[r \1(v^{(2)} &\leq r \leq v^{(1)})] = r \Pr(v^{(2)} \leq r \leq v^{(1)}) = r( \Pr(v^{(2)} \leq r) - \Pr(v^{(1)} < r)) \\&= r(\mathbf{F}_n^{(2)}(r) - \mathbf{F}_n^{(1)}(r^-)) \, .
\end{align*}
Together, we have, for $r \in [a,b]$,
\begin{align*}
R_{n}(\text{SPA}(r), \mathbf{F}_n) = -(1-\lambda) b + r \mathbf{F}_n^{(1)}(r^-) - \lambda  \int_{v=a}^{v=r} \mathbf{F}_n^{(1)}(v) dv + \int_{v=r}^{v=b} ( \mathbf{F}_n^{(2)}(v) - \lambda \mathbf{F}_n^{(1)}(v) ) dv \, ,
\end{align*}
as desired.

When $\mathbf{F}$ is $n$ i.i.d. $F$, we have $\mathbf{F}_{n}^{(1)}(r^-) = \Pr(\max(\mathbf{v}) < r) = \prod_{i=1}^{n} \Pr(v_i < r) = F(r^-)^{n}$ where the second-to-last equality uses the fact that the $v_i$'s are independent. We also have $\mathbf{F}_{n}^{(1)}(v) = F(v)^{n}$ and $\mathbf{F}_{n}^{(2)}(v) = n F(v)^{n-1} - (n-1) F(v)^{n}$.
\end{proof}

Now we are ready to prove the main result.

\begin{proof}[Proof of Proposition~\ref{app:prop:sup-regret-spa-r-lmbd}]

From Proposition~\ref{prop:reg-spa-det-a}, the $\lambda$-regret of $\textnormal{SPA}(r)$ is
\begin{align*}
    R_n(\textnormal{SPA}(r), F) = -(1-\lambda)b + r F_{-}(r)^n -  \lambda \int_{v=a}^{v=r} F(v)^n dv  + \int_{v=r}^{v=b} n F(v)^{n-1} - (n-1+\lambda) F(v)^n dv \, .
\end{align*}

We first assume that $n \geq 2$ and $r \in (a,b]$. Let $c = F_{-}(r)$. (Note here that we require $r > a$ in order for us to have the freedom to set the value of $c = \Pr(v < r)$, the mass \textit{strictly} below $r$. If $r = a$, i.e. there is no reserve, then $c = 0$ by definition. This is why we consider the case $r = a$, i.e. no reserve, separately.) Note that the integrand $n F(v)^{n-1} - (n-1+\lambda) F(v)^n$ is increasing for $F(v) \leq \frac{n-1}{n-1+\lambda}$ and is decreasing for $F(v) \geq \frac{n-1}{n-1+\lambda}$.  To minimize $\int_{v \in [a,r]^n} F(v)^n dv $ we must have $F(v) = 0$ for $v  \in [a,r-\epsilon]$ for arbitrarily small $\epsilon > 0$, and to maximize $\int_{v \in (r,b]} n F(v)^{n-1} - n F(v)^n dv$, the only constraint we have is $F(v) \geq c$ so for $v \in (r,b]$ we set $F(v) = \frac{n-1}{n-1+\lambda}$ if $c \leq \frac{n-1}{n-1+\lambda}$ and $F(v) = c$ otherwise. Note that the sup over first case of $c \leq \frac{n-1}{n-1+\lambda}$ is simply the second case with $c = \frac{n-1}{n-1+\lambda}$. Because we take the sup over $F$, we can let $\epsilon \downarrow 0$ and get that the worst-case regret is 
\begin{align*}
- (1-\lambda) b + \sup_{c \in [\frac{n-1}{n-1+\lambda},1] } rc^n +  (b-r)( nc^{n-1} - (n-1+\lambda) c^n) \, .
\end{align*}
Now, the derivative of this expression of $c$ is $nc^{n-2} \left[ rc+(b-r)(n-1-(n-1+\lambda)c) \right]$. The expression in $\left[ \cdots \right]$ is linear in $c$. At $c = \frac{n-1}{n-1+\lambda}$, the derivative expression is $nr \left( \frac{n-1}{n-1+\lambda} \right)^{n-1} \geq 0$. At $c = 1$, the expression is $n ( (1+\lambda)r - \lambda b  )$. So if $r \geq   \frac{\lambda}{1+\lambda}b$, the first derivative is always $\geq 0$, so the maximum is achieved at $c = 1$ and the value is $\lambda r$. If $r \leq \frac{\lambda}{1+\lambda}b$, the maximum is achieved at $c^* = \frac{(n-1)(b-r)}{(n-1+\lambda)(b-r)-r} \in \left[ \frac{n-1}{n-1+\lambda}, 1 \right]$ and the value is $-(1-\lambda) b + \frac{(b-r)^n (n-1)^{n-1}}{((n-1+\lambda)(b-r)-r)^{n-1}}$.

Now we consider the case $r = a$. In this case, by definition $c = 0$ and we have
\begin{align*}
R_n(\text{SPA}(a), F) = -(1-\lambda) b + \int_{v \in (a,b]} n F(v)^{n-1} - (n-1+\lambda) F(v)^n dv \, ,
\end{align*}
so
\begin{align*}
R_n(\text{SPA}(a), \mathcal{F}) = -(1-\lambda) b + (b-a) \sup_{z \in [0,1]} nz^{n-1} - (n-1+\lambda) z^n \, .
\end{align*}

The maximum occurs at $z = \frac{n-1}{n-1+\lambda}$ which gives
\begin{align*}
R_n(\text{SPA}(a), \mathcal{F})  = -(1-\lambda) b + (b-a) \left( \frac{n-1}{n-1+\lambda} \right)^{n-1} \, .
\end{align*}

Now we deal with the case $n=1$. The regret expression reduces to
\begin{align*}
    R_{1}(\text{SPA}(r), F) =  -(1-\lambda)b + r F_{-}(r) -  \lambda \int_{v \in [a,r]} F(v) dv  + \int_{v \in (r,b]} ( 1 - \lambda F(v) ) dv \, .
\end{align*}

For $r > a$, we have
\begin{align*}
R_{1}(\text{SPA}(r), \mathcal{F}) &= -(1-\lambda) b + \sup_{c \in [0,1]} rc + (b-r)(1-\lambda c) \\&= \max(\lambda b - r, \lambda r) = \begin{cases}
\lambda b - r &\text{ if } a < r \leq \frac{\lambda}{1+\lambda} b \\
\lambda r &\text{ if } r \geq \frac{\lambda}{1+\lambda} b \, ,
\end{cases}
\end{align*}
because the expression under sup is linear in $c$ so it achieves the extrema at one of the end points, either at $c = 0$ or $c = 1$. 

For $r = a$ we have
\begin{align*}
R_{1}(\text{SPA}(a), F) =  -(1-\lambda)b   + \int_{v \in (a,b]} ( 1 - \lambda F(v) ) dv \, .
\end{align*}
This is maximized when $F(v) = 0$ for all $v \in (a,b]$ and we get
\begin{align*}
R_1(\text{SPA}(a),\mathcal{F}) = -(1-\lambda) b + (b-a) = \lambda b - a \, .
\end{align*}

We therefore have
\begin{align*}
R_1(\text{SPA}(r), \mathcal{F}) = \begin{cases}
\lambda b - a &\text{ if } r = a \\
\lambda b - r &\text{ if } a < r \leq \frac{\lambda}{1+\lambda} b \\
\lambda r &\text{ if } r \geq \frac{\lambda}{1+\lambda} b \, .
\end{cases}
\end{align*}

Note that the second regime and the first regime are continuous whenever the second regime is applicable, but we will keep them separate for clarity (because the first regime $r=a$ is always applicable, whereas the second regime $r \in (a, \frac{\lambda}{1+\lambda} b]$ is applicable only when $\frac{a}{b} < \frac{\lambda}{1+\lambda}$.

Now we want to choose the optimal $r$ to minimize the worst-case regret. First consider the case $n \geq 2$. We note that
\begin{align*}
(b-a) \left( \frac{n-1}{n-1+\lambda} \right)^{n-1} \leq \frac{(b-a)^n (n-1)^{n-1}}{( (n-1+\lambda)(b-a)-a )^{n-1}} \, ,
\end{align*}
with equality if and only if $a = 0$. Therefore, if $\frac{a}{b} < \frac{\lambda}{1+\lambda}$, that is, the regime $r \in (a, \frac{\lambda}{1+\lambda} b ]$ is permissible, then the worst-case regret under $r = a$ is lower than under $r = a^+$, slightly above $a$. In contrast, the worst-case regret is continuous at $r = \frac{\lambda}{1+\lambda} b$. Given that the regret in the third regime $\lambda r$ is linear in $r$, the worst $r$ (lowest regret) occurs at $r = \frac{\lambda}{1+\lambda}b$ with regret $\frac{\lambda^2}{1+\lambda} b$.


First consider the case $\frac{a}{b} < \frac{\lambda}{1+\lambda}$, so all 3 regimes of $r$ are permissible. 
 
In the $r \in [\frac{\lambda}{1+\lambda}b, b]$ regime, the regret is $\lambda r$, so the lowest regret occurs at $r = \frac{\lambda}{1+\lambda} b$ and has value $\frac{\lambda^2}{1+\lambda} b$.
 
In the $r \in (a, \frac{\lambda}{1+\lambda} b]$ regime, the regret is
\begin{align*}
-(1-\lambda) b + (n-1)^{n-1} \exp \left\{ n \log(b-r)- (n-1) \log ( (n-1+\lambda)b - (n+\lambda) r ) \right\} \, .
\end{align*}
The derivative of the expression in $\left\{ \cdots \right\}$ is
\begin{align*}
-\frac{n}{b-r} + \frac{(n-1)(n+\lambda)}{(n-1+\lambda) b -(n+\lambda) r} = \frac{(n+\lambda) r - \lambda b}{(b-r) ((n-1+\lambda) b -(n+\lambda) r )} \, .
\end{align*}

Therefore, in this second regime, the worst-case regret is decreasing for $\frac{a}{b} \leq \frac{\lambda}{n+\lambda}$ and increasing for $\frac{a}{b} \geq \frac{\lambda}{n+\lambda}$. So if $\frac{a}{b} \leq \frac{\lambda}{n+\lambda}$, the $r$ that minimizes worst-case regret is $r = \frac{\lambda}{n+\lambda} b$, which gives the regret
\begin{align*}
-(1-\lambda) b + \left( \frac{n}{n+\lambda} \right)^{n} b \, .
\end{align*}
Therefore, the overall worst-case regret, including $r = a$ also, has regret
\begin{align*}
\min\left(    -(1-\lambda) b + \left( \frac{n}{n+\lambda} \right)^{n} b, -(1-\lambda) b + \left( \frac{n-1}{n-1+\lambda} \right)^{n-1} (b-a) \right) \, ,
\end{align*}
corresponding to $r = \frac{\lambda}{n+\lambda} b$ and $r = a$ respectively. 

We can show that 
\begin{align*}
\left( \frac{n-1}{n-1+\lambda} \right)^{n-1} \geq  \left( \frac{n}{n+\lambda} \right)^{n}  \, .
\end{align*}
So, for $\frac{a}{b} \leq 1 - \left( \frac{n}{n+\lambda} \right)^{n} \left( \frac{n-1+\lambda}{n-1} \right)^{n-1} $, $r = \frac{\lambda}{n+\lambda} b$ gives the lowest worst-case regret, and for $\frac{a}{b} \geq 1 - \left( \frac{n}{n+\lambda} \right)^{n} \left( \frac{n-1+\lambda}{n-1} \right)^{n-1} $, $r =a$ gives the lowest worst-case regret. We can show that
\begin{align*}
0 \leq 1 - \left( \frac{n}{n+\lambda} \right)^{n} \left( \frac{n-1+\lambda}{n-1} \right)^{n-1} \leq \frac{\lambda}{n+\lambda} \, ,
\end{align*}
so this threshold is always interior. 

For $ \frac{\lambda}{n+\lambda} \leq \frac{a}{b} \leq \frac{\lambda}{1+\lambda}$, the worst-case regret is increasing in $r$ for the second regime, so the worst-case in the second regime is when $r = a^+$, but we already know that the worst-case regret is lower under $r = a$ than under $r = a^+$, so the best $r$ is $r = a$ with regret 
\begin{align*}
-(1-\lambda) b + \left( \frac{n-1}{n-1+\lambda} \right)^{n-1} (b-a) \, .
\end{align*}

For $\frac{a}{b} \geq \frac{\lambda}{1+\lambda}$, the second regime is not possible, and the third regime's worst case is again $r = a^+$ which has highest regret than $r =a $, so again the best $r$ is $r = a$ with regret 
\begin{align*}
-(1-\lambda) b + \left( \frac{n-1}{n-1+\lambda} \right)^{n-1} (b-a) \, ,
\end{align*}
so the $r$ that minimizes worst-case regret is $r = \frac{\lambda}{n+\lambda} b$, which gives the regret
\begin{align*}
-(1-\lambda) b + \left( \frac{n}{n+\lambda} \right)^n b \, .
\end{align*}


If $\frac{a}{b} < \frac{\lambda}{n+\lambda}$, then the worst-case regret is minimized at $r = \frac{\lambda}{n+\lambda }b$ and the worst-case regret value is
\begin{align*}
-(1-\lambda) b + \frac{n^n}{(n+\lambda)^n} b \, .
\end{align*}

If $\frac{a}{b} \geq \frac{\lambda}{n+\lambda}$, then we always have $r > a \geq \frac{\lambda}{n+\lambda} b$, so the worst-case regret is minimized at $r = a^+$, but we have already shown that $r = a^+$ always has higher regret (worse) than $r = a$. 

Therefore, for $\frac{a}{b} < \frac{\lambda}{n+\lambda}$, the worst-case regret is
\begin{align*}
\min\left( - (1-\lambda) b + \frac{(b-a)(n-1)^{n-1}}{(n-1+\lambda)^{n-1}} ,  \left( \frac{n^n}{(n+\lambda)^n} - 1 + \lambda \right) b , \frac{\lambda^2}{1+\lambda} b \right) \, .
\end{align*}

The first, second, and third terms correspond to $r = a$, $r \in (a, \frac{\lambda}{1+\lambda }b]$ and $r \in [\frac{\lambda}{1+\lambda} b, b]$ respectively. 

The third one is higher than the second one because the third one, as we have already shown, is the second one with $r = \frac{\lambda}{1+\lambda} b$ which by our proof has higher regret than that at $r = \frac{\lambda}{1+\lambda} b$. So the worst-case regret becomes
\begin{align*}
\min\left(   \frac{(n-1)^{n-1}}{(n-1+\lambda)^{n-1}} (b-a) - (1-\lambda) b ,   \frac{n^n}{(n+\lambda)^n} b- (1 - \lambda)  b  \right) \, .
\end{align*}
We note that 
\begin{align*}
\left( \frac{n-1}{n-1+\lambda} \right)^{n-1} \geq \left( \frac{n}{n+\lambda} \right)^{n} \, .
\end{align*}
This is true because $\left( \frac{x}{x+\lambda} \right)^x = \exp \left\{ x \log(x) - x \log(x+\lambda) \right\}$ is a decreasing function of $x$: the derivative of the expression in the $\left\{ \cdots \right\}$ is
\begin{align*}
\left( x \cdot \frac{1}{x} + \log(x) \right) - \left( x \cdot \frac{1}{x+\lambda} + \log(x+\lambda) \right) = 1 - \frac{x}{x+\lambda} + \log \left( \frac{x}{x+\lambda} \right) \leq 0 \, ,
\end{align*}
because $1 + \log(u) \leq u$ for all $u$. (Let $u' = \log(u)$; this because the well-known $1+u' \leq \exp(u')$.)

We however have
\begin{align*}
\left( \frac{n-1}{n-1+\lambda} \right)^{n-1} \leq \left( \frac{n}{n+\lambda} \right)^{n-1} \, .
\end{align*}
Therefore, we have
\begin{align*}
  \left( \frac{n-1}{n-1+\lambda} \right)^{n-1} (b-a) - (1-\lambda) b  \geq   \left( \frac{n}{\lambda+n} \right)^{n} b- (1 - \lambda)  b  \, ,
\end{align*}
when $a = 0$ but 
\begin{align*}
  \left( \frac{n-1}{n-1+\lambda} \right)^{n-1} (b-a) - (1-\lambda) b  \leq   \left( \frac{n}{\lambda+n} \right)^{n} b- (1 - \lambda)  b \, ,
\end{align*}
when $a = \frac{\lambda}{n+\lambda b}$. The threshold to define which one is better is therefore always in the middle, at $1 - \left( \frac{n}{n+\lambda} \right)^{n} \left( \frac{n-1+\lambda}{n-1} \right)^{n-1}$.

 We now consider $\frac{\lambda}{n+\lambda} \leq \frac{a}{b} \leq \frac{\lambda}{1+\lambda}$. We have shown that the worst-case regret in the second regime occurs at $r = a^+$ which is always higher regret than $r = a$, so we only need to consider the first and third regime: the worst-case regret is
\begin{align*}
&\min\left(-(1-\lambda) b + (b-a) \left( \frac{n-1}{n-1+\lambda} \right)^{n-1}, \frac{\lambda^2}{1+\lambda} b  \right) \\&\quad= \min\left( (b-a) \left( \frac{n-1}{n-1+\lambda} \right)^{n-1}, \frac{1}{1+\lambda} b  \right) - (1-\lambda) b \, .
 \end{align*}

We know that $\left( \frac{n-1}{n-1+\lambda} \right)^{n-1} \leq \frac{1}{1+\lambda}$ because $n-1 \geq 1$ and $\left( \frac{x}{x+\lambda} \right)^{x}$ is a decreasing function of $x$. We also know that $b-a \leq b$. Therefore, the first expression in the min (first regime) is always lower than the second expression (third regime). So the worst case regret in this case is just
\begin{align*}
-(1-\lambda) b + (b-a) \left( \frac{n-1}{n-1+\lambda} \right)^{n-1} \, ,
\end{align*}
which is achieved at $r = a$. 

Lastly, we consider the case $\frac{a}{b} \geq \frac{\lambda}{1+\lambda}$. Then the second regime is never applicable, and the worst-case regret is
\begin{align*}
\min\left(-(1-\lambda) b + (b-a) \left( \frac{n-1}{n-1+\lambda} \right)^{n-1}, \frac{\lambda^2}{1+\lambda} b  \right) = -(1-\lambda) b + (b-a) \left( \frac{n-1}{n-1+\lambda} \right)^{n-1} \, ,
\end{align*}
where we know the first term in the min is less than the second term by what we just proved. This is also achieved when $r = a$.

We therefore conclude that for $n \geq 2$ the worst-case regret (and the corresponding optimal reserve $r^*$) as a function of $a$ and $b$ is as follows.
\begin{align*}
\inf_{r \in [a,b]} R_n(\text{SPA}(r), \mathcal{F}) = \begin{cases}
\left( \frac{n-1}{n-1+\lambda} \right)^{n-1} (b-a) - (1-\lambda) b &\text{ if } \frac{a}{b} \leq 1 - \left( \frac{n}{n+\lambda}\right)^{n} \left( \frac{n-1+\lambda}{n-1} \right)^{n-1}  \text{ or } \frac{a}{b} \geq \frac{\lambda}{n+\lambda} \\
\left( \frac{n}{n+\lambda} \right)^{n} b- (1-\lambda) b &\text{ if }  1 - \left( \frac{n}{n+\lambda}\right)^{n} \left( \frac{n-1+\lambda}{n-1} \right)^{n-1} \leq \frac{a}{b} < \frac{\lambda}{n+\lambda}  \, .
\end{cases}
\end{align*}
In the first case, $r^* = a$. In the second case, $r^* = \frac{\lambda}{n+\lambda} b$.

Now we calculate the optimal $r^*$ and the best worst-case regret for the case $n = 1$. In the case $\frac{a}{b} < \frac{\lambda}{1+\lambda}$, then all 3 regimes are possible. The lowest worst-case regret in the second regime is $\frac{\lambda^2}{1+\lambda} b$ when $r = \frac{\lambda}{1+\lambda} b$, which is the same as the lowest worst-case regret in the third regime. Therefore,
\begin{align*}
\inf_{r \in [a,b]} R_1(\text{SPA}(r), \mathcal{F}) = \min\left( \lambda b - a, \frac{\lambda^2}{1+\lambda } b \right) = \frac{\lambda^2}{1+\lambda} b \, ,
\end{align*}
where the last part is true because $\frac{a}{b} < \frac{\lambda}{1+\lambda}$ implies $\lambda b - a > \frac{\lambda^2}{1+\lambda } b$. Here, $r^* = \frac{\lambda}{1+\lambda} b$.

Now consider the case $\frac{a}{b} \geq \frac{\lambda}{1+\lambda} $, then the second regime is inapplicable, and the third case holds for any $r \in (a,b]$, and the lowest worst-case regret in this regime is $\lambda a$ at $r = a^+$, so
\begin{align*}
\inf_{r \in [a,b]} R_1(\text{SPA}(r), \mathcal{F}) = \min\left( \lambda b - a, \lambda a \right) = \lambda b - a \, ,
\end{align*}
where the last part is true because $\frac{a}{b} \geq \frac{\lambda}{1+\lambda} b$ implies $\lambda b - a \leq \lambda a$. Here, $r^* = a$.
\end{proof}

\jadelete{

\section{Structure and Performance of Optimal Mechanisms within Subclasses}\label{app:sec:other-mech-classes-discuss}

\subsection{Structure of Optimal Mechanisms within Subclasses}\label{app:subsec:other-mech-classes-structure}

We now use the analytical results previously derived in Section~\ref{sec:other-mech-classes} to show the structure and evaluate the performance of these mechanisms that are optimal in the subclasses $\mathcal{M}_{\textnormal{std}}$, $\mathcal{M}_{\textnormal{SPA-rand}}$, $\mathcal{M}_{\textnormal{SPA-det}}$ and $\mathcal{M}_{\textnormal{SPA-a}}$.

\begin{figure}[h!]
	\begin{subfigure}{0.48\linewidth}
    	\centering
\includegraphics[height=0.75\linewidth]{../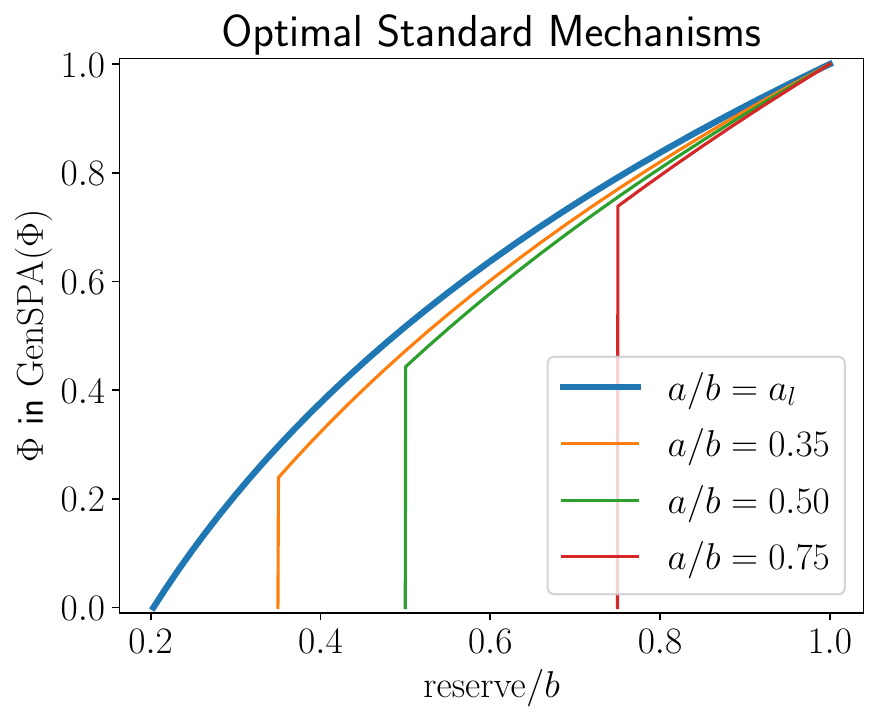}
    	\caption{$\Phi$ of $\textnormal{GenSPA}(\Phi)$ in $\mathcal{M}_{\textnormal{std}}$}
    	\label{fig:phi-std}
    \end{subfigure}%
	\begin{subfigure}{0.48\linewidth}
		\centering
		\includegraphics[height=0.75\linewidth]{../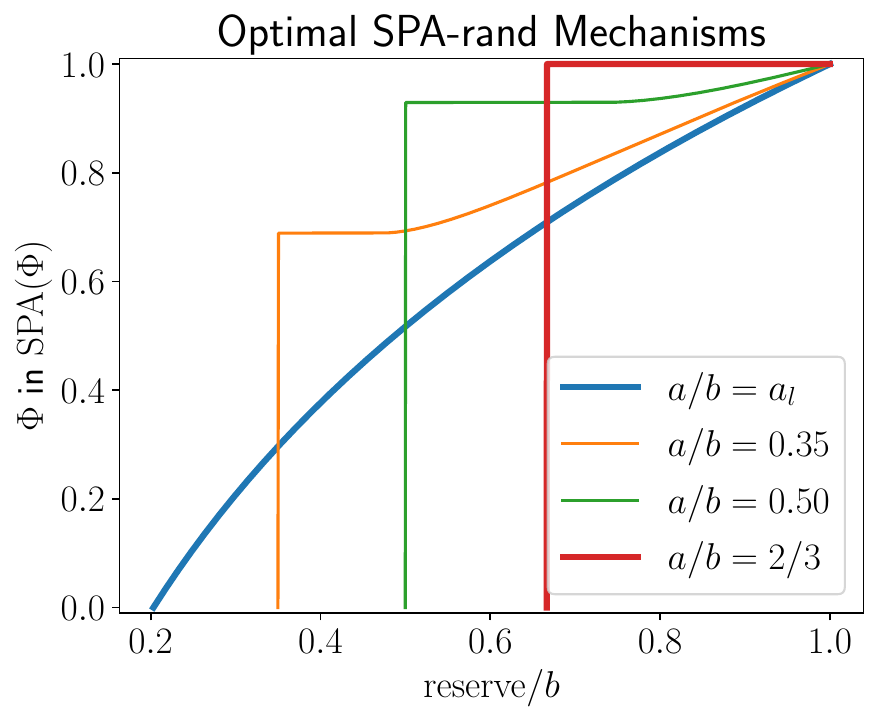}
		\caption{$\Phi$ of $\textnormal{SPA}(\Phi)$ in $\mathcal{M}_{\textnormal{SPA-rand}}$}
		\label{fig:phi-spa-rand}
	\end{subfigure}
	\caption{Structure of the minimax regret optimal mechanisms in the classes of standard mechanisms (generous SPAs) and SPAs with random reserve with $n = 2$ bidders.}
\label{fig:std-spa-rand-structure}
\end{figure}

Figure~\ref{fig:std-spa-rand-structure} shows the structure of the optimal mechanisms in $\mathcal{M}_{\textnormal{std}}$, the class of standard mechanisms, and $\mathcal{M}_{\textnormal{SPA-rand}}$, the class of SPAs with random reserve. Note that for $a/b \leq k_l$, the optimal mechanisms in both classes are the same as the optimal mechanism in $\mathcal{M}_{\textnormal{all}}$, namely $\textnormal{SPA}(\Phi)$. Henceforth we will focus on the case $a/b \geq k_l$. (Note also that the distribution $\Phi$s in Figures \ref{fig:phi-std} and \ref{fig:phi-spa-rand} are \textit{not} comparable, even for the same $\Phi$, because the mechanism structures are different.)

Theorem~\ref{thm:char-std-mech} states that the optimal mechanism in $\mathcal{M}_{\textnormal{std}}$ is $\textnormal{GenSPA}(\Phi)$, a \textit{generous} SPA with reserve distribution $\Phi$, meaning that the mechanism allocates like $\textnormal{SPA}(\Phi)$, except in the case when all but one bidder have the lowest value $a$, then it allocates to the highest (and only non-lowest) value with probability one. This distribution $\Phi$ has a point mass at $a$ and a density on $(a,b]$. Figure~\ref{fig:phi-std} shows the plot of $\Phi$ for $a/b \in \{k_l,0.35,0.50,0.75\}$. We see that as $a/b$ gets larger, the point mass at $a$ becomes bigger, and the density component of $\Phi$ for different $a/b$ are very close to each other.

Theorem~\ref{thm:char-spa-rand-mech} states that the optimal $\Phi$ of $\textnormal{SPA}(\Phi)$ in the class $\mathcal{M}_{\textnormal{SPA-rand}}$ has three regimes determined by the lower threshold $k_l$ and the upper threshold $k_h' = \lambda n/((1+\lambda)n-1)$. Here, we consider the minimax regret problem $\lambda = 1$ and $n = 2$ bidders, so the thresholds are $k_l \approx 0.2032$ and $2/3$. For $a/b \leq k_l$, $\Phi$ has a density in $[k_lb, b]$, while for $a/b \geq 2/3$, $\Phi$ is a point mass at $a$, namely, no reserve is optimal. Figure~\ref{fig:phi-spa-rand} shows the optimal $\Phi$ in the intermediate regime $k_l \leq a/b \leq 2/3$. In this regime, $\Phi$ has a point mass at $a$ and a density in $(a,b]$. We can see that the $\Phi$ in this regime interpolates between the smooth density (blue) and a single point mass (red). 

}

\end{document}